\documentclass[11pt]{article}
\usepackage[margin=1in]{geometry}
\usepackage[noadjust]{cite}

\usepackage{xcolor}
\usepackage[colorlinks=true,citecolor=blue,linkcolor=blue]{hyperref}

\hypersetup{
    colorlinks,
    linkcolor={red!50!black},
    citecolor={blue!50!black},
    urlcolor={blue!80!black}
}

\usepackage{graphicx}
\usepackage{pdfpages}
\usepackage[]{amsmath,amssymb,amsfonts,latexsym,amsthm,enumerate,fullpage,xcolor,cite,bbm}
\usepackage[nameinlink, noabbrev, capitalize]{cleveref}
\usepackage{nicefrac}

\usepackage[utf8]{inputenc}
\usepackage{amsmath,amssymb}
\usepackage{graphics,graphicx,tikz,color,float}
\usepackage{mathtools}
\usepackage{amsfonts}
\usepackage{tikz}
\usepackage{comment}
\usepackage{framed}
\usepackage[section]{placeins}
\usepackage{algorithm}
\usepackage[noend]{algpseudocode}
\usepackage{xcolor}

\newtheorem{theorem}{Theorem}

\newtheorem{claim}{Claim}
\newtheorem{lemma}{Lemma}

\newtheorem{corollary}{Corollary}

\newtheorem{definition}{Definition}
\newtheorem{remark}{Remark}

\newtheorem{fact}{Fact}

\newcommand{\beq}{\begin{equation}}
	\newcommand{\enq}{\end{equation}}
\newcommand{\bel}{\begin{lemma}}
	\newcommand{\enl}{\end{lemma}}
\newcommand{\bet}{\begin{theorem}}
	\newcommand{\ent}{\end{theorem}}

\newcommand{\tr}{\mathrm{Tr}}

\newcommand{\E}{\mathbb{E}}
\newcommand{\ketbra}[1]{|#1\rangle\langle#1|}

\newcommand{\eps}{\varepsilon}

\newcommand{\generateA}{A= \Ext_1(Y,Z_s) }
\newcommand{\generateT}{C= \Ext_2(Z,A)}
\newcommand{\generateB}{B= \Ext_{1}(Y,C)}
\newcommand{\generateAA}{A^\prime= \Ext_1(Y^\prime,Z^\prime_s) }
\newcommand{\generateTT}{C^\prime= \Ext_2(Z^\prime,A^\prime)}
\newcommand{\generateBB}{B^\prime= \Ext_{1}(Y^\prime,C^\prime)}
\newcommand{\generateAbar}{\overline{A}= \Ext_1(Y,\overline{Z_s}) }
\newcommand{\generateTbar}{\overline{C}= \Ext_2(\overline{Z},\overline{A})}
\newcommand{\generateBbar}{\overline{B}= \Ext_{1}(Y,\overline{C})}
\newcommand{\generateAAbar}{\overline{A}^\prime= \Ext_1(Y^\prime,\overline{Z_s}^\prime) }
\newcommand{\generateTTbar}{\overline{C}^\prime= \Ext_2(\overline{Z}^\prime,\overline{A}^\prime)}
\newcommand{\generateBBbar}{\overline{B}^\prime= \Ext_{1}(Y^\prime,\overline{C}^\prime)}

\newcommand{\generateYSbar}{$\overline{Z}_s =$ Prefix$(\overline{Z},s)$}

\newcommand{\generateYSSbar}{$\overline{Z}_s^\prime =$ Prefix$(\overline{Z}^\prime,s)$}
\newcommand{\sendArl}{$A \longleftarrow A$}

\newcommand{\sendBrl}{$B \longleftarrow B$}

\newcommand{\sendTlr}{$C \longrightarrow C$}
\newcommand{\sendAArl}{$A^\prime \longleftarrow A^\prime$}

\newcommand{\sendBBrl}{$B^\prime \longleftarrow B^\prime$}

\newcommand{\sendTTlr}{$C^\prime \longrightarrow C^\prime$}
\newcommand{\sendAbarrl}{$\overline{A} \longleftarrow \overline{A}$}

\newcommand{\sendBbarrl}{$\overline{B} \longleftarrow \overline{B}$}

\newcommand{\sendTbarlr}{$\overline{C} \longrightarrow \overline{C}$}
\newcommand{\sendAAbarrl}{$\overline{A}^\prime \longleftarrow \overline{A}^\prime$}

\newcommand{\sendBBbarrl}{$\overline{B}^\prime \longleftarrow \overline{B}^\prime$}

\newcommand{\sendTTbarlr}{$\overline{C}^\prime \longrightarrow \overline{C}^\prime$}
\newcommand{\sendYSlr}{$Z_s \longrightarrow Z_s$}

\newcommand{\sendYSbarlr}{$\overline{Z}_s \longrightarrow \overline{Z}_s$}

\newcommand{\sendYSSbarlr}{$\overline{Z}_s^\prime \longrightarrow \overline{Z}_s^\prime$}

\newcommand{\sendYYbarlr}{$\overline{Z}^\prime \longrightarrow \overline{Z}^\prime$}

\newcommand{\poly}{\textnormal{poly}}

\newcommand*{\cSC}{\mathcal{SC}}
\newcommand*{\cA}{\mathcal{A}}

\newcommand*{\cH}{\mathcal{H}}

\newcommand*{\cD}{\mathcal{D}}

\newcommand*{\cO}{\mathcal{O}}

\newcommand*{\cX}{\mathcal{X}}

\newcommand*{\cZ}{\mathcal{Z}}
\newcommand*{\cE}{\mathcal{E}}

\newcommand{\cP}{\mathcal{P}}

\newcommand*{\IP}{\mathsf{IP}}
\newcommand{\Ext}{\mathsf{Ext}}

\newcommand{\pre}{\mathsf{Prefix}}

\newcommand{\advcb}{\mathsf{AdvCB}}
\newcommand{\ff}{\mathsf{FF}}
\newcommand{\ecc}{\mathsf{ECC}}

\newcommand{\supp}{\mathrm{supp}}
\newcommand{\suppress}[1]{}

\newcommand{\defeq}{\ensuremath{ \stackrel{\mathrm{def}}{=} }}
\newcommand{\F}{\mathbb{F}}

\newcommand {\br} [1] {\ensuremath{ \left( #1 \right) }}

\newcommand {\minusspace} {\: \! \!}

\newcommand {\fn} [2] {\ensuremath{ #1 \minusspace \br{ #2 } }}

\newcommand {\dmax} [2] {\fn{\mathrm{D}_{\max}}{#1 \middle\| #2}}

\newcommand {\hminone} [1] {\fn{ \mathrm{H }_{\min}}{#1}}

\newcommand {\hmin} [2] {\fn{ \mathrm{H }_{\min}}{#1 \middle | #2}}

\newcommand {\id} {\ensuremath{\mathbb{I}}}

\newcommand {\Hmin}{\mathrm{H}_{\min}}

\newcommand{\nmre}{\mathsf{nmExt}}
\newcommand{\nmreenc}{\mathsf{nmreEnc}}
\newcommand{\nmredec}{\mathsf{nmreDec}}

\usepackage{afterpage}

\newcommand*{\sm}{\mathsf{same}}
\newcommand*{\qpas}{\mathsf{qpa\mhyphen state}}

\newcommand*{\nmas}{\mathsf{qnm\mhyphen state}}

\newcommand*{\nma}{\mathsf{qnm\mhyphen adv}}

\newcommand*{\nmext}{\mathsf{nmExt}}

\newcommand{\mac}{\mathsf{MAC}}

\newcommand{\X}{\mathcal{X}}

\newcommand*{\cL}{\mathcal{L}}

\newcommand{\bra}[1]{\langle #1|}
\newcommand{\ket}[1]{|#1 \rangle}

\mathchardef\mhyphen="2D

\newcommand*{\enc}{\mathrm{Enc}}
\newcommand*{\dec}{\mathrm{Dec}}

\newenvironment{changemargin}[2]{%
\begin{list}{}{%
\setlength{\topsep}{0pt}%
\setlength{\leftmargin}{#1}%
\setlength{\rightmargin}{#2}%
\setlength{\listparindent}{\parindent}%
\setlength{\itemindent}{\parindent}%
\setlength{\parsep}{\parskip}%
}%
\item[]}{\end{list}}

\makeatletter
\newcommand*{\rom}[1]{\expandafter\@slowromancap\romannumeral #1@}
\makeatother

\mathchardef\mhyphen="2D

\newfloat{Protocol}{tbhp}{lop}

\title{Quantum secure non-malleable randomness encoder and its applications}

\author{
 Rishabh Batra\footnote{Centre for Quantum Technologies, National University of Singapore, \texttt{e0894305@u.nus.edu}.} \and
 Naresh Goud Boddu\footnote{NTT Research, \texttt{naresh.boddu@ntt-research.com}.} \and
 Rahul Jain\footnote{Centre for Quantum Technologies and Department of Computer Science, 
  National University of Singapore and MajuLab, UMI 3654, Singapore,  \texttt{rahul@comp.nus.edu.sg}.} 
}
\date{}

\pagestyle{plain}

\begin{document}
\maketitle
\begin{abstract}
“Non-Malleable Randomness Encoder” (NMRE) was introduced by  Kanukurthi, Obbattu, and Sekar~\cite{KOS18} as a useful cryptographic primitive helpful in the construction of non-malleable codes. To the best of our knowledge, their construction is not known to be quantum secure.  

We provide a construction of a first rate-$1/2$, $2$-split, quantum secure NMRE and use this in a black-box manner, to construct for the first time the following:
\begin{enumerate}
 \item rate $1/11$, $3$-split, quantum non-malleable code,
    \item rate $1/3$, $3$-split,  quantum secure non-malleable code,
    \item rate $1/5$, $2$-split, average case quantum secure non-malleable code.
    \end{enumerate}
\end{abstract}

\section{Introduction}
\label{sec:intro}

In a seminal work, Dziembowski, Pietrzak and Wichs~\cite{DPW10} introduced non-malleable codes to provide a meaningful guarantee for the encoded message in situations where traditional error-correction or even error-detection is impossible. Informally, non-malleable codes encode a classical message in a manner such that tampering the codeword results in decoder either outputting the original message or a message that is unrelated/independent of original message.

It is impossible to construct non-malleable codes which protect against arbitrary tampering functions. This is because an adversary could simply apply the decoder to ciphertext recovering
the original message, and output the encoding of the related message as the tampering of ciphertext. There have been several works studying non-malleable codes with some restrictions on the adversary's tampering functions. Perhaps the most well known of these tampering function families is the so called split model introduced by Liu and Lysyanskaya~\cite{LL12}, who constructed efficient constant rate non-malleable codes against computationally bounded adversaries under strong cryptographic assumptions.

In the $t$-split model, the cipher text is split into $t$-parts, say $(c_1, c_2 , \ldots, c_t)$ and the adversary is allowed to tamper 
$(c_1, c_2 , \ldots, c_t) \to (c'_1, c'_2 , \ldots, c'_t)= (f_1(c_1), f_2(c_2) , \ldots, f_t(c_t))$ (see \cref{fig:splitstate121} for a diagram of this model for $t=3$). An important parameter of interest for NMCs is its rate = $m/n$ where $m$ is the message length and $n$ is the codeword length. In the $t$-split model, \cite{CG14a} provided the upper bound on the rate being $1-1/t$. There has been a flurry of works in the past decade \cite{LL12,DKO13,CG14a,CG14b,ADL17,CGL15,li15,Li17,Li19,AO20,AKOOS22} trying to achieve constant rate non-malleable codes for various $t$-split models. For example, see \cref{table:nmcs3} for the constant rate non-malleable codes known in the literature. 
\begin{table}
\centering
\begin{tabular}{||c c c ||} 
 \hline
 \textbf{Work by} & \textbf{Rate}  & \textbf{Splits}\\ [1ex] 
 \hline\hline 
  \cite{CZ14} & $\Omega(1)$ & 10 \\[0.5ex] 
 \hline
 \cite{KOS17} & $1/3$ & 4\\[0.5ex] 
 \hline
 \cite{KOS18} & $1/3$ & 3\\[0.5ex] 
 \hline
 \cite{AO20} & $\Omega(1)$ & 2 \\[0.5ex] 
 \hline
 \cite{AKOOS22} & $1/3$ & 2\\[0.5ex] 
 \hline
\end{tabular}

\vspace{0.5cm}

\caption{ Best known explicit constructions of constant rate split non-malleable codes.}
\label{table:nmcs3}
\end{table}

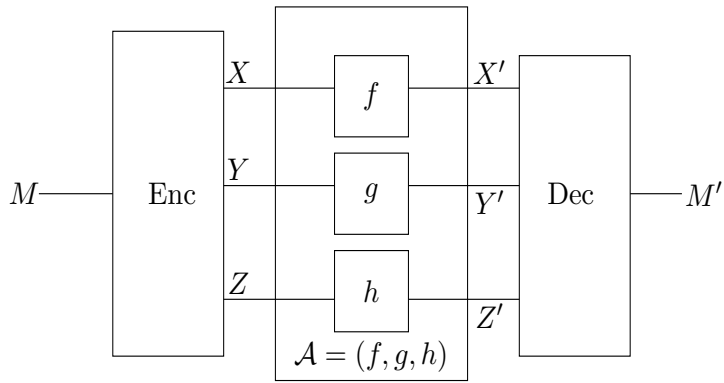
\begin{figure}
		\centering
  \resizebox{10cm}{5cm}{
	\begin{tikzpicture}
		
		\node at (1.8,1.5) {$M$};
		\node at (11,1.5) {$M'$};
		
		\draw (2,1.5) -- (3,1.5);
		\draw (3,-0.5) rectangle (4.5,3.5);
		\node at (3.8,1.5) {$\enc$};
		
		
		
		\node at (4.7,3) {$X$};
		\node at (8.1,3) {$X'$};
		\draw (4.5,2.8) -- (6,2.8);
		\draw (7,2.8) -- (8.5,2.8);
		\draw (10,1.5) -- (10.7,1.5);

            \node at (4.7,1.8) {$Y$};
		\node at (8.1,1.4) {$Y'$};
          \draw (4.5,1.6) -- (6,1.6);
		\draw (7,1.6) -- (8.5,1.6);
         \draw (6,1) rectangle (7,2);
         \node at (6.5,1.5) {$g$};
  
		\node at (4.7,0.4) {$Z$};
		\node at (8.1,0) {$Z'$};
		\draw (4.5,0.2) -- (6,0.2);
		\draw (7,0.2) -- (8.5,0.2);
		
		\draw (6,2.2) rectangle (7,3.2);
		\node at (6.5,2.7) {$f$};
		\draw (6,-0.2) rectangle (7,0.8);
		\node at (6.5,0.3) {$h$};
		
		\node at (6.5,-0.5) {$\mathcal{A}=(f,g,h)$};
		\draw (5.2,-0.8) rectangle (7.8,3.8);

		
		
		\draw (8.5,-0.5) rectangle (10,3.2);
		\node at (9.2,1.5) {$\dec$};
		
	\end{tikzpicture} }
	\caption{$3$-split model.}\label{fig:splitstate121}
\end{figure}

\begin{figure}
		\centering
   \resizebox{12cm}{6cm}{
		\begin{tikzpicture}
			

     \draw (2,4.3) -- (11,4.3);
     \node at (1.8,4.3) {$\hat{M}$};
      \node at (11.2,4.3) {$\hat{M}$};
			\node at (1.8,1.5) {$M$};
			\node at (11,1.5) {$M'$};

\draw (1.8,2.8) ellipse (0.3cm and 1.8cm);
   
			
			\draw (2,1.5) -- (3,1.5);
			\draw (3,-0.5) rectangle (4.5,3.5);
			\node at (3.8,1.5) {$\enc$};
			
			
			
			\node at (4.7,3) {$X$};
			\node at (8.25,3) {$X'$};
  
			\draw (4.5,2.8) -- (6,2.8);
			\draw (7,2.8) -- (8.6,2.8);
			\draw (10,1.5) -- (10.7,1.5);
			
			\node at (4.7,0.4) {$Z$};
			\node at (8.25,0) {$Z'$};
			\draw (4.5,0.2) -- (6,0.2);
			\draw (7,0.2) -- (8.6,0.2);

                \node at (4.7,1.8) {$Y$};
		      \node at (8.25,1.8) {$Y'$};
         \draw  (4.5,1.6) -- (5,1.6);
            \draw [dashed] (5,1.6) -- (6,1.6);
		  \draw (7,1.6) -- (8.6,1.6);
                \draw (6,1.1) rectangle (7,1.9);
                \node at (6.5,1.5) {$V$};
            \node at (7.5,1.3) {$E'_2$};
             \node at (5.5,1.4) {$E_2$};
             \draw (5.8,1.4) -- (6,1.4);
             \draw (7,1.4) -- (7.2,1.4);
   
			\draw (6,2.1) rectangle (7,2.9);
			\node at (6.5,2.5) {$U$};
			\draw (6,0.1) rectangle (7,0.9);
			\node at (6.5,0.5) {$W$};
			
			\node at (6.5,-0.4) {$\mathcal{A}=(U,V,W,\ket{\psi}_{})$};
			\draw (5,-0.8) rectangle (8,3.8);

			\draw (5.5,1.5) ellipse (0.3cm and 1.1cm);
			\node at (5.5,2.2) {$E_1$};
				\draw (5.7,2.2) -- (6,2.2);
			\node at (7.5,2.2) {$E'_1$};
				\draw (7,2.2) -- (7.2,2.2);
				\node at (5.5,0.8) {$E_3$};
				\draw (5.7,0.7) -- (6,0.7);
			\node at (7.5,0.6) {$E'_3$};
			\draw (7.0,0.7) -- (7.2,0.7);
			
			
			\draw (8.6,-0.5) rectangle (10,3.2);
			\node at (9.2,1.5) {$\dec$};
			
		\end{tikzpicture} }
		\caption{$3$-split model with shared entanglement. The shared entanglement $\psi$ is stored in registers $E_1, E_2$ and $E_3$.}\label{fig:splitstate211d}
	\end{figure}
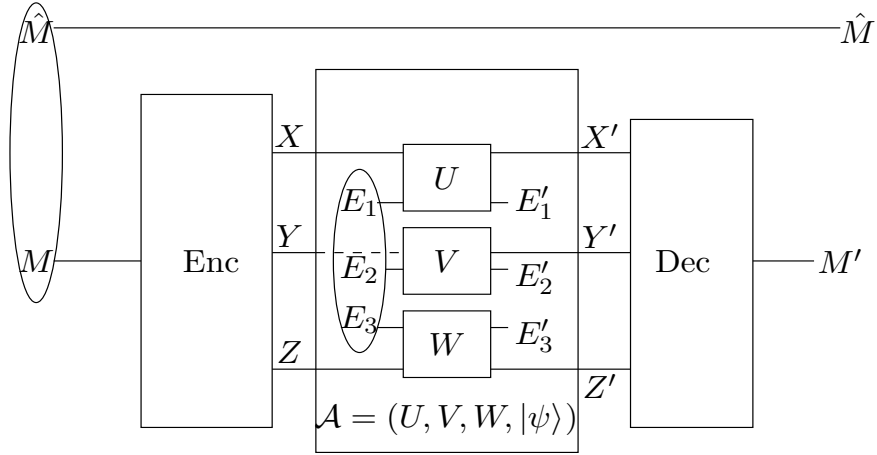
 In most of the works studied in the literature, adversaries with quantum capabilities and \emph{shared entanglement} had not been considered (see \cref{fig:splitstate211d}). It is a priori not clear whether previous coding schemes remain secure in the presence of quantum adversaries. The only two works that studied non-malleable codes against quantum adversaries are the works of Aggarwal, Boddu, and Jain~\cite{ABJ22} and Boddu, Goyal, Jain and Ribeiro~\cite{BGJR23}. \cite{ABJ22} studied non-malleable codes for classical messages secure against quantum adversaries, where the encoding and decoding procedures were classical. These are referred to as \textbf{quantum secure non-malleable codes.} \cite{BGJR23} extended the notion of non-malleability to quantum messages in the presence of a quantum adversary. These are referred to as 
 \textbf{quantum non-malleable codes.} Both the works achieve their results for exponentially small error, however with inverse polynomial (worst case) rate. That is for a message length $m$ bits/qubits, the codeword length is $\poly(m)$ bits/qubits.

The shortcomings of existing split non-malleable coding schemes in the face of adversaries with quantum capabilities raise the following natural question:
\begin{quote}
\begin{center}
Can we design constant rate efficient $t$-split quantum secure and quantum non-malleable codes?
\end{center}
\end{quote}
We resolve this question in the affirmative for $t = 3$. We also construct constant rate, $2$-split, quantum secure non-malleable code for uniformly distributed message.

\subsection*{Non-malleable randomness encoder}

In \cite{KOS18}, the notion of “Non-malleable Randomness Encoders” (NMREs) was introduced which can be thought of as a relaxation of non-malleable codes in
the following sense: NMREs output a random message along with its corresponding non-malleable encoding. This has 
applications where we only need to encode randomness
 and security is not required to hold for arbitrary, adversarially chosen
messages. For example, in applications of non-malleable codes to tamper-resilient security, the messages that are encoded are typically randomly generated secret keys.

Efficient construction of NMREs was given in \cite{KOS18} where they built a $2$-split, rate-$1/2$ NMRE. They also showed that NMREs can be used, in a black-box manner, to construct $3$-split non-malleable codes. To the best of our knowledge, NMRE constructed by \cite{KOS18} is not known to be quantum secure. Thus, it leads to us to the following question:

\begin{quote}
\begin{center}
Can we design rate $1/2$ efficient $2$-split NMRE secure against quantum adversaries with shared entanglement?
\end{center}
\end{quote}
We resolve this question in the affirmative.

\subsection*{Our results}
We prove the following results (stated informally). The encoding and decoding procedures for all the codes run in time $\poly(n)$. All the codes are $\eps$-secure for $\eps = 2^{-n^{\Omega(1)}}$.
 \begin{theorem}\label{thm:mainqnmc}
    There exists a rate $1/2$, $2$-split quantum secure non-malleable randomness encoder.
\end{theorem}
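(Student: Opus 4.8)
The plan is to port the rate-$1/2$, two-split NMRE of \cite{KOS18} to the quantum setting: keep its modular skeleton but instantiate each building block with a variant that stays secure against a split-state adversary holding one half of an arbitrary entangled state, and redo the reduction with quantum-information tools in place of classical conditioning. At the level of wiring, $\nmreEnc$ outputs two (correlated) shares $L,R$ with $|L|=|R|=n/2$ together with a uniform message $m$ of length $n/2-o(n)$ such that $\nmreDec(L,R)=m$: a long part of one share plays the role of a source and a short part of the other the role of a seed; a quantum-secure two-source non-malleable extractor $\nmExt$ distils a short non-malleable key from (source, seed); a quantum-secure strong extractor $\Ext$ expands it; a one-time pad carried inside the shares turns the extractor output into the full-length $m$; a MAC authenticates the seed so the decoder can reject inconsistent tampering; and $\ecc$ together with an invertible extractor $\IExt$ make $\nmreEnc$ a genuine randomness encoder (uniform coins $\mapsto(m,L,R)$). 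The quantum-secure strong extractor and MAC can be taken from \cite{ABJ22,BGJR23}; the quantum-secure two-source non-malleable extractor — built from a quantum-secure look-ahead extractor $\laext$ — is the component I would isolate (citing or proving it in a preceding lemma) with near-full-entropy sources, output length linear in the entropy, and error $2^{-n^{\Omega(1)}}$.

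\textbf{Security.} Fix a split-state tamperer $(f,g)$ with entangled state $\psi_{E_1E_2}$, $E_1$ on the $L$-side and $E_2$ on the $R$-side. If $(f,g)$ acts as the identity on both shares, $\nmreDec$ returns $m$, meeting the $\mathsf{same}$-symbol requirement of the definition. Otherwise: the source and the seed are independently uniform, so $\nmExt$ applies and its quantum security gives that the key is close to uniform and independent of the key recomputed by $\nmreDec$ from the tampered shares; the MAC's unforgeability then forces the tampered seed to equal the original (except with probability $\eps$), and any tampering that preserves the seed but disturbs another authenticated part is caught with probability $1-\eps$. Conditioned on these (classical) good events, the source still has high min-entropy given the $L$-side quantum register, so the quantum security of the strong extractor makes its output — hence $m$, since the pad is untampered — close to uniform and independent of the decoded tampered message; that is exactly NMRE security. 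The delicate point is that $\nmreEnc$ deliberately correlates $L$ and $R$, so a tampering map on one share can depend on parts of the other; this forces $\nmExt$ to be used in its \emph{augmented} form, and, more importantly, it means the step ``freeze the tampered source and argue the untampered one is still a good source'' cannot be done by conditioning on the adversary's quantum memory — one replaces it by the quantum Markov-chain / relative-entropy arguments of \cite{ABJ22,BGJR23}, using that local maps $f\otimes g$ on $\psi_{E_1E_2}$ preserve the cross-independence the extractors need.

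\textbf{Rate, efficiency, error.} Taking the seed, key, MAC tag and $\ecc$ overhead to be $o(n)$ and balancing $|L|=|R|=n/2$ leaves $|m|=n/2-o(n)$; a routine padding/concatenation step brings the rate to exactly $1/2$. Every component is polynomial time, and taking each primitive's error to be $2^{-n^{\Omega(1)}}$ gives overall $\eps=2^{-n^{\Omega(1)}}$.

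\textbf{Main obstacle.} The crux is the quantum-secure two-source non-malleable extractor with parameters strong enough for this wiring (near-full-entropy sources, linear output, augmented non-malleability, exponentially small error) and — above all — making its security survive the entangled split-state analysis: the classical proof ``freezes'' one tampered source and exploits residual entropy in the other, but with shared entanglement one cannot condition on the adversary's register, so this step must go through quantum-proof / decoupling arguments while keeping the quantitative loss small enough to leave both the rate at $1/2$ and the error at $2^{-n^{\Omega(1)}}$. That trade-off is where most of the effort will concentrate.
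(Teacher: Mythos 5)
Your proposal defers exactly the step that constitutes the paper's proof. You isolate ``a quantum-secure two-source non-malleable extractor with near-full-entropy sources, output length linear in the entropy, augmented non-malleability, and error $2^{-n^{\Omega(1)}}$'' as a component to be ``cited or proven in a preceding lemma,'' but no such object exists to cite: the extractor of \cite{BJK21} has both sources of length $n$ and, used as an NMRE, yields rate only $1/8$, and its augmented variant is not established against entangled adversaries. The entire technical content of \cref{thm:mainqnmc} in the paper is the construction and analysis of a new asymmetric quantum-secure extractor $2\nmext:\{0,1\}^n\times\{0,1\}^{\delta n}\to\{0,1\}^{(1/2-\delta)n}$ (\cref{alg:2nmExt}, \cref{thm:2nmext}), obtained by reworking the advice-generator/correlation-breaker/flip-flop protocols of \cite{BJK21} with new parameters so that one source is only $\delta n$ bits while the output is $(1/2-\delta)n$ bits. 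Without that, your argument is a reduction to an unproven primitive, and the ``main obstacle'' paragraph of your write-up is, in effect, an acknowledgement that the proof is missing. A second concrete problem is the wiring itself: by carrying a one-time pad and MAC inside the shares you correlate $L$ and $R$, and you then appeal to an ``augmented'' form of the extractor plus Markov-chain/decoupling arguments to handle this. The paper explicitly flags (following \cite{BGJR23}) that this style of augmented composition is compromised in the presence of shared entanglement, and it is not something you can wave through; the KOS18-style MAC-forces-seed-equality and ``freeze one source, keep entropy in the other'' steps are precisely the steps that do not survive conditioning on an entangled adversary's registers without a new argument.

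For contrast, the paper's route is much simpler and avoids your composition problems entirely: the randomness $R=(X,Y)$ with $X$ uniform on $n$ bits and $Y$ uniform on $\delta n$ bits is itself the two-split codeword, $\nmreenc_2$ is the identity, and the message is $M=2\nmext(X,Y)$, with $\nmredec=2\nmext$ as well (\cref{lem:qnmcodesfromnmext}). Since $X$ and $Y$ are independent and uniform, the two-source extractor applies directly, NMRE security follows from the extractor's non-malleability via the argument of Theorem~1 of \cite{ABJ22}, correctness is trivial, and the rate is $\frac{(1/2-\delta)n}{(1+\delta)n}\approx 1/2$ with no MAC, pad, ECC, or invertible extractor needed. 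So even granting your isolated lemma, the extra KOS18 machinery buys nothing here and only introduces the entanglement-sensitive steps; the real work you would still have to do is the asymmetric $2\nmext$ construction itself.
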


\begin{theorem}\label{thm:mainqnmc1}
    There exists a rate $1/11$, $3$-split quantum non-malleable code.
\end{theorem}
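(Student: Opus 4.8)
The plan is to quantize the black-box reduction of \cite{KOS18} that builds a $3$-split non-malleable code from a $2$-split NMRE, replacing the classical one-time MAC-and-encrypt layer by a quantum authentication-and-encryption scheme with classical keys (e.g.\ the trap code composed with the quantum one-time pad) and instantiating the classical NMRE with the quantum-secure NMRE of \cref{thm:mainqnmc}. Concretely, to encode a $k$-qubit message $\rho$ I would: run the NMRE to obtain a uniformly random string $r$ together with its $2$-split encoding $(c_1,c_2)$; parse $r$ into a quantum-one-time-pad key and an authentication key; set $\sigma=\auth_r(\rho)$, the authenticated Pauli-masked ciphertext; and output the three shares $(c_1,\,c_2,\,\sigma)$. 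Decoding a (possibly tampered) $(c_1',c_2',\sigma')$ first runs the NMRE decoder on $(c_1',c_2')$ to recover a string $r'$, then runs $\ver_{r'}(\sigma')$; if verification rejects, output a fixed state $\bot$, otherwise output the recovered $k$-qubit state. The rate is $\Theta(1)$: since the NMRE has rate $1/2$ we have $|c_1|+|c_2|=2|r|$, while $|r|$ and $|\sigma|$ are both $\Theta(k)$; balancing the key lengths against the authentication expansion factor (for instance $|r|=4k$ bits, $|\sigma|=3k$ qubits) gives codeword length $11k$, i.e.\ rate $1/11$. Encoding and decoding run in time $\poly(k)$ and the error is exponentially small (i.e.\ $2^{-n^{\Omega(1)}}$ in the codeword length $n$), inherited from the NMRE and the authentication scheme.

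For the security proof I would show that for every tampering adversary $\cA=(U,V,W,\ket\psi)$ (with $W$ acting on the share $\sigma$) and every $\rho$, the decoded state is $\eps$-close to a convex combination of $\rho$ itself and the fixed state $\bot$, with mixing weights depending on $\cA$ but not on $\rho$ — this is in fact stronger than the quantum non-malleability definition, since no "related" message ever arises. The argument splits on the behaviour of the NMRE layer. By the quantum-secure NMRE guarantee of \cref{thm:mainqnmc}, the pair $(R,R')$, where $R'$ is the NMRE decoder's output on the tampered shares, is $\eps$-close to either $R'=R$ (the "copy" case) or $R'$ independent of $R$ (the "replace" case). In the copy case verification uses the same key that produced $\sigma$, so quantum authentication security — in its standard forgery form, where the adversary holds the ciphertext $\sigma$ and its side register $E_3'$ — forces $\ver_{R}$ applied to the tampered $\sigma$ to be $\eps$-close to ``output $\rho$'' or ``output $\bot$'', independently of $\rho$. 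In the replace case $R'$ is a fresh uniform key independent of $\sigma$, and since $\sigma$ carries traps/tags tied to the unrelated key $R$, $\ver_{R'}(\sigma')$ rejects except with probability $\eps$, so the output is $\bot$ regardless of the adversary's action; this uses the (standard) fact that a quantum authentication scheme rejects with overwhelming probability when the verification key is uniform and independent of the encryption key. Averaging over the two branches yields the claimed convex combination.

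I expect the main obstacle to be that the third share $\sigma$ is a \emph{quantum register correlated with the NMRE message $r$} — it is the encryption of $\rho$ under $r$-derived keys — whereas the plain NMRE guarantee assumes the tampering parties hold no information about $r$. Hence invoking \cref{thm:mainqnmc} in the copy/replace step requires an \emph{augmented}, leakage-robust form of the quantum-secure NMRE in which one of the parties (here party $3$, holding $\sigma$ together with the entanglement register $E_3$) may hold quantum side information depending on $r$. The delicate point is to establish such an augmented statement: either (i) strengthen \cref{thm:mainqnmc} to an augmented NMRE and re-run its analysis, or (ii) exploit that the quantum one-time pad makes $\sigma$ \emph{maximally mixed} for every $\rho$, so that its $r$-dependence can be purified into — and absorbed by — the arbitrary joint state and entanglement already available to the NMRE adversary, reducing back to the non-augmented guarantee. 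A secondary bookkeeping task is to route the shared-entanglement registers $E_1,E_2,E_3$ correctly: parties $1$ and $2$ together play the role of the NMRE's quantum adversary, which must be allowed to be entangled with party $3$, so the NMRE security statement has to be applied with party $3$'s system (the ciphertext plus $E_3$) treated as the adversary's side information. Once the augmented NMRE statement is in place, the remainder is the clean two-branch case analysis above together with the standard behaviour of quantum authentication under a matched versus an independent verification key.
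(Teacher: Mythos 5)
Your high-level architecture (quantum-secure NMRE on two splits, use its output $R$ to protect the quantum message in the third split) is the same as the paper's, but there are two genuine gaps. First, the rate claim is not supported by your instantiation. The trap code's information-theoretic key contains a uniformly random permutation of the $\Theta(k)$ physical qubits, i.e.\ $\Theta(k\log k)$ bits (and full Clifford authentication needs $\Theta(k^2)$ bits), so $|r|=4k$ is not achievable; since the NMRE contributes $2|r|$ to the codeword, the rate degrades to $o(1)$, which is precisely the obstruction the paper records before switching to the Clifford subgroup $\cSC(\cH_M)$ of \cite{CLLW16} (\cref{lem:subclifford}), an exact $2$-design keyed by only $5|M|$ bits. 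The paper's encoder simply sets $Z=C_R(M)$ with $C_R\in\cSC(\cH_M)$ chosen by $R=2\nmext(X,Y)$ — no traps, no verification, no $\perp$: \cref{lem:equal101} shows that twirling the third-split adversary by $\cSC$ collapses its action to ``original message with probability $p$, else approximately maximally mixed,'' which already meets \cref{def:qnmcodesfinaldef}; the accounting $|R|=5|M|$, $|X|+|Y|\approx 2|R|=10|M|$ gives codeword length $\approx 11|M|$ and rate $1/11$ (\cref{thm:main}). Your scheme could be repaired by using $\cSC$-based authentication with a few flag qubits, but the concrete $4k$/$3k$ split as written corresponds to no information-theoretically secure scheme.

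Second, the step you yourself flag as the main obstacle is left open, and your fix (ii) does not work as stated: the NMRE guarantee (\cref{def:nmreq}, \cref{lem:qnmcodesfromnmext}) requires the tampering parties' shared state $\psi_{E_1E_2}$ to be independent of $(X,Y)$, and you cannot absorb $\sigma=\auth_R(\rho)$ into that state — jointly with the purification register $\hat M$ (which appears in the final non-malleability statement) the ciphertext \emph{is} correlated with $R$; $\sigma$ alone being maximally mixed is not enough. No augmented NMRE is needed either. The paper resolves this by a commutation/delaying argument: $U$ and $V$ act on $(X,E_1)$ and $(Y,E_2)$, disjoint from $(M,E_3)$, so the generation of $R$, the application of $C_R$, and the action of $W$ can all be postponed until after $U,V$ (\cref{fig:splitstate2} versus \cref{fig:splitstate4}); the NMRE statement is then invoked with side information $E_2'E_3M\hat M$ that is manifestly independent of $XY$, and the entire third-share processing ($C_R$, then $W$, then $C^\dagger_{R'}$) is a single CPTP map $\Phi$ handled by data processing — the $X'Y'\neq XY$ branch uses the $1$-design \cref{fact:notequal} to decouple $M'$ from $\hat M$, and the $X'Y'=XY$ branch uses \cref{lem:equal101}. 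If you retain a verification-based third share, this same delaying argument is what legitimizes invoking \cref{thm:mainqnmc} in your copy/replace analysis; note also that in the replace branch the NMRE gives near-uniformity of $R$ given $(R',E_2',\dots)$, not of $R'$, so your rejection argument must be phrased with the authentication key uniform and independent of the verification key, rather than the other way around.
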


\begin{theorem}\label{thm:mainqnmc2}
    There exists a rate $1/3$, $3$-split quantum secure non-malleable code.
\end{theorem}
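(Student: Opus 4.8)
I would use, in a black-box way, the compiler of Kanukurthi--Obbattu--Sekar~\cite{KOS18} that turns a rate-$1/2$ $2$-split NMRE into a rate-$1/3$ $3$-split non-malleable code, instantiate it with the quantum secure NMRE of \cref{thm:mainqnmc}, and then re-run the security proof so that it survives a quantum adversary with shared entanglement.

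\textbf{The code.} Let $(\nmreenc,\nmredec)$ be the $2$-split, rate-$1/2$, quantum secure NMRE of \cref{thm:mainqnmc} with (near-)uniform message in $\{0,1\}^{k+t}$, where $k$ is the message length of the code and $t=k^{\Omega(1)}$ an auxiliary parameter, and let $\mac$ be an information-theoretically secure strong one-time MAC over $\F_{2^{t}}$ with $t$-bit keys and tags. To $\enc$ a message $s\in\{0,1\}^{k}$: sample $(w,(c_1,c_2))\leftarrow\nmreenc$; write $w=(w_1,w_2)$ with $|w_1|=k,\ |w_2|=t$; set $e=s\oplus w_1$, $\sigma=\mac_{w_2}(e)$, $c_3=(e,\sigma)$, and output $(c_1,c_2,c_3)$. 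To $\dec(c_1,c_2,c_3)$: recover $w=\nmredec(c_1,c_2)$, parse $w=(w_1,w_2)$ and $c_3=(e,\sigma)$, and output $\bot$ if $\sigma\neq\mac_{w_2}(e)$, else $e\oplus w_1$. As $|c_1|+|c_2|=2(k+t)$ and $|c_3|=k+t$, the block length is $3(k+t)=3k\,(1+o(1))$; taking $t=k^{\delta}$ for a small $\delta>0$ makes the rate tend to $1/3$ and keeps all errors $2^{-k^{\Omega(1)}}$, and $\enc,\dec$ run in time $\poly(k)$.

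\textbf{Security.} Fix $s$ and a $3$-split quantum adversary $\cA=(U,V,W,\ket{\psi}_{E_1E_2E_3})$ tampering $(c_1,c_2,c_3)$ (with $U$ on $c_1E_1$, $V$ on $c_2E_2$, $W$ on $c_3E_3$), and set $w'=\nmredec(c_1',c_2')$. The first step views $(U,V)$ with $\ket\psi$ — of which $E_3$ is merely an extra register the NMRE-adversary carries but never acts on — as a $2$-split quantum adversary against $(\nmreenc,\nmredec)$, and invokes an \emph{augmented} form of the quantum NMRE security of \cref{thm:mainqnmc}: the joint state of $w,w',E_3$ is $\eps_1$-close to $\gamma\,\rho^{\mathsf{cp}}+(1-\gamma)\,\rho^{\mathsf{in}}$, where in $\rho^{\mathsf{in}}$ the message $w$ is uniform and independent of $(w',E_3)$, and in $\rho^{\mathsf{cp}}$ one has $w'=w$ while the $E_3$-part is (essentially) independent of $w$. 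Applying $\dec$ to the ideal state: on $\rho^{\mathsf{in}}$, reparametrizing the pad by $u=s\oplus w_1$ makes $(e,\sigma)$, hence $c_3'=W((e,\sigma),E_3)$, hence the decoded value $G(w',c_3')$, manifestly independent of $s$; on $\rho^{\mathsf{cp}}$ the decoder uses the true key $w_2$, which is uniform and — this is exactly where the augmented property is needed, since the tampering of $c_1,c_2$ is entangled with $E_3$ — (essentially) independent of the $E_3$ seen by $W$, so the pad again hides $s$ in $(e,\sigma,E_3)$ and quantum unforgeability of a strong one-time MAC against an adversary holding one valid pair plus key-independent quantum side information bounds by $\eps_2$ the chance that $W$ outputs $(e',\sigma')\neq(e,\sigma)$ passing the check; hence $\rho^{\mathsf{cp}}$ contributes an output in $\{s,\bot\}$ with the two probabilities $s$-independent up to $\eps_2$. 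Setting $D_{\cA}$ to be $\gamma$ times the distribution ``$\mathsf{same}$ with that probability, else $\bot$'' plus $(1-\gamma)$ times the $s$-independent $\rho^{\mathsf{in}}$-distribution yields $\dec(\cA(\enc(s)))\approx_{\eps_1+\eps_2}\mathsf{copy}(D_{\cA},s)$ for every $s$.

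\textbf{Main obstacle.} The technical core is obtaining and correctly using the augmented quantum NMRE statement: because $E_3$ is entangled with $E_1,E_2$, knowing only that $(w,w')$ is ideal does not suffice — we need that in the non-$\mathsf{copy}$ case $w'$ is independent of $w$ \emph{jointly with} $E_3$, and in the $\mathsf{copy}$ case the state of $E_3$ reveals (essentially) nothing about $w$, in particular nothing about the embedded MAC key $w_2$ (otherwise $W$ could learn $w_1=e\oplus s$ and force a decoding correlated with $s$). I expect this strengthening to be available because the NMRE of \cref{thm:mainqnmc} is built from quantum-secure non-malleable extractors, whose guarantees hold against arbitrary quantum side information; formalizing the precise strong-NMRE notion and checking it yields the two properties above is the part that needs care. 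The rest — a quantum one-time-MAC unforgeability lemma with key-independent side information, and the error/rate accounting — is routine.
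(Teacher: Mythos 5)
Your proposal is correct and is essentially the paper's own construction and proof (\cref{thm:m1}): the third split is the one-time-padded message together with a MAC of the ciphertext keyed by part of $R=2\nmext(X,Y)$, and the analysis splits into the copy/independent cases exactly as you describe, with the copy case closed by a one-time-MAC lemma with key-independent quantum side information (\cref{lem:equal10112}). The "augmented NMRE" strengthening you flag as the main obstacle is precisely what the paper obtains from \cref{lem:qnmcodesfromnmext} combined with the delayed-$W$ picture (\cref{fig:splitstate4'}): $E_3$ is folded into the entanglement carried by the two NMRE adversaries with $V$ acting trivially on it, so the uniformity of $R$ in both the same and different cases holds jointly with $E_2'E_3$ (and with $M\hat{M}$), which is exactly the property your argument needs.
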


\begin{theorem}\label{thm:mainqnmc3}
    There exists a rate $1/5$, $2$-split quantum secure non-malleable code for uniform classical messages.
\end{theorem}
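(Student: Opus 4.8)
The plan is to derive the code from the rate-$1/2$, $2$-split quantum secure non-malleable randomness encoder of \cref{thm:mainqnmc} by a black-box ``message-injection'' transformation, along the same lines as the non-malleable code of \cref{thm:mainqnmc2} but with the auxiliary part folded into the two existing shares. Recall that the encoder produces a uniformly random string $m$ together with a $2$-split codeword $(L,R)$ so that, for every split tampering (with shared entanglement), the value recovered from the tampered shares by the encoder's decoder is $2^{-n^{\Omega(1)}}$-close to $\mathsf{copy}(d,m)$ for some $d$ independent of $m$ and of the adversary's quantum register. Parse $m=(m_1,m_2)$, where $m_1$ has the length $k$ of the message to be encoded and $m_2$ is the key of a one-time message authentication code $\mathsf{MAC}$ that stays secure against an adversary holding quantum side information partially correlated with the key. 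To encode a uniformly random $s\in\{0,1\}^k$, put $\theta=s\oplus m_1$ and $\tau=\mathsf{MAC}_{m_2}(\theta)$ and output the $2$-split codeword with left part $(L,\theta)$ and right part $(R,\tau)$. The decoder, given tampered halves $(L',\theta')$ and $(R',\tau')$, computes $\tilde{m}=(\tilde{m}_1,\tilde{m}_2)$ from the NMRE decoder on $(L',R')$, outputs $\bot$ unless $\tau'=\mathsf{MAC}_{\tilde{m}_2}(\theta')$, and otherwise outputs $\theta'\oplus\tilde{m}_1$.

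The observation that makes the analysis go through is a reparametrization: since $s$ is uniform and hidden from the adversary, the tuple $(L,R,m,\theta,s)$ with $s$ uniform and $\theta=s\oplus m_1$ has exactly the same distribution as the tuple in which $\theta$ is sampled uniformly and independently of the NMRE and $s$ is set to $\theta\oplus m_1$. Thus the left half $(L,\theta)$ carries no information about $m$ beyond $L$, while the right half additionally reveals only the short string $\tau=\mathsf{MAC}_{m_2}(\theta)$. I would then (i) take a union bound over the possible values of $\tau$ (or invoke an augmented variant of the NMRE guarantee tolerating this much leakage from one share) so that NMRE security applies to the induced split tampering $L\to L'$, $R\to R'$, replacing $\tilde{m}$ by $\mathsf{copy}(d,m)$ with $d$ independent of $(m,\theta)$ and of the adversary's register; (ii) in the branch $d\ne\mathsf{same}$, use that $\tilde{m}_1$ is then independent and uniform, so $\theta'\oplus\tilde{m}_1$ is independent of $s$ — an allowed ``unrelated'' outcome; (iii) in the branch $\tilde{m}=m$, observe that the decoder's check becomes $\tau'=\mathsf{MAC}_{m_2}(\theta')$ with $m_2$ still of near-full conditional min-entropy given everything the two tampering functions see, so the one-time quantum-secure authentication property forces either $\theta'=\theta$ — whence the decoder outputs $\theta\oplus m_1=s$ — or $\bot$ up to error $2^{-n^{\Omega(1)}}$. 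Combining, the decoded message is $2^{-n^{\Omega(1)}}$-close to $\mathsf{copy}(d',s)$ for some $d'$ independent of $s$, which is the average-case quantum non-malleability statement.

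For the rate, with message length $k$ one runs the NMRE on a string $m$ of length $2k$ — $k$ bits for the pad $m_1$ and $k$ bits for the authentication key $m_2$, the latter chosen generously so that the one-time MAC remains secure against a quantum adversary entangled with the key — producing a $2$-split codeword of total length $4k$; appending the $k$-bit pad $\theta$ (and absorbing the short tag $\tau$ into the slack) yields total codeword length $5k$, i.e.\ rate $1/5$. The encoder and decoder run in polynomial time since the NMRE's do and $\mathsf{MAC}$ does, and the error is the sum of the NMRE error, the leakage union-bound loss, and the MAC error, each of which is $2^{-n^{\Omega(1)}}$.

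The step I expect to be the main obstacle is (iii): arguing that the one-time MAC is unforgeable in this $2$-split, shared-entanglement setting. The right tampering function, which produces the forged tag $\tau'$, sees only $R$, the honest tag $\tau=\mathsf{MAC}_{m_2}(\theta)$, and a quantum register correlated with $m_2$, but not $\theta'$ itself, while the left function sees only $L$ and $\theta$; so the forgery is ``blind'' in a split-state sense, and to conclude one needs (a) that the NMRE leaves the key $m_2$ with high conditional min-entropy given one share together with the shared entanglement register and the leakage $\tau$, and (b) a quantum-proof one-time MAC (e.g.\ from a quantum-proof almost-XOR-universal hash) whose forging probability degrades only mildly under such conditioning. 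Making the entropy bookkeeping of the hybrid simultaneously consistent for the NMRE reduction (which wants $\theta$ independent of the NMRE) and for the MAC reduction (which wants $m_2$ fresh given the adversary's view) is the delicate part; once this is set up, the remaining steps are a routine chain of $\eps$-closeness bounds.
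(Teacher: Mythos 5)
Your construction is not the paper's. The paper (\cref{thm:m''}) encodes a uniform message $M$ by sampling $(X,Y)$, setting $R=2\nmext(X,Y)$, and letting the auxiliary part be $Z=P^{-1}_R(M)$ for a pairwise-independent \emph{permutation} family (\cref{prop:mac2}); the two shares are $\bigl((Z,Y),X\bigr)$, there is no one-time pad/MAC pair, and the key length $2|M|$ of the permutation is what forces rate $1/5$. The crucial structural point is that, because $M$ is uniform and each $P_r$ is a bijection, $Z$ is uniform and \emph{independent of $(X,Y)$}, so after the same reparametrization you describe ($Z$ sampled first, $\hat M$ regenerated later, \cref{fig:splitstate48}) the extra data folded into a share is independent of the sources, and the NMRE guarantee (\cref{lem:qnmcodesfromnmext}) applies verbatim; the ``same'' branch is then handled by the classical $2$-design-style twirl of \cref{lem:equal101125''} rather than by authentication. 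Your reparametrization of $\theta=s\oplus m_1$ is exactly this trick and is fine.

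The gap is the tag. In your scheme the right share carries $\tau=\mathsf{MAC}_{m_2}(\theta)$ with $m_2$ a substring of $2\nmext(X,Y)$, i.e.\ a function of \emph{both} sources is handed, in the clear, to one tampering isometry before it acts. This is outside the adversary model of \cref{def:nmreq} and \cref{lem:qnmcodesfromnmext}, which require the state shared by the two tamperers to be independent of $XY$ and each isometry to touch only its own share plus that independent register; the paper proves no augmented or leakage-tolerant variant of its NMRE, and it explicitly flags (citing~\cite{BGJR23}) that the ``fold the message-dependent part into a share and invoke an augmented property'' route is precisely what breaks in the quantum, shared-entanglement setting. Your proposed fixes do not close this: a union bound over values of $\tau$ is not a reduction here, because $\tau$ is an input to the tampering map (the adversary is adaptive in it) and conditioning on $\tau$ destroys the product/qpa structure and correlates the sources with the very extractor output whose uniformity you are invoking; and ``invoke an augmented variant of the NMRE'' assumes a primitive the paper does not supply. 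You correctly identify step (iii) — unforgeability of the MAC against a tamperer holding quantum side information and the leaked tag, together with the entropy bookkeeping making the NMRE and MAC hybrids simultaneously consistent — as the delicate point, but that is exactly the missing proof, not a routine chain of closeness bounds; the paper's choice of a pairwise-independent permutation keyed by the \emph{entire} NMRE output is what lets it avoid ever leaking a source-dependent value into a share, at the cost of the longer key that produces rate $1/5$.
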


As a side result, we show something stronger:\begin{itemize}
    \item  The explicit code from \cref{thm:mainqnmc1} is a $3$-out-of-$3$ quantum non-malleable secret sharing scheme with share size at most $n$, any message of length $\approx \frac{n}{11}$.
    \item  The explicit code from \cref{thm:mainqnmc2} is a $3$-out-of-$3$ quantum secure non-malleable secret sharing scheme with share size at most $n$, any message of length $\approx \frac{n}{3}$.
    \item The explicit code from \cref{thm:mainqnmc3} is a $2$-out-of-$2$ average case quantum secure non-malleable secret sharing scheme with share size at most $n$, any message of length $\approx \frac{n}{5}$.
\end{itemize}

All the above secret sharing schemes encoding and decoding procedures run in time $\poly(n)$. All the secret sharing schemes are secure for error $\eps = 2^{-n^{\Omega(1)}}$. We refer the reader to \cref{Appendixss} for more details on quantum non-malleable secret sharing and quantum secure non-malleable secret sharing.

\begin{remark} We note to the reader that as observed in~\cite{GK18}, not every $3$-split non-malleable code (for classical messages) is a $3$-out-of-$3$ non-malleable secret sharing scheme (for classical messages). In particular, as observed in~\cite{BS19}, the  construction of $3$-split non-malleable code of~\cite{KOS18} is not a $3$-out-of-$3$ non-malleable secret sharing scheme.
    
\end{remark}
\begin{remark}
$2$-split quantum secure non-malleable code of 
\cite{ABJ22} is also a $2$-out-of-$2$ quantum secure non-malleable secret sharing scheme. Similarly, $2$-split quantum non-malleable code of \cite{BGJR23} is also a $2$-out-of-$2$ quantum  non-malleable secret sharing scheme.  

    Both the works \cite{ABJ22,BGJR23} achieve their results for exponentially small error, however with inverse polynomial (worst case) rate. On the other hand, our non-malleable secret sharing schemes (both quantum and quantum secure) are of constant rate. 

\end{remark}

\begin{table}
\centering
\begin{tabular}{||c c c c c c||} 
 \hline
 \textbf{Work by} & \textbf{Rate} & \textbf{Messages} & \textbf{Adversary} & \textbf{Shared key}  & \textbf{Splits}\\ [1ex] 
 \hline\hline 
 \cite{KOS18} & $1/2$ & classical & classical & No & {\color{blue} 2} \\[0.5ex] 
 \hline
  This work & $1/2$ & \textbf{classical} & \textbf{quantum}  & No & {\color{blue} 2}\\ [1ex] 
 \hline
\end{tabular}

\vspace{0.5cm}

\caption{Comparison between the best known non-malleable randomness encoders.}
\label{table:nmres}
\end{table}

\section{Our techniques}
\subsection*{Our NMRE construction}
We first provide a short overview of the construction from~\cite{KOS18} before stating our construction. 
The goal of NMRE is to take uniform randomness as input, say $R$ and produce as output $3$ registers, say $MXY$ such that $M$ is uniform and $(X,Y)$ correspond to the $2$-split non-malleable encoding of $M$. What this means is that: when adversary tampers $(X,Y) \to (X',Y')$ independently, we will have $M$ is either same as $M'$ or $M$ is unrelated to $M'$.

Any non-malleable code is, by default, a secure NMRE as we can simply generate randomness $R$ at random and let the codeword be the output of NMC. The main challenge is in building a rate-optimal, state-optimal NMRE. The construction in \cite{KOS18} uses information-theoretic one-time message authentication codes (MACs), randomness extractors and (poor rate) $2$-split non-malleable code as building blocks.

Randomness extractors are functions that take as input a source $W$ and an independent and uniform seed $S$ and extract $O=\Ext(W,S)$. $O$ is close to uniform randomness even when the source is not perfectly uniform but has some min-entropy guarantees (weak source). $\mac$s are functions that take as input message $m$ and uniform key $K$, such that adversary who has access to $(m, \mac(m,K))$ cannot produce $(m',\mac(m',K))$ for any $m\ne m'$ except with tiny probability.


We are now ready to state the construction of \cite{KOS18}. We provide their (slightly modified) construction which is a $3$-split NMRE which is as follows:
\begin{itemize}
    \item Consider input $R = W \vert \vert S \vert \vert K$.
    \item Output $M = \Ext(W,S)$.
    \item Output $(X,Y) =  \mathsf{NMC}(S,K)$ and $Z= W \vert \vert \mac(W,K)$.
\end{itemize}

\begin{figure}[h]
\centering

\resizebox{12cm}{6cm}{

\begin{tikzpicture}



\node at (0,2.8) {$W$};
\draw (0.7,2.8) -- (0.7,3);
\draw (2,3) -- (0.7,3);
\draw (2,2.9) rectangle (3.5,3.5);
\node at (2.7,3.2) {$\mac$};
\draw (3.5,3.2) -- (6.3,3.2);

\draw (0.2,2.8) -- (4.5,2.8);
\node at (0,0.2) {$S$};
\draw (3.5,0.2) -- (4.5,0.2);
\draw (4.5,0.2) -- (4.5,2.4);
\draw (6.3,2.4) -- (4.5,2.4);

\node at (0,-0.2) {$K$};
\draw (1.4,2.8) -- (1.4,1.8);
\draw (1.4,1.8) -- (2,1.8);
\draw (0.2,-0.2) -- (2,-0.2);
\draw (1.4,1.2) -- (2,1.2);
\draw (1.4,0.2) -- (1.4,1.2);
\draw (1,-0.2) -- (1,3.2);
\draw (2,3.2) -- (1,3.2);
\draw (0.2,0.2) -- (2,0.2);
\draw (3.5,1.5) -- (4.2,1.5);
\node at (3.7,1.7) {$M$};

\node at (0,3.7) {$M$};
\draw  (4.2,1.5) -- (4.2,3.7);
\draw  (0.2,3.7) -- (4.2,3.7);

\draw (15.2,1.5) -- (13.5,1.5);
\draw (13,1.5) -- (13.5,1.5);
\node at (13.4,1.7) {$M'$};



\draw (2,1) rectangle (3.5,2);
\node at (2.7,1.5) {$\Ext$};
\draw (2,-0.4) rectangle (3.5,0.6);
\node at (2.7,0) {$\mathsf{NMC}$};

\node at (7.8,3) {$W'$};

\node at (7.8,2.6) {$Y'$};
\draw (4,2.8) -- (6.3,2.8);
\draw (7.3,2.8) -- (8.5,2.8);
\draw (7.3,3.2) -- (8.5,3.2);
\draw (11,2.8) -- (9.7,2.8);
\draw (7.3,2.4) -- (7.7,2.4);
\draw (7.7,2.4) -- (7.7,0.5);
\draw (8.5,0.5) -- (7.7,0.5);

\node at (3.7,-0.02) {$X$};
\node at (3.7,0.4) {$Y$};
\node at (7.8,0.0) {$X'$};
\draw (3.5,-0.2) -- (5,-0.2);

\draw (5,0.2) -- (5,-0.2);
\draw (5,0.2) -- (6.3,0.2);

\draw (7.3,0.2) -- (8.5,0.2);
\draw (11,0.2) -- (9.6,0.2);
\node at (9.9,0.4) {$K'$};
\node at (9.9,0) {$S'$};
\draw (10,0.6) -- (9.6,0.6);
\draw (10,0.6) -- (10,2);
\draw (9,2) -- (10,2);
\draw (9,2) -- (9,2.5);

\draw (6.3,2) rectangle (7.3,3.5);
\node at (6.8,2.8) {$f$};
\draw (6.3,0) rectangle (7.3,1);
\node at (6.8,0.5) {$g$};
\node at (9.1,0.3) {$\mathsf{NMD}$};
\node at (9.1,3) {$\mathsf{Verify}$};
\node at (11.3,3.5) {If check passes, };
\node at (11,3) { $W$ else $\perp$ };
\node at (7,-0.4) {$\mathcal{A}=(f,g)$};
\draw (5.2,-0.8) rectangle (8.1,4);

\draw (8.5,2.5) rectangle (9.7,3.5);



\draw (0.5,-2) rectangle (4.8,5);
\draw (8.35,-2) rectangle (14.1,5);
\draw (11.4,-0.5) rectangle (13,1.9);
\draw (8.5,-0.5) rectangle (9.6,1);
\node at (11.5,-1.6) {$\dec$};
\node at (2.2,-1.6) {$\enc$};
\draw (11,1.8) -- (11,2.8);
\draw (11,1.8) -- (11.4,1.8);
\draw (11,1.2) -- (11.4,1.2);
\draw (11,0.2) -- (11,1.2);
\node at (12.2,1.5) {$\Ext$};
\node at (15.4,1.5) {$M'$};

\end{tikzpicture} }
\caption{Rate $1/2, ~2$-split non-malleable randomness encoder (slightly modified)~\cite{KOS18}.}\label{fig:kos}
\end{figure}
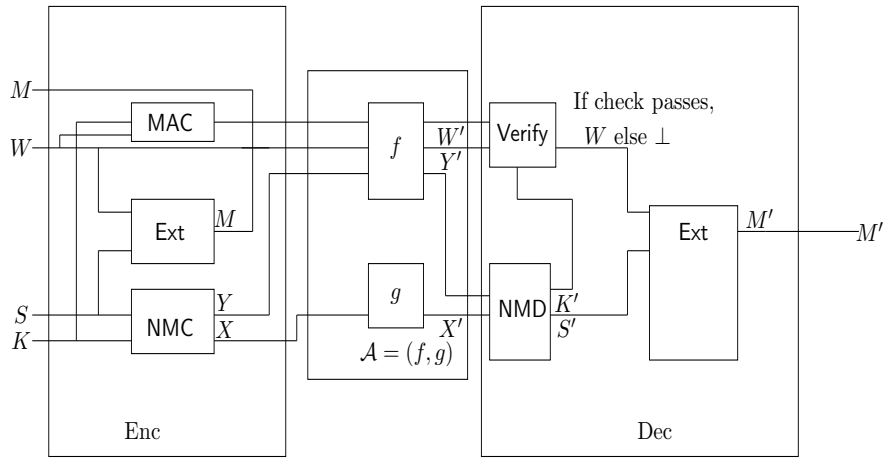

The randomness $M$ is generated as the output of an extractor.
The problem now boils down to protecting both $W,S$ non-malleably. Since size of $\vert W \vert > \vert M \vert$, we cannot protect $W$ using poor rate $\mathsf{NMC}$. This is where $\mac$ comes into the picture. Using short key $K$, one can hope to protect $W$ by considering $Z=W \vert \vert \mac(W,K)$ in one of the splits. The other two splits $(X,Y)$ are used to non-malleably protect both short seed $S$ and the key $K$ (which is used to protect $W$). In \cite{KOS18}, the authors further require $\vert W \vert > 2 \vert M\vert$ for their technical arguments to go through. This overall gives a rate $1/2$, $3$-split NMRE. They also prove that on replacing $\mathsf{NMC}$ with a stronger augmented $\mathsf{NMC}$, we can consider $(X, (YZ))$ as two splits, and the above construction still remains secure which gives a $2$-split NMRE. 

Even though \cite{KOS18} achieve a rate optimal and state optimal NMRE, their construction is not known to be quantum secure. One may hope to prove that their construction is quantum secure, given we know quantum secure extractors~\cite{DPVR09}, quantum secure NMC for classical messages~\cite{BJK21}. Unfortunately, the NMC used by~\cite{KOS18} is the construction of~\cite{Li17} which is not known to be quantum secure. Furthermore, it is a priori not clear if their construction remains quantum secure if we replace the NMC used by~\cite{KOS18} with the quantum secure NMC of~\cite{BJK21}.

We  provide a construction of rate $1/2$, $2$-split NMRE which is arguably simpler than the construction in~\cite{KOS18} and is also quantum secure (see \cref{table:nmres}). Our  NMRE is essentially obtained from a new quantum secure $2$-source non-malleable extractor $2\nmext$ that we construct. Our new  $2\nmext$ construction follows on similar lines to that of~\cite{BJK21}. In~\cite{BJK21}, both the sources were $n$ bits long. In our case, one source is $n$ bit long and the other is $\delta n$ bits long,  which requires a careful change and adjustment of parameters.  Consider the following simple approach (see Figure~\ref{fig:splitstate2adv}): 
\begin{itemize}
    \item First sample uniform registers $X$ and $Y$ as two parts of codewords.  
    \item Encoding procedure computes $M= 2\nmext(X,Y)$ and treats it as uniform random message for which $(X,Y)$ is the non-malleable encoding. 
\end{itemize}

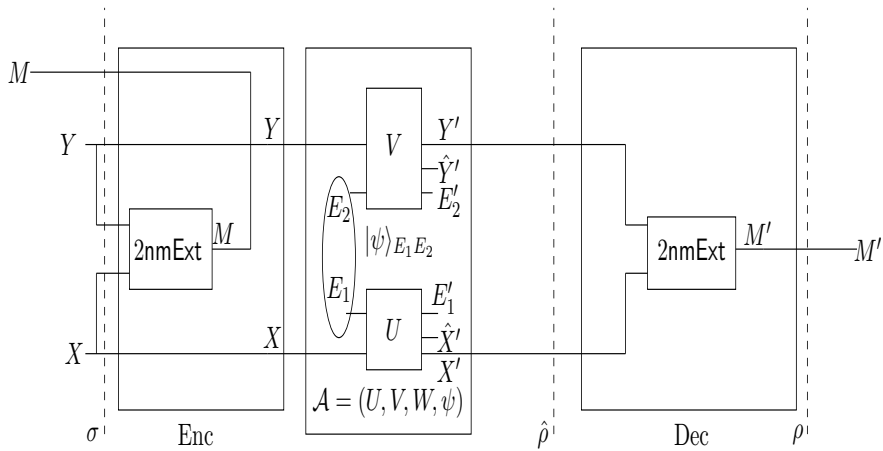
\begin{figure}[h]
\centering

\resizebox{12cm}{6cm}{

\begin{tikzpicture}



\node at (0.9,2.8) {$Y$};

\draw (1.2,2.8) -- (4.5,2.8);
\node at (1,0.2) {$X$};
\draw (1.4,2.8) -- (1.4,1.8);
\draw (1.4,1.8) -- (2,1.8);
\draw (1.4,1.2) -- (2,1.2);
\draw (1.4,0.2) -- (1.4,1.2);
\draw (1.2,0.2) -- (4.5,0.2);
\draw (3.5,1.5) -- (4.2,1.5);
\node at (3.7,1.7) {$M$};

\node at (0,3.7) {$M$};
\draw  (4.2,1.5) -- (4.2,3.7);
\draw  (0.2,3.7) -- (4.2,3.7);

\draw (15.2,1.5) -- (13.5,1.5);
\draw (13,1.5) -- (13.5,1.5);
\node at (13.4,1.7) {$M'$};


\draw (2,1) rectangle (3.5,2);
\node at (2.7,1.5) {$2\nmext$};

\draw [dashed] (1.55,-0.8) -- (1.55,4.5);
\draw [dashed] (9.7,-0.8) -- (9.7,4.5);
\draw [dashed] (14.3,-0.8) -- (14.3,4.5);

\node at (6.9,1.6) {$\ket{\psi}_{E_1E_2}$};
\node at (1.34,-0.8) {$\sigma$};
\node at (9.5,-0.8) {$\hat{\rho}$};
\node at (14.13,-0.8) {$\rho$};

\node at (4.6,3) {$Y$};
\node at (7.8,3) {$Y'$};

\draw (4.5,2.8) -- (6.3,2.8);
\draw (7.3,2.8) -- (11,2.8);

\node at (4.6,0.4) {$X$};
\node at (7.8,0.0) {$X'$};

\draw (4.5,0.2) -- (6.3,0.2);
\draw (7.3,0.2) -- (11,0.2);

\draw (6.3,2) rectangle (7.3,3.5);
\node at (6.8,2.8) {$V$};
\draw (6.3,0) rectangle (7.3,1);
\node at (6.8,0.5) {$U$};

\node at (6.7,-0.4) {$\mathcal{A}=(U,V,W,\psi)$};
\draw (5.2,-0.8) rectangle (8.2,4);

\draw (5.8,1.4) ellipse (0.3cm and 1cm);
\node at (5.8,2) {$E_2$};
\draw (6,2.2) -- (6.3,2.2);

\node at (7.8,2.1) {$E_2'$};
\draw (7.3,2.2) -- (7.5,2.2);
\node at (5.8,1) {$E_1$};
\draw (5.93,0.7) -- (6.3,0.7);
\node at (7.7,0.9) {$E_1'$};
\draw (7.3,0.7) -- (7.6,0.7);


\draw (1.8,-0.5) rectangle (4.8,4);
\draw (10.2,-0.5) rectangle (14.1,4);
\draw (11.4,1) rectangle (13,1.9);
\node at (12.2,-0.8) {$\dec$};
\node at (3.2,-0.8) {$\enc$};
\draw (11,1.8) -- (11,2.8);
\draw (11,1.8) -- (11.4,1.8);
\draw (11,1.2) -- (11.4,1.2);
\draw (11,0.2) -- (11,1.2);
\node at (12.2,1.5) {$2\nmext$};
\node at (15.4,1.5) {$M'$};
\node at (7.8,0.4) {$\hat{X}'$};
\draw (7.3,0.4) -- (7.6,0.4);
\node at (7.8,2.5) {$\hat{Y}'$};
\draw (7.3,2.5) -- (7.6,2.5);
\end{tikzpicture} }
\caption{Rate $1/2, ~2$-split quantum secure non-malleable randomness encoder.}\label{fig:splitstate2adv}
\end{figure}
Here, we need a function $2\nmext$ such that, when adversary tampers and  $(X,Y) \ne (X', Y')$, we get $M=2\nmext(X,Y)$ to be independent (thus unrelated) of $M'=2\nmext(X',Y')$. When adversary does not tamper and $(X,Y) = (X', Y')$, then we have $M=M'$. This is exactly the functionality provided by a $2$-source quantum secure non-malleable extractor. Hence, using  arguments similar to that of~\cite{ABJ22}, we can construct a quantum secure NMRE using a quantum secure $2\nmext$. 

Let $n$ be a positive integer and $\delta>0$ be any tiny constant. We construct $2 \nmext : \{0,1 \}^n \times \{ 0,1\}^{\delta n} \rightarrow \{ 0,1\}^{(0.5-\delta)n}$ which is quantum secure (details of our construction in \cref{sec:2nm}). Thus the rate of our quantum secure NMRE is $( \frac{0.5-\delta}{1+\delta}) \approx (0.5-\delta')$ for tiny $\delta'>0.$

 \begin{remark}
   \em{}We note that the $2$-source non-malleable extractor construction of {\cite{BJK21}}, viewed as an quantum secure NMRE provides a rate $\frac{1}{8}$ quantum secure NMRE. But we optimize the parameters in their construction to obtain rate $(0.5 -\delta')$ quantum secure NMRE for any tiny constant $\delta'>0$. 
 \end{remark}

 \subsection*{From quantum secure NMRE to $3$-split quantum and quantum secure non-malleable code}


We now provide the definition of non-malleable codes for quantum messages. We state the definition for $3$-split model but it can be generalized to any $t$-split model.

Let $\sigma_M$ be an arbitrary state in a message register $M$ and $\sigma_{M\hat{M}}$ be its canonical purification.
Let ($3$-split) coding scheme be given by an encoding Completely Positive Trace-Preserving (CPTP) map 
$\enc : \cL( \cH_M) \to \cL(\cH_{X}\otimes \cH_Y\otimes\cH_{Z})$ and a decoding CPTP map $\dec  : \cL(\cH_{X}\otimes\cH_{Y} \otimes\cH_{Z}) \to  \cL(\cH_{M})$, where $\cL(\cH)$ is the space of all linear operators in the Hilbert space $\cH$.
The most basic property we require of this coding scheme $(\enc,\dec)$ is correctness (which includes preserving entanglement with external systems), i.e.,
\begin{equation*}
\dec(\enc(\sigma_{M\hat{M}}))=\sigma_{M\hat{M}},
\end{equation*}
where we use the shorthand $T$ to represent the CPTP map $T \otimes \id$ whenever the action of the identity operator $\id$ is clear from context.

Before we proceed to define $3$-split non-malleability, we describe the $3$-split adversarial tampering model in the quantum setting.
Let $\rho_{XYZ}=\enc(\sigma_M)$ be the $3$-split encoding of message $\sigma_M$.
A $3$-split tampering adversary $\cA$ is specified by three tampering maps.  $U:\cL(\cH_X\otimes \cH_{E_1})\to\cL(\cH_{X'}\otimes\cH_{E'_1})$, $V:\cL(\cH_Y\otimes \cH_{E_2})\to\cL(\cH_{Y'}\otimes \cH_{E'_2})$ and $W:\cL(\cH_Z\otimes \cH_{E_3})\to\cL(\cH_{Z'}\otimes \cH_{E'_3})$ along with a quantum state $\ket\psi_{E_1 E_2E_3}$ which captures the shared entanglement between the non-communicating tampering adversaries. Finally, the decoding procedure $\dec$ is applied to the tampered codeword $\tau_{X' Y'Z'}$ and stored in register $M'$.
\cref{fig:splitstate211d} presents a diagram of this tampering model.
Let 
\begin{equation*}
    \eta = \dec( ( U \otimes V \otimes W)  \left(\enc( \sigma_{M \hat{M}}) \otimes \ketbra{\psi} \right) ( U^\dagger \otimes V^\dagger \otimes W^\dagger) )
\end{equation*}
be the final state after applying the $3$-split tampering adversary $\cA$ followed by the decoding procedure. The $3$-split non-malleability of the coding scheme $(\enc,\dec)$ is defined as follows.
\begin{definition}[$3$-split non-malleable code for  quantum messages\label{def:qnmcodes}~\cite{BGJR23}]\label{def:qnmcodesfinaldef}  
We say that the coding scheme
$(\enc, \dec)$ is an \emph{$\eps$-$3$-split non-malleable code for quantum messages} if for every $3$-split adversary $\cA=(U,V,W,\ket\psi_{E_1 E_2E_3})$ and for every quantum message $\sigma_M$ (with canonical purification $\sigma_{M\hat{M}}$)
it holds that
\begin{equation*}
    \eta_{M'\hat{M}} \approx_\eps p_{\cA} \sigma_{M \hat{M}}+(1-p_{\cA})\gamma^{\cA}_{M'}\otimes \sigma_{\hat{M}},
\end{equation*}
where $p_{\cA}\in[0,1]$ and $\gamma^{\cA}_{M'}$ depend only on the $3$-split adversary $\cA$, and $\approx_\eps$ denotes that the two states are $\eps$-close in trace distance. 
\end{definition}

Once we have a rate $1/2$ quantum secure NMRE, we use ideas similar to \cite{KOS18} to construct $3$-split non-malleable codes. Say, we have a NMRE that uses randomness $R=(X,Y)$ where $(X,Y)$ is $2$-split non-malleable encoding of $R$. We can then use the randomness $R= 2\nmext(X,Y)$ to both encrypt and authenticate a quantum (or classical) message in the third split using $R$. We first provide a $3$-split argument of~\cite{KOS18} for classical message $m$. Consider $R=R_e \vert \vert R_a$. 
\begin{itemize}
    \item Sample uniform $X \otimes Y$, and consider $(X,Y)$ correspond to $2$-splits of codeword
    \item Compute $R_e \vert \vert R_a =R=2\nmext(X,Y)$
    \item Output the third split as $Z=R_e \oplus m \vert \vert \mac(R_a,R_e \oplus m )$.
\end{itemize}
Using the $\mac$ property, we can protect the cipher text $ R_e \oplus m $, which in turn protects $m$, given the encryption property of one-time pad. Carefully analyzing the parameters of $\mac$, OTP, rate $1/2$ quantum secure NMRE, the above stated construction provides rate $1/3$, $3$-split quantum secure non malleable codes for classical messages. 

This approach does not work for quantum messages. This is because the notion of quantum message authentication and quantum encryption are not two seperate concepts unlike for classical messages. We have a way to protect quantum messages in the third split using quantum authentication schemes~\cite{BW16} which use Clifford group unitaries. Unfortunately, using Clifford group to authenticate quantum messages is expensive as it needs $\vert R\vert = \cO(\vert M\vert ^2)$, where $\vert M\vert$ is the number of qubits of a quantum message. Thus, we cannot construct a constant rate $3$-split quantum non-malleable code using this.

A recent work of~\cite{CLLW16} showed that there exists a subgroup of Clifford group ($\cSC(\cH_A)$) which forms a unitary $2$-design (a property needed to provide non-malleability for third split message~\cite{AM17,BGJR23}). Furthermore, the implementation of $\cSC(\cH_A)$ requires $\vert R \vert =5 \vert M \vert$ randomness. Using the twirling property of $\cSC(\cH_A)$ to authenticate (or non-malleably protect the quantum message), we obtain a rate $1/11$, $3$-split quantum non malleable code.

\subsection*{From quantum secure NMRE to quantum secure $2$-split non-malleable code}
One natural way to define a $2$-split quantum secure NMC would be to consider $(X,(Y \vert \vert Z))$ as two parts of the code (where $X,Y,Z$ are three parts of our $3$-split code). One may hope to use the augmented property of NMC to extend the arguments to $2$-split case. But in the quantum setting as observed in~\cite{BGJR23}, the security of the underlying NMC is compromised. To argue quantum security of $2$-split NMCs for uniform classical messages, we use arguments similar to that of~\cite{BGJR23}, combined with a classical analogue of unitary $2$-designs, i.e. pair-wise independent permutation family.  This leads us to argue rate $1/5$, $2$-split non malleable codes for uniform classical messages.

\cref{table:nmcs} summarizes the main properties of known constructions of split non-malleable codes and related constructions of keyed quantum schemes.
\begin{table}
\centering
\begin{tabular}{||c c c c c c||} 
 \hline
 \textbf{Work by} & \textbf{Rate} & \textbf{Messages} & \textbf{Adversary} & \textbf{Shared key}  & \textbf{Splits}\\ [1ex] 
 \hline\hline 
\cite{CGL15} & $\Omega\left(\frac{1}{\mathsf{poly}(n)}\right)$  & classical  & classical  & No & 2\\ [1ex] 
 \hline
 \cite{Li19} & $ \Omega\left(\frac{ \log \log (n)}{\log(n)}\right)$ & classical & classical & No & 2\\ [1ex] 
 \hline
 \cite{AO20} & $\Omega(1)$ & classical & classical & No & 2 \\[0.5ex] 
 \hline
 \cite{AKOOS22} & $1/3$ & classical & classical & No & 2\\[0.5ex] 
 \hline
 \cite{KOS18} & $1/3$ & classical & classical & No & {\color{blue} 3} \\[0.5ex] 
 \hline
 \cite{ABJ22} & $\Omega\left(\frac{1}{\mathsf{poly}(n)}\right)$ & classical & \textbf{quantum} & No & 2 \\[0.5ex] 
 \hline
\cite{AM17} & $\Omega(1)$ & \textbf{quantum} & \textbf{quantum} & {\color{red} Yes} & 2\\[0.5ex] 
 \hline
 \cite{BGJR23} & $\Omega\left(\frac{1}{\mathsf{poly}(n)}\right)$ & \textbf{quantum} & \textbf{quantum}  & No & 2\\ [1ex] 
 \hline
  This work & $1/11$ & \textbf{quantum} & \textbf{quantum}  & No & {\color{blue} 3}\\ [1ex] 
 \hline
  This work & $1/3$ & \textbf{classical} & \textbf{quantum}  & No & {\color{blue} 3}\\ [1ex] 
 \hline
  This work & $1/5$ & \textbf{(average) classical} & \textbf{quantum}  & No & {\color{blue} 2}\\ [1ex] 
 \hline
\end{tabular}

\vspace{0.5cm}

\caption{Comparison between the best known explicit constructions of split non-malleable codes. Here, $n$ denotes the codeword length.}
\label{table:nmcs}
\end{table}

\subsection*{Organization}

In \cref{sec:prelims}, we discuss some quantum information theoretic and other preliminaries. \cref{sec:usefulstuff} contains useful lemmas and facts. In \cref{sec:2nmre}, we provide and define a quantum secure rate $1/2$ non-malleable randomness encoder. We describe and analyze our construction of rate $1/11$, $3$-split quantum non-malleable code in \cref{sec:qnmc67}. We describe and analyze our construction of rate $1/3$, $3$-split quantum secure non-malleable code in \cref{sec:3qnmcclassical}. We describe and analyze our construction of rate $1/5$, $2$-split quantum secure non-malleable code for uniform classical messages in \cref{sec:2qnmcclassical}. \cref{Appendixss} contains details on our $3$-out-of-$3$ quantum non-malleable secret sharing scheme.

\subsection*{Acknowledgments}

 The work of R.B. is supported by the Ministry
of Education, Singapore, under the Research Centres of Excellence program. The work of N.G.B. was done partially while he was a research assistant at the Centre for Quantum Technologies. The work of R.J. is supported by the NRF grant NRF2021-QEP2-02-P05 and the Ministry of Education, Singapore, under the Research Centres of Excellence program. This work was done in part while R.J. was visiting the Technion-Israel Institute of Technology, Haifa, Israel and the Simons Institute for the Theory of Computing, Berkeley, USA.

\section{Preliminaries}
\label{sec:prelims}

The notation and major part of this section is Verbatim taken from~\cite{BGJR23}.

Let $n,m,d$  represent positive integers and $k, k_1, k_2,  \eps \geq 0$ represent reals.

\subsection{Basic notation}
All the logarithms are evaluated to the base $2$. 
We denote sets by uppercase calligraphic letters such as $\X$ and use uppercase roman letters such as $X$ and $Y$ for both random variables and quantum registers. The distinction will be clear from context.
We denote the uniform distribution over $\{0,1\}^d$ by $U_d$.
For a {\em random variable} $X \in \X$, we use $X$ to denote both the random variable and its distribution, whenever it is clear from context. 
We use $x \leftarrow X$ to denote that $x$ is drawn according to $X$, and, for a finite set $\X$, we use $x \leftarrow \X$ to denote that $x$ is drawn uniformly at random from $\X$. For two random variables $X,Y$ we use $X \otimes Y$ to denote their product distribution. 
We call random variables $X, Y$, {\em copies} of each other if and only if $\Pr[X=Y]=1$.  For a random variable $X \in \{0,1 \}^n$ and $d\leq n$, let $\pre(X, d)$ represent the first $d$ bits of $X$. 

\subsection{Quantum information theory}

In this section we cover some important basic prerequisites from quantum information theory alongside some useful lemmas and facts.

\subsubsection{Conventions and notation}
Consider a finite-dimensional Hilbert space $\cH$ endowed with an inner-product $\langle \cdot, \cdot \rangle$ (we only consider finite-dimensional Hilbert-spaces). A quantum state (or a density matrix or a state) is a positive semi-definite operator on $\cH$ with trace value  equal to $1$. 
It is called {\em pure} if and only if its rank is $1$.  
 Let $\ket{\psi}$ be a unit vector on $\cH$, that is $\langle \psi,\psi \rangle=1$.  
With some abuse of notation, we use $\psi$ to represent the state and also the density matrix $\ketbra{\psi}$, associated with $\ket{\psi}$. Given a quantum state $\rho$ on $\cH$, the {\em support of $\rho$}, denoted by $\text{supp}(\rho)$, is the subspace of $\cH$ spanned by all eigenvectors of $\rho$ with non-zero eigenvalues.
 
A {\em quantum register} $A$ is associated with some Hilbert space $\cH_A$. Define $\vert A \vert := \log\left(\dim(\cH_A)\right)$. Let $\mathcal{L}(\cH_A)$ represent the set of all linear operators on the Hilbert space $\cH_A$. For operator $O \in \cL(\cH_A)$, we denote $O^T$ to represent the transpose of operator $O$. For operators $O, O'\in \cL(\cH_A)$, the notation $O \leq O'$ represents the L\"{o}wner order, that is, $O'-O$ is a positive semi-definite operator. We denote by $\mathcal{D}(\cH_A)$, the set of all quantum states on the Hilbert space $\cH_A$. State $\rho$ with subscript $A$ indicates $\rho_A \in \mathcal{D}(\cH_A)$. If two registers $A,B$ are associated with the same Hilbert space, we shall represent the relation by $A\equiv B$. For two states $\rho, \sigma$, we let $\rho \equiv \sigma$ represent that they are identical as states (potentially in different registers). Composition of two registers $A$ and $B$, denoted $AB$, is associated with the Hilbert space $\cH_A \otimes \cH_B$.  For two quantum states $\rho\in \mathcal{D}(\cH_A)$ and $\sigma\in \mathcal{D}(\cH_B)$, $\rho\otimes\sigma \in \mathcal{D}(\cH_{AB})$ represents the tensor product ({\em Kronecker} product) of $\rho$ and $\sigma$. The identity operator on $\cH_A$ is denoted $\id_A$. Let $U_A$ denote maximally mixed state in $\cH_A$. Let $\rho_{AB} \in \mathcal{D}(\cH_{AB})$. Define
$$ \rho_{B} \defeq \tr_{A}{\rho_{AB}} \defeq \sum_i (\bra{i} \otimes \id_{B})
\rho_{AB} (\ket{i} \otimes \id_{B}) , $$
where $\{\ket{i}\}_i$ is an orthonormal basis for the Hilbert space $\cH_A$.
The state $\rho_B\in \mathcal{D}(\cH_B)$ is referred to as the marginal state of $\rho_{AB}$ on the register $B$. Unless otherwise stated, a missing register from subscript in a state represents partial trace over that register. Given $\rho_A\in\mathcal{D}(\cH_A)$, a {\em purification} of $\rho_A$ is a pure state $\rho_{AB}\in \mathcal{D}(\cH_{AB})$ such that $\tr_{B}{\rho_{AB}}=\rho_A$. Purification of a quantum state is not unique.
Suppose $A\equiv B$. Given $\{\ket{i}_A\}$ and $\{\ket{i}_B\}$ as orthonormal bases over $\cH_A$ and $\cH_B$ respectively, the \textit{canonical purification} of a quantum state $\rho_A$ is a pure state $\rho_{AB} \defeq (\rho_A^{\frac{1}{2}}\otimes\id_B)\left(\sum_i\ket{i}_A\ket{i}_B\right)$. 

A quantum {map} $\cE: \mathcal{L}(\cH_A)\rightarrow \mathcal{L}(\cH_B)$ is a completely positive and trace preserving (CPTP) linear map. CPTP map $\cE$ is described by Kraus operators $\{ M_i : \cH_A\rightarrow \cH_B \}_i$ such that $\cE(\rho) = \sum_i M_i \rho M^\dagger_i$ and $\sum_i M^\dagger_i M_i =\id_A. $  A {\em Hermitian} operator $H:\cH_A \rightarrow \cH_A$ is such that $H=H^{\dagger}$. A projector $\Pi \in  \mathcal{L}(\cH_A)$ is a Hermitian operator such that $\Pi^2=\Pi$. A {\em unitary} operator $V_A:\cH_A \rightarrow \cH_A$ is such that $V_A^{\dagger}V_A = V_A V_A^{\dagger} = \id_A$. The set of all unitary operators on $\cH_A$ is  denoted by $\mathcal{U}(\cH_A)$. An {\em isometry}  $V:\cH_A \rightarrow \cH_B$ is such that $V^{\dagger}V = \id_A$ and $VV^{\dagger} = \id_B$. A {\em POVM} element is an operator $0 \le M \le \id$.~We use the shorthand $\bar{M} \defeq \id - M$, where $\id$ is clear from the context. 
We use shorthand $M$ to represent $M \otimes \id$, where $\id$ is clear from the context. 

\begin{definition}[Pauli operators]\label{def:pauli}
The single-qubit Pauli operators are given by
\[ I = \begin{pmatrix} 1 & 0 \\ 0 & 1\end{pmatrix} \quad X = \begin{pmatrix} 0 & 1 \\ 1 & 0\end{pmatrix} \quad Y = \begin{pmatrix} 0 & -i \\ i & 0 \end{pmatrix} \quad Z = \begin{pmatrix} 1 & 0 \\ 0 & -1 \end{pmatrix}.\]

 An $n$-qubit Pauli operator is given by the $n$-fold tensor product of single-qubit Pauli operators. We denote the set of all $\vert A \vert$-qubit Pauli operators on $\cH_A$ by  $\cP(\cH_A)$, where $\vert \cP(\cH_A)\vert =4^{\vert A \vert}$. Any linear operator $L \in \cL(\cH_A)$ can be written as a linear combination of $\vert A \vert$-qubit Pauli operators with complex coefficients as $L = \sum_{P \in \mathcal{P}(\cH_A)} \alpha_P P$. This is called the Pauli decomposition of a linear operator.
\end{definition}
\begin{definition}[Pauli group]
The single-qubit Pauli group is given by
\begin{equation*}
    \{ +P, -P, \ iP, \ -iP : P \in \{ I, X, Y, Z\} \}.
\end{equation*}
The Pauli group on $\vert A \vert$-qubits is the group generated by the operators described above applied to each of $\vert A \vert$-qubits in the tensor product. We denote the $\vert A \vert$-qubit Pauli group on $\cH_A$ by  $\tilde{\cP}(\cH_A)$.
    
\end{definition}
\begin{definition}[Clifford group]\label{def:clifford}
The Clifford group $\mathcal{C}(\cH_A)$ is defined as the group of unitaries that normalize the Pauli group $\tilde{\cP}(\cH_A)$: 
$$\mathcal{C}(\cH_A) = \{ V \in \mathcal{U}(\cH_A) : V \tilde{\cP}(\cH_A) V^\dagger =\tilde{\cP}(\cH_A)\}.$$ The Clifford unitaries are elements in the Clifford group.

\end{definition}

\begin{definition}[\cite{CLLW16}~Sub-group of a Clifford group]\label{lem:subclifford}
Let $P ,Q \in \cP(\cH_A)$ be non-identity Pauli operators. There exists a sub-group of Clifford group $\mathcal{SC}(\cH_A)$ such that the following holds: 
\[  \vert \{ C \in \cSC(\cH_A) \vert C^\dagger P C =Q \} \vert = \frac{\vert \cSC(\cH_A)  \vert}{\vert\cP(\cH_A) \vert -1} \quad ; \quad \vert \cSC(\cH_A)  \vert = 2^{5 \vert A \vert }-2^{3 \vert A \vert}.\]
    Informally, applying a random Clifford operator from  $\mathcal{SC}(\cH_A)$ (by conjugation) maps $P$ to a Pauli operator chosen uniformly over all non-identity Pauli operators. 
Furthermore $\cP(\cH_A)\subset \cSC(\cH_A)$.

\end{definition}

\begin{definition}[Classical register in a pure state]\label{def:classicalinpurestate}Let $\X$ be a set. A {\em classical-quantum} (c-q) state $\rho_{XE}$ is of the form \[ \rho_{XE} =  \sum_{x \in \X}  p(x)\ket{x}\bra{x} \otimes \rho^x_E , \] where ${\rho^x_E}$ are states. 

Let $\rho_{XEA}$ be a pure state. We call $X$ a classical register in $\rho_{XEA}$, if $\rho_{XE}$ (or $\rho_{XA}$) is a c-q state. We identify random variable $X$ with the register $X$, with $\Pr(X=x) =p(x)$.

\end{definition}

\begin{definition}[Copy of a classical  register]\label{def:copyofaclassicalregister}
Let $\rho_{X\hat{X}E}$ be a pure state with $X$ being a classical register in $\rho_{X\hat{X}E}$ taking values in $\cX$. Similarly, let $\hat{X}$ be a classical register in $\rho_{X\hat{X}E}$ taking values in $\cX$. Let $\Pi_{\mathsf{Eq}} = \sum_{x \in \cX} \ketbra{x} \otimes \ketbra{x}$ be the \emph{equality} projector acting on the registers $X\hat{X}$. We call $X$ and $\hat{X}$ copies of each other (in the computational basis) if $\tr\left(\Pi_{\mathsf{Eq}} \rho_{X\hat{X}}\right) =1$.
\end{definition}

\begin{definition}[Conditioning] \label{def:conditioning}
Let  
\[ \rho_{XE} =  \sum_{x \in \{0,1\}^n}  p(x)\ket{x}\bra{x} \otimes \rho^x_E , \]
be a c-q state. For an event $\mathcal{S} \subseteq \{0,1\}^n$, define  $$\Pr(\mathcal{S})_\rho \defeq  \sum_{x \in \mathcal{S}} p(x) \quad ; \quad (\rho|X\in \mathcal{S})\defeq \frac{1}{\Pr(\mathcal{S})_\rho} \sum_{x \in \mathcal{S}} p(x)\ket{x}\bra{x} \otimes \rho^x_E.$$
We sometimes shorthand $(\rho|X\in \mathcal{S})$ as $(\rho|\mathcal{S})$ when the register $X$ is clear from the context. 

Let $\rho_{AB}$ be a state with $|A|=n$. We define 
$(\rho|A \in \mathcal{S}) \defeq (\sigma|\mathcal{S})$, where $\sigma_{AB}$ is the c-q state obtained by measuring the register $A$ in $\rho_{AB}$ in the computational basis. In case $\mathcal{S}=\{s\}$ is a singleton set, we shorthand $(\rho|A = s) \defeq \tr_A (\rho|A =s)$.
\end{definition}

\begin{definition}[Safe maps] \label{def:safe}
We call an isometry $V: \cH_X \otimes \cH_A \rightarrow \cH_X \otimes \cH_B$, {\em safe} on $X$ if and only if there is a collection of isometries $V_x: \cH_A\rightarrow \cH_B$ such that the following holds.  For all states $\ket{\psi}_{XA} = \sum_x \alpha_x \ket{x}_X \ket{\psi^x}_A$,
$$V  \ket{\psi}_{XA} =  \sum_x \alpha_x \ket{x}_X V_x \ket{\psi^x}_A.$$
We call a CPTP map $\Phi: \mathcal{L}( \cH_X \otimes \cH_A) \rightarrow \mathcal{L}(\cH_X \otimes \cH_B)$, {\em safe} on classical register $X$ iff there is a collection of CPTP maps $\Phi_x: \mathcal{L}(\cH_A)\rightarrow \mathcal{L}(\cH_B)$ such that the following holds.  For all c-q states $\rho_{XA} = \sum_x \Pr(X=x)_{\rho} \ketbra{x} \otimes  \rho^x_A$,
$$\Phi({\rho}_{XA}) =  \sum_x \Pr(X=x)_{\rho} \ketbra{x} \otimes \Phi_x( \rho^x_A).$$

\end{definition}

\begin{definition}[Extension] \label{def:extension} Let $$\rho_{XE}=  \sum\limits_{x \in \cX}  p(x)\ket{x}\bra{x} \otimes \rho^x_E,$$
be a c-q state. For a function $Z:\cX \rightarrow \cZ$, define the following extension of $\rho_{XE}$, 
\[ \rho_{ZXE} \defeq  \sum_{x\in \cX}  p(x) \ket{Z(x)}\bra{Z(x)} \otimes \ket{x}\bra{x} \otimes  \rho^{x}_E.\]
\end{definition}

For a pure state $\rho_{XEA}$ (with $X$ classical and $X \in \cX$) and a function $Z:\cX \rightarrow \cZ$, define $\rho_{Z\hat{Z}XEA}$ to be a pure state extension of $\rho_{XEA}$ generated via a safe isometry $V: \cH_X \rightarrow \cH_X \otimes \cH_Z \otimes \cH_{\hat{Z}}$ ($Z$ classical with copy $\hat{Z}$). We use the notation $\mathcal{M}_{A}(\rho_{AB})$ to denote measurement in the computational basis on register $A$ in state $\rho_{AB}$.

All isometries considered in this paper are safe on classical registers that they act on. Isometries  applied by adversaries can be assumed without loss of generality as safe on classical registers, by the adversary first making a (safe) copy of classical registers and then proceeding as before. This does not reduce the power of the adversary.


\begin{definition}
\label{def:infoquant}    
\begin{enumerate}
\item For $p \geq 1$ and matrix $A$,  let $\| A \|_p$ denote the {\em Schatten} $p$-norm defined as $\| A \|_p  \defeq (\tr(A^\dagger A)^{\frac{p}{2}})^{\frac{1}{p}}.$

\item For states $\rho,\sigma: \Delta(\rho,\sigma)\defeq \frac{1}{2}\Vert\rho- \sigma\Vert_1.$ We write $\rho \approx_\eps \sigma$ to denote $\Vert\rho- \sigma\Vert_1 \le \eps$. 

\item {\bf Fidelity:}  For states $\rho,\sigma: ~\F(\rho,\sigma)\defeq\|\sqrt{\rho}\sqrt{\sigma}\|_1.$ 

\item {\bf Bures metric:}  For states $\rho,\sigma: \Delta_B(\rho,\sigma)\defeq \sqrt{1-\F(\rho,\sigma)}.$ 

\item Define $d(X)_\rho \defeq \Delta_B(\rho_X,U_X)$ and  $d(X|Y )_\rho \defeq \Delta_B(\rho_{XY}, U_X \otimes \rho_Y)$. 

\item {\bf Max-divergence (\cite{Datta09}, see also~\cite{JainRS02}):}\label{dmax}  For states $\rho,\sigma$ such that $\supp(\rho) \subset \supp(\sigma)$, $$ \dmax{\rho}{\sigma} \defeq  \min\{ \lambda \in \mathbb{R} :   \rho  \leq 2^{\lambda} \sigma \}.$$ 

\item {\bf Min-entropy and conditional-min-entropy:}  For a state $\rho_{XE}$, the min-entropy of $X$ is defined as,
 $$ \hminone{X}_\rho \defeq - \dmax{\rho_{X}}{\id_X} .$$
 The conditional-min-entropy of $X$, conditioned on $E$, is defined as,
 $$ \hmin{X}{E}_\rho \defeq - \inf_{\sigma_E \in  \mathcal{D}(\cH_{E}) } \dmax{\rho_{XE}}{\id_X \otimes \sigma_E}.$$
\end{enumerate}
\end{definition}

\subsubsection{Useful facts from the literature}

For the facts stated below without citation, we refer the reader to standard textbooks~\cite{NielsenC00,WatrousQI}.

\begin{fact}[Data-processing]
\label{fact:data}
Let $\rho, \sigma$  be two states and $\cE$ be a CPTP map. Then 
\begin{itemize}
    \item $ \Delta ( \cE(\rho)  , \cE(\sigma))  \le \Delta (\rho  , \sigma).$    
     \item $ \Delta_B ( \cE(\rho)  , \cE(\sigma))  \le \Delta_B (\rho  , \sigma).$    
    \item  $\dmax{ \cE(\rho) }{ \cE(\sigma) }  \le \dmax{\rho}{ \sigma} .$    
\end{itemize}
Above are equalities if $\cE$ is a map corresponding to an isometry.
\end{fact}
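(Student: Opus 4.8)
The plan is to treat the three inequalities separately, each time using the variational or order-theoretic characterization of the quantity in question together with the two defining properties of a CPTP map — complete positivity and trace preservation — and then to deduce the isometry equalities by producing a CPTP left inverse.

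\textbf{Trace distance.} I would use the Stinespring decomposition $\cE(\cdot)=\tr_E\!\big(V(\cdot)V^\dagger\big)$ with $V:\cH_A\to\cH_B\otimes\cH_E$ an isometry, so that $\cE$ factors as ``apply $V$'', then ``trace out $E$''. An isometry preserves the singular values of $\rho-\sigma$ (since $V^\dagger V=\id$), hence $\|V(\rho-\sigma)V^\dagger\|_1=\|\rho-\sigma\|_1$; and the partial trace is a trace-norm contraction, $\|\tr_E X\|_1\le\|X\|_1$, which itself follows from $\|\tr_E X\|_1=\max_{-\id\le M\le\id}\tr\big((M\otimes\id_E)X\big)\le\|X\|_1$. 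Composing the two steps gives $\Delta(\cE(\rho),\cE(\sigma))\le\Delta(\rho,\sigma)$. (Equivalently, one can argue directly from $\Delta(\rho,\sigma)=\max_{0\le M\le\id}\tr(M(\rho-\sigma))$ using that the adjoint map $\cE^\dagger$ is completely positive and unital, so $\cE^\dagger(M)$ is again a valid POVM element on the input.)

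\textbf{Bures metric.} This reduces to monotonicity of fidelity, $\F(\cE(\rho),\cE(\sigma))\ge\F(\rho,\sigma)$, because $\Delta_B=\sqrt{1-\F}$ and $x\mapsto\sqrt{1-x}$ is decreasing. To prove the fidelity bound I would invoke Uhlmann's theorem: choose purifications $\ket{\psi}_{AR},\ket{\phi}_{AR}$ of $\rho,\sigma$ with $|\braket{\psi}{\phi}|=\F(\rho,\sigma)$, apply the Stinespring isometry $V$ of $\cE$ to the $A$-system, and observe that $V\ket{\psi}$ and $V\ket{\phi}$ are purifications of $\cE(\rho)$ and $\cE(\sigma)$ (on systems $BER$) with overlap $|\bra{\psi}V^\dagger V\ket{\phi}|=|\braket{\psi}{\phi}|=\F(\rho,\sigma)$; since fidelity equals the maximal overlap over all purifications, $\F(\cE(\rho),\cE(\sigma))\ge\F(\rho,\sigma)$.

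\textbf{Max-divergence.} This is the most direct: if $\lambda=\dmax{\rho}{\sigma}$ then $\rho\le 2^{\lambda}\sigma$ in the L\"owner order, and since $\cE$ is positive it is order-preserving, so $\cE(\rho)\le 2^{\lambda}\cE(\sigma)$; this inclusion in particular forces $\supp(\cE(\rho))\subseteq\supp(\cE(\sigma))$, so $\dmax{\cE(\rho)}{\cE(\sigma)}$ is well defined and at most $\lambda$. Finally, for the equality claim when $\cE(X)=VXV^\dagger$ with $V$ an isometry, I would exhibit the CPTP map $\Phi(X)=V^\dagger X V+\tr\!\big((\id-VV^\dagger)X\big)\,\tau$ for an arbitrary fixed state $\tau$, check $\Phi\circ\cE=\id$ (using $(\id-VV^\dagger)V=0$), and apply each of the three inequalities to $\Phi$ in the reverse direction to sandwich and conclude equality.

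The only step needing genuine care is the Bures/fidelity part, where one must invoke Uhlmann's theorem correctly and verify that pushing the purifying system through the Stinespring isometry really yields purifications of the output states with the stated overlap; the trace-distance and max-divergence bounds are each essentially a one-line norm or order argument, and the isometry equalities follow formally from the existence of the left inverse $\Phi$.
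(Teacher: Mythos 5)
Your proposal is correct, and it is the standard textbook argument: the paper does not prove this fact at all but simply cites standard references (Nielsen--Chuang, Watrous), and your three steps (trace-norm contraction via Stinespring plus partial trace, fidelity monotonicity via Uhlmann, and order-preservation of positive maps for $\mathrm{D}_{\max}$), together with the CPTP left inverse for the isometry equality case, are exactly the arguments found there. No gaps; nothing further is needed.
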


\begin{fact}\label{traceavg1}
Let $\rho,\sigma$ be states. Let $\Pi$ be a projector. Then,
\begin{equation*}
    \tr(\Pi \rho)  \left\|  \frac{\Pi \rho_{} \Pi}{\tr(\Pi \rho) }-  \frac{\Pi\sigma_{} \Pi}{\tr(\Pi \sigma)} \right\|_1 \leq \| \rho_{}-\sigma_{} \|_1.
\end{equation*}
The fact also holds for $\Delta_B$, instead of $\|. \|_1$.
\end{fact}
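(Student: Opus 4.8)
The plan is to reduce the whole statement to the data-processing inequality of \cref{fact:data}, applied to the pinching channel of $\Pi$. Put $p \defeq \tr(\Pi\rho)$ and $q \defeq \tr(\Pi\sigma)$; I may assume $p,q>0$, since otherwise the left-hand side is $0$ and the inequality is trivial. Pulling the scalar $p$ inside the norm, the quantity to be bounded is $\bigl\| \Pi\rho\Pi - \tfrac{p}{q}\,\Pi\sigma\Pi \bigr\|_1$, so it suffices to show this is at most $\|\rho-\sigma\|_1$.

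First I would compare $\rho$ and $\sigma$ with their pinchings. Let $\bar\Pi \defeq \id - \Pi$ and let $\mathcal{P}(X) \defeq \Pi X \Pi + \bar\Pi X \bar\Pi$, which is CPTP. Since $\mathcal P(\rho)$ and $\mathcal P(\sigma)$ are block-diagonal across $\mathrm{range}(\Pi)\oplus\mathrm{range}(\bar\Pi)$, \cref{fact:data} gives
\begin{equation*}
\bigl\| \Pi\rho\Pi - \Pi\sigma\Pi \bigr\|_1 + \bigl\| \bar\Pi\rho\bar\Pi - \bar\Pi\sigma\bar\Pi \bigr\|_1 \;=\; \bigl\| \mathcal P(\rho) - \mathcal P(\sigma) \bigr\|_1 \;\le\; \|\rho-\sigma\|_1 .
\end{equation*}
The trace norm of a Hermitian operator is at least the absolute value of its trace, so the second summand is at least $\bigl| \tr(\bar\Pi\rho) - \tr(\bar\Pi\sigma) \bigr| = |p-q|$; hence $\bigl\| \Pi\rho\Pi - \Pi\sigma\Pi \bigr\|_1 + |p-q| \le \|\rho-\sigma\|_1$.

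The remaining step is to swap $\Pi\sigma\Pi$ for $\tfrac pq\Pi\sigma\Pi$ by the triangle inequality: using $\|\Pi\sigma\Pi\|_1 = \tr(\Pi\sigma) = q$,
\begin{equation*}
\Bigl\| \Pi\rho\Pi - \tfrac pq\,\Pi\sigma\Pi \Bigr\|_1 \;\le\; \bigl\| \Pi\rho\Pi - \Pi\sigma\Pi \bigr\|_1 + \Bigl| 1 - \tfrac pq \Bigr|\,\|\Pi\sigma\Pi\|_1 \;=\; \bigl\| \Pi\rho\Pi - \Pi\sigma\Pi \bigr\|_1 + |p-q| ,
\end{equation*}
and chaining this with the previous bound gives the claim. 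For the $\Delta_B$ variant I would run the same pinching strategy, now using the data-processing inequality for fidelity, the additivity $F(\oplus_i A_i,\oplus_i B_i)=\sum_i F(A_i,B_i)$ over orthogonal blocks, and strong concavity of fidelity in place of their trace-norm analogues. I expect the only genuinely delicate point in the whole argument to be bookkeeping the normalization constants $p$ and $q$ as they propagate through $\Delta_B=\sqrt{1-F(\cdot,\cdot)}$ — the trace-norm case goes through cleanly precisely because the rescaling error is controlled exactly by $|p-q|$, and one must check the analogous cancellation survives in Bures form.
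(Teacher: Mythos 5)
Your trace-norm argument is correct and complete: the pinching channel $\mathcal{P}(X)=\Pi X\Pi+\bar{\Pi}X\bar{\Pi}$ together with additivity of $\|\cdot\|_1$ over the two orthogonal blocks, the bound $\|\bar{\Pi}(\rho-\sigma)\bar{\Pi}\|_1\ge|\tr(\bar{\Pi}\rho)-\tr(\bar{\Pi}\sigma)|=|p-q|$, and the triangle-inequality step that trades $\Pi\sigma\Pi$ for $\tfrac{p}{q}\Pi\sigma\Pi$ at a cost of exactly $|p-q|$ all check out. The paper states this fact without proof (it is attributed to standard references), so there is no authorial argument to compare against; what you wrote is the standard pinching proof and it fully settles the $\|\cdot\|_1$ half.

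The $\Delta_B$ half, however, is only a plan, and the ``delicate point'' you flag is not mere bookkeeping --- it is exactly where the argument breaks. Running your strategy with fidelity gives, by data processing and block additivity, $\F(\rho,\sigma)\le\sqrt{pq}\,\F(\hat{\rho},\hat{\sigma})+\sqrt{(1-p)(1-q)}$ with $\hat{\rho}=\Pi\rho\Pi/p$, $\hat{\sigma}=\Pi\sigma\Pi/q$; using $\sqrt{pq}\le\tfrac{p+q}{2}$ and $\sqrt{(1-p)(1-q)}\le 1-\tfrac{p+q}{2}$, the best cancellation this yields is $\Delta_B^2(\rho,\sigma)\ge\tfrac{p+q}{2}\,\Delta_B^2(\hat{\rho},\hat{\sigma})$, i.e.\ a prefactor $\sqrt{(p+q)/2}$ on the \emph{unsquared} Bures distance, not the prefactor $\tr(\Pi\rho)$ appearing in the statement. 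This is not a deficiency of the method: the $\Delta_B$ statement with prefactor $\tr(\Pi\rho)$ is false as written. Take $\Pi=\ketbra{1}+\ketbra{2}$, $\rho=0.9\,\ketbra{1}+0.1\,\ketbra{3}$, $\sigma=0.001\,\ketbra{2}+0.999\,\ketbra{3}$: the left-hand side is $0.9\cdot\Delta_B(\ketbra{1},\ketbra{2})=0.9$, while $\F(\rho,\sigma)=\sqrt{0.1\cdot 0.999}\approx 0.316$, so $\Delta_B(\rho,\sigma)\approx 0.83<0.9$ (the trace-norm inequality holds there with equality, which is why your first half is unaffected). So the second half cannot be completed as stated; what your pinching argument does prove, and what you should record, is the corrected variant $\sqrt{\tr(\Pi\rho)/2}\;\Delta_B\!\left(\tfrac{\Pi\rho\Pi}{\tr(\Pi\rho)},\tfrac{\Pi\sigma\Pi}{\tr(\Pi\sigma)}\right)\le\Delta_B(\rho,\sigma)$, equivalently the statement for $\Delta_B^2$ with prefactor $\tr(\Pi\rho)/2$ --- which is the form that the paper's later uses of this fact (conditioning on an event of probability $\Pr$ and paying a $\mathrm{poly}(1/\Pr)$ factor) actually require.
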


\begin{fact}\label{fact:traceconvex} Let $\rho, \sigma$  be states such that $\rho = \sum_{x} p_x \rho^x$,  $\sigma = \sum_{x} p_x \sigma^x$, $\{\rho^x, \sigma^x\}_x$ are states and $\sum_x p_x =1$. Then, 
\begin{equation*}
    \Vert\rho- \sigma \Vert_1 \leq \sum_x p_x \Vert \rho^x -\sigma^x \Vert_1.
\end{equation*}

\end{fact}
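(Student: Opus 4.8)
The plan is to obtain this directly from two properties of the Schatten $1$-norm — the triangle inequality and absolute homogeneity — using the fact that $\|\cdot\|_1$ is a genuine norm on the (finite-dimensional) space of linear operators. First I would use linearity to rewrite the difference as a single matched combination,
\[
\rho - \sigma \;=\; \sum_x p_x \rho^x \;-\; \sum_x p_x \sigma^x \;=\; \sum_x p_x\,(\rho^x - \sigma^x).
\]
Next I would apply the triangle inequality for $\|\cdot\|_1$ to this sum, and then peel off the nonnegative scalars $p_x$ by absolute homogeneity:
\[
\|\rho - \sigma\|_1 \;=\; \left\| \sum_x p_x(\rho^x-\sigma^x) \right\|_1 \;\le\; \sum_x \bigl\| p_x(\rho^x-\sigma^x)\bigr\|_1 \;=\; \sum_x p_x \,\|\rho^x-\sigma^x\|_1,
\]
which is exactly the claimed bound. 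The same two lines, with $\Delta(\cdot,\cdot)=\tfrac12\|\cdot\|_1$ substituted throughout, give the statement phrased in terms of trace distance.

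I do not expect a genuine obstacle here: the assertion is precisely convexity of the map $(\rho,\sigma)\mapsto\|\rho-\sigma\|_1$ along coordinate‑matched convex combinations, and it uses nothing beyond the norm axioms — in particular positive semidefiniteness and unit trace of the $\rho^x,\sigma^x$ are not needed, only that $\{p_x\}$ is a probability vector. The one point that deserves a word of care is when the index set is countably infinite: there one should first note that the right-hand series converges, since each $\|\rho^x-\sigma^x\|_1\le 2$ and $\sum_x p_x=1$, so the partial sums are uniformly bounded and the finite triangle inequality passes to the limit; for a finite index set (the only case used in this paper) this is automatic and the argument above is complete as stated.
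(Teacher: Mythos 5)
Your argument is correct: writing $\rho-\sigma=\sum_x p_x(\rho^x-\sigma^x)$ and applying the triangle inequality plus absolute homogeneity of $\Vert\cdot\Vert_1$ is exactly the standard proof, and the paper itself states this fact without proof, deferring to standard textbooks. Nothing further is needed; your remark that positivity and unit trace of the $\rho^x,\sigma^x$ are not used is also accurate.
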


\subsection{Useful lemmas and facts}\label{sec:usefulstuff}

\subsubsection{Pauli and Clifford gates}

\begin{fact}[Pauli twirl~\cite{DCEL09}]\label{lem:paulitwirl}
 Let $\rho \in \cD( \cH_A)$ be a state and $P , P' \in \cP(\cH_A)$ be Pauli operators such that $P \ne P'$. Then 
$$\sum_{Q \in \cP(\cH_A)}   Q^\dagger P Q \rho Q^\dagger P'^\dagger Q  =0 .$$

As an immediate corollary, we obtain that for any normal operator $M \in \cL( \cH_A)$ such that $M^\dagger M=MM^\dagger$, 
$$\sum_{Q \in \cP(\cH_A)}   Q^\dagger P Q M Q^\dagger P'^\dagger Q  =0 ,$$
since $M$ has an eigen-decomposition. 
\end{fact}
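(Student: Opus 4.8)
The plan is to reduce everything to the commutation structure of the Pauli group and a standard character-sum / orthogonality argument. First I recall the key algebraic fact: for $n$-qubit Paulis (over $\cH_A$ with $\vert A\vert = n$), any two elements $P,Q \in \cP(\cH_A)$ either commute or anticommute, i.e.\ $QP = (-1)^{\langle P,Q\rangle} PQ$ for a symplectic bilinear form $\langle \cdot,\cdot\rangle$ on $\F_2^{2n}$ (identifying each Pauli with its $X$- and $Z$-exponent vectors). Hence $Q^\dagger P Q = (-1)^{\langle P,Q\rangle} P$ and likewise $Q^\dagger P'^\dagger Q = \pm P'^\dagger$ up to a phase that does not depend on $Q$ (only on $P'$). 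Therefore each summand in $\sum_{Q}Q^\dagger P Q\, \rho\, Q^\dagger P'^\dagger Q$ equals $(-1)^{\langle P,Q\rangle + \langle P',Q\rangle}$ times a fixed operator $P\rho P'^\dagger$ (times an overall constant phase independent of $Q$). Pulling the $Q$-independent pieces out of the sum, the whole expression is
\[
\Big(\sum_{Q \in \cP(\cH_A)} (-1)^{\langle P - P',\, Q\rangle}\Big)\cdot (\text{fixed operator}),
\]
writing $P - P'$ for the difference of exponent vectors.

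Second, I evaluate the scalar sum. Since $P \ne P'$ in $\cP(\cH_A)$ — and here one must be slightly careful: I should argue that $P$ and $P'$ being distinct as Pauli \emph{operators} (the set $\cP(\cH_A)$ has exactly $4^n$ elements, one per exponent vector up to phase) means their exponent vectors $v_P, v_{P'} \in \F_2^{2n}$ are distinct, so $v := v_P - v_{P'} \ne 0$. The symplectic form is non-degenerate, so the linear functional $Q \mapsto \langle v, Q\rangle$ on $\F_2^{2n}$ is not identically zero; hence it takes the value $0$ on exactly half of the $4^n$ choices of $Q$ and the value $1$ on the other half, making $\sum_{Q}(-1)^{\langle v, Q\rangle} = 0$. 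This kills the entire expression and proves the first display.

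Third, for the corollary I use the spectral theorem for normal operators: write $M = \sum_i \lambda_i \Pi_i$ with $\Pi_i$ the (orthogonal) spectral projectors, so $M = \sum_i \lambda_i \rho_i$ where $\rho_i \defeq \Pi_i / \tr(\Pi_i)$ are genuine density operators (up to the positive scalars $\tr(\Pi_i)$, which I can absorb). By linearity in the middle argument, $\sum_Q Q^\dagger P Q\, M\, Q^\dagger P'^\dagger Q = \sum_i \lambda_i \tr(\Pi_i) \big(\sum_Q Q^\dagger P Q\, \rho_i\, Q^\dagger P'^\dagger Q\big) = \sum_i \lambda_i \tr(\Pi_i)\cdot 0 = 0$, invoking the already-proved statement for each state $\rho_i$. (Strictly one just needs linearity of the twirl map $X \mapsto \sum_Q Q^\dagger P Q\, X\, Q^\dagger P'^\dagger Q$ in $X$ together with the fact that it vanishes on all density operators, hence on all Hermitian operators, hence — taking real and imaginary parts, or just noting the map is linear over $\bbC$ — on all operators; but presenting it via the eigendecomposition of $M$ as the paper hints is cleanest.)

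The only real subtlety — the "main obstacle," though it is minor — is the bookkeeping of phases and the identification of distinct Pauli \emph{operators} with distinct $\F_2^{2n}$ vectors: one must make sure the phase factors $\pm 1$ (and the fixed global phase coming from $P'^\dagger$ versus $P'$) genuinely do not depend on the summation variable $Q$, so that they factor out cleanly, and that "$P \ne P'$" is used in exactly the right place (to guarantee $v \ne 0$ and hence the character sum vanishes). Everything else is routine linear algebra over $\F_2$.
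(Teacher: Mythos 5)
The paper does not prove this fact at all --- it is imported verbatim from~\cite{DCEL09} --- so there is no in-paper proof to compare against; your argument is exactly the standard one behind that citation. It is correct: conjugation by $Q$ maps $P$ to $\pm P$ with the sign $(-1)^{\langle v_P, v_Q\rangle}$ given by the symplectic form on exponent vectors in $\F_2^{2\vert A\vert}$ (all phases independent of $Q$), the condition $P\ne P'$ makes $v_P + v_{P'}\ne 0$ so the character sum over all $4^{\vert A\vert}$ choices of $Q$ vanishes, and the extension to a normal operator $M$ via its eigendecomposition and linearity of the twirl map is exactly the route the fact's statement suggests.
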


\begin{fact}[Twirl]\label{lem:cliffordtwirl}
Let $\rho \in \cD( \cH_A)$ be a state and $P , P' \in \cP(\cH_A)$ be Pauli operators such that $P \ne P'$. Let $\cSC(\cH_A)$ be the sub-group of Clifford group as defined in \cref{lem:subclifford}. Then, 
    $$ \sum_{C \in \cSC(\cH_A)}   C^\dagger P C \rho C^\dagger P' C  = 0.$$
  As an immediate corollary, we obtain that for any normal operator $M \in \cL( \cH_A)$ such that $M^\dagger M=MM^\dagger$, 
$$\sum_{C \in \cSC(\cH_A)}   C^\dagger P C M C^\dagger P'^\dagger C  =0 ,$$
since $M$ has an eigen-decomposition. 
\end{fact}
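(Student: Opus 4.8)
The plan is to derive the $\cSC(\cH_A)$-twirl from the ordinary Pauli twirl (\cref{lem:paulitwirl}) by an averaging argument, using only that $\cP(\cH_A)\subset\cSC(\cH_A)$ and that $\cSC(\cH_A)$ is a group (\cref{lem:subclifford}). Write $\Lambda(\rho)\defeq\sum_{C\in\cSC(\cH_A)}C^\dagger P C\,\rho\,C^\dagger P' C$ for the map appearing in the statement; the goal is to show $\Lambda(\rho)=0$ whenever $P\ne P'$. Since conjugation $C^\dagger(\cdot)C$ is insensitive to a global phase on $C$, I treat the elements of $\cP(\cH_A)$ and $\cSC(\cH_A)$ as group elements modulo global phase, so that $\cP(\cH_A)$ really is a subset of elements of the group $\cSC(\cH_A)$.

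The key step is a re-indexing identity. Fix any $Q\in\cP(\cH_A)\subset\cSC(\cH_A)$; because $\cSC(\cH_A)$ is a group, $C\mapsto QC$ permutes it, so substituting $C=QD$ gives
$$\Lambda(\rho)=\sum_{D\in\cSC(\cH_A)}(QD)^\dagger P (QD)\,\rho\,(QD)^\dagger P' (QD)=\sum_{C\in\cSC(\cH_A)}C^\dagger\bigl(Q^\dagger P Q\bigr)C\,\rho\,C^\dagger\bigl(Q^\dagger P' Q\bigr)C.$$
Since this holds for every $Q\in\cP(\cH_A)$, I average over $Q$, interchange the two finite sums, and factor out the outer conjugation, obtaining
$$\Lambda(\rho)=\frac{1}{\vert\cP(\cH_A)\vert}\sum_{C\in\cSC(\cH_A)}C^\dagger\Bigl(\,\sum_{Q\in\cP(\cH_A)}\bigl(Q^\dagger P Q\bigr)\bigl(C\rho C^\dagger\bigr)\bigl(Q^\dagger P' Q\bigr)\Bigr)C.$$
For each fixed $C$ the inner sum is exactly a Pauli twirl of the state $C\rho C^\dagger$: as every Pauli operator is Hermitian, $P'=P'^\dagger$, so it equals $\sum_{Q\in\cP(\cH_A)}Q^\dagger P Q\,(C\rho C^\dagger)\,Q^\dagger P'^\dagger Q$, which vanishes by \cref{lem:paulitwirl} because $P\ne P'$. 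Hence every term of the outer sum is $C^\dagger\cdot 0\cdot C=0$, so $\Lambda(\rho)=0$, which is the claimed identity.

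For the stated corollary, I would write a normal operator $M$ in its spectral decomposition $M=\sum_i\mu_i\ketbra{v_i}$, where each $\ketbra{v_i}$ is a state; by $\mathbb{C}$-linearity of $\Lambda$ and the case just proved (and $P'^\dagger=P'$, to match the expression $C^\dagger P'^\dagger C$) we get $\sum_{C\in\cSC(\cH_A)}C^\dagger P C\,M\,C^\dagger P'^\dagger C=\sum_i\mu_i\,\Lambda(\ketbra{v_i})=0$. Equivalently, $\Lambda$ is $\mathbb{C}$-linear and vanishes on every density operator, and density operators span $\cL(\cH_A)$, so $\Lambda$ vanishes identically.

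This lemma is short, so there is no real obstacle; the only points that need a little care are bookkeeping ones. One should be explicit that the containment $\cP(\cH_A)\subset\cSC(\cH_A)$ is used purely to make $C\mapsto QC$ a bijection of the group $\cSC(\cH_A)$ for each $Q\in\cP(\cH_A)$, so that no phase accounting is required (the maps involved depend only on phase classes), and one should record the trivial identity $P'^\dagger=P'$ so the inner sum matches the precise hypothesis of \cref{lem:paulitwirl}.
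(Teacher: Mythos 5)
Your proof is correct, and it reaches the same destination as the paper's — a reduction to the Pauli twirl (\cref{lem:paulitwirl}) via the containment $\cP(\cH_A)\subset\cSC(\cH_A)$ — but by a genuinely different bookkeeping route. The paper decomposes $\cSC(\cH_A)$ into left cosets $C_i\cP(\cH_A)$ and observes that within each coset the sum is a Pauli twirl of $\rho$ with the conjugated pair $C_i^\dagger P C_i\ne C_i^\dagger P' C_i$; this step leans on the Clifford normalizer property, i.e.\ that $C_i^\dagger P C_i$ is again a Pauli operator (up to phase), so that \cref{lem:paulitwirl} applies. You instead exploit invariance of the whole sum under $C\mapsto QC$ for $Q\in\cP(\cH_A)$, average over $Q$, and swap sums, so the inner sum becomes a Pauli twirl of the \emph{state} $C\rho C^\dagger$ with the original pair $P\ne P'$; this never uses that the group elements are Cliffords, only that $\cSC(\cH_A)$ is a finite group of unitaries containing the Paulis modulo phase, so your argument is marginally more elementary and slightly more general. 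The small costs are the extra averaging/interchange step and the phase bookkeeping you correctly flag (conjugation is insensitive to global phases, so working modulo phase is legitimate) — a point the paper's coset argument needs just as much but leaves implicit. Your handling of the corollary via the spectral decomposition of a normal $M$ and linearity, together with $P'^\dagger=P'$, matches the paper's intent.
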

\begin{proof}
The proof follows along similar lines to the proof of Clifford twirl in \cite{BW16}.  We provide a  proof here for completeness. 

Since $\cP(\cH_A)$ is a subgroup of $\cSC(\cH_A)$, the number of left cosets of $\cP(\cH_A)$ in $\cSC(\cH_A)$ are $\frac{\vert \cSC(\cH_A) \vert}{\vert \cP(\cH_A) \vert}$. Then,

\begin{align*}
      \sum_{C \in \cSC(\cH_A)}   C^\dagger P C \rho C^\dagger P' C  &=\sum_{i=1}^{\frac{\vert \cSC(\cH_A) \vert}{\vert \cP(\cH_A) \vert}} \sum_{R\in \cP(\cH_A)} (C_iR)^\dagger P (C_iR) \rho (C_iR)^\dagger P' (C_iR)  \\
      &=\sum_{i=1}^{\frac{\vert \cSC(\cH_A) \vert}{\vert \cP(\cH_A) \vert}} \sum_{R\in \cP(\cH_A)} R^\dagger C_i^\dagger P C_iR \rho R^\dagger C_i^\dagger P' C_iR .
\end{align*}
Since $C_i^\dagger P C_i=Q_i$ for some $Q_i\in \cP(H_A)$ and if $P\neq P'$, then $Q_i\neq Q_i'$, we get:

\begin{align*}
      \sum_{C \in \cSC(\cH_A)}   C^\dagger P C \rho C^\dagger P' C  
      &=\sum_{i=1}^{\frac{\vert \cSC(\cH_A) \vert}{\vert \cP(\cH_A) \vert}} \sum_{R\in \cP(\cH_A)} R^\dagger Q_iR \rho R^\dagger Q_i'R \\
      &= \sum_{i=1}^{\frac{\vert \cSC(\cH_A) \vert}{\vert \cP(\cH_A) \vert}} 0\\
      &=0.
\end{align*}
Here second last equality follows from \cref{lem:paulitwirl}.
 
\end{proof}

\begin{fact}[Modified twirl]\label{lem:cliffordtwirl1}
 Let $\rho_{\hat{A}A}$ be the canonical purification of $\rho_A$. Let $P \in \cP(\cH_A),P' \in \cP(\cH_A)$ be Pauli operators such that $P \ne P'$. Let $\cSC(\cH_A)$ be the sub-group of Clifford group as defined in \cref{lem:subclifford}. Then, 
    $$ \sum_{C \in \cSC(\cH_A)}   (\id \otimes C^\dagger PC) \rho_{\hat{A}A} (\id \otimes C^\dagger P' C)  = 0.$$
\end{fact}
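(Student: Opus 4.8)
The plan is to reduce the modified twirl to the ordinary twirl of \cref{lem:cliffordtwirl}, using the fact that the operators $C^\dagger P C$ and $C^\dagger P' C$ act only on the register $A$ while the identity sits on $\hat{A}$. The first step I would carry out is to upgrade \cref{lem:cliffordtwirl} from density operators to arbitrary operators: the map $\Phi\colon \cL(\cH_A)\to\cL(\cH_A)$ given by $\Phi(M)\defeq\sum_{C\in\cSC(\cH_A)} C^\dagger P C\,M\,C^\dagger P' C$ is linear, it vanishes on every state $\sigma\in\cD(\cH_A)$ by \cref{lem:cliffordtwirl}, and the density operators span $\cL(\cH_A)$ as a complex vector space; hence $\Phi\equiv 0$. (Equivalently, one can just reread the coset/Pauli-twirl proof of \cref{lem:cliffordtwirl}, which never used that the middle factor is a state; indeed already in \cref{lem:paulitwirl} the identity $Q^\dagger P Q = \pm P$ reduces the sum to a character sum over $\cP(\cH_A)$ regardless of what sits between the two conjugations.)

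Next I would expand the bipartite operator $\rho_{\hat{A}A}$ in product form $\rho_{\hat{A}A}=\sum_k F_k\otimes G_k$ with $F_k\in\cL(\cH_{\hat A})$ and $G_k\in\cL(\cH_A)$ — for instance by writing both factors in matrix-unit bases — and then push the twirl through. Since $\id_{\hat A}\otimes C^\dagger P C$ and $\id_{\hat A}\otimes C^\dagger P' C$ only touch the $G_k$ factors, the sum in question becomes $\sum_k F_k\otimes\bigl(\sum_{C\in\cSC(\cH_A)} C^\dagger P C\,G_k\,C^\dagger P' C\bigr)$, and each inner bracket is $0$ by the first step, so the whole expression is $0$. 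This is the entire argument.

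I do not expect a genuine obstacle; the only point needing a moment's care is the first step, extending the twirl identity off the set of states. One should note that a generic $G_k$ is not normal, so one cannot invoke the ``normal operator'' corollary stated after \cref{lem:cliffordtwirl} (which reduces to the state case via an eigendecomposition); the clean justification is the linearity/spanning argument above. I briefly considered the alternative of moving $C^\dagger P C$ onto $\hat{A}$ via the ricochet/transpose identity for the (canonical) maximally entangled vector, but that introduces transposes of Clifford unitaries and a question of whether $\cSC(\cH_A)$ is stable under complex conjugation, which makes it messier than the direct route. It is worth remarking that the argument proves slightly more than stated: the identity holds with an arbitrary operator on $\cH_{\hat A}\otimes\cH_A$ in place of $\rho_{\hat A A}$, the canonical-purification hypothesis being present only because that is the form in which the fact is later applied.
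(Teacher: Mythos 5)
Your proof is correct, and it follows the same overall strategy as the paper's: tensor-decompose $\rho_{\hat{A}A}$, push the twirl through the $A$-factor, and invoke the single-register twirl of \cref{lem:cliffordtwirl}. The difference is in the choice of decomposition and in how the middle factor is handled. The paper expands $\rho_{\hat{A}A}$ in the Pauli basis, writing it as $\sum_{Q\in\cP(\cH_A)} M^Q\otimes Q$ with $M^Q\in\cL(\cH_{\hat A})$, so that the operator sitting between the two conjugated Paulis on register $A$ is itself a Pauli operator $Q$ — which is normal — and the ``normal operator'' corollary already stated in \cref{lem:cliffordtwirl} applies verbatim, with no further lemma needed. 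You instead use an arbitrary product decomposition (e.g.\ matrix units), which forces you to first upgrade \cref{lem:cliffordtwirl} from states (or normal operators) to all of $\cL(\cH_A)$; your linearity-plus-spanning argument for that upgrade is sound, and as you note the coset/Pauli-twirl proof never used positivity anyway. The two routes buy essentially the same thing: your version makes explicit that the identity holds for an arbitrary bipartite operator in place of the canonical purification (a fact implicit in the paper's Pauli-basis expansion as well, since $M^Q$ there is an arbitrary operator on $\cH_{\hat A}$), while the paper's choice of the Pauli basis is the small trick that lets it avoid any extension of the quoted twirl lemma. Your caution about the ricochet/transpose alternative is also well placed; neither proof needs it.
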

\begin{proof}
Let 
\begin{align*}
   \rho_{\hat{A}A}  & =\sum_{ Q' \in \cP(\cH_{\hat{A}}), Q \in \cP(\cH_{A})} \alpha_{Q'Q} (Q' \otimes Q)  = \sum_{Q \in \cP(\cH_{A})}  \left(\left(\sum_{Q' \in \cP(\cH_{\hat{A}})} \alpha_{Q'Q} Q'\right) \otimes Q \right)  \\
    & = \sum_{Q \in \cP(\cH_{A})}  \left(M^{Q} \otimes Q\right),
\end{align*}
where, 
$$M^Q \defeq \left(\sum_{Q' \in \cP(\cH_{\hat{A}})} \alpha_{Q'Q} Q'\right).$$
Consider, 
\begin{align*}
& \sum_{C \in \cSC(\cH_A)}   (\id \otimes C^\dagger PC) \rho_{\hat{A}A} (\id \otimes C^\dagger P' C)  \\ 
&= \sum_{C \in \cSC(\cH_A)}   (\id \otimes C^\dagger PC) \left(\sum_{R \in \cP(\cH_{A})}  \left(M^R \otimes R\right) \right) (\id \otimes C^\dagger P' C)  \\ 
     &=  \sum_{R \in \cP(\cH_{A})}   \left(M^R \otimes \sum_{C \in \cSC(\cH_A)}   ( C^\dagger PC) R( C^\dagger P' C)     \right)  \\
     & = 0. & \mbox{(\cref{lem:cliffordtwirl})}
\end{align*}
\end{proof}

\begin{fact}[Pauli $1$-design]\label{fact:bellbasis}
    Let $\rho_{AB}$ be a state. Then,
$$\frac{1}{\vert \cP(\cH_A) \vert} \sum_{Q \in \cP(\cH_A) } (Q \otimes \id)   \rho_{A B}  ( Q^\dagger \otimes \id  )  = U_{A } \otimes  \rho_B.$$
\end{fact}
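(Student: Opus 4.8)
The plan is to expand $\rho_{AB}$ in the Pauli basis on the $A$ system and track how the twirl acts on each term. By the Pauli decomposition (\cref{def:pauli}) applied to the $A$-register, write $\rho_{AB} = \sum_{P \in \cP(\cH_A)} P \otimes M^P$ for suitable operators $M^P \in \cL(\cH_B)$. Since conjugation is linear,
$$\frac{1}{\vert \cP(\cH_A)\vert}\sum_{Q \in \cP(\cH_A)} (Q\otimes\id)\,\rho_{AB}\,(Q^\dagger\otimes\id) = \sum_{P \in \cP(\cH_A)}\left(\frac{1}{\vert\cP(\cH_A)\vert}\sum_{Q\in\cP(\cH_A)} QPQ^\dagger\right)\otimes M^P,$$
so the fact reduces to evaluating the inner Pauli sum.

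The key step is the claim that $\sum_{Q\in\cP(\cH_A)} QPQ^\dagger$ equals $\vert\cP(\cH_A)\vert\cdot\id_A$ when $P = \id_A$ and equals $0$ otherwise. The first case is immediate. For $P \ne \id_A$, any two Pauli operators either commute or anticommute, so $QPQ^\dagger = \pm P$; pick a fixed Pauli $R \in \cP(\cH_A)$ that anticommutes with $P$ (such $R$ exists: $P$ is non-identity on some qubit $i$, and on that qubit we may choose a single-qubit Pauli anticommuting with $P_i$, taking identity on all other qubits). Reindexing the sum by the bijection $Q \mapsto QR$ of $\cP(\cH_A)$ and using $RPR^\dagger = -P$ gives $\sum_Q QPQ^\dagger = \sum_Q (QR)P(QR)^\dagger = -\sum_Q QPQ^\dagger$, hence the sum vanishes. (One could alternatively invoke \cref{lem:paulitwirl}.)

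Plugging this back, only the $P = \id_A$ term survives and we obtain $\id_A \otimes M^{\id_A}$. It remains to identify $M^{\id_A}$: taking the partial trace over $A$ in the Pauli decomposition and using $\tr(P) = 2^{\vert A\vert}$ for $P = \id_A$ and $\tr(P)=0$ otherwise, we get $\rho_B = \tr_A(\rho_{AB}) = 2^{\vert A\vert}\, M^{\id_A}$, i.e.\ $M^{\id_A} = \rho_B/2^{\vert A\vert}$. Therefore the left-hand side equals $\frac{\id_A}{2^{\vert A\vert}}\otimes\rho_B = U_A\otimes\rho_B$, as claimed. The only step requiring any thought is the vanishing of the inner Pauli sum, and the reindexing trick above dispatches it cleanly; everything else is routine bookkeeping.
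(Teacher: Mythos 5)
Your proof is correct. Note that the paper does not prove \cref{fact:bellbasis} at all: it is stated as a standard fact (the paper defers such uncited facts to standard textbooks), so there is no ``paper proof'' to compare against; your write-up supplies the standard argument, namely expanding $\rho_{AB}$ in the Pauli basis on $A$ and showing the twirl annihilates every non-identity component, with the trace computation identifying the surviving $\id_A$-component as $\rho_B/2^{\vert A\vert}$. One small point worth tightening: for $Q,R\in\cP(\cH_A)$ the product $QR$ is in general only a phase ($\pm 1,\pm i$) times an element of $\cP(\cH_A)$, i.e.\ it lives in the Pauli group $\tilde{\cP}(\cH_A)$ rather than in the phase-free set $\cP(\cH_A)$; so $Q\mapsto QR$ is, strictly speaking, a bijection onto phase multiples of the Paulis. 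Since conjugation $Q\mapsto QPQ^\dagger$ is insensitive to such global phases, the reindexing argument goes through unchanged, but it is worth saying this explicitly. Your parenthetical alternative also works: applying \cref{lem:paulitwirl} with $P'=\id_A$ (allowed since $P\neq\id_A$) and an invertible $\rho$ gives $\sum_Q Q^\dagger P Q=0$ directly.
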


\begin{fact}[$1$-design]\label{fact:notequal}     Let $\rho_{AB}$ be a state. Let $\cSC(\cH_A)$ be the sub-group of Clifford group as defined in \cref{lem:subclifford}. Then,
$$\frac{1}{\vert \cSC(\cH_A) \vert} \sum_{C \in \cSC(\cH_A) } (C \otimes \id)   \rho_{A B}  ( C^\dagger \otimes \id  )  = U_{A } \otimes  \rho_B.$$

\end{fact}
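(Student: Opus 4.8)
The plan is to reduce this to the Pauli $1$-design (\cref{fact:bellbasis}), exploiting the group structure of $\cSC(\cH_A)$ together with the containment $\cP(\cH_A) \subseteq \cSC(\cH_A)$ from \cref{lem:subclifford}. First I would partition $\cSC(\cH_A)$ into left cosets of the Pauli subgroup, writing $\cSC(\cH_A) = \bigsqcup_{i=1}^{t} C_i\,\cP(\cH_A)$ with $t = \vert\cSC(\cH_A)\vert / \vert\cP(\cH_A)\vert$; here $\cP(\cH_A)$ is to be read as a subgroup once global phases are quotiented out, which is harmless since conjugation is insensitive to phases.

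Then I would split the average over $C \in \cSC(\cH_A)$ into a sum over $i$ and over $Q$ in the $i$-th coset. For each fixed $i$ the inner sum is $(C_i \otimes \id)\big(\sum_{Q \in \cP(\cH_A)} (Q \otimes \id)\,\rho_{AB}\,(Q^\dagger \otimes \id)\big)(C_i^\dagger \otimes \id)$, and by \cref{fact:bellbasis} the bracketed operator equals $\vert\cP(\cH_A)\vert \cdot U_A \otimes \rho_B$. Since $C_i$ is unitary and $U_A = \id_A/\dim\cH_A$ is fixed under conjugation, $(C_i \otimes \id)(U_A \otimes \rho_B)(C_i^\dagger \otimes \id) = U_A \otimes \rho_B$. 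Summing the $t$ identical contributions and dividing by $\vert\cSC(\cH_A)\vert = t\,\vert\cP(\cH_A)\vert$ collapses everything to $U_A \otimes \rho_B$, as required.

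A self-contained variant avoids cosets entirely: Pauli-decompose on the first register, $\rho_{AB} = \sum_{P \in \cP(\cH_A)} P \otimes M_P$ with $M_P \in \cL(\cH_B)$ (\cref{def:pauli}), and push the average inside to get $\sum_P \big(\tfrac{1}{\vert\cSC(\cH_A)\vert}\sum_{C} C P C^\dagger\big) \otimes M_P$. For fixed $P$, the operator $S_P := \sum_{C \in \cSC(\cH_A)} C P C^\dagger$ satisfies $R S_P R^\dagger = S_P$ for every Pauli $R$, since each such $R$ is an element of $\cSC(\cH_A)$ and the reindexing $C \mapsto RC$ permutes $\cSC(\cH_A)$; hence $S_P$ commutes with a spanning set of $\cL(\cH_A)$, so $S_P = c_P\,\id_A$, and taking traces gives $c_P = \tr(P)/\dim\cH_A$. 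Thus $S_P = 0$ for $P \neq \id_A$ and $S_{\id_A} = \id_A$, and identifying $M_{\id_A} = \rho_B/\dim\cH_A$ from the decomposition yields $U_A \otimes \rho_B$. I do not anticipate a real obstacle: this is precisely the assertion that $\cSC(\cH_A)$ is a unitary $1$-design, and the only care needed is the phase bookkeeping in the coset step (or, in the variant, recalling that the Paulis span $\cL(\cH_A)$).
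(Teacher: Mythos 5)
Your proposal is correct. Note, however, that the paper states \cref{fact:notequal} without any proof (it is given as a bare fact), so there is no in-paper argument to compare against; what your first argument reproduces is precisely the technique the paper \emph{does} use one fact earlier, in its proof of the twirl statement \cref{lem:cliffordtwirl}: decompose $\cSC(\cH_A)$ into cosets of the Pauli operators (working projectively, so the phase issue you flag is indeed harmless because conjugation kills global phases), reduce the $\cSC$-average to the Pauli average, and invoke the Pauli result (\cref{fact:bellbasis} here, the Pauli twirl there), using that $U_A$ is proportional to $\id_A$ and hence invariant under conjugation by the coset representatives. Your second, coset-free variant is a clean standalone $1$-design argument: the only blemish is a normalization slip at the end, since with $S_P \defeq \sum_{C \in \cSC(\cH_A)} C P C^\dagger$ (no $1/\vert\cSC(\cH_A)\vert$ prefactor) the trace computation gives $c_P = \vert\cSC(\cH_A)\vert\,\tr(P)/\dim\cH_A$, so $S_{\id_A} = \vert\cSC(\cH_A)\vert\,\id_A$ rather than $\id_A$; with the averaged definition everything you wrote is exact, and the conclusion $S_P = 0$ for $P \neq \id_A$ together with $M_{\id_A} = \rho_B/\dim\cH_A$ yields $U_A \otimes \rho_B$ as claimed. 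Either route is a valid justification of the fact the paper leaves unproved; the coset route leans on \cref{fact:bellbasis} and the group structure from \cref{lem:subclifford}, while the decomposition route needs only that $\cP(\cH_A) \subseteq \cSC(\cH_A)$ and that the Paulis span $\cL(\cH_A)$, making it slightly more self-contained.
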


\begin{lemma}\label{lem:equalreq} Let state $\rho_{\hat{A}A}$ be the canonical purification of $\rho_A$. Let $P, Q \in \cP(\cH_A)$ be any two Pauli operators. Let $\cSC(\cH_A)$ be the sub-group of Clifford group as defined in \cref{lem:subclifford}. If $P  \ne Q$, then 
$$\frac{1}{\vert \cSC(\cH_A)\vert } \sum_{C \in \cSC(\cH_A)} (\id  \otimes C^\dagger P C)   \rho_{\hat{A}A }  ( \id \otimes C^\dagger Q^\dagger C  ) = 0,$$
else if $P=Q = \id_A$, then
$$\frac{1}{\vert \cSC(\cH_A)\vert } \sum_{C \in \cSC(\cH_A)} (\id  \otimes C^\dagger P C)   \rho_{\hat{A}A }  ( \id \otimes C^\dagger Q^\dagger C  ) = \rho_{\hat{A}A},$$
else $P =Q \ne \id_A$, then  
$$\frac{1}{\vert \cSC(\cH_A)\vert } \sum_{C \in \cSC(\cH_A)} (\id  \otimes C^\dagger P C)   \rho_{\hat{A}A }  ( \id \otimes C^\dagger Q^\dagger C  ) \approx_{\frac{2}{ \vert \cP(\cH_A) \vert-1}}  \rho_{\hat{A}} \otimes U_A.$$
\end{lemma}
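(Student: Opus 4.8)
The plan is to decompose $\rho_{\hat{A}A}$ in the Pauli basis on the $A$-register, exactly as in the proof of \cref{lem:cliffordtwirl1}: write $\rho_{\hat A A} = \sum_{R \in \cP(\cH_A)} (M^R \otimes R)$, where $M^R \in \cL(\cH_{\hat A})$ collects the Pauli coefficients acting on $\hat A$. Since conjugation by $C$ only touches the $A$-register, the expression of interest becomes
\[
\frac{1}{|\cSC(\cH_A)|}\sum_{R \in \cP(\cH_A)} \left( M^R \otimes \frac{1}{|\cSC(\cH_A)|}\cdot|\cSC(\cH_A)| \cdot \frac{1}{|\cSC(\cH_A)|}\sum_{C} C^\dagger P C\, R\, C^\dagger Q^\dagger C \right),
\]
so the whole problem reduces to understanding the operator-valued sum $S_R \defeq \frac{1}{|\cSC(\cH_A)|}\sum_{C \in \cSC(\cH_A)} C^\dagger P C\, R\, C^\dagger Q^\dagger C$ for each fixed Pauli $R$.

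For the first case $P \ne Q$: I would argue $S_R = 0$ for every $R$. This is essentially \cref{lem:cliffordtwirl}, but that fact is stated for states sandwiched between $P$ and $P'$; here $R$ is an arbitrary Pauli, hence a (scaled) unitary, in particular normal, so the corollary form of \cref{lem:cliffordtwirl} applies verbatim and gives $S_R = 0$. Summing over $R$ yields $0$, proving the first claim. For the second case $P = Q = \id_A$: then $C^\dagger P C = C^\dagger Q^\dagger C = \id_A$ for all $C$, so $S_R = R$ for every $R$, and the sum collapses to $\sum_R (M^R \otimes R) = \rho_{\hat A A}$, giving the second claim exactly.

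The third case $P = Q \ne \id_A$ is the main obstacle and requires the $2$-design-like structure of $\cSC(\cH_A)$ from \cref{lem:subclifford}. Here I would split according to whether $R$ commutes or anticommutes with... actually more cleanly: by \cref{lem:subclifford}, as $C$ ranges over $\cSC(\cH_A)$, the conjugated operator $C^\dagger P C$ is uniform over the $|\cP(\cH_A)| - 1$ non-identity Paulis. So $S_R = \frac{1}{|\cP(\cH_A)|-1}\sum_{P'' \ne \id} P'' R (P'')^\dagger$ — \emph{this identity is not quite right} because the pairing of $C^\dagger P C$ with $C^\dagger Q^\dagger C = (C^\dagger P C)^\dagger$ is correlated; when $P=Q$, after conjugation we get $P'' R (P'')^\dagger$ with a single $P''$ ranging uniformly over non-identity Paulis. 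Then I use the Pauli-conjugation fact that $P'' R (P'')^\dagger = \pm R$, with sign $+1$ iff $P''$ commutes with $R$. For $R = \id_A$ this is always $+R$, so $S_{\id} = \id$. For $R \ne \id_A$, exactly half of \emph{all} $|\cP(\cH_A)|$ Paulis commute with $R$; among the non-identity ones, $\frac{|\cP(\cH_A)|}{2}$ commute (including... no: $\id$ commutes, so $\frac{|\cP(\cH_A)|}{2}-1$ non-identity Paulis commute and $\frac{|\cP(\cH_A)|}{2}$ anticommute). Hence $S_R = \frac{1}{|\cP(\cH_A)|-1}\big((\tfrac{|\cP(\cH_A)|}{2}-1)R - \tfrac{|\cP(\cH_A)|}{2}R\big) = \frac{-1}{|\cP(\cH_A)|-1}R$ for $R \ne \id$. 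Plugging back, $\frac{1}{|\cSC|}\sum_C (\cdots) = M^{\id}\otimes \id + \frac{-1}{|\cP(\cH_A)|-1}\sum_{R\ne \id}(M^R \otimes R)$; noting $\rho_{\hat A A} = M^{\id}\otimes\id + \sum_{R\ne\id}(M^R\otimes R)$ and that $M^{\id}\otimes\id$ is (up to normalization) $\rho_{\hat A}\otimes U_A$ — indeed $\tr_A\rho_{\hat A A} = 2^{|A|} M^{\id}$, so $M^{\id}\otimes \id_A = \rho_{\hat A}\otimes U_A$ — the sum equals $\rho_{\hat A}\otimes U_A - \frac{1}{|\cP(\cH_A)|-1}(\rho_{\hat A A} - \rho_{\hat A}\otimes U_A)$. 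Since $\rho_{\hat A A}$ and $\rho_{\hat A}\otimes U_A$ are both states, the trace-norm distance of this from $\rho_{\hat A}\otimes U_A$ is at most $\frac{2}{|\cP(\cH_A)|-1}$, giving the claimed $\approx_{\frac{2}{|\cP(\cH_A)|-1}}$. The delicate point to get right in the writeup is the exact commuting/anticommuting count and the identification $\tr_A \rho_{\hat A A} = 2^{|A|}M^{\id}$ for the canonical purification, which is where I would spend the most care.
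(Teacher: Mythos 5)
Your proposal is correct, and its overall architecture matches the paper's: the same three-case split, with the $P\ne Q$ case handled by the twirl over $\cSC(\cH_A)$ and the $P=Q\ne\id_A$ case handled by using \cref{lem:subclifford} to replace the Clifford average by a uniform average over non-identity Paulis. Where you diverge is in how the two nontrivial cases are finished. For $P\ne Q$ you re-derive the vanishing statement by expanding $\rho_{\hat A A}=\sum_R M^R\otimes R$ in the Pauli basis and applying \cref{lem:cliffordtwirl} to each (normal) $R$; the paper simply cites \cref{lem:cliffordtwirl1}, whose proof is exactly this expansion, so here you are unpacking an existing lemma rather than doing something new. For $P=Q\ne\id_A$ the paper invokes the Pauli $1$-design fact (\cref{fact:bellbasis}) at the level of the state to evaluate $\sum_{Q\ne\id}(\id\otimes Q)\rho_{\hat A A}(\id\otimes Q^\dagger)$, whereas you compute the same average coefficient-wise via the commutation/anticommutation count (for $R\ne\id_A$, exactly $\tfrac{|\cP(\cH_A)|}{2}-1$ non-identity Paulis commute with $R$ and $\tfrac{|\cP(\cH_A)|}{2}$ anticommute, giving $S_R=\tfrac{-1}{|\cP(\cH_A)|-1}R$) together with the identification $\tr_A\rho_{\hat A A}=2^{|A|}M^{\id}$, hence $M^{\id}\otimes\id_A=\rho_{\hat A}\otimes U_A$. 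Both computations land on the identical intermediate state $\frac{|\cP(\cH_A)|(\rho_{\hat A}\otimes U_A)-\rho_{\hat A A}}{|\cP(\cH_A)|-1}$ and the same $\frac{2}{|\cP(\cH_A)|-1}$ trace-distance bound; your route is more self-contained and elementary (it effectively re-proves the non-identity-Pauli averaging identity from scratch), while the paper's is shorter because it leans on the already-stated design facts. Your flagged subtleties (the correlated pairing $C^\dagger PC$ on both sides when $P=Q$, so phases cancel, and the partial-trace identification of $M^{\id}$) are handled correctly; only the first displayed reduction to $S_R$ is written sloppily (redundant $\tfrac{1}{|\cSC(\cH_A)|}$ factors), which is cosmetic rather than a gap.
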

\begin{proof}
When $P \neq Q$, using \cref{lem:cliffordtwirl1},
$$ \frac{1}{\vert \cSC(\cH_A)\vert } \sum_{C \in \cSC(\cH_A)} (\id  \otimes C^\dagger P C)   \rho_{\hat{A}A}  ( \id \otimes C^\dagger Q^\dagger C  ) = 0 .$$
When $P=Q = \id_A$,
\begin{align*}
    &\frac{1}{\vert \cSC(\cH_A)\vert } \sum_{C \in \cSC(\cH_A)} (\id  \otimes C^\dagger P C)   \rho_{ \hat{A}A}  ( \id \otimes C^\dagger P^\dagger C  ) \\
    & =\frac{1}{\vert \cSC(\cH_A)\vert } \sum_{C \in \cSC(\cH_A)} (\id  \otimes C^\dagger C)   \rho_{\hat{A}A}  ( \id \otimes C^\dagger C  )  \\
    &=  \rho_{ \hat{A}A}. & \mbox{($C^\dagger C = \id$)}
\end{align*} 
When $P=Q \ne \id_A$, 
\begin{align*}
    & \frac{1}{\vert \cSC(\cH_A)\vert } \sum_{C \in \cSC(\cH_A)} (\id  \otimes C^\dagger P C)   \rho_{ \hat{A}A}  ( \id \otimes C^\dagger P^\dagger C  )  \\
     & =\frac{1}{\vert \cP(\cH_A)\vert -1} \sum_{Q \in \cP(\cH_A)\setminus\id } (\id  \otimes Q)   \rho_{ \hat{A}A}  ( \id \otimes Q^\dagger   ) & \mbox{(\cref{lem:subclifford})} \\
    &= \frac{\vert \cP(\cH_A)\vert (  \rho_{\hat{A}} \otimes U_A ) -\rho_{\hat{A}A} }{\vert \cP(\cH_A)\vert-1}. & \mbox{(\cref{fact:bellbasis})}
\end{align*} 
The results follows since $ \left\|\frac{\vert \cP(\cH_A)\vert ( \rho_{\hat{A}} \otimes U_A ) -\rho_{\hat{A}A} }{\vert \cP(\cH_A)\vert-1}   - ( \rho_{\hat{A}} \otimes U_A )   \right\|_1 \leq \frac{2}{ \vert \cP(\cH_A) \vert-1}$.
\end{proof}

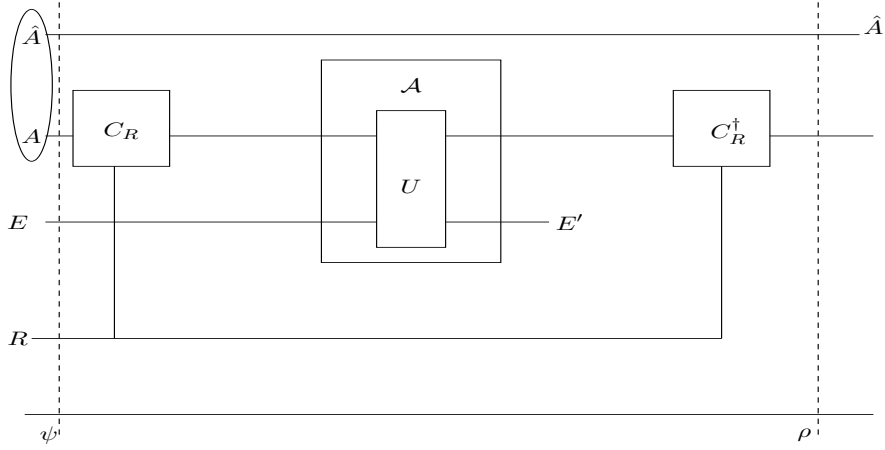
\begin{figure}
\centering
\resizebox{12cm}{6cm}{
\begin{tikzpicture}

\node at (1,6.5) {$\hat{A}$};
\node at (13.2,6.7) {$\hat{A}$};
\draw (1.2,6.5) -- (13,6.5);
\draw (1,5.5) ellipse (0.3cm and 1.5cm);
\draw (1.6,3.9) rectangle (3,5.4);
\draw (10.3,3.9) rectangle (11.7,5.4);
\node at (1,4.5) {$A$};

\node at (2.3,4.6) {$C_R$};
\node at (11.1,4.6) {$C^\dagger_{R}$};


\draw (1.2,4.5) -- (1.6,4.5);
\draw (3,4.5) -- (6,4.5);
\draw (7,4.5) -- (10.3,4.5);
\draw (11.7,4.5) -- (13.2,4.5);

\node at (0.8,0.5) {$R$};
\draw (1,0.5) -- (11,0.5);


\draw  (2.2,0.5) -- (2.2,3.9);

\draw [dashed] (1.4,-1.4) -- (1.4,7.2);
\draw [dashed] (12.4,-1.4) -- (12.4,7.2);

\draw (0.9,-1) -- (13.2,-1);

\node at (1.25,-1.4) {$\psi$};
\node at (12.2,-1.4) {$\rho$};

\node at (0.8,2.8) {$E$};

\draw (1.2,2.8) -- (6,2.8);
\draw (7,2.8) -- (8.5,2.8);

\draw (11,0.5) -- (11,3.9);


\draw (6,2.3) rectangle (7,5);
\node at (6.5,3.5) {$U$};

\node at (6.5,5.5) {$\mathcal{A}$};
\draw (5.2,2) rectangle (7.8,6);




\node at (8.8,2.8) {$E'$};
\end{tikzpicture}}

\caption{Quantum non-malleable code with shared key.}\label{fig:splitstate21}
\end{figure}

\begin{lemma}\label{lem:equal101}
 Consider \cref{fig:splitstate21}. Let state ${\psi}_{A \hat{A}}$ be the canonical purification of ${\psi}_{A}$ . Let $\psi_E$ be a state independent of ${\psi}_{A \hat{A}}$. Let  $U:(\cH_A\otimes \cH_{E})\to (\cH_{A}\otimes\cH_{E'})$ be any isometry. Let $\cSC(\cH_A)$ be the sub-group of Clifford group as defined in \cref{lem:subclifford}. Let \[ \rho_{\hat{A}AE'} = \frac{1}{\vert \cSC(\cH_A)\vert } \sum_{C \in \cSC(\cH_A)}  (   C^\dagger UC)  (\psi_{ \hat{A}A} \otimes \psi_E ) (C^\dagger  U^\dagger C) .\]Then,
 \begin{multline*}
    \rho_{\hat{A}A}   =  p \psi_{\hat{A}A} + (1- p) \left(\frac{\vert \cP(\cH_A)\vert (\psi_{\hat{A}} \otimes U_{{A}} ) -\psi_{\hat{A}A} }{\vert \cP(\cH_A)\vert-1} \right) \approx_{\frac{2}{4^{\vert A \vert}-1}}  p \psi_{\hat{A}A} + (1- p) (\psi_{\hat{A}} \otimes U_A),
 \end{multline*}
 where $p$ depends only on unitary $U $ and state $\psi_E$.
\end{lemma}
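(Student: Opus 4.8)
The plan is to expand the isometry $U$ in the Pauli basis and use the twirl identities already established. First I would write the Kraus/operator decomposition $U = \sum_{P \in \cP(\cH_A)} P \otimes M_P$, where each $M_P : \cH_E \to \cH_{E'}$ is a linear operator (this is just the Pauli decomposition of $U$ on the $A$ factor, with coefficients that are operators on the environment). Then $C^\dagger U C = \sum_P (C^\dagger P C) \otimes M_P$, and conjugating the state $\psi_{\hat A A} \otimes \psi_E$ and averaging over $C \in \cSC(\cH_A)$ gives a double sum over $P, P'$ of terms $\frac{1}{|\cSC(\cH_A)|}\sum_C (\id \otimes C^\dagger P C)\,\psi_{\hat A A}\,(\id \otimes C^\dagger P'^\dagger C) \otimes M_P \psi_E M_{P'}^\dagger$.

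Next I would invoke \cref{lem:equalreq} to evaluate the inner twirl. The off-diagonal terms $P \neq P'$ vanish identically. The $P = P' = \id_A$ term contributes $\psi_{\hat A A} \otimes M_{\id}\psi_E M_{\id}^\dagger$; tracing out $E'$ and setting $p \defeq \tr(M_{\id}\psi_E M_{\id}^\dagger)$ this gives $p\,\psi_{\hat A A}$, and one checks $p \in [0,1]$ since the $M_P$ satisfy $\sum_P M_P^\dagger M_P = \id_E$ (from $U^\dagger U = \id$) together with the fact that the Pauli operators form an orthogonal basis — actually the cleanest route is to note $\sum_P \tr(M_P \psi_E M_P^\dagger) = \tr(\psi_E) = 1$ after using the trace-orthogonality of Paulis, so each summand, in particular the $\id_A$ one, lies in $[0,1]$. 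The remaining diagonal terms $P = P' \neq \id_A$ contribute, by the third case of \cref{lem:equalreq}, $\sum_{P \neq \id_A} \frac{\vert\cP(\cH_A)\vert(\psi_{\hat A}\otimes U_A) - \psi_{\hat A A}}{\vert\cP(\cH_A)\vert - 1} \otimes M_P\psi_E M_P^\dagger$ up to the stated error; after tracing out $E'$ the scalar weight of this block is $1 - p = \sum_{P\neq\id}\tr(M_P\psi_E M_P^\dagger)$, yielding the claimed convex combination $\rho_{\hat A A} = p\,\psi_{\hat A A} + (1-p)\frac{\vert\cP(\cH_A)\vert(\psi_{\hat A}\otimes U_A) - \psi_{\hat A A}}{\vert\cP(\cH_A)\vert - 1}$.

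Finally I would get the approximation: the difference between $\frac{\vert\cP(\cH_A)\vert(\psi_{\hat A}\otimes U_A) - \psi_{\hat A A}}{\vert\cP(\cH_A)\vert - 1}$ and $\psi_{\hat A}\otimes U_A$ has trace norm at most $\frac{2}{\vert\cP(\cH_A)\vert - 1} = \frac{2}{4^{\vert A\vert}-1}$, exactly as computed at the end of \cref{lem:equalreq}, and since $\rho_{\hat A A}$ differs from $p\,\psi_{\hat A A} + (1-p)(\psi_{\hat A}\otimes U_A)$ only through the $(1-p)$-weighted copy of this block, the overall trace distance is bounded by $(1-p)\cdot\frac{2}{4^{\vert A\vert}-1} \le \frac{2}{4^{\vert A\vert}-1}$. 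The main obstacle I anticipate is bookkeeping rather than conceptual: making sure the error term from \cref{lem:equalreq} is applied inside a convex combination weighted by the (subnormalized, positive) operators $M_P\psi_E M_P^\dagger$ and correctly aggregated via \cref{fact:traceconvex}/triangle inequality, and confirming that $p$ genuinely depends only on $U$ and $\psi_E$ (it is $\tr(M_{\id}\psi_E M_{\id}^\dagger)$ with $M_{\id}$ the identity-Pauli component of $U$, so this is immediate). One should also double-check that $\psi_E$ being independent of $\psi_{\hat A A}$ is what licenses pulling the environment operators outside the twirl over the $A$-register — which it does, since $C$ acts as identity on $\hat A$ and on $E$.
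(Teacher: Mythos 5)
Your proposal is correct and follows essentially the same route as the paper: expand in the Pauli basis, kill the cross terms with the Clifford twirl (\cref{lem:cliffordtwirl1}, via the exact computation inside \cref{lem:equalreq}), and split the diagonal into the identity Pauli (giving $p\,\psi_{\hat{A}A}$) and the non-identity Paulis (giving the near-maximally-mixed block), with the same final trace-norm bound $\frac{2}{4^{\vert A\vert}-1}$. The only cosmetic difference is that you decompose the isometry directly as $U=\sum_{P} P\otimes M_P$ and set $p=\tr(M_{\id}\psi_E M_{\id}^\dagger)$, whereas the paper first forms the induced map $\Phi(\cdot)=\tr_{E'}(U(\cdot\otimes\psi_E)U^\dagger)$ and expands its Kraus operators, setting $p=\sum_{i,j:P_{ij}=\id}\vert\alpha_{ij}\vert^2$; the two bookkeepings are equivalent.
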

\begin{proof}Let $\Phi  : \cL(\cH_A) \to  \cL(\cH_A)$ be the CPTP map
defined by:
$\Phi(\rho_A) \defeq \tr_{E'} (U (\rho_A \otimes \psi_E)U^\dagger)$ for every $\rho_A$. Note $\Phi$ depends only on unitary $U $ and state $\psi_E$. Let $\{ M^i\}_i$ be the set of Kraus operators corresponding to $\Phi$. Let $M_i=\sum \alpha_{ij}P_j$ for $P_j\in \cP(\cH_A)$. From $\sum_iM_i^\dagger M_i=\id$, by taking trace on both sides, we get $\sum_{ij}|\alpha_{ij}|^2=1$. 

Consider,
\begin{align*}
   &\rho_{\hat{A}A}\\
   &= \frac{1}{\vert \cSC(\cH_A)\vert } \sum_{C \in \cSC(\cH_A)}     C^\dagger \Phi(C\psi_{ \hat{A}A}C^\dagger )  C  \\ 
   &=  \sum_{i}  \left(\frac{1}{\vert \cSC(\cH_A)\vert } \sum_{C \in \cSC(\cH_A)}  (   C^\dagger M_iC)  (\psi_{ \hat{A}A} ) (C^\dagger  M_i^\dagger C)  \right)  
  \\ &=  \sum_{i}  \left(\frac{1}{\vert \cSC(\cH_A)\vert }\sum_{j,j'}\alpha_{ij}\alpha^*_{ij'} \sum_{C \in \cSC(\cH_A)}   (   C^\dagger P_{ij}C)  (\psi_{ \hat{A}A} ) (C^\dagger  P_{ij'}^\dagger C)  \right)   \\ &=  \sum_{i,j}  \left(\vert\alpha_{ij}\vert^2\frac{1}{\vert \cSC(\cH_A)\vert } \sum_{C \in \cSC(\cH_A)}   (   C^\dagger P_{ij}C)  (\psi_{ \hat{A}A} ) (C^\dagger  P_{ij'}^\dagger C)  \right)  \\ &+  \sum_{i,j\neq j'}  \left(\alpha_{ij}\alpha^*_{ij'}\frac{1}{\vert \cSC(\cH_A)\vert } \sum_{C \in \cSC(\cH_A)}   (   C^\dagger P_{ij}C)  (\psi_{ \hat{A}A} ) (C^\dagger  P_{ij'}^\dagger C)  \right)  \\
&=  \sum_{i,j}  \left(\vert\alpha_{ij}\vert^2\frac{1}{\vert \cSC(\cH_A)\vert } \sum_{C \in \cSC(\cH_A)}   (   C^\dagger P_{ij}C)  (\psi_{ \hat{A}A} ) (C^\dagger  P_{ij}^\dagger C)  \right)  &\mbox{(\cref{lem:cliffordtwirl1})}\\
    \end{align*}   
\begin{align*}
    &=  \sum_{i,j,P_{ij}=\id}  \left(\vert\alpha_{ij}\vert^2\frac{1}{\vert \cSC(\cH_A)\vert } \sum_{C \in \cSC(\cH_A)}   (   C^\dagger P_{ij}C)  (\psi_{ \hat{A}A} ) (C^\dagger  P_{ij}^\dagger C)  \right)  \\ &+  \sum_{i,j,P_{ij}\neq \id}  \left(\vert\alpha_{ij}\vert^2\frac{1}{\vert \cSC(\cH_A)\vert } \sum_{C \in \cSC(\cH_A)}   (   C^\dagger P_{ij}C)  (\psi_{ \hat{A}A} ) (C^\dagger  P_{ij}^\dagger C)  \right)  \\
    &=  \sum_{i,j,P_{ij}=\id}  \left(\vert\alpha_{ij}\vert^2\psi_{ \hat{A}A}  \right)  \\ &+  \sum_{i,j,P_{ij}\neq \id}  \vert\alpha_{ij}\vert^2\left(  \frac{\vert \cP(\cH_A)\vert ( \psi_{\hat{A}} \otimes U_A ) -\psi_{\hat{A}A} }{\vert \cP(\cH_A)\vert-1}  \right) \\ 
    &=  p\psi_{ \hat{A}A}  +  (1-p)\left(  \frac{\vert \cP(\cH_A)\vert ( \psi_{\hat{A}} \otimes U_A ) -\psi_{\hat{A}A} }{\vert \cP(\cH_A)\vert-1}  \right).
\end{align*} 

Note that $p$ depends only on the adversary CPTP map $\Phi$. The approximation in the result follows from \cref{fact:traceconvex} by observing that
\begin{equation*}
\left\|\frac{\vert \cP(\cH_A)\vert ( U_{A} \otimes \psi_{\hat{A}} ) -\psi_{A\hat{A}} }{\vert \cP(\cH_A)\vert-1}   - ( U_{A} \otimes \psi_{\hat{A}} )   \right\|_1 \leq \frac{2}{ \vert \cP(\cH_A) \vert-1}.
\end{equation*}
\end{proof}

\section{A rate $1/2$, $2$-split quantum secure, non-malleable randomness encoder\label{sec:2nmre}}
In some applications of cryptography one only needs to be able to encode randomness
i.e., security is not required to hold for arbitrary, adversarially chosen
messages. For example, in applications of non-malleable codes to tamper-resilient security, the messages that are encoded are typically randomly generated secret keys. To exploit this, in the work of~\cite{KOS18}, the notion of “Nonmalleable Randomness Encoders” (NMREs) is introduced.  One can think of “Nonmalleable Randomness Encoders” as a relaxation of non-malleable codes in
the following sense: NMREs output a random message along with its corresponding non-malleable encoding.

We first provide the definition of non-malleable randomness encoder as stated in~\cite{KOS18}.  Let $r, m, n, n_1, n_2, k$ be positive integers and $\eps, \delta>0$.

\begin{definition}[\cite{KOS18}]\label{def:nmrec}
 Let $(\nmreenc, \nmredec)$ be such that $\nmreenc: \{0,1 \}^r \rightarrow  \{0,1 \}^{m}  \times \{ 0,1\}^{n_1} \times  \{0,1\}^{n_2}$ and $\nmredec : \{ 0,1\}^{n_1} \times  \{0,1\}^{n_2} \rightarrow \{ 0,1\}^m $. Denote, $$\nmreenc(.) = (\nmreenc_1(.) , \nmreenc_2(.) ),$$where $\nmreenc_1: \{0,1 \}^r \rightarrow  \{0,1 \}^{m}  $ and $\nmreenc_2: \{0,1 \}^r \rightarrow  \{ 0,1\}^{n_1} \times  \{0,1\}^{n_2}$. 
 Let $\nmreenc(U_r) =MXY$. Here, $\nmreenc_1$ specifies the procedure to generate random message $M$ and $\nmreenc_2$ specifies the procedure to generate $(X,Y)$.
 
 We say $(\nmreenc, \nmredec)$ is an $\eps$-non-malleable randomness encoder if the following holds: 
\begin{itemize}
    \item \textbf{correctness:} $\Pr_{}( \nmredec( X,Y) =M)=1$.
     \item \textbf{non-malleability:} For every $f: \{ 0,1\}^{n_1} \rightarrow \{0,1 \}^{n_1}$ and $g: \{ 0,1\}^{n_2} \rightarrow \{0,1 \}^{n_2}$, we have 
     $$\nmredec(f(X),g(Y))M \approx_{\eps} p_{(f,g)} MM + (1-p_{(f,g)}) M'_{(f,g)} \otimes M,$$ 
     where $p_{(f,g)}$ and $M'_{(f,g)}$ depend only on functions $(f,g)$.
\end{itemize}
The rate of the code is $\frac{m}{n_1+n_2}$.    
\end{definition}

We now formally define quantum secure non-malleable randomness encoder and give a construction for the same.

\begin{definition}\label{def:nmreq}
 Let $(\nmreenc, \nmredec)$ be such that $\nmreenc: \{0,1 \}^r \rightarrow  \{0,1 \}^{m}  \times \{ 0,1\}^{n_1} \times  \{0,1\}^{n_2}$ and $\nmredec : \{ 0,1\}^{n_1} \times  \{0,1\}^{n_2} \rightarrow \{ 0,1\}^m $. Denote, $$\nmreenc(.) = (\nmreenc_1(.) , \nmreenc_2(.) ),$$where $\nmreenc_1: \{0,1 \}^r \rightarrow  \{0,1 \}^{m}  $ and $\nmreenc_2: \{0,1 \}^r \rightarrow  \{ 0,1\}^{n_1} \times  \{0,1\}^{n_2}$. 
 Let $\nmreenc(U_r) =\rho_{MXY}$. Here, $\nmreenc_1$ specifies the procedure to generate random message $M$ and $\nmreenc_2$ specifies the procedure to generate $(X,Y)$.
 
 We say $(\nmreenc, \nmredec)$ is an $\eps$-quantum secure non-malleable randomness encoder if the following holds: 
\begin{itemize}
    \item \textbf{correctness:} $\Pr_{}( \nmredec( X,Y) =M)_\rho=1$.
     \item  \textbf{non-malleability:} Let $\ket{ \psi}_{E_1 E_2}$ be a pure state independent of $XY$. Let  $U:\cL(\cH_X\otimes \cH_{E_1})\to\cL(\cH_{X'}\otimes\cH_{\hat{X}'}\otimes\cH_{E'_1})$ and $V:\cL(\cH_Y\otimes \cH_{E_2})\to\cL(\cH_{Y'}\otimes\cH_{\hat{Y}'}\otimes \cH_{E'_2})$ be two isometries along with a quantum state $\ket\psi_{E_1 E_2}$ which captures the shared entanglement between the non-communicating tampering adversaries. Let $ \tilde{\sigma} = (U \otimes V) (\rho_{MXY} \otimes  \psi) (U^\dagger \otimes V^\dagger)$~\footnote{Note registers $X'$ and $Y'$ are classical with copy registers $\hat{X}'$ and $\hat{Y}'$ respectively.}. For every $\mathcal{A}=(U,V,\ketbra{\psi})$, we have 
$$ \tilde{\sigma}_{\nmredec(X',Y')M } \approx_{\eps} p_{\mathcal{A}} \rho_{MM} + (1-p_{\mathcal{A}})   \eta_{M'}^{\mathcal{A}} \otimes \rho_M,$$where 
     $p_{\mathcal{A}}$ and $\eta_{M'}^{\mathcal{A}}$ depend only on adversary $\mathcal{A}$. We used $\rho_{MM}$ to denote two copies of $\rho_M$.

\end{itemize}
 
 The rate of the code is $\frac{m}{n_1+n_2}$.
    
\end{definition}

In this work, we present a scheme $(\nmreenc, \nmredec)$ such that $r=n_1+n_2$, $\nmreenc_2(.)$ is identity function and functions $\nmreenc_1, \nmredec$ are both the same as the  $2\nmext :  \{ 0,1\}^{n} \times  \{0,1\}^{\delta n} \rightarrow \{ 0,1\}^{(0.5-\delta)n} $ function from \cref{alg:2nmExt} and \cref{thm:2nmext}. The rate of our scheme is close to $\frac{1}{2}$ (for tiny constant $\delta$).

\begin{lemma}[Rate $1/2$, $2$-split, quantum secure, non-malleable randomness encoder]\label{lem:qnmcodesfromnmext} 
    Let $2\nmext: \{0,1 \}^n \times \{0,1 \}^{\delta n} \rightarrow \{ 0,1\}^{ (1/2 -\delta )n}$ be the $(n-k, \delta n-k,\epsilon)$-quantum secure non-malleable extractor from \cref{alg:2nmExt} for $k=O(n^{1/4})$ and $\epsilon=2^{-n^{\Omega(1)}}$. Consider~\cref{fig:splitstate6}. Let $\rho_{X Y }$ be a state such that 
    \[  \rho_{X Y } = \rho_{X  } \otimes \rho_{ Y  } \quad ; \quad \rho_X=U_n \quad ;\quad \rho_Y = U_{\delta n}.  \]Let $\ket{ \psi}_{E_1 E_2}$ be a pure state independent of $\rho$.  Let  $U:\cL(\cH_X\otimes \cH_{E_1})\to\cL(\cH_{X'}\otimes\cH_{\hat{X}'}\otimes\cH_{E'_1})$ and $V:\cL(\cH_Y\otimes  \cH_{E_2})\to\cL(\cH_{Y'}\otimes\cH_{\hat{Y}'}\otimes \cH_{E'_2})$ be two isometries along with a quantum state $\ket\psi_{E_1 E_2}$ which captures the shared entanglement between the non-communicating tampering adversaries. Let $ \tilde{\sigma} = (U \otimes V) (\rho \otimes  \psi ) (U^\dagger \otimes V^\dagger)$~\footnote{Note registers $X'$ and $Y'$ are classical with copy registers $\hat{X}'$ and $\hat{Y}'$ respectively.}. Then,

\begin{itemize}
  \item $ \Vert  2\nmext(X,Y)X - U_{r} \otimes U_n  \Vert_1 \leq \eps$ and $\Vert  2\nmext(X,Y)Y - U_{r} \otimes U_{\delta n}  \Vert_1 \leq \eps,$
    \item $\sigma_{RR'E_2'} = p_{\mathcal{A}} \sigma^1_{RR'E_2'} + (1-p_{\mathcal{A}}) \sigma^2_{RR'E_2'}$,
    \item $p_{\mathcal{A}}\Vert \sigma^1_{RE_2'}-  U_{(1/2-\delta)n} \otimes \sigma^1_{E_2'} \Vert_1 +  (1-p_{\mathcal{A}})  \Vert \sigma^2_{RR'E_2'}-  U_{(1/2-\delta)n} \otimes \sigma^2_{R'E_2'}  \Vert_1  \leq 4(2^{-k}+ \eps)$,
\end{itemize}
where $\sigma^1=  \tilde{\sigma}  \vert ( XY =  {X}' {Y}' ) $, $\sigma^2= \tilde{\sigma} \vert  ( XY \ne {X}' {Y}'  )$ and $p_{\mathcal{A}}= \Pr(    XY = {X}' {Y}')_{\tilde{\sigma}}$. Furthermore, $\Pr(R=R')_{\sigma^1}=1$. 
\end{lemma}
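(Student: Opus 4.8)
The plan is to obtain all three items directly from the security guarantee of the quantum secure non-malleable extractor $2\nmext$ (\cref{thm:2nmext}), instantiated on the product source $\rho_{XY}=U_n\otimes U_{\delta n}$ and the tampering adversary $\mathcal{A}=(U,V,\ketbra{\psi})$ of \cref{fig:splitstate6}. First I would check the hypotheses of \cref{thm:2nmext}: since $\rho_X=U_n$, $\hminone{X}_\rho=n\ge n-k$; since $\rho_Y=U_{\delta n}$, $\hminone{Y}_\rho=\delta n\ge \delta n-k$ (using $k\ge 0$); the joint state is product and $\ket\psi_{E_1E_2}$ is independent of $\rho$, so $\mathcal{A}$ is a valid $2$-split tampering adversary in the sense \cref{thm:2nmext} requires. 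The first bullet is then the strong-extractor part of that theorem: for sources of min-entropy $(n-k,\delta n-k)$ the extracted string is $\eps$-close to uniform conditioned on either source, which for the product uniform input reads $\Vert 2\nmext(X,Y)X-U_r\otimes U_n\Vert_1\le\eps$ and $\Vert 2\nmext(X,Y)Y-U_r\otimes U_{\delta n}\Vert_1\le\eps$.

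For the second bullet, let $\sigma$ denote the pure-state extension of $\tilde\sigma$ (\cref{def:extension}) obtained by writing $R=2\nmext(X,Y)$ and $R'=2\nmext(X',Y')$ into fresh registers via safe isometries on the classical registers $XY$ and $X'Y'$; these are classical because $\rho_X,\rho_Y$ are classical distributions and $X',Y'$ are classical by hypothesis (with copies $\hat X',\hat Y'$). Then $\mathcal{E}=\{XY=X'Y'\}$ is a classical event, so \cref{def:conditioning} gives $\sigma_{RR'E_2'}=p_{\mathcal{A}}\,\sigma^1_{RR'E_2'}+(1-p_{\mathcal{A}})\,\sigma^2_{RR'E_2'}$ with $p_{\mathcal{A}}=\Pr(\mathcal{E})_{\tilde\sigma}=\Pr(XY=X'Y')_{\tilde\sigma}$, $\sigma^1=(\sigma\vert\mathcal{E})$ and $\sigma^2=(\sigma\vert\neg\mathcal{E})$. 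Conditioned on $\mathcal{E}$ we have $X=X'$ and $Y=Y'$ with certainty, hence $R=2\nmext(X,Y)=2\nmext(X',Y')=R'$ with certainty in $\sigma^1$, i.e.\ $\Pr(R=R')_{\sigma^1}=1$.

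The third bullet is the core and is exactly the conclusion \cref{thm:2nmext} is designed to produce. On the ``same'' branch $\sigma^1$ one has $Y'=Y$ and $E_2'$ is the image of $(Y,E_2)$ under $V$, so smallness of $\Vert\sigma^1_{RE_2'}-U_r\otimes\sigma^1_{E_2'}\Vert_1$ is, modulo the conditioning on $\mathcal{E}$, a strong-extraction statement for $R=2\nmext(X,Y)$ against the side register $E_2'$. On the ``different'' branch $\sigma^2$, where $(X,Y)\ne(X',Y')$, the non-malleability of $2\nmext$ yields that $R=2\nmext(X,Y)$ is close to uniform and independent of the pair $(R',E_2')$, i.e.\ $\Vert\sigma^2_{RR'E_2'}-U_r\otimes\sigma^2_{R'E_2'}\Vert_1$ is small; the two errors, each weighted by its branch probability, sum to at most $4(2^{-k}+\eps)$, with the $2^{-k}$ term reflecting the internal min-entropy loss of the $2\nmext$ construction and the $\eps$ its extractor error. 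I expect the only real difficulty here to be bookkeeping: carefully tracking the conditioning on $\mathcal{E}$ — in particular that the $p_{\mathcal{A}}$-weighting of the ``same'' branch compensates for the min-entropy lost when conditioning on $\{X=X'\}$ — and matching the single-adversary side register $E_2'$ to the register against which \cref{thm:2nmext} certifies uniformity; beyond that the lemma is essentially a direct reading of \cref{thm:2nmext}.
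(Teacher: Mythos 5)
Your overall route is the one the paper takes: its proof is a one-line reduction to \cref{thm:2nmext} via the arguments of Theorem~1 of~\cite{ABJ22}, and your handling of the second bullet and of $\Pr(R=R')_{\sigma^1}=1$ (safe extensions of the classical registers, then conditioning on the classical event $XY=X'Y'$) is exactly that. The gaps are in how you source the other two bullets from \cref{thm:2nmext}. That theorem has no ``strong-extractor part'' with respect to $X$: its conclusion is $\Vert \rho_{LL'YY'M} - U_{(1/2-\delta)n}\otimes\rho_{L'YY'M}\Vert_1 \le \cO(\eps)$, i.e.\ uniformity of the output given the \emph{short} source $Y$, the tampered data and Bob-side information, and it is proved only for $(n-k,\delta n-k)\mhyphen\nmas$ states, in which $\Pr(X\ne X')=1$ or $\Pr(Y\ne Y')=1$. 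So the claim $\Vert 2\nmext(X,Y)X-U_r\otimes U_n\Vert_1\le\eps$ cannot be read off the theorem you cite; it would require a separate strongness-in-$X$ argument about the construction itself (and such an argument cannot be routine: since $\vert R\vert=(1/2-\delta)n$ exceeds $\vert Y\vert=\delta n$, conditioned on $X$ the output $R=\mathsf{Ext}_6(X,S)$ carries at most $\delta n$ bits of entropy, so this bullet is genuinely not a consequence of \cref{thm:2nmext} and needs its own treatment).

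Second, on the ``same'' branch $\sigma^1$ the tampering leaves $(X,Y)$ unchanged, so the conditioned state is not an $\nmas$ and \cref{thm:2nmext} does not apply to it; what is needed there is the plain two-source extraction property of $2\nmext$ against quantum side information (security against a $(k_1,k_2)\mhyphen\qpas$), which lives inside the analysis of the construction but is not the statement of \cref{thm:2nmext}. Moreover, before you can invoke either guarantee on $\sigma^1$ or $\sigma^2$ you must argue that conditioning on $\{XY=X'Y'\}$ (or its complement) preserves the min-entropy hypotheses: the standard argument splits on whether the branch probability is below $2^{-k}$ (in which case that weighted term is trivially small) and otherwise uses that conditioning on an event of probability at least $2^{-k}$ costs at most $k$ bits of conditional min-entropy, which is exactly where the $2^{-k}$ contributions in the bound $4(2^{-k}+\eps)$ come from. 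This reduction is the substance of the ABJ22 Theorem~1 machinery that the paper cites; labelling it bookkeeping and asserting that \cref{thm:2nmext} is ``designed to produce'' the third bullet skips the part of the proof that actually does the work.
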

\begin{proof}
    The proof of the above lemma follows using similar arguments as that of Theorem~$1$ in~\cite{ABJ22} and using \cref{thm:2nmext}.
\end{proof}

\begin{figure}[h]
\centering
\resizebox{12cm}{6cm}{
\begin{tikzpicture}


\node at (3.5,2.8) {$Y$};
\draw (3.8,2.8) -- (5,2.8);
\draw (4.67,4.5) -- (5,4.5);
\draw (4.67,2.8) -- (4.67,4.5);
\draw (4.8,4.5) -- (5,4.5);
\draw (5,4.5) -- (11,4.5);

\node at (14.5,4.7) {$R$};
\draw (13,4.5) -- (14.5,4.5);

\draw [dashed] (4.98,-2.5) -- (4.98,6.5);
\draw [dashed] (10,-2.5) -- (10,6.5);
\draw [dashed] (14,-2.5) -- (14,6.5);

\node at (6.8,1.6) {$\ket{\psi}_{E_1E_2}$};
\node at (4.79,-2) {$\rho$};
\node at (9.8,-2) {$\tilde{\sigma}$};
\node at (13.8,-2) {${\sigma}$};

\node at (7.8,3) {$Y'$};

\draw (4.65,2.8) -- (6.3,2.8);

\draw (7.3,2.8) -- (11,2.8);

\node at (14.5,2.6) {$R'$};
\draw (13,2.4) -- (14.5,2.4);

\node at (3.5,0.2) {$X$};

\draw (4.67,-1.2) -- (9.5,-1.2);
\draw (4.67,-1.2) -- (4.67,0.2);
\draw (9.5,-1.2) -- (9.5,4);
\draw (9.5,4) -- (11,4);

\node at (7.8,0.0) {$X'$};
\draw (3.8,0.2) -- (6.3,0.2);
\draw (7.3,0.2) -- (11,0.2);

\draw (6.3,2) rectangle (7.3,3);
\node at (6.8,2.5) {$V$};
\draw (6.3,0) rectangle (7.3,1);
\node at (6.8,0.5) {$U$};

\node at (6.8,-0.4) {$\mathcal{A}=(U,V,\psi)$};
\draw (5.2,-0.8) rectangle (8.1,3.8);

\draw (5.8,1.5) ellipse (0.3cm and 1cm);
\node at (5.8,2) {$E_2$};
\draw (6,2.2) -- (6.3,2.2);
\node at (7.7,2) {$E'_2$};
\draw (7.3,2.2) -- (8.8,2.2);
\draw (8.8,2.2) -- (8.8,5.3);
\draw (8.8,5.3) -- (14.5,5.3);
\node at (14.5,5.5) {$E'_2$};

\node at (5.8,1) {$E_1$};
\draw (6,0.7) -- (6.3,0.7);
\node at (7.7,0.9) {$E'_1$};
\draw (7.3,0.7) -- (7.5,0.7);


\draw (11,-0.5) rectangle (13,3.2);
\draw (11,3.5) rectangle (13,5);
\node at (12,4.3) {${2\nmext}$};
\node at (12,1.5) {${2\nmext}$};
\node at (7.8,2.5) {$\hat{Y}'$};
\node at (7.8,0.5) {$\hat{X}'$};

\draw (7.3,2.6) -- (7.6,2.6);
\draw (7.3,0.4) -- (7.6,0.4);


\end{tikzpicture}}
\caption{Rate $1/2$, $2$-split quantum secure non-malleable randomness encoder.}\label{fig:splitstate6}
\end{figure}
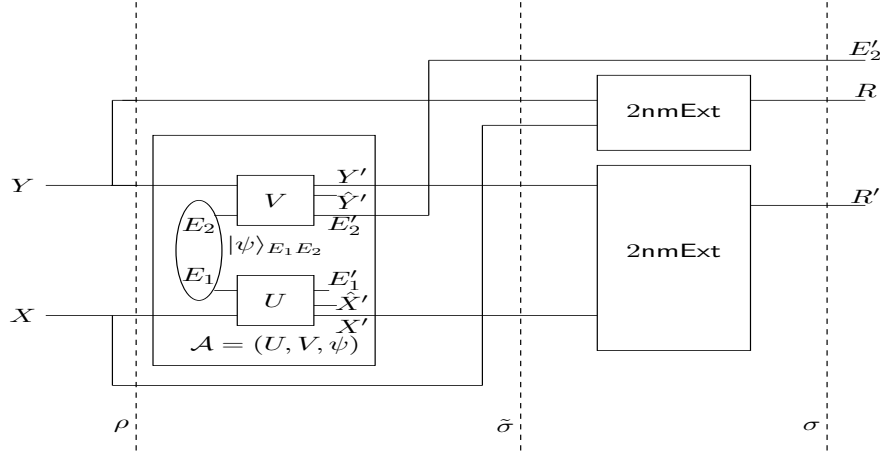

\section{A rate $1/11$, $3$-split, quantum non-malleable code}\label{sec:qnmc67}

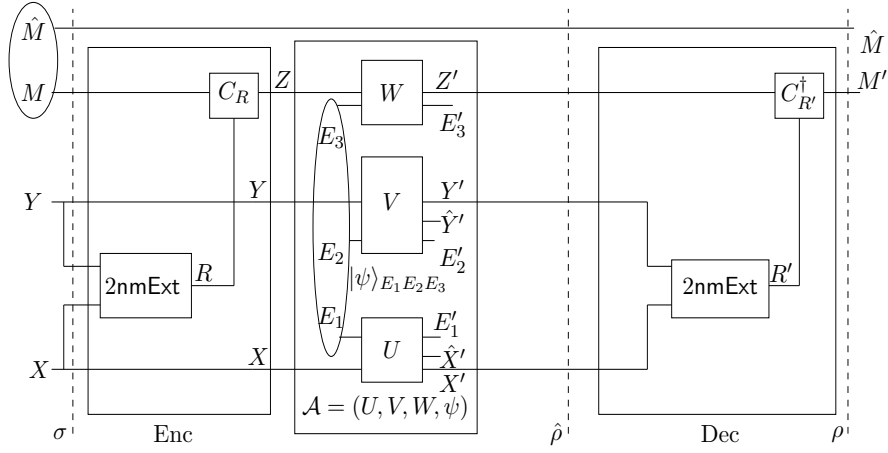
\begin{figure}[h]
\centering

\resizebox{12cm}{6cm}{

\begin{tikzpicture}

\node at (0.9,4.5) {$M$};
\node at (0.9,5.5) {$\hat{M}$};
\draw (0.9,5) ellipse (0.4cm and 0.9cm);
\draw (1.25,5.5) -- (14.4,5.5);
\node at (0.9,2.8) {$Y$};
\draw (1.2,4.5) -- (3.8,4.5);
\node at (5.8,3.8) {$E_3$};
\node at (7.8,4) {$E_3'$};
\draw (7.3,4.3) -- (7.8,4.3);

\draw (1.2,2.8) -- (4.5,2.8);
\node at (1,0.2) {$X$};
\draw (1.4,2.8) -- (1.4,1.8);
\draw (1.4,1.8) -- (2,1.8);
\draw (1.4,1.2) -- (2,1.2);
\draw (1.4,0.2) -- (1.4,1.2);
\draw (1.2,0.2) -- (4.5,0.2);
\draw (3.5,1.5) -- (4.2,1.5);
\node at (3.7,1.7) {$R$};
\node at (4.2,4.5) {$C_R$};
\node at (5,4.7) {$Z$};
\node at (7.7,4.7) {$Z'$};
\draw  (4.2,1.5) -- (4.2,4.1);
\draw (3.8,4.1) rectangle (4.6,4.8);
\draw (4.6,4.5) -- (6.3,4.5);

\draw (13.1,4.1) rectangle (13.9,4.8);
\draw (13.5,4.1) -- (13.5,1.5);
\draw (13,1.5) -- (13.5,1.5);
\node at (13.2,1.7) {$R'$};
\node at (14.7,4.7) {$M'$};
\node at (14.7,5.3) {$\hat{M}$};


\draw (2,1) rectangle (3.5,2);
\node at (2.7,1.5) {$2\nmext$};
\draw (6.3,4) rectangle (7.3,5);
\node at (6.8,4.5) {$W$};

\draw [dashed] (1.55,-0.8) -- (1.55,5.8);
\draw [dashed] (9.7,-0.8) -- (9.7,5.8);
\draw [dashed] (14.3,-0.8) -- (14.3,5.8);

\node at (6.9,1.6) {$\ket{\psi}_{E_1E_2E_3}$};
\node at (1.34,-0.8) {$\sigma$};
\node at (9.5,-0.8) {$\hat{\rho}$};
\node at (14.13,-0.8) {$\rho$};

\node at (4.6,3) {$Y$};
\node at (7.8,3) {$Y'$};

\draw (4.5,2.8) -- (6.3,2.8);
\draw (7.3,2.8) -- (11,2.8);

\node at (4.6,0.4) {$X$};
\node at (7.8,0.0) {$X'$};

\draw (4.5,0.2) -- (6.3,0.2);
\draw (7.3,0.2) -- (11,0.2);

\draw (6.3,2) rectangle (7.3,3.5);
\node at (6.8,2.8) {$V$};
\draw (6.3,0) rectangle (7.3,1);
\node at (6.8,0.5) {$U$};

\node at (6.7,-0.4) {$\mathcal{A}=(U,V,W,\psi)$};
\draw (5.2,-0.8) rectangle (8.2,5.3);

\draw (5.8,2.4) ellipse (0.3cm and 2cm);
\node at (5.8,2) {$E_2$};
\draw (6.1,2.2) -- (6.3,2.2);
\draw (5.9,4.3) -- (6.3,4.3);
\draw (7.3,4.5) -- (13.1,4.5);
\draw (13.9,4.5) -- (14.5,4.5);
\node at (7.8,1.9) {$E_2'$};
\draw (7.3,2.2) -- (7.5,2.2);
\node at (5.8,1) {$E_1$};
\draw (5.93,0.7) -- (6.3,0.7);
\node at (7.7,0.9) {$E_1'$};
\draw (7.3,0.7) -- (7.6,0.7);


\draw (1.8,-0.5) rectangle (4.8,5.2);
\draw (10.2,-0.5) rectangle (14.1,5.2);
\draw (11.4,1) rectangle (13,1.9);
\node at (12.2,-0.8) {$\dec$};
\node at (3.2,-0.8) {$\enc$};
\draw (11,1.8) -- (11,2.8);
\draw (11,1.8) -- (11.4,1.8);
\draw (11,1.2) -- (11.4,1.2);
\draw (11,0.2) -- (11,1.2);
\node at (12.2,1.5) {$2\nmext$};
\node at (13.5,4.5) {$C^\dagger_{R'}$};
\node at (7.8,0.4) {$\hat{X}'$};
\draw (7.3,0.4) -- (7.6,0.4);
\node at (7.8,2.5) {$\hat{Y}'$};
\draw (7.3,2.5) -- (7.6,2.5);
\end{tikzpicture} }
\caption{Rate $1/11, ~3$-split quantum non-malleable code.}\label{fig:splitstate2}
\end{figure}

\begin{figure}[h]
\centering

\resizebox{12cm}{6cm}{

\begin{tikzpicture}

\node at (0.9,4.5) {$M$};
\node at (0.9,5.5) {$\hat{M}$};
\draw (0.9,5) ellipse (0.4cm and 0.9cm);
\draw (1.25,5.5) -- (14.4,5.5);
\node at (0.9,2.8) {$Y$};
\draw (1.2,4.5) -- (4.8,4.5); 
\node at (5.8,3.8) {$E_3$};
\node at (12.8,4) {$E_3'$};
\draw (12.3,4.3) -- (12.8,4.3);

\draw (1.2,2.8) -- (4.5,2.8);
\node at (1,0.2) {$X$};
\draw (1.5,1.8) -- (4.5,1.8);
\draw [dashed] (4.5,1.8) -- (6.5,1.8);
\draw  (6.5,1.8) -- (8.5,1.8);
\draw (1.5,1.2) -- (4.5,1.2);
\draw (1.5,1.2) -- (1.5,0.2);
\draw (1.5,2.8) -- (1.5,1.8);
\draw [dashed] (4.5,1.2) -- (6.5,1.2);
\draw  (6.5,1.2) -- (8.5,1.2);
\draw (1.2,0.2) -- (4.5,0.2);
\draw (9.8,1.5) -- (10.5,1.5);
\node at (10.7,1.7) {${R}$};
\node at (10.5,4.5) {$C_{{R}}$};
\node at (11.1,4.7) {$Z$};
\node at (12.7,4.7) {$Z'$};
\draw  (10.5,1.5) -- (10.5,4.1);
\draw (10.1,4.1) rectangle (10.9,4.8);
\draw (4.6,4.5) -- (6.3,4.5);

\draw (13.1,4.1) rectangle (13.9,4.8);
\draw (13.5,4.1) -- (13.5,1.5);

\node at (13.2,1.7) {$R'$};
\node at (14.7,4.7) {$M'$};
\node at (14.7,5.3) {$\hat{M}$};


\draw (8.5,1) rectangle (9.8,2);
\node at (9.15,1.5) {$2\nmext$};

\draw (11.3,4) rectangle (12.3,5);
\node at (11.8,4.5) {$W$};

\draw [dashed] (3.78,-0.8) -- (3.78,6);
\draw [dashed] (9.9,-0.8) -- (9.9,6);
\draw [dashed] (8.4,-0.8) -- (8.4,6);
\draw [dashed] (14.3,-0.8) -- (14.3,6);

\node at (3.64,-0.8) {$\sigma$};
\node at (8.24,-0.8) {$\tilde{\tau}$};
\node at (9.74,-0.8) {${\tau}$};
\node at (14.13,-0.8) {$\rho$};

\node at (4.6,3) {$Y$};
\node at (7.8,3) {$Y'$};
\node at (7.8,2.5) {$\hat{Y}'$};
\draw (7.3,2.5) -- (7.6,2.5);
\draw (4.5,2.8) -- (6.3,2.8);
\draw (7.3,2.8) -- (8.2,2.8);

\draw  (8.2,0.6) -- (8.2,0.9);
\draw [dashed] (8.2,0.9) -- (8.2,2.8);
\draw  (8.2,2.5) -- (8.2,2.8);
\draw (8.2,0.6) -- (8.5,0.6);

\node at (4.6,0.4) {$X$};
\node at (7.8,0.0) {$X'$};
\node at (7.8,0.4) {$\hat{X}'$};
\draw (7.3,0.4) -- (7.6,0.4);
\draw (4.5,0.2) -- (6.3,0.2);
\draw (7.3,0.2) -- (8.5,0.2);

\draw (6.3,2) rectangle (7.3,3.5);
\node at (6.8,2.8) {$V$};
\draw (6.3,0) rectangle (7.3,1);
\node at (6.8,0.5) {$U$};

\node at (6.7,-0.4) {$\mathcal{A}=(U,V,W,\psi)$};
\draw (5.2,-0.8) rectangle (8.1,5.3);

\draw (5.8,2.4) ellipse (0.3cm and 2cm);
\node at (5.8,2) {$E_2$};
\draw (6.1,2.2) -- (6.3,2.2);
\draw (5.9,4.3) -- (9.8,4.3);
\draw [dashed] (9.8,4.3) -- (11.1,4.3);
\draw (11.1,4.3) -- (11.3,4.3);
\draw (4.8,4.5) -- (10.1,4.5);
\draw (10.9,4.5) -- (11.3,4.5);
\draw (12.3,4.5) -- (13.1,4.5);
\draw (13.9,4.5) -- (14.5,4.5);
\node at (7.8,2) {$E_2'$};
\draw (7.3,2.2) -- (7.5,2.2);
\node at (5.8,1) {$E_1$};
\draw (5.93,0.7) -- (6.3,0.7);
\node at (7.7,0.9) {$E_1'$};
\draw (7.3,0.7) -- (7.6,0.7);



\draw (8.5,0) rectangle (9.8,0.9);
\draw (9.8,0.4) -- (13.5,0.4);
\draw (13.5,0.4) -- (13.5,1.5);
\node at (9.2,0.5) {$2\nmext$};
\node at (13.5,4.5) {$C^\dagger_{R'}$};

\end{tikzpicture} }
\caption{Rate $1/11, ~3$-split quantum non-malleable code in the modified picture.}\label{fig:splitstate4}
\end{figure}
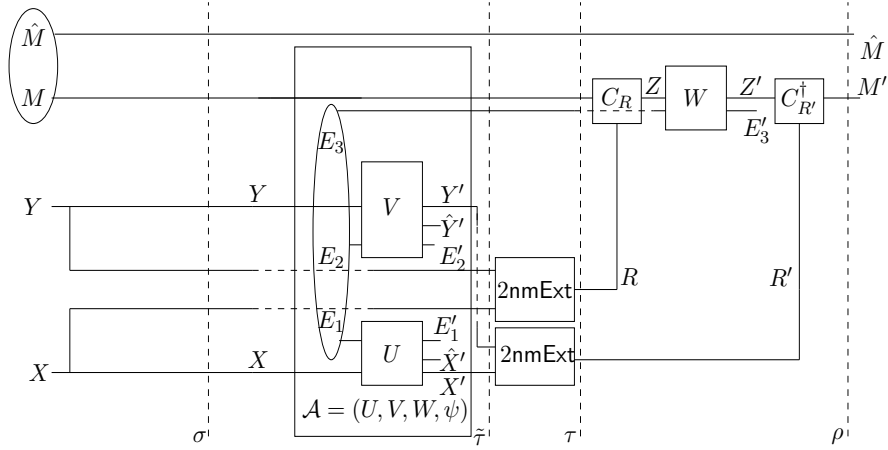
We describe our candidate $3$-split non-malleable code for quantum messages in \cref{fig:splitstate2}. Our result is as follows. 
\begin{theorem}\label{thm:main}
Let $n,k$ be positive integers and $\eps,  \delta>0$, where $k=O(n^{1/4})$ and $\epsilon=2^{-n^{\Omega(1)}}$. Let $2\nmext$ in \cref{fig:splitstate2} be from \cref{lem:qnmcodesfromnmext}. Let $\sigma_{M \hat{M} }$ be the canonical purification of state $\sigma_M$. Let classical registers $\sigma_{XY}$ be independent of $\sigma_{M \hat{M} }$ such that 
\[ \sigma_{XY} = \sigma_X \otimes \sigma_Y \quad ; \quad \sigma_X= U_n \quad ;\quad  \sigma_Y= U_{\delta n}.\]
Let $R=2\nmext(X,Y)$. Let $ \vert M\vert=  \frac{(1/2-\delta)n}{5} $ and $C_R$ be the random Clifford unitary (on $\vert M \vert$-qubits) picked using randomness $\sigma_R$ drawn from $\cSC(\cH_M)$~(see \cref{lem:subclifford}).

Let $\enc : \cL( \cH_M) \to \cL(\cH_Z \otimes \cH_Y \otimes \cH_X)$ be the encoding CPTP map and $\dec  : \cL(\cH_{Z'} \otimes \cH_{Y'}\otimes \cH_{X'}) \to  \cL(\cH_{M'})$ be the decoding CPTP map as shown in \cref{fig:splitstate2}. 
Note that $Z$ is the quantum part of ciphertext and $(X,Y)$ are the classical parts of ciphertext. Then, $(\enc,\dec)$ as specified above is an $\eps'$-$3$-split non-malleable code for $\sigma_{\hat{M}M}$, where the registers $(Z,Y,X)$ correspond to $3$ parts of the codeword, and $\eps' = 2( 4(2^{-k}+ \eps)+{\frac{1}{4^{\vert M \vert}-1}})$.
\end{theorem}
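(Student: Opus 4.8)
The plan is to reduce the $3$-split non-malleability of $(\enc,\dec)$ to two essentially independent ingredients already established in the excerpt: (i) the quantum-secure non-malleable randomness encoder property of the map $(X,Y)\mapsto R=2\nmext(X,Y)$ from \cref{lem:qnmcodesfromnmext}, which controls the classical splits $X,Y$; and (ii) the unitary $2$-design behaviour of the Clifford sub-group $\cSC(\cH_M)$ (\cref{lem:subclifford}, \cref{fact:notequal}, \cref{lem:equal101}), which controls the quantum split $Z=C_R(M)$. The first move is to pass from \cref{fig:splitstate2} to the ``modified picture'' of \cref{fig:splitstate4}: because $C_R$ is drawn from a $1$-design, once $R$ is (conditionally) uniform the register $Z=C_R(M)$ is maximally mixed and independent of $M\hat M$ and of $R$ (\cref{fact:notequal}), so the correlation between $Z$ and $(X,Y)$ induced by $R$ enters only through the unitaries $C_R$ (at encoding) and $C_{R'}$ (at decoding). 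Concretely, the effective channel from $M$ to $M'$ is $M\xrightarrow{\,C_R\,}Z\xrightarrow{\,W,\psi_{E_3}\,}Z'\xrightarrow{\,C_{R'}^\dagger\,}M'$ with $R=2\nmext(X,Y)$, $R'=2\nmext(X',Y')$, and $W$ together with its share of $\ket\psi$ is the only operation touching $Z$.

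Next I would invoke \cref{lem:qnmcodesfromnmext} with adversary $(U,V,\ketbra{\psi})$ acting on the two classical splits (absorbing $W$, $E_3$, $Z$ into the environment, as the modified picture permits). This yields the decomposition $\tilde\sigma=p_{\cA}\sigma^1+(1-p_{\cA})\sigma^2$, where $\sigma^1=\tilde\sigma\,\vert\,(XY=X'Y')$ has $R=R'$ with $R$ close to $U_{(1/2-\delta)n}$ given the relevant side register, and $\sigma^2=\tilde\sigma\,\vert\,(XY\neq X'Y')$ has $R$ close to uniform and independent of $R'$ and that side register, the whole statement holding up to weighted error $4(2^{-k}+\eps)$.

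I would then analyze the two branches. In the tampered branch $\sigma^2$, $R'$ is (approximately) uniform and independent of everything the adversary holds, so averaging $C_{R'}^\dagger$ over $R'\leftarrow\cSC(\cH_M)$ and applying \cref{fact:notequal} to the register $M'$ shows $M'$ is maximally mixed and independent of $\hat M$; this branch therefore contributes a term of the form $\gamma_2\otimes\sigma_{\hat M}$. In the untampered branch $\sigma^1$ we have $R=R'$, so the effective channel is $C_R^\dagger\circ\Phi_W\circ C_R$ averaged over near-uniform $R$, where $\Phi_W$ is the CPTP map induced by $W$ and its share of $\ket\psi$; replacing $R$ by a truly uniform Clifford label (at cost $\le 4(2^{-k}+\eps)$) and applying \cref{lem:equal101} with $A\equiv M$, $U\equiv W$, and $\psi_E$ the part of $\ket\psi$ touched by $W$, shows this branch equals $p'\,\sigma_{M\hat M}+(1-p')(\gamma_1\otimes\sigma_{\hat M})$ up to $\tfrac{1}{4^{|M|}-1}$, with $p'$ depending only on $\cA$.

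Finally I would recombine: $\eta_{M'\hat M}\approx p_{\cA}\big(p'\sigma_{M\hat M}+(1-p')\gamma_1\otimes\sigma_{\hat M}\big)+(1-p_{\cA})\,\gamma_2\otimes\sigma_{\hat M}$, which is of the required form $q\,\sigma_{M\hat M}+(1-q)\,\gamma_{M'}\otimes\sigma_{\hat M}$ with $q=p_{\cA}p'$ and $\gamma_{M'}$ a convex mixture of $\gamma_1,\gamma_2$, all depending only on $\cA$; adding the errors from \cref{lem:qnmcodesfromnmext} and the twirl (the outer factor $2$ coming from the trace-distance/convexity bookkeeping) yields $\eps'=2\big(4(2^{-k}+\eps)+\tfrac{1}{4^{|M|}-1}\big)$. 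I expect the main obstacle to be the first step: rigorously justifying the modified picture, i.e.\ that the correlation between the quantum split $Z=C_R(M)$ and the classical splits $(X,Y)$ through $R=2\nmext(X,Y)$ does not obstruct the application of \cref{lem:qnmcodesfromnmext}, and making the ``replace the conditionally-near-uniform $R$ by an exactly uniform Clifford label'' steps precise so that the conditioning used in \cref{lem:qnmcodesfromnmext} commutes correctly with the Clifford averaging.
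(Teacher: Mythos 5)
Your overall architecture matches the paper's proof (pass to the modified picture of \cref{fig:splitstate4}, split via \cref{lem:qnmcodesfromnmext} into the $XY=X'Y'$ and $XY\neq X'Y'$ branches, use \cref{lem:equal101} on the untampered branch, recombine with the stated error accounting), and your treatment of the untampered branch is essentially the paper's. However, there is a genuine error in your tampered branch. You claim that conditioned on $XY\neq X'Y'$ the register $R'$ is approximately uniform and independent of everything the adversary holds, and you then average the \emph{decoding} Clifford $C_{R'}^\dagger$ over $\cSC(\cH_M)$ to conclude $M'$ is maximally mixed. This premise is false: $R'=2\nmext(X',Y')$ is a deterministic function of the adversarially chosen $X',Y'$ (e.g.\ the adversary may overwrite both with constants, making $R'$ a fixed string), and \cref{lem:qnmcodesfromnmext} says nothing about uniformity of $R'$. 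What the lemma guarantees in the tampered branch is uniformity of $R$ (the untampered output) given $R'E_2'$ (and, as used in the paper, also given $E_3M\hat{M}$).

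The correct mechanism, and the one the paper uses, is to average the \emph{encoding} Clifford $C_R$: since $R$ is (close to) uniform and independent of $R'E_3M\hat{M}$ in the tampered branch, the $1$-design property (\cref{fact:notequal}) makes $Z=C_RMC_R^\dagger$ maximally mixed and independent of $\hat{M}$ and of all registers the decoder subsequently touches; then data processing (\cref{fact:data}) through $W$ and $C_{R'}^\dagger$ yields a state of the form $\eta^{\cA}_{M'}\otimes\sigma_{\hat{M}}$, where $\eta^{\cA}_{M'}$ depends only on $\cA$ but need \emph{not} be maximally mixed (your claimed conclusion is also too strong for this reason). With this replacement — i.e.\ the paper's step $\Phi(U_{|R|}\otimes\tau^0_{R'E_3M\hat{M}})=\eta^{\cA}_{M'}\otimes\sigma_{\hat{M}}$ — the rest of your recombination and error bookkeeping goes through as you describe.
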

\begin{proof}
To show that $(\enc,\dec)$ is an $\eps'$-$3$-split non-malleable code for quantum messages, it suffices to show that for every $\mathcal{A}=(U,V,W,\psi)$ it holds that (in \cref{fig:splitstate2})
\begin{equation}\label{eq:finalgoal}
    (\rho)_{\hat{M}M'} \approx_{\eps'} p_{\mathcal{A}} \sigma_{\hat{M}M}  + (1-p_\mathcal{A}) (\sigma_{\hat{M}} \otimes  \gamma^{\mathcal{A}}_{M'}),
\end{equation}where $(p_{\mathcal{A}}, \gamma^{\mathcal{A}}_{M'})$ depend only on the $3$-split adversary $\cA$. Note that~\cref{fig:splitstate2},~\cref{fig:splitstate4} are equivalent, except for the delayed action of unitary $W$. We show that \cref{eq:finalgoal} holds in \cref{fig:splitstate4} which completes the proof. 

Consider the state $\tilde{\tau}$ in~\cref{fig:splitstate4}. Note $(\tilde{\tau})_{\hat{M}M}$ is a pure state (thus independent of other registers in $\tilde{\tau}$) and $\tilde{\tau} = (U \otimes V)  (\sigma \otimes \ketbra{\psi}_{E_1E_2E_3}) (U \otimes V)^\dagger.$ Let,
\begin{equation}\label{eq:7thm1}
   \tau^0= \tilde{\tau} \vert  ({X}'{Y}' \ne XY) ;\quad \tau^1= \tilde{\tau}\vert  ({X}'{Y}' = XY)   ;\quad  p_{\sm}= \Pr( {X}'{Y}' = XY )_{\tilde{\tau}}.
\end{equation}
 Using \cref{lem:qnmcodesfromnmext}, we have
\begin{equation}\label{eq:2classical198}
     (\tau)_{{R}R'E_2'E_3M\hat{M}}= p_{\sm} (\tau^1)_{{R}R'E_2'E_3M\hat{M}} + (1-p_{\sm}) (\tau^0)_{{R}R'E_2'E_3M\hat{M}}
    \end{equation}
    and
\begin{multline}\label{eq:2classical1}
        p_{\sm} \Vert (\tau^1)_{{R}E_2'E_3M\hat{M}} -  U_{\vert R \vert} \otimes (\tau^1)_{E_2'E_3M\hat{M}} \Vert_1 + \\ (1-p_{\sm})\Vert (\tau^0)_{{R}R'E_2'E_3M\hat{M}} -  U_{\vert R \vert} \otimes (\tau^0)_{R'E_2'E_3M\hat{M}} ) \Vert_1  \leq 4(2^{-k}+\eps).
    \end{multline}Further, we have
    \begin{equation}\label{Eq:file123}
         \Pr({R} =R')_{\tau^1}=1.
    \end{equation}
 Let $\Phi$ be a CPTP map from registers ${R}R'E_3M\hat{M}$ to $M'\hat{M}$ (i.e. $\Phi$ maps state $\tau$ to $\rho$) in~\cref{fig:splitstate4}. One can note, 
\begin{equation}\label{eq:2classical19845}
    \Phi(U_{\vert R \vert} \otimes \tau^0_{R'E_3M\hat{M}}) =  \eta^{\mathcal{A}}_{M'}   \otimes\sigma_{\hat{M}}
    \end{equation}follows from using~\cref{fact:notequal} followed by~\cref{fact:data}. Further state $\eta^{\mathcal{A}}_{M'}$ depends only on adversary $\mathcal{A}$. Let $\tilde{ \tau}^1$ be the state such that 
\[ \tilde{ \tau}^1_{{R}R'E_3M\hat{M}} =\tilde{ \tau}^1_{{R}R'} \otimes \tau^0_{E_3M\hat{M}} \quad ; \quad   \Pr({R} =R')_{\tilde{ \tau}^1}=1 \quad ; \quad \tilde{ \tau}^1_{R'}=U_{\vert R \vert}. \]
Note, we have\begin{equation}\label{eq:2classical19845r4}
   \Phi( \tilde{ \tau}^1_{ {R}R'E_3M\hat{M}}) \approx_{\frac{2}{4^{\vert M \vert}-1}} p \sigma_{M'\hat{M}} + (1-p) U_{M'} \otimes \sigma_{\hat{M}} 
    \end{equation}follows from using~\cref{lem:equal101}. Note $\sigma_{M'\hat{M}} \equiv \sigma_{M\hat{M}}$. Consider, 
\begin{align}
  \rho_{M'\hat{M}} 
   &=\Phi(\tau_{{R}R'E_3M\hat{M}})\nonumber \\
   &= p_{\sm}  \Phi(\tau^1_{{R}R'E_3M\hat{M}}) \nonumber \\
   & \quad \quad + (1-p_{\sm})  \Phi(\tau^0_{{R}R'E_3M\hat{M}})\label{eq:11fromeq20} \\
     & \approx_{ 4(2^{-k}+ \eps) +  {\frac{2}{4^{\vert M \vert}-1}}} p_{\sm}  \left( p \sigma_{M'\hat{M}} + (1-p) (U_{{M'}} \otimes \sigma_{\hat{M}}) \right) \nonumber \\
     & \quad \quad + (1-p_{\sm})  \Phi(\tau^0_{{R}R'E_3M\hat{M}})\label{eq:11fromclaimnotequal}\\
   &=  p_{\sm}\cdot p\cdot \sigma_{M'\hat{M}}+  p_{\sm}  (1-p) (U_{{M'}} \otimes \sigma_{\hat{M}})  \nonumber\\
     & \quad \quad + (1-p_{\sm})  \Phi(\tau^0_{{R}R'E_3M\hat{M}}) \nonumber  \\
       &\approx_{4(2^{-k}+ \eps) }  p_{\sm}\cdot p\cdot \sigma_{M'\hat{M}}+  p_{\sm}  (1-p) (U_{{M'}} \otimes \sigma_{\hat{M}})  \nonumber\\
     & \quad \quad + (1-p_{\sm}) \eta^{\mathcal{A}}_{M'} \otimes \sigma_{\hat{M}}\label{eq:111fromclaimnotequal} \\
     & = p_{\sm}\cdot p\cdot  \sigma_{M'\hat{M}}  \nonumber\\
     & \quad\quad + 
 \left(p_{\sm}  (1-p)  U_{M'}  + (1-p_{\sm})  \eta^{\mathcal{A}}_{M'}  \right)\otimes \sigma_{\hat{M}}  . \nonumber
\end{align}

The equality in \eqref{eq:11fromeq20} follows from \cref{eq:2classical198}. The approximation in~\eqref{eq:11fromclaimnotequal} uses \cref{eq:2classical19845r4} and~\cref{eq:2classical1}. The approximation in~\eqref{eq:111fromclaimnotequal} uses \cref{eq:2classical19845} and~\cref{eq:2classical1}. Noting that
\[ p_{\sm} \quad ; \quad p_{} \quad ; \quad    \eta^{\mathcal{A}}_{M'}  \]depends only on the adversary $\mathcal{A} = (U,V,W, \ket{\psi}_{E_1E_2E_3})$ completes the proof. 
\end{proof}

Using \cref{thm:main}, we get the following corollary:

\begin{corollary}
    There exists constant rate $3$-split quantum non-malleable code with efficient encoding and decoding procedures with rate $\frac{1}{11+\delta}$ for arbitrarily small constant $\delta>0$. 
\end{corollary}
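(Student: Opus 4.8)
The plan is to instantiate \cref{thm:main} and simply read off the parameters; the proof of the Corollary is pure bookkeeping on top of that theorem. Fix the target tiny constant $\delta>0$, and run the construction of \cref{thm:main} with an auxiliary constant $\delta'>0$ to be chosen below, and with $k = \Theta(n^{1/4})$, $\eps = 2^{-n^{\Omega(1)}}$ as supplied by the $2\nmext$ of \cref{lem:qnmcodesfromnmext} (equivalently \cref{alg:2nmExt}/\cref{thm:2nmext}). For a message of length $m$, take $n$ (up to negligible rounding) so that $\vert M\vert = \frac{(1/2-\delta')n}{5} = m$. By \cref{thm:main}, the resulting pair $(\enc,\dec)$ is an $\eps'$-$3$-split quantum non-malleable code for $\sigma_{\hat M M}$ with codeword stored in registers $(Z,Y,X)$ and $\eps' = 2\bigl(4(2^{-k}+\eps)+\frac{1}{4^{\vert M\vert}-1}\bigr)$.

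It remains to verify the rate, the error, and the running time. The three shares have sizes $\vert X\vert = n$, $\vert Y\vert = \delta' n$ and $\vert Z\vert = \vert M\vert = \frac{(1/2-\delta')n}{5}$ qubits, the last because $Z$ carries the $\vert M\vert$-qubit message after conjugation by the Clifford $C_R$. Hence the total codeword length is $n + \delta' n + \frac{(1/2-\delta')n}{5} = n\bigl(\tfrac{11}{10} + \tfrac{4\delta'}{5}\bigr)$, while $\vert M\vert = \tfrac{n}{10} - \tfrac{\delta' n}{5}$, so the rate equals $\frac{1-2\delta'}{11+8\delta'}$, which tends to $\tfrac{1}{11}$ as $\delta'\to 0$. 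Solving $\frac{1-2\delta'}{11+8\delta'} = \frac{1}{11+\delta}$ gives $\delta' = \frac{\delta}{30+2\delta}$, so this choice of $\delta'$ yields a code of rate exactly $\frac{1}{11+\delta}$. For the error, since $k = \Theta(n^{1/4})$, $\eps = 2^{-n^{\Omega(1)}}$ and $\vert M\vert = \Theta(n)$, each of the three terms in $\eps'$ is $2^{-n^{\Omega(1)}}$, hence $\eps' = 2^{-n^{\Omega(1)}}$. For efficiency, encoding samples uniform $X,Y$, computes $R = 2\nmext(X,Y)$ in $\poly(n)$ time (as $2\nmext$ from \cref{alg:2nmExt} is explicit), and applies $C_R\in\cSC(\cH_M)$, which is determined by $R$ ($\vert R\vert = 5\vert M\vert$ bits) and, by \cref{lem:subclifford} (\cite{CLLW16}), implementable in $\poly(n)$ time; decoding recomputes $R' = 2\nmext(X',Y')$ and applies $C_{R'}^\dagger$, again in $\poly(n)$ time.

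There is no real obstacle here: all of the substance already lives in \cref{thm:main}, \cref{lem:qnmcodesfromnmext}, and the construction and analysis of $2\nmext$. The only point requiring any care is tracking the three codeword sizes correctly — in particular that $\vert Z\vert$ equals the message length $\vert M\vert = (1/2-\delta')n/5$ rather than $\vert R\vert = (1/2-\delta')n$ — since that is exactly what makes the blow-up factor $\tfrac{11}{10}$ and the rate converge to $1/11$.
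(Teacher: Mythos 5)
Your proposal is correct and follows exactly the route the paper intends: the corollary is obtained by instantiating \cref{thm:main} with a suitably small auxiliary constant and reading off the parameters, which is all the paper does (it gives no further argument). Your bookkeeping of the share sizes $\vert X\vert = n$, $\vert Y\vert = \delta' n$, $\vert Z\vert = \vert M\vert = (1/2-\delta')n/5$, the resulting rate $\frac{1-2\delta'}{11+8\delta'}\to \frac{1}{11}$, the error $\eps' = 2^{-n^{\Omega(1)}}$, and the $\poly(n)$ efficiency of $2\nmext$ and the $\cSC(\cH_M)$ Clifford is accurate and matches the paper's parameters.
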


\section{A rate $1/3, 3$-split quantum secure non-malleable code}\label{sec:3qnmcclassical}
\begin{figure}
\centering
\resizebox{12cm}{6cm}{
\begin{tikzpicture}

\node at (1,6.5) {$\hat{A}$};
\node at (16.2,6.7) {$\hat{A}$};
\node at (16.2,4.7) {$O$};
\draw (1.2,6.5) -- (16,6.5);
\draw (1.1,6.2) -- (1.1, 4.8);
\draw (1.6,3.9) rectangle (3.7,5.4);
\draw (10.3,3.9) rectangle (14,5.4);
\node at (1,4.5) {$A$};

\node at (2.7,4.9) {$ T=$};
\node at (2.7,4.5) {$\mac(R,A)$};
\node at (4.7,4.8) {$Z=(A,T)$};
\node at (8.6,4.8) {$ Z'=(A',T') $};
\node at (12.2,5.1) {$\mathsf{Verify: 
}$};
\node at (12.2,4.6) {If $
\mac(R,A')=T',$ };
\node at (12.2,4.1) {Output $A'$ else $\perp$  };


\draw (1.2,4.5) -- (1.6,4.5);
\draw (3.7,4.5) -- (6,4.5);
\draw (7,4.5) -- (10.3,4.5);
\draw (14,4.5) -- (16,4.5);

\node at (0.8,0.5) {$R$};
\draw (1,0.5) -- (11,0.5);


\draw  (2.2,0.5) -- (2.2,3.9);

\draw [dashed] (1.4,-1.4) -- (1.4,7.2);
\draw [dashed] (15.4,-1.4) -- (15.4,7.2);

\node at (1.25,-1.4) {$\psi$};
\node at (15.2,-1.4) {$\rho$};

\node at (0.8,2.8) {$E$};

\draw (1.2,2.8) -- (6,2.8);
\draw (7,2.8) -- (8.5,2.8);

\draw (11,0.5) -- (11,3.9);

\node at (8.8,3.3) {$\hat{Z}'$};
\draw (7,3.3) -- (8.5,3.3);
\node at (8.8,2.8) {$E'$};
\draw (6,2.3) rectangle (7,5);
\node at (6.5,3.5) {$U$};

\node at (6.5,5.5) {$\mathcal{A}$};
\draw (5.7,1) rectangle (7.3,6);





\end{tikzpicture}}

\caption{Classical message authentication code.}\label{fig:splitstate211}
\end{figure}
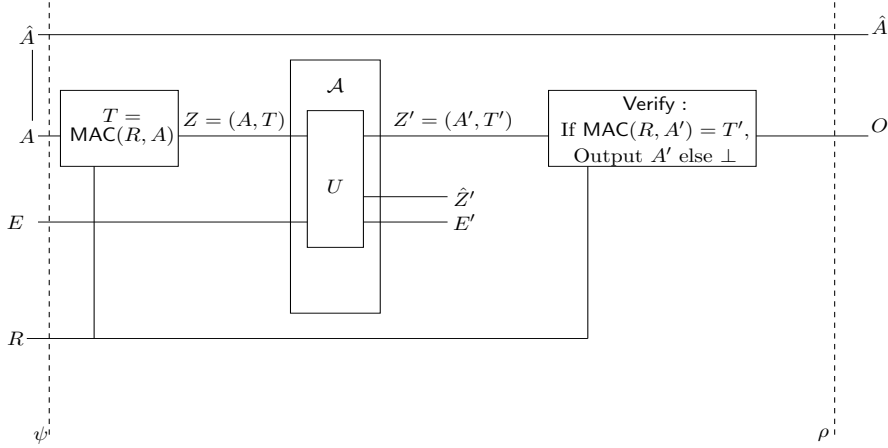

\begin{figure}[h]
\centering

\resizebox{12cm}{6cm}{

\begin{tikzpicture}

\node at (0,4.5) {$M$};
\node at (0,5.5) {$\hat{M}$};


\draw (0.3,5.5) -- (1.3,5.5);
\draw (0.3,4.5) -- (1.3,4.5);
\draw (0.3,0.2) -- (1.3,0.2);
\draw (0.3,2.8) -- (1.3,2.8);
\draw (1.2,1.8) -- (1.2,2.8);
\draw (1.2,0.2) -- (1.2,1.2);
\draw (1.25,5.5) -- (14.4,5.5);
\node at (0,2.8) {$Y$};
\draw (1.2,4.5) -- (4.8,4.5); 
\node at (5.8,3.8) {$E_3$};
\node at (12.8,4) {$E_3'$};
\draw (12.3,4.3) -- (12.8,4.3);

\draw (1.2,2.8) -- (4.5,2.8);
\draw (0,5.25) -- (0,4.7);
\node at (0,0.2) {$X$};
\draw (1.2,1.8) -- (4.5,1.8);
\draw [dashed] (4.5,1.8) -- (6.5,1.8);
\draw  (6.5,1.8) -- (8.5,1.8);
\draw (1.2,1.2) -- (4.5,1.2);
\draw [dashed] (4.5,1.2) -- (6.5,1.2);
\draw  (6.5,1.2) -- (8.5,1.2);
\draw (1.2,0.2) -- (4.5,0.2);
\draw (9.8,1.5) -- (10.5,1.5);
\node at (11.3,1.2) {${R}=R_{e}||R_{a}$};
\node at (11.3,0.65) {${R'}=R_{e}'||R_{a}'$};
\node at (11.8,6.7) {$V_R(M)=\left(R_e\oplus M,\mac(R_a,R_e\oplus M)\right)= (Z_1,Z_2)$.};
\node at (12.2,6.3) {$C_{R'}(Z')=\text{If $\mac(R'_a,Z'_1)=Z'_2$, output $Z_1'\oplus R'_e$ else $\perp$  }  $.};
\node at (10.5,4.5) {$V_{{R}}$};
\node at (11.1,4.7) {$Z$};
\node at (12.7,4.7) {$Z'$};
\draw  (10.5,1.5) -- (10.5,4.1);
\draw (10.1,4.1) rectangle (10.9,4.8);
\draw (4.6,4.5) -- (6.3,4.5);

\draw (13.1,4.1) rectangle (13.9,4.8);
\draw (13.5,4.1) -- (13.5,1.5);


\node at (14.7,4.7) {$M'$};
\node at (14.7,5.3) {$\hat{M}$};


\draw (8.5,1) rectangle (9.8,2);
\node at (9.15,1.5) {$2\nmext$};

\draw (11.3,4) rectangle (12.3,5);
\node at (11.8,4.5) {$W$};

\draw [dashed] (3.78,-0.8) -- (3.78,6);
\draw [dashed] (9.9,-0.8) -- (9.9,6);
\draw [dashed] (8.4,-0.8) -- (8.4,6);
\draw [dashed] (14.3,-0.8) -- (14.3,6);

\node at (3.64,-0.8) {$\sigma$};
\node at (8.24,-0.8) {$\tilde{\tau}$};
\node at (9.74,-0.8) {${\tau}$};
\node at (14.13,-0.8) {$\rho$};

\node at (4.6,3) {$Y$};
\node at (7.8,3) {$Y'$};

\draw (4.5,2.8) -- (6.3,2.8);
\draw (7.3,2.8) -- (8.2,2.8);

\draw  (8.2,0.6) -- (8.2,0.9);
\draw [dashed] (8.2,0.9) -- (8.2,2.8);
\draw  (8.2,2.5) -- (8.2,2.8);
\draw (8.2,0.6) -- (8.5,0.6);

\node at (4.6,0.4) {$X$};
\node at (7.8,0.0) {$X'$};

\draw (4.5,0.2) -- (6.3,0.2);
\draw (7.3,0.2) -- (8.5,0.2);

\draw (6.3,2) rectangle (7.3,3.5);
\node at (6.8,2.8) {$V$};
\draw (6.3,0) rectangle (7.3,1);
\node at (6.8,0.5) {$U$};

\node at (6.7,-0.4) {$\mathcal{A}=(U,V,W,\psi)$};
\draw (5.2,-0.8) rectangle (8.1,5.3);

\draw (5.8,2.4) ellipse (0.3cm and 2cm);
\node at (5.8,2) {$E_2$};
\draw (6.1,2.2) -- (6.3,2.2);
\draw (5.9,4.3) -- (9.8,4.3);
\draw [dashed] (9.8,4.3) -- (11.1,4.3);
\draw (11.1,4.3) -- (11.3,4.3);
\draw (4.8,4.5) -- (10.1,4.5);
\draw (10.9,4.5) -- (11.3,4.5);
\draw (12.3,4.5) -- (13.1,4.5);
\draw (13.9,4.5) -- (14.5,4.5);
\node at (7.8,2) {$E_2'$};
\draw (7.3,2.2) -- (7.5,2.2);
\node at (5.8,1) {$E_1$};
\draw (5.93,0.7) -- (6.3,0.7);

\node at (7.8,2.5) {$\hat{Y}'$};

\node at (7.7,0.44) {$\hat{X}'$};

\draw (7.3,0.44) -- (7.5,0.44);
\draw (7.3,2.5) -- (7.5,2.5);
\node at (7.7,0.9) {$E_1'$};
\draw (7.3,0.9) -- (7.5,0.9);


\draw (8.5,0) rectangle (9.8,0.9);
\draw (9.8,0.4) -- (13.5,0.4);
\draw (13.5,0.4) -- (13.5,1.5);
\node at (9.2,0.5) {$2\nmext$};
\node at (13.5,4.5) {$C_{R'}$};

\end{tikzpicture} }
\caption{Rate $1/3, ~3$-split quantum secure non-malleable code }\label{fig:splitstate4'}
\end{figure}
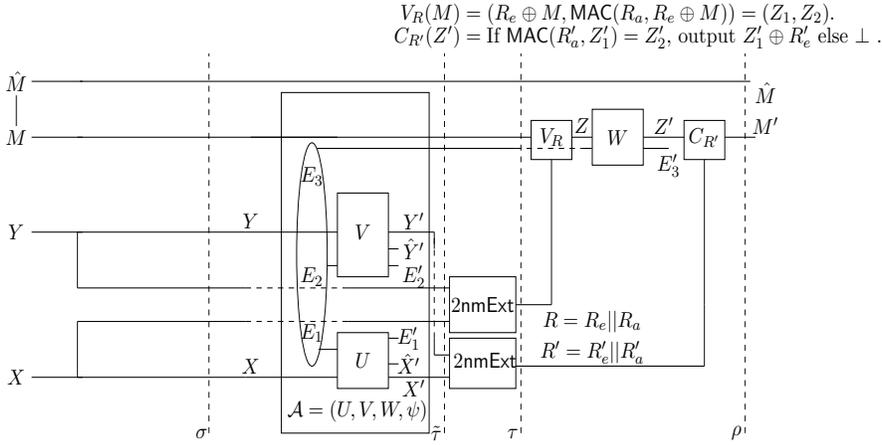

Let $r, m, t, n, k$ be positive integers and $\eps >0$.
\begin{definition}\label{def:mac}
	A function $\mac:\{0,1\}^{r} \times\{0,1\}^m \to \{0,1\}^t$ is an \emph{$\eps$-information-theoretically secure one-time message authentication code} if for any function $\mathcal{A}:\{0,1\}^m \times \{0,1\}^t \to \{0,1\}^m\times \{0,1\}^t$ it holds that for all $\mu \in \{0,1\}^m$
	$$\Pr_{k\leftarrow \{0,1\}^{r}}\big[ (\mac(k,\mu') = \sigma' ) \, \wedge \, (\mu'\neq \mu) : (\mu',\sigma') \leftarrow \mathcal{A}(\mu,\mac(k,\mu))\big] \,\leq\,\eps.$$

\end{definition}
Efficient constructions of $\mac$ satisfying the conditions of \cref{def:mac} are known. The following fact summarizes some parameters that are achievable using a construction based on polynomial evaluation.
\begin{fact}[Proposition 1 in~\cite{KR09}]\label{prop:mac}
	For any integer $m > 0$ and $\eps_{\mac}>0$, there exists an efficient family of  $\eps$-information-theoretically secure one-time message authentication codes $$\mac:\{0,1\}^{r} \times\{0,1\}^m \to \{0,1\}^t,$$ for parameters $t= \log(m) + \log(1/\eps_\mac)$ and $r=2t$.
\end{fact}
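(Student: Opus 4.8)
The plan is to exhibit the standard polynomial-evaluation MAC and then verify the claimed parameters. First I would fix $t = \lceil \log m + \log(1/\eps_{\mac})\rceil$, so that $2^t \ge m/\eps_{\mac}$, and identify $\{0,1\}^t$ with the field $\F_{2^t}$ after fixing an irreducible polynomial of degree $t$ over $\F_2$ (which can be found in time $\poly(t)$). Set $\ell = \lceil m/t\rceil$ and parse a message $\mu \in \{0,1\}^m$ as a tuple $(\mu_1,\dots,\mu_\ell) \in \F_{2^t}^\ell$ (zero-padding the last block), encoding it as the polynomial $p_\mu(x) = \sum_{i=1}^\ell \mu_i x^i \in \F_{2^t}[x]$, of degree at most $\ell$. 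Parse the key as a pair $(a,b) \in \F_{2^t}^2$, so that $r = 2t$, and define $\mac((a,b),\mu) = p_\mu(a) + b \in \F_{2^t} \cong \{0,1\}^t$. Parsing, field arithmetic, and evaluation (by Horner's rule) all run in time $\poly(m)$, which gives the claimed efficiency.

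For security, fix any message $\mu$ and any function $\mathcal{A}$ as in \cref{def:mac}, and let $(\mu',\sigma') = \mathcal{A}(\mu, \mac((a,b),\mu))$ with $\mu' \neq \mu$. The first step is to observe that the tag $\tau := p_\mu(a) + b$ that the adversary receives is, over the random choice of $(a,b)$, uniform on $\F_{2^t}$ and independent of $a$: since $b$ is uniform and independent of $a$, conditioning on the event $\{\tau = \tau_0\}$ leaves $a$ uniform on $\F_{2^t}$. Hence it suffices to bound, for each fixed value $\tau_0$ of the tag — and the resulting fixed adversarial output $(\mu',\sigma')$ — the probability over a uniform $a$ that the forgery verifies, i.e.\ that $p_{\mu'}(a) + b = \sigma'$. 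Substituting $b = \tau_0 - p_\mu(a)$, this is exactly the event $(p_{\mu'} - p_\mu)(a) = \sigma' - \tau_0$.

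The crux is then a degree argument: because $\mu' \neq \mu$, the polynomial $p_{\mu'} - p_\mu$ is nonzero and has degree at most $\ell$, hence at most $\ell$ roots in $\F_{2^t}$; so for any fixed right-hand side, at most $\ell$ values of $a$ satisfy the equation, and the conditional forgery probability is at most $\ell/2^t$. Averaging over $\tau_0$, the overall forgery probability is at most $\ell/2^t \le m/2^t \le \eps_{\mac}$, using $\ell = \lceil m/t\rceil \le m$ (valid since $t \ge 1$) and $2^t \ge m/\eps_{\mac}$. This is precisely \cref{def:mac} with $\eps = \eps_{\mac}$, tag length $t = \log m + \log(1/\eps_{\mac})$, and key length $r = 2t$.

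This result is essentially routine, so there is no serious obstacle; the one step I would be careful about is the independence claim in the second paragraph — namely that revealing the one-time-pad–masked tag leaks nothing about the evaluation point $a$ — since it is exactly what licenses treating $a$ as uniform when counting roots. I would also note that the case $\mu' \neq \mu$ with $\sigma' = \tau_0$ is handled identically, and that the $\lceil\cdot\rceil$ in the choice of $t$ only increases $2^t$, so stating $t = \log m + \log(1/\eps_{\mac})$ as in the claim is legitimate.
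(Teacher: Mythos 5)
Your proof is correct and matches the approach the paper intends: the paper states this fact without proof, citing Proposition 1 of~\cite{KR09} and explicitly noting it is "a construction based on polynomial evaluation," which is exactly the $\mac((a,b),\mu)=p_\mu(a)+b$ scheme you construct and analyze. Your degree-counting and one-time-pad independence argument is the standard (and correct) security proof for that construction, with the claimed parameters $t=\log m + \log(1/\eps_{\mac})$ and $r=2t$.
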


\begin{fact}[\cite{KR09}]\label{prop:mac2}
	Let $m$ be a natural number and $\epsilon = 2^{-m}$. Let $K$ be uniformly distributed in  $\{0,1 \}^{2m}$. There exists an efficient family  $P:\{0,1\}^{2m} \times\{0,1\}^m \to \{0,1\}^m,$ with the following properties. 
 \begin{enumerate}
     \item For all $k$: $P_k :\{0,1\}^m \to \{0,1\}^m$ is a bijective function, where $P_k(\mu) \defeq P(k,\mu)$.
     \item For any $\mu \in \{0,1 \}^{m}:$  $P(K,\mu) = U_m$.
 \item The collection $\{P(K,\mu) ~| ~ \mu \in \{0,1 \}^{m}\}$ is $\epsilon$-pairwise independent.

 \end{enumerate}

\end{fact}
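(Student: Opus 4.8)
The plan is to realize this family explicitly as the affine maps over the finite field $\F_{2^m}$, with a small adjustment so that \emph{every} seed --- not just a $1-2^{-m}$ fraction of them --- induces a permutation. First I would fix once and for all an identification of $\{0,1\}^m$ with $\F_{2^m}$ (via a degree-$m$ irreducible polynomial over $\F_2$), so that field addition, multiplication, and inversion of nonzero elements are all computable in time $\poly(m)$. I then parse a seed $k\in\{0,1\}^{2m}$ as a pair $(a,b)\in\F_{2^m}\times\F_{2^m}$ and define
\[
P\big((a,b),x\big)\;\defeq\;\begin{cases} a\cdot x+b & \text{if }a\neq 0,\\ x+b & \text{if }a=0.\end{cases}
\]
Evaluating $P$ and $P^{-1}$ costs $\poly(m)$, and sampling $K$ is just drawing $2m$ uniform bits, so the family is efficient as required.

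I would then check the three properties in turn. Property $1$ is immediate: for $a\neq 0$ the map $x\mapsto ax+b$ is a bijection with inverse $y\mapsto a^{-1}(y+b)$, and for $a=0$ the translation $x\mapsto x+b$ is a bijection; in both cases $P_k$ and $P_k^{-1}$ are computable in $\poly(m)$ time. For Property $2$, fix $\mu$ and condition on any value of $a$: the output equals $a\mu+b$ or $\mu+b$, and since $b$ is uniform on $\F_{2^m}$ and independent of $a$ this is uniform in both cases, so averaging over $a$ gives $P(K,\mu)=U_m$ exactly. For Property $3$, fix distinct $\mu,\mu'$, put $\delta\defeq\mu+\mu'\neq 0$ (we are in characteristic $2$), and split on whether $a\neq 0$ (probability $1-2^{-m}$) or $a=0$. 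On $\{a\neq 0\}$ the map $(a,b)\mapsto(a\mu+b,\,a\mu'+b)$ is a bijection from the seeds with $a\neq 0$ onto the pairs $(y,y')$ with $y\neq y'$ (such a pair forces $a=(y+y')\delta^{-1}\neq 0$ and $b=y+a\mu$), so conditionally the joint output is uniform over pairs with distinct coordinates; on $\{a=0\}$ it is $(\mu+b,\mu'+b)$, uniform over the $2^m$ pairs with $y+y'=\delta$. Combining the two cases by a direct count shows that each joint atom differs from $2^{-2m}$ by at most $2^{-2m}$, and that the joint distribution of $(P(K,\mu),P(K,\mu'))$ is within total variation distance $2^{-m}=\epsilon$ of $U_m\otimes U_m$; together with Property $2$ this is exactly the required $\epsilon$-pairwise independence.

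The only step needing genuine care is the treatment of the degenerate seeds $a=0$. They cannot simply be discarded, since the statement demands that $K$ be uniform on all of $\{0,1\}^{2m}$ and that $P_k$ be a permutation for \emph{every} $k$; the fix-up (redefining those maps as translations) is what makes both hold, and one must verify that it keeps Property $2$ exactly true and inflates the total-variation error in Property $3$ by only $2^{-m}$ rather than something larger. Everything else reduces to routine counting over $\F_{2^m}$, and no quantum input is needed here.
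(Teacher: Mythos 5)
Your proposal is correct, but note that the paper never proves this statement at all: it is imported verbatim as a fact with a citation to~\cite{KR09}, so there is no internal argument to compare against, and your write-up supplies the missing self-contained justification. Your realization --- affine maps $x\mapsto ax+b$ over $\F_{2^m}$, with the degenerate seeds $a=0$ re-interpreted as the translation $x\mapsto x+b$ --- handles precisely the point the statement forces on you, namely that $K$ is uniform over \emph{all} of $\{0,1\}^{2m}$ and $P_k$ must be a permutation for \emph{every} $k$. Properties $1$ and $2$ hold exactly as you argue, and your counting for Property $3$ is right: for distinct $\mu\neq\mu'$ the joint output places mass $2\cdot 2^{-2m}$ on the $2^m$ pairs with $y+y'=\mu+\mu'$, mass $0$ on the diagonal, and $2^{-2m}$ elsewhere, so it sits at total variation distance exactly $2^{-m}=\epsilon$ from $U_m\otimes U_m$. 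The one interpretive issue is that the paper never defines ``$\epsilon$-pairwise independent''; your reading (the joint output on any two distinct inputs is $\epsilon$-close to product uniform) is the one consistent with how the fact is used downstream, since in \cref{lem:equal101125''} the conditional state on the event $\hat A\neq Z'$ is replaced by $\sigma_{\hat A}\otimes U_{A'}$ with trace-distance error $\frac{2}{2^m-1}$, and your construction gives $2\cdot 2^{-m}\le \frac{2}{2^m-1}$, comfortably within the $3\eps_P$ budget used in \cref{thm:m''}. So your proof is a valid (and arguably preferable, because explicit) substitute for the paper's black-box citation.
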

\begin{lemma}[\cite{BJK21}]\label{lem:equal10112}
 Consider \cref{fig:splitstate211}. Let state ${\psi}_{A \hat{A}RE}$ be a c-q state (with registers $A\hat{A}R$ classical) such that 
 \[ {\psi}_{A \hat{A}RE} ={\psi}_{A \hat{A}} \otimes U_R \otimes {\psi}_{E}  \quad ; \quad  \Pr(\hat{A}=A)_\psi =1. \]Let  $U:(\cH_Z\otimes \cH_{E})\to (\cH_{Z'}\otimes \cH_{\hat{Z}'} \otimes \cH_{E'})$ be any unitary. Then,
 \[ \rho_{\hat{A}O}   \approx_{\eps}  p \psi_{\hat{A}A} + (1- p)  (\psi_{\hat{A}} \otimes \eta^{}_{O}), \]
 where $p$ depends only on unitary $U $ and state $\psi_E$ and $\eta_O$ is $\perp$.
\end{lemma}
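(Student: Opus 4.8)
The plan is to reduce the statement to the one-time security of the authentication code (\cref{def:mac}, \cref{prop:mac}) together with the perfect secrecy of the mask that is applied to the message inside the encoding (\cref{prop:mac2}). First I would eliminate the quantum side information. Since $Z=(A,T)$ is a classical register the adversary's unitary $U$ may be taken safe on it (\cref{def:safe} and the standing convention on adversary isometries), and since $\psi_E$ is a fixed state independent of $A\hat A R$, conditioned on the classical value $Z=z$ the adversary prepares a state on $Z'\hat Z'E'$ whose law depends only on $z$, $U$ and $\psi_E$; the register $O$ is a function of the measured value of $Z'=(A',T')$ alone, since the verifier never touches $\hat Z'$ or $E'$. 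Hence, for the purpose of computing $\rho_{\hat A O}$, I may replace $U$ by the classical (possibly randomised) tampering channel $z=(a,t)\mapsto(a',t')$, whose law depends only on $U$ and $\psi_E$.

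Next I would run the usual MAC case analysis, conditioning on $\hat A=A$. Partition the adversary's output into three events: (i) the codeword is returned unchanged, $(A',T')=(A,T)$; (ii) $A'\neq A$ is returned together with a tag $T'$ that the verifier accepts; (iii) everything else. On event (i) the verifier accepts and the decoder, removing the mask, recovers exactly the original message, so $O=\hat A$; on event (iii) the verifier rejects and $O=\perp=\eta_O$. Event (ii) is exactly a one-time forgery: the adversary's induced map $(A,\mac(R,A))\mapsto(A',T')$ depends on the key $R$ only through $\mac(R,A)$, so \cref{def:mac} (instantiated via \cref{prop:mac}) gives $\Pr[\text{(ii)}\mid\hat A=A=a]\le\eps$ for every $a$, hence $\Pr[\text{(ii)}]\le\eps$ after averaging over $A$.

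The crux — and the step I expect to be the main obstacle — is to show that $p:=\Pr[\text{(i)}]$, and the fact that event (iii) always produces $\perp$, depend only on $U$ and $\psi_E$ and not on the message. This is where \cref{prop:mac2} is essential: because the value the adversary actually sees is first masked by a one-time pad / pairwise-independent permutation and the decoder removes the mask only on acceptance, the adversary's view is (statistically) uniform and independent of the underlying message; therefore the classical channel extracted in the first step — and in particular the probability it assigns to ``no change'' and to the rejecting branch — is the same for every message, so $p$ and $\eta_O$ are independent of $\hat A$. Without the mask a selective adversary that passes exactly when $A$ lies in some fixed set would correlate the accept/reject decision with $\hat A$ and the statement would fail, so this ingredient cannot be skipped. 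Putting the pieces together, conditioned on $\hat A$ the output equals $\hat A$ with probability $p$, equals $\perp$ with probability at least $1-p-\eps$, and is something else with probability at most $\eps$; reassigning the $\le\eps$ forgery mass to the $\perp$ branch costs at most $\eps$ in trace distance, which yields
\[
\rho_{\hat A O}\;\approx_{\eps}\;p\,\psi_{\hat A A}+(1-p)\,\bigl(\psi_{\hat A}\otimes\eta_O\bigr),\qquad \eta_O=\perp,
\]
with $p$ a function of $U$ and $\psi_E$ only, as claimed. The remaining work — the safety-based reduction of the first step and the trace-distance bookkeeping in the absorption step — is routine; essentially all the content sits in the MAC/secrecy argument of the middle two paragraphs.
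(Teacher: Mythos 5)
Your opening reduction (treating the tampering as a classical channel on $Z$, since $Z$ is classical and the adversary's isometry is safe on it) and your three-way case analysis (unchanged / accepted forgery / reject) are fine, and are in the spirit of the argument the paper itself points to: its entire proof of \cref{lem:equal10112} is a citation of Claims 12 and 13 of \cite{BJK21}. The problem is the step you yourself identify as the crux. You derive the message-independence of $p$ (and of the reject branch) from the mask of \cref{prop:mac2}, arguing that ``the value the adversary actually sees is first masked by a one-time pad / pairwise-independent permutation.'' But no such mask exists in the statement you are proving: in \cref{fig:splitstate211} the codeword is $Z=(A,T)$ with $T=\mac(R,A)$, so the adversary sees $A$ in the clear, and the register compared against $\hat A$ is the verifier's output $A'$ itself, not an unmasked ciphertext. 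The pairwise-independent permutation of \cref{prop:mac2} belongs to a different lemma (\cref{lem:equal101125''}) and to the $2$-split construction, and the one-time pad enters only in the place where \cref{lem:equal10112} is invoked inside \cref{thm:m1}; neither is a hypothesis of the lemma. So the crux step has no support within the lemma's setting, and by your own admission the argument collapses without it: an adversary that leaves $Z$ untouched exactly when $A$ lies in some fixed set correlates the ``same'' branch with $\hat A$, which is incompatible with a decomposition $p\,\psi_{\hat A A}+(1-p)\,\psi_{\hat A}\otimes\perp$ with $p$ depending only on $(U,\psi_E)$.

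As written, then, your proposal sketches the composed one-time-pad-plus-MAC statement that \cref{thm:m1} actually needs, not \cref{lem:equal10112} as stated; the forgery bound via \cref{def:mac} is the only part that genuinely lives inside the lemma's hypotheses. Your remark does flag a real subtlety in how the lemma is phrased versus how it is applied, but to count as a proof of the stated lemma you must either give an argument for the message-independence of $p$ in the bare-MAC setting of \cref{fig:splitstate211} (which is what the cited Claims 12 and 13 of \cite{BJK21} are being relied on for), or state explicitly the extra masking hypothesis your argument requires and prove the lemma in that modified form.
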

\begin{proof}
    The proof follows using the arguments of Claims~$12$ and $13$ of~\cite{BJK21}. 
\end{proof}

\begin{theorem}\label{thm:m1}
Consider \cref{fig:splitstate4'}.  Let $2\nmext$ in \cref{fig:splitstate4'} be from \cref{lem:qnmcodesfromnmext} for $k=O(n^{1/4})$ and $\epsilon=2^{-n^{\Omega(1)}}$. Let $\sigma_{M\hat{M} XY}$ be a state (all registers are classical) such that 
\[\sigma_{M\hat{M} XY} =\sigma_{M\hat{M} } \otimes U_n \otimes U_{\delta n} \quad ; \quad \Pr(M=\hat{M})_\sigma=1 \]
Let $R=2\nmext(X,Y)$. Let $ \vert M\vert= ({\frac{1}{2}-1.1\delta})n$ and $\mac$ be the one-time message-authentication code from \cref{prop:mac} for parameters $\eps_\mac = 2^{-\Omega(n)}$. Let $R=R_e \vert \vert R_a$, $\vert R_e \vert = \vert M\vert$ and $\vert R_a \vert =0.1 \delta n$.

Let $\enc : \cL( \cH_M) \to \cL(\cH_Z \otimes \cH_Y \otimes \cH_X)$ be the classical encoding map and $\dec  : \cL(\cH_{Z'} \otimes \cH_{Y'}\otimes \cH_{X'}) \to  \cL(\cH_{M'})$ be the classical decoding map. The procedure for $\enc$ on input $(\sigma_{M})$ is as follows:
\begin{itemize}
    \item Sample ${XY} = U_n \otimes U_{\delta n}$ (independent of $\sigma_M$)
    \item Compute $R =R_e \vert \vert R_a= 2 \nmext(X,Y)$
    \item Set $Z =  (R_e \oplus M) \vert \vert \mac(R_a,(R_e \oplus M) )$
\end{itemize}
The procedure for $\dec$ on input $(Z',Y',X')$ is as follows:
\begin{itemize}
    \item Compute $R' =R'_e \vert\vert R'_a= 2 \nmext(X',Y')$
    \item Check if $\mac(R_a', Z_1') =Z_2'$, output $Z_1' \oplus R_e'$, else output $\perp$.
\end{itemize}Note that all parts $(X,Y,Z)$ of the ciphertext are classical. Then, $(\enc,\dec)$ as specified above is an $\eps'$-$3$-split non-malleable code for $\sigma_{\hat{M}M}$, where the registers $(Z,Y,X)$ correspond to $3$ parts of the codeword, and $\eps' =  8(2^{-k}+ \eps)+\eps_\mac$.
\end{theorem}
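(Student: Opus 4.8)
The plan is to mirror the proof of \cref{thm:main} essentially line for line, with the Clifford‑twirl ingredient (\cref{lem:equal101}, together with the $1$‑design \cref{fact:notequal}) replaced by the one‑time‑pad plus one‑time‑MAC ingredient (\cref{lem:equal10112}). First I would observe, as in \cref{thm:main}, that \cref{fig:splitstate4'}—in which the encoder sub‑circuit $V_R$ that builds $Z=(Z_1,Z_2)=(R_e\oplus M,\ \mac(R_a,R_e\oplus M))$ from $(R,M)$, and the action of $W$, are \emph{delayed} until after $X,Y$ have been tampered into $X',Y'$—computes the same state as the real code, so it suffices to establish the required relation in \cref{fig:splitstate4'}. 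Writing $\tilde\tau=(U\otimes V)(\sigma\otimes\ketbra{\psi}_{E_1E_2E_3})(U\otimes V)^\dagger$ for the state just before $V_R$, letting $\tau$ be the state after $R=2\nmext(X,Y)$ and $R'=2\nmext(X',Y')$ are computed, and setting $\tau^1=\tilde\tau\vert(X'Y'=XY)$, $\tau^0=\tilde\tau\vert(X'Y'\neq XY)$, $p_{\sm}=\Pr(X'Y'=XY)_{\tilde\tau}$, \cref{lem:qnmcodesfromnmext} (with the inert registers $E_3,M,\hat M$ appended to the side information) gives $\tau=p_{\sm}\tau^1+(1-p_{\sm})\tau^0$, $\Pr(R=R')_{\tau^1}=1$, and the mixed estimate
\[
p_{\sm}\bigl\|\tau^1_{RE_2'E_3M\hat M}-U_{\vert R\vert}\otimes\tau^1_{E_2'E_3M\hat M}\bigr\|_1+(1-p_{\sm})\bigl\|\tau^0_{RR'E_2'E_3M\hat M}-U_{\vert R\vert}\otimes\tau^0_{R'E_2'E_3M\hat M}\bigr\|_1\le 4(2^{-k}+\eps).
\]
Let $\Phi$ be the CPTP map $RR'E_3M\hat M\to M'\hat M$ that applies $V_R$, then $W$ to $(Z,E_3)$, then the decoder $C_{R'}$; so $\rho=\Phi(\tau)$.

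For the tampering branch I would argue, exactly as in \eqref{eq:2classical19845}, that $\Phi(U_{\vert R\vert}\otimes\tau^0_{R'E_3M\hat M})=\eta^{\cA}_{M'}\otimes\sigma_{\hat M}$ with $\eta^{\cA}_{M'}$ depending only on $\cA$: when $R=(R_e,R_a)$ is uniform and independent of $(R',E_3,M,\hat M)$, one‑time‑pad security makes $Z_1=R_e\oplus M$—and hence $(Z_1,Z_2)$, and hence the whole pair $(Z',E_3')$ output by $W$—independent of $(R',M,\hat M)$; since $M'=C_{R'}(Z')$ is a function of $(Z',R')$ only, it is independent of $(M,\hat M)$, and its distribution depends only on $W$, $\psi_{E_3}$, the $\mac$, and the (adversary‑determined) marginal of $R'$ on this branch. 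This is the one‑time‑pad replacement of the $1$‑design \cref{fact:notequal}, followed by \cref{fact:data}.

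For the non‑tampering branch, I would introduce the idealized state $\tilde\tau^1$ with $\tilde\tau^1_{RR'E_3M\hat M}=\tilde\tau^1_{RR'}\otimes\tau^0_{E_3M\hat M}$, $\Pr(R=R')_{\tilde\tau^1}=1$, $\tilde\tau^1_{R'}=U_{\vert R\vert}$; the first half of the estimate above lets me replace $\tau^1$ by $\tilde\tau^1$ inside $\Phi$ at cost $4(2^{-k}+\eps)$. On $\tilde\tau^1$ the decoder reuses the same key ($R'=R$), so ``encode with $R$, tamper with $W$, decode with $R$'' is precisely the situation of \cref{lem:equal10112}: apply it with authenticated message $A:=Z_1$, its copy $\hat A:=\hat Z_1$ (created by the encoder alongside $Z_1$), key $R_a$, and side information consisting of $E_3$ together with $(M,\hat M)$. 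Because $Z_1=R_e\oplus M$ with $R_e$ the independent uniform one‑time‑pad key, the pair $(Z_1,\hat Z_1)$ and the key $R_a$ are both independent of $(M,\hat M,E_3)$, so the hypotheses of \cref{lem:equal10112} are met; its conclusion says that, up to $\eps_\mac$, the verify‑output equals $Z_1$ on an event of probability $p$ (depending only on $W$ and $\psi_{E_3}$) and equals $\perp$ otherwise, and on the first event $Z_1'=Z_1$, so—using $R_e'=R_e$ on this branch—the decoder actually outputs $Z_1'\oplus R_e'=Z_1\oplus R_e=M=\hat M$. Hence $\Phi(\tilde\tau^1)\approx_{\eps_\mac}p\,\sigma_{\hat M M'}+(1-p)(\sigma_{\hat M}\otimes\perp_{M'})$, the analogue of \eqref{eq:2classical19845r4}.

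I would then combine these exactly as in the last display of the proof of \cref{thm:main}: $\rho_{\hat M M'}=\Phi(\tau)=p_{\sm}\Phi(\tau^1)+(1-p_{\sm})\Phi(\tau^0)$, substitute the two branch estimates through the two halves of the \cref{lem:qnmcodesfromnmext} bound, collect errors to $\eps'=8(2^{-k}+\eps)+\eps_\mac$, and read off $p_{\cA}=p_{\sm}p$ and $\gamma^{\cA}_{M'}=\frac{1}{1-p_{\cA}}(p_{\sm}(1-p)\perp_{M'}+(1-p_{\sm})\eta^{\cA}_{M'})$, both depending only on $\cA$. The parameter check is routine: $|R_e|=|M|=(\tfrac12-1.1\delta)n$ and $|R_a|=0.1\delta n$, which is $2t$ for the tag length $t=\log|M|+\log(1/\eps_\mac)$ of the $\mac$ of \cref{prop:mac} once $\eps_\mac=2^{-\Omega(n)}$, and the codeword $(X,Y,Z)$ has length $n+\delta n+|Z|$, giving rate tending to $\tfrac13$ as $\delta\to 0$. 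I expect the main obstacle to be the non‑tampering branch: one has to check carefully that stripping off the one‑time pad reduces exactly to the hypotheses of \cref{lem:equal10112}, and that the message registers $(M,\hat M)$ are carried through its application so that its statement about the verify‑output upgrades to the claimed statement about $\rho_{\hat M M'}$; everything else is the same linear‑algebra bookkeeping as in \cref{thm:main}.
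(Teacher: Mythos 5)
Your proposal is correct and follows essentially the same route as the paper: the same reduction to the delayed-circuit picture, the same split into $\tau^1/\tau^0$ via \cref{lem:qnmcodesfromnmext}, the one-time-pad argument plus \cref{fact:data} for the tampered branch (the paper's \cref{eq:2classical19845'}), and \cref{lem:equal10112} on the idealized same-key state for the untampered branch (the paper's \cref{eq:2classical19845r4'}), combined exactly as in \cref{thm:main}. The extra care you flag about carrying $(M,\hat M)$ through the MAC lemma and stripping the pad is precisely what the paper's terse citation of \cref{lem:equal10112} implicitly relies on, so there is no gap.
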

\begin{proof}
To show that $(\enc,\dec)$ is an $\eps'$-$3$-split non-malleable code for classical messages, it suffices to show that for every $\mathcal{A}=(U,V,W,\psi)$ it holds that (in \cref{fig:splitstate4'})
\begin{equation}\label{eq:finalgoal1'}
    \rho_{\hat{M}M'} \approx_{\eps'} p_{\mathcal{A}} \sigma_{\hat{M}M}  + (1-p_\mathcal{A}) (\sigma_{\hat{M}} \otimes  \gamma^{\mathcal{A}}_{M'}),
\end{equation}where $(p_{\mathcal{A}}, \gamma^{\mathcal{A}}_{M'})$ depend only on the $3$-split adversary $\cA$. 

Consider the state $\tilde{\tau}$ in~\cref{fig:splitstate4'}.
Using arguments as in the proof of \cref{thm:main} using \cref{lem:qnmcodesfromnmext}, we have the following:
\begin{equation}\label{eq:2classical198'}
     (\tau)_{RR'E_2'E_3M\hat{M}}= p_{\sm} (\tau^1)_{RR'E_2'E_3M\hat{M}} + (1-p_{\sm}) (\tau^0)_{RR'E_2'E_3M\hat{M}}
    \end{equation}
    and
\begin{multline}\label{eq:2classical1'}
        p_{\sm} \Vert (\tau^1)_{RE_2'E_3M\hat{M}} -  U_{\vert R \vert} \otimes (\tau^1)_{E_2'E_3M\hat{M}} \Vert_1 + \\ (1-p_{\sm})\Vert (\tau^0)_{RR'E_2'E_3M\hat{M}} -  U_{\vert R \vert} \otimes (\tau^0)_{R'E_2'E_3M\hat{M}} ) \Vert_1  \leq 4(2^{-k}+\eps).
    \end{multline}Further, we have
    \begin{equation}\label{Eq:file123'}
         \Pr(R =R')_{\tau^1}=1.
    \end{equation}
 Let $\Phi$ be a CPTP map from registers $RR'E_3M\hat{M}$ to $M'\hat{M}$ (i.e. $\Phi$ maps state $\tau$ to $\rho$) in~\cref{fig:splitstate4'}. One can note, 
\begin{equation}\label{eq:2classical19845'}
    \Phi(U_{\vert R \vert} \otimes \tau^0_{R'E_3M\hat{M}}) = \eta^{\mathcal{A}}_{M'} \otimes \sigma_{\hat{M}}
    \end{equation} follows using encryption property of one-time pad followed by~\cref{fact:data}. Further state $\eta^{\mathcal{A}}_{M'}$ depends only on adversary $\mathcal{A}$. Let $\tilde{ \tau}^1$ be the state such that 
\[ \tilde{ \tau}^1_{RR'E_3M\hat{M}} =\tilde{ \tau}^1_{RR'} \otimes \tau^0_{E_3M\hat{M}} \quad ; \quad   \Pr(R =R')_{\tilde{ \tau}^1}=1 \quad ; \quad \tilde{ \tau}^1_{R'}=U_{\vert R\vert}. \]

Note, \begin{equation}\label{eq:2classical19845r4'}
   \Phi( \tilde{ \tau}^1_{ RR'E_3M\hat{M}})  \approx_{\eps_\mac}  p \sigma_{M\hat{M}} + (1- p)  (\perp \otimes \sigma_{\hat{M}} ), 
    \end{equation} follows from \cref{lem:equal10112}. From here on using arguments as in the proof of \cref{thm:main} after \cref{eq:2classical19845r4}, we  get, 
\begin{align}
  \rho_{M' \hat{M}} 
     & \approx_{ 8(2^{-k}+ \eps) + 2\eps_\mac} p_{\sm}\cdot p\cdot  \sigma_{M\hat{M}}  \nonumber\\
     & \quad\quad + 
 \left(p_{\sm}  (1-p) \perp  + (1-p_{\sm})  \eta^{\mathcal{A}}_{M'}  \right) \otimes \sigma_{\hat{M}} , \nonumber
\end{align}Noting that
\[ p_{\sm} \quad ; \quad p_{} \quad ; \quad    \eta^{\mathcal{A}}_{M'}  \]depends only on the adversary $\mathcal{A} = (U,V,W, \ket{\psi}_{E_1E_2E_3})$ completes the proof.

\end{proof}

Using \cref{thm:m1}, we get the following corollary:

\begin{corollary}
    There exists constant rate $3$-split quantum secure non-malleable code with efficient classical encoding and decoding procedures with rate $\frac{1}{3+\delta}$ for arbitrarily tiny constant $\delta>0$. 
\end{corollary}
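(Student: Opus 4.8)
The corollary is a parameter-accounting consequence of \cref{thm:m1}, so the plan is to instantiate that theorem and tally codeword length, error, and running time.

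First I would fix the codeword length. The three splits in \cref{thm:m1} are $X$, $Y$, and $Z = (R_e\oplus M)\,\|\,\mac(R_a, R_e\oplus M)$, with $|X| = n$, $|Y| = \delta n$, and $|R_e| = |M| = (\tfrac12 - 1.1\delta)n$. The $\mac$ of \cref{prop:mac} has tag length equal to half its key length, and the key here is $R_a$ with $|R_a| = 0.1\delta n$, so the tag has length $0.05\delta n$ and $|Z| = |M| + 0.05\delta n$. Hence the total codeword length is $|X|+|Y|+|Z| = (\tfrac32 - 0.05\delta)n$, and the rate is
\[
\frac{|M|}{|X|+|Y|+|Z|} \;=\; \frac{\tfrac12 - 1.1\delta}{\tfrac32 - 0.05\delta},
\]
which equals $\tfrac13$ at $\delta = 0$ and is $\tfrac13 - \Theta(\delta)$ for small $\delta > 0$. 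So given a target constant $\delta'>0$ one picks $\delta>0$ small enough that this ratio is at least $\tfrac1{3+\delta'}$; the only bookkeeping point is the (harmless) renaming between the ``$\delta$'' parametrizing the construction inside \cref{thm:m1} and the ``$\delta$'' appearing in the rate $\tfrac1{3+\delta}$ of the corollary.

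It then remains to check the remaining two properties. For the error, \cref{thm:m1} gives $\eps' = 8(2^{-k} + \eps) + \eps_\mac$ with $k = O(n^{1/4})$, where $\eps = 2^{-n^{\Omega(1)}}$ is the error of the $2\nmext$ of \cref{lem:qnmcodesfromnmext} and $\eps_\mac = 2^{-\Omega(n)}$; each summand is $2^{-n^{\Omega(1)}}$, hence so is $\eps'$. For efficiency, $\enc$ runs one evaluation of $2\nmext$ (efficient by its explicit construction), one bitwise XOR, and one polynomial-evaluation $\mac$ (efficient by \cref{prop:mac}), and $\dec$ does the same together with a $\mac$ verification; all are $\poly(n)$-time classical procedures, and $3$-split quantum secure non-malleability is exactly the conclusion of \cref{thm:m1}. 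There is no genuine obstacle at this stage: all the substantive work lives in \cref{thm:m1} and, upstream, in \cref{lem:qnmcodesfromnmext} and the $2\nmext$ construction.
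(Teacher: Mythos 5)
Your proposal is correct and matches the paper's (implicit) argument: the corollary is stated as an immediate parameter-accounting consequence of \cref{thm:m1}, and your tally of $|X|+|Y|+|Z| = (\tfrac32 - 0.05\delta)n$ against $|M| = (\tfrac12 - 1.1\delta)n$, together with the error bound $\eps' = 8(2^{-k}+\eps)+\eps_\mac = 2^{-n^{\Omega(1)}}$ and the efficiency of $2\nmext$, XOR, and the polynomial-evaluation $\mac$, is exactly the intended justification. The remark about renaming the construction's $\delta$ versus the $\delta$ in the rate $\tfrac{1}{3+\delta}$ is the right bookkeeping point.
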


\section{A rate $1/5$, $2$-split, quantum secure, average case,  non-malleable code}\label{sec:2qnmcclassical}

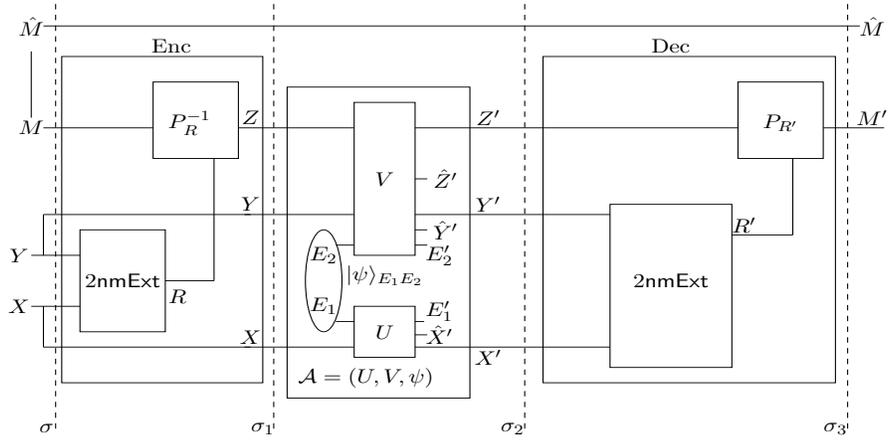
\begin{figure}
\centering
\resizebox{12cm}{6cm}{
\begin{tikzpicture}

\draw (1,4.7) -- (1,6);
\draw (3,3.9) rectangle (4.4,5.4);
\draw (12.6,3.9) rectangle (14,5.4);
\node at (1,4.5) {$M$};

\node at (3.6,4.6) {$P^{-1}_{R}$};
\node at (13.3,4.6) {$P_{R'}$};
\node at (1,6.5) {$\hat{M}$};
\node at (14.8,6.5) {$\hat{M}$};

\draw (1.2,6.5) -- (14.6,6.5);

\draw (1.2,4.5) -- (3,4.5);
\draw (4.4,4.5) -- (6.3,4.5);
\draw (7.3,4.5) -- (12.6,4.5);
\draw (14,4.5) -- (15,4.5);
\node at (14.8,4.7) {$M'$};
\node at (0.8,2) {$Y$};
\node at (0.8,1) {$X$};
\node at (3.4,1.2) {$R$};
\draw (1,1) -- (1.8,1);
\draw (1,2) -- (1.8,2);
\draw (3.2,1.5) -- (4,1.5);
\draw (4,1.5) -- (4,3.9);
\draw (1.2,0.2) -- (4.6,0.2);
\draw (1.2,2.8) -- (4.6,2.8);
\draw (1.2,0.2) -- (1.2,1);
\draw (1.2,2) -- (1.2,2.8);
\draw (1.8,0.5) rectangle (3.2,2.5);
\draw (1.5,-0.5) rectangle (4.8,5.9);
\node at (3.3,6.1) {$\enc$};
\node at (2.5,1.5) {$2\nmext$};

\draw (9.4,-0.5) rectangle (14.2,5.9);
\node at (11.5,6.1) {$\dec$};


\draw [dashed] (1.4,-1.4) -- (1.4,7.2);
\draw [dashed] (4.98,-1.4) -- (4.98,7.2);
\draw [dashed] (9.1,-1.4) -- (9.1,7.2);
\draw [dashed] (14.4,-1.4) -- (14.4,7.2);


\node at (6.8,1.6) {$\ket{\psi}_{E_1E_2}$};
\node at (1.25,-1.4) {$\sigma$};
\node at (4.8,-1.4) {$\sigma_1$};
\node at (8.9,-1.4) {$\sigma_2$};
\node at (14.2,-1.4) {$\sigma_3$};

\node at (4.6,4.7) {$Z$};
\node at (8.5,4.7) {$Z'$};
\node at (4.6,3) {$Y$};
\node at (8.5,3) {$Y'$};
\draw (4.5,2.8) -- (6.3,2.8);
\draw (7.3,2.8) -- (10.5,2.8);

\node at (12.7,2.6) {$R'$};
\draw (12.5,2.4) -- (13.5,2.4);
\draw (13.5,2.4) -- (13.5,3.9);

\node at (4.6,0.4) {$X$};
\node at (8.5,0.0) {$X'$};
\draw (4.5,0.2) -- (6.3,0.2);
\draw (7.3,0.2) -- (10.5,0.2);

\draw (6.3,2) rectangle (7.3,5);
\node at (6.8,3.5) {$V$};
\draw (6.3,0) rectangle (7.3,1);
\node at (6.8,0.5) {$U$};

\node at (6.5,-0.4) {$\mathcal{A}=(U,V,\psi)$};
\draw (5.2,-0.8) rectangle (8.2,5.3);

\draw (5.8,1.5) ellipse (0.3cm and 1cm);
\node at (5.8,2) {$E_2$};
\draw (6,2.2) -- (6.3,2.2);
\node at (7.7,2) {$E'_2$};
\draw (7.3,2.2) -- (7.5,2.2);
\node at (5.8,1) {$E_1$};
\draw (6,0.7) -- (6.3,0.7);
\node at (7.7,0.9) {$E'_1$};
\draw (7.3,0.7) -- (7.45,0.7);



\draw (10.5,-0.2) rectangle (12.5,3.0);
\node at (11.5,1.5) {${2\nmext}$};
\node at (7.8,2.5) {$\hat{Y}'$};

\node at (7.8,3.5) {$\hat{Z}'$};
\draw (7.3,3.5) -- (7.5,3.5);

\node at (7.7,0.44) {$\hat{X}'$};

\draw (7.3,0.44) -- (7.5,0.44);
\draw (7.3,2.5) -- (7.5,2.5);
\end{tikzpicture}}

\caption{Rate $1/5, 2$-split, quantum secure, average case, non-malleable code.}\label{fig:splitstate22}
\end{figure}

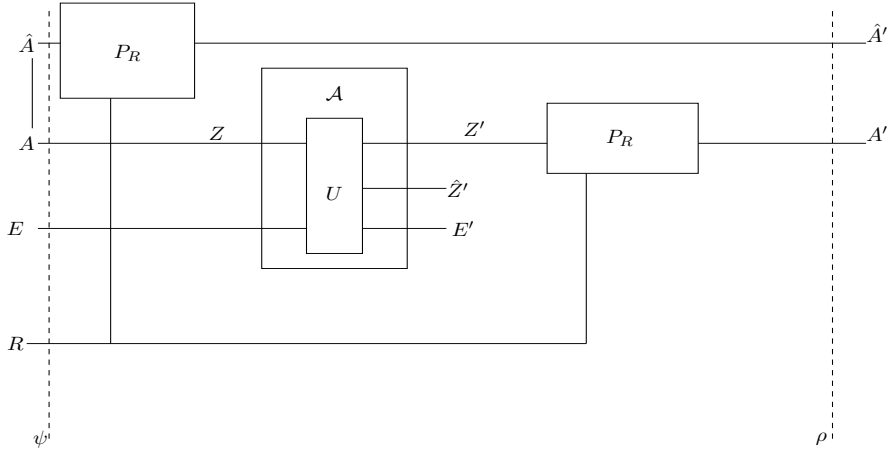
\begin{figure}
\centering
\resizebox{12cm}{6cm}{
\begin{tikzpicture}

\node at (1,6.5) {$\hat{A}$};
\node at (16.2,6.7) {$\hat{A}'$};
\node at (16.2,4.7) {$A'$};
\draw (1.2,6.5) -- (1.6,6.5);
\draw (4,6.5) -- (16,6.5);
\draw (1.1,6.2) -- (1.1, 4.8);
\draw (1.6,5.4) rectangle (4,7.3);
\draw (10.3,3.9) rectangle (13,5.3);
\node at (1,4.5) {$A$};

\node at (2.8,6.3) {$P_R$};
\node at (4.4,4.7) {$ Z $};
\node at (9,4.8) {$ Z' $};
\node at (11.6,4.6) {$P_R$ };


\draw (1.2,4.5) -- (6,4.5);
\draw (7,4.5) -- (10.3,4.5);
\draw (13,4.5) -- (16,4.5);

\node at (0.8,0.5) {$R$};
\draw (1,0.5) -- (11,0.5);


\draw  (2.5,0.5) -- (2.5,5.4);

\draw [dashed] (1.4,-1.4) -- (1.4,7.2);
\draw [dashed] (15.4,-1.4) -- (15.4,7.2);

\node at (1.25,-1.4) {$\psi$};
\node at (15.2,-1.4) {$\rho$};

\node at (0.8,2.8) {$E$};
\node at (8.7,3.6) {$\hat{Z}'$};
\draw (7,3.6) -- (8.5,3.6);

\draw (1.2,2.8) -- (6,2.8);
\draw (7,2.8) -- (8.5,2.8);

\draw (11,0.5) -- (11,3.9);


\draw (6,2.3) rectangle (7,5);
\node at (6.5,3.5) {$U$};

\node at (6.5,5.5) {$\mathcal{A}$};
\draw (5.2,2) rectangle (7.8,6);




\node at (8.8,2.8) {$E'$};
\end{tikzpicture}}

\caption{Classical non-malleable code with shared key.}\label{fig:splitstate2113}
\end{figure}
\begin{lemma}[]\label{lem:equal101125''}
 Consider \cref{fig:splitstate2113}. Let $P_R$  be $P(R,\cdot)$ from  \cref{prop:mac2} such that $P_R(M) = P(R,M)$ for $\eps_P= 2^{-\vert M\vert}$. Let $P_R^{-1}(\cdot)$ be the inverse function of $P_R^{}(\cdot)$. Let $\sigma_{\hat{A}A}$ (with both registers classical) be a state such that 
 \[ \sigma_{A} =U_A \quad ; \quad \Pr(\hat{A}=A)_\sigma =1.\]
 Let state ${\psi}_{A \hat{A}RE}$ be a c-q state (with registers $A\hat{A}R$ classical) such that 
 \[ {\psi}_{A \hat{A}RE} ={\psi}_{A \hat{A}} \otimes U_R \otimes {\psi}_{E}  \quad ; \quad  \Pr(\hat{A}=A)_\psi =1 \quad ; \quad \vert R \vert =2 \vert A\vert \quad ; \quad \vert A \vert =m. \]Let  $U:(\cH_Z\otimes \cH_{E})\to (\cH_{Z'}\otimes \cH_{\hat{Z}'}\otimes \cH_{E'})$ be a unitary. Then,
 \[ \rho_{\hat{A}'A'}   \approx_{\frac{2}{2^m-1}}  p \sigma_{\hat{A}A} + (1- p)  (\sigma_{\hat{A}} \otimes U_{A'}), \]
 where  $p$ depends only on unitary $U $ and state $\psi_E$.
\end{lemma}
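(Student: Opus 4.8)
The plan is to mirror the proof of \cref{lem:equal101}, with the pairwise-independent permutation family $\{P_R\}$ of \cref{prop:mac2} playing the role of the Clifford sub-group (a unitary $2$-design) and the dichotomy ``message unchanged vs.\ message changed'' playing the role of ``Pauli $=\id$ vs.\ Pauli $\neq\id$''. Here $\hat A$ (the copy of the uniform message) is never touched, so $\hat A'=\hat A$; the encoder applies $P_R$ to $A$, the adversary $U$ acts on $(Z,E)$, and the decoder applies $P_R^{\dagger}$ to $Z'$ (inverting the encoding, as in \cref{fig:splitstate22}), with $R$ shared and uniform.

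The first and essential step is to reduce to a \emph{classical} adversarial channel. The register $Z$ fed to $U$ is classical, and the decoded register $A'$ is the output of a classical decoding procedure and hence is measured in the computational basis; since this measurement commutes with the permutation $P_R^{\dagger}$, it is without loss of generality to assume that the adversary's output register $Z'$ is dephased in the computational basis. This reduction is genuinely needed: unlike a Clifford twirl, a permutation twirl does not annihilate off-diagonal terms, so the statement is false for a coherent adversary (e.g.\ a $U$ that replaces $Z$ by a fixed uniform superposition over all basis states). After dephasing, using that adversarial isometries are safe on classical registers (\cref{def:safe}), the adversary's effective action on $Z$ is a classical channel $\Phi:\ketbra{z}\mapsto\sum_{z'}q^{z}(z')\ketbra{z'}$, where each $q^{z}(\cdot)$ is a probability distribution on $\{0,1\}^{m}$ depending only on $U$ and $\psi_E$.

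Next I would express $\rho_{\hat A'A'}$ as a twirled channel applied to $\sigma_{\hat A A}$: averaging the composite encode–tamper–decode over the uniform key gives $\rho_{\hat A'A'}=\sum_{a}\tfrac{1}{2^{m}}\ketbra{a}_{\hat A'}\otimes\bar\Phi(\ketbra{a}_{A'})$ with $\bar\Phi(\xi)\defeq\mathbb{E}_{R}\big[P_R^{\dagger}\,\Phi\big(P_R\,\xi\,P_R^{\dagger}\big)\,P_R\big]$. Splitting $\Phi=\Phi_{\mathrm{diag}}+\Phi_{\mathrm{off}}$ into its ``$z'=z$'' and ``$z'\neq z$'' parts and computing the twirl: the diagonal part yields $\bar\Phi_{\mathrm{diag}}(\ketbra{a})=p\,\ketbra{a}$ with $p\defeq\tfrac{1}{2^{m}}\sum_{z}q^{z}(z)$, using that $P_R(a)$ is exactly uniform (second property of \cref{prop:mac2}); the off-diagonal part yields $\bar\Phi_{\mathrm{off}}(\ketbra{a})=\tfrac{1-p}{2^{m}-1}(\id-\ketbra{a})$, using that for $a\neq a'$ the pair $(P_R(a),P_R(a'))$ is uniform over ordered distinct pairs (third property, pairwise independence) together with $\sum_{z'\neq z}q^{z}(z')=1-q^{z}(z)$ (first property, bijectivity). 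Hence $\bar\Phi(\ketbra{a})=p\,\ketbra{a}+\tfrac{1-p}{2^{m}-1}(\id-\ketbra{a})$, and $p$ depends only on $U$ and $\psi_E$.

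Assembling the pieces and using $\id=2^{m}U_{A'}$ gives $\rho_{\hat A'A'}=p\,\sigma_{\hat A A}+(1-p)\,\tfrac{2^{m}(\sigma_{\hat A}\otimes U_{A'})-\sigma_{\hat A A}}{2^{m}-1}$; then, exactly as in the last step of \cref{lem:equal101} (via \cref{fact:traceconvex}), $\big\|\tfrac{2^{m}(\sigma_{\hat A}\otimes U_{A'})-\sigma_{\hat A A}}{2^{m}-1}-\sigma_{\hat A}\otimes U_{A'}\big\|_{1}=\tfrac{1}{2^{m}-1}\|\sigma_{\hat A}\otimes U_{A'}-\sigma_{\hat A A}\|_{1}\le\tfrac{2}{2^{m}-1}$, which yields the claimed $\rho_{\hat A'A'}\approx_{2/(2^{m}-1)}p\,\sigma_{\hat A A}+(1-p)(\sigma_{\hat A}\otimes U_{A'})$. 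The main obstacle is the second paragraph: justifying that the adversary may be treated as a classical channel — this is where the argument genuinely departs from the Clifford case — and, relatedly, ensuring the ``uniform over distinct pairs'' identity is used in exact form, which requires absorbing the $\eps_P=2^{-m}$ slack in \cref{prop:mac2} (e.g.\ by conditioning on the key yielding a genuine, sharply $2$-transitive permutation). The remaining steps are routine bookkeeping mirroring \cref{lem:equal101}.
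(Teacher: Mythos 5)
Your proposal is correct in substance and rests on the same mechanism as the paper's proof: reduce to a classical tampering channel, average over the uniform key, and split into the ``value unchanged'' and ``value changed'' events, using property 2 of \cref{prop:mac2} (exact uniformity of $P_R(\mu)$) for the first branch and pairwise independence for the second, which is exactly where the $\tfrac{2}{2^m-1}$ comes from. The difference is organizational: the paper's proof is a two-line conditioning argument --- define $\tilde\psi$ as the post-tampering state, condition on $\hat{A}=Z'$ versus $\hat{A}\neq Z'$, and then apply the map $\Phi$ that applies $P_R$ to \emph{both} $\hat{A}$ and $Z'$; the same-branch then lands exactly on the uniform correlated state $\sigma_{\hat{A}A}$ and the different-branch is $\tfrac{2}{2^m-1}$-close to $\sigma_{\hat{A}}\otimes U_{A'}$ --- whereas you carry out the explicit key-averaged channel computation in the style of \cref{lem:equal101}. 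Two caveats. First, you read the circuit of \cref{fig:splitstate2113} differently from the paper: there the adversary sees $Z=A$ in the clear and $P_R$ is applied to the reference copy $\hat{A}$ as well as to the tampered $Z'$, while in your version $P_R$ is applied at encoding, $P_R^{-1}$ at decoding, and $\hat{A}$ is untouched; your version therefore needs $\psi_A$ uniform (which you silently assume via the $2^{-m}$ weights), whereas the paper's version uniformizes $\hat{A}$ through $P_R$ and works for arbitrary $\psi_{A\hat{A}}$. This is harmless for the application in \cref{thm:m''} (uniform messages), but strictly you prove a variant of the stated lemma. Second, your claim that the statement is false for a coherent adversary, with the fixed uniform superposition as a counterexample, does not hold up: that adversary's output is independent of everything, so after decoding one gets exactly the $p=0$ case $\sigma_{\hat{A}}\otimes U_{A'}$; more to the point, no reduction is needed because the hypothesis already forces $Z'$ to be classical (it comes with the copy register $\hat{Z}'$, cf.\ \cref{def:copyofaclassicalregister}), which is how the paper gets the classical-channel picture for free. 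Your handling of the $\eps_P$ slack is admittedly hand-wavy (one cannot ``condition on the key being sharply $2$-transitive''), but the paper is equally terse on this point and absorbs it into the error bookkeeping of \cref{thm:m''}.
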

\begin{proof}
Note 
${\psi}_{A \hat{A}}$ is independent of other registers in ${\psi}_{A \hat{A}RE}$. Let $\tilde{\psi} = \tr_{E'\hat{Z}'} \left(U  (\psi_{\hat{A} A} \otimes \psi_E) U^\dagger \right).$ Let,
\begin{equation}\label{eq:7thm1}
   \psi^0= \tilde{\psi} \vert  (\hat{A} = Z' );\quad \psi^1= \tilde{\psi}\vert  (\hat{A} \not= Z')   ;\quad  p= \Pr( \hat{A}=Z')_{\tilde{\psi}}.
\end{equation}
Let $\Phi$ be the CPTP map from registers $\hat{A}Z'R$ to $\hat{A}'A'$ (i.e. $\Phi$ applies $P_R$ on registers $Z',\hat{A}$). Then 
 \begin{align*}
 \rho_{\hat{A}'A'}&=p\Phi(\psi^0)+(1-p)\Phi(\psi^1)\\
 &=p\sigma_{\hat{A}A}+(1-p)\Phi(\psi^1)\\
 &\approx_\frac{2}{2^m-1} p\sigma_{\hat{A}A}+(1-p)(\sigma_{\hat{A}}\otimes U_{A'}).
   \end{align*}
   Last approximation follows because of $\epsilon$-pairwise independence property of the family $P$ from \cref{prop:mac2}.
\end{proof}

\begin{figure}
\centering
\resizebox{12cm}{6cm}{
\begin{tikzpicture}

\draw (1,4.7) -- (1,6.2);
\draw (3.2,5.5) rectangle (4.5,7);
\draw (12.6,3.9) rectangle (14,5.4);
\node at (1,4.5) {$M$};
\node at (3.8,6.1) {$P_R$};
\node at (13.3,4.6) {$P_{R'}$};
\node at (1,6.5) {$\hat{M}$};
\node at (14.5,6.75) {$\hat{M}$};
\draw (1.2,6.5) -- (3.2,6.5);
\draw (4.5,6.5) -- (15,6.5);

\draw (2.5,0.4) rectangle (3.8,2.5);
\node at (0.8,2) {$Y$};
\node at (0.8,1) {$X$};
\node at (4,1.2) {$R$};
\draw (1,1) -- (2.5,1);
\draw (1,2) -- (2.5,2);
\draw (3.8,1.5) -- (4,1.5);
\draw (4,1.5) -- (4,1.8);
\draw  (4,1.8) -- (4,2.3);
\draw (4,2.3) -- (4,5.5);

\draw (1.2,0.2) -- (4.6,0.2);
\draw (1.2,2.8) -- (4.6,2.8);
\draw (1.2,0.2) -- (1.2,1);
\draw (1.2,2) -- (1.2,2.8);

\draw (1.2,4.5) -- (1.7,4.5);
\draw (1.7,4.5) -- (2.5,4.5);
\draw (2.5,4.5) -- (6.3,4.5);
\draw (7.3,4.5) -- (12.6,4.5);
\draw (14,4.5) -- (15,4.5);
\node at (14.5,4.7) {$M'$};
\node at (3.2,1.5) {$2\nmext$};


\draw [dashed] (1.4,-2.5) -- (1.4,7);
\draw [dashed] (4.88,-2.5) -- (4.88,7);
\draw [dashed] (8.8,-2.5) -- (8.8,7);
\draw [dashed] (14.3,-2.5) -- (14.3,7);

\node at (6.75,1.6) {$\ket{\psi}_{E_1E_2}$};
\node at (1.2,-2.3) {$\rho$};
\node at (4.74,-2.3) {$\rho_1$};
\node at (8.6,-2.3) {$\rho_2$};
\node at (14.13,-2.3) {$\rho_3$};

\node at (3.2,4.7) {$Z$};
\node at (8,4.7) {$Z'$};
\node at (4.7,3) {$Y$};
\node at (8,3) {$Y'$};
\draw (4.5,2.8) -- (6.3,2.8);
\draw (7.3,2.8) -- (10.5,2.8);

\node at (12.7,2.6) {$R'$};
\draw (12.5,2.4) -- (13.5,2.4);
\draw (13.5,2.4) -- (13.5,3.9);

\node at (4.7,0.4) {$X$};
\node at (8,0.0) {$X'$};
\draw (4.5,0.2) -- (6.3,0.2);
\draw (7.3,0.2) -- (10.5,0.2);


\draw (6.3,2) rectangle (7.3,5);
\node at (6.8,3.5) {$V$};
\draw (6.3,0) rectangle (7.3,1);
\node at (6.8,0.5) {$U$};

\node at (6.5,-0.4) {$\mathcal{A}=(U,V,\psi)$};
\draw (5.2,-0.8) rectangle (8.2,5.3);

\draw (5.8,1.5) ellipse (0.3cm and 1cm);
\node at (5.8,2) {$E_2$};
\draw (6,2.2) -- (6.3,2.2);
\node at (7.7,2) {$E'_2$};
\draw (7.3,2.2) -- (7.5,2.2);
\node at (5.8,1) {$E_1$};
\draw (6,0.7) -- (6.3,0.7);
\node at (7.7,0.9) {$E'_1$};
\draw (7.3,0.7) -- (7.45,0.7);



\draw (10.5,-0.5) rectangle (12.5,3.2);
\node at (11.5,1.5) {${2\nmext}$};



\draw (9.4,-0.6) rectangle (14.2,5.9);
\node at (11,6.1) {$\dec$};
\node at (7.8,2.5) {$\hat{Y}'$};

\node at (7.7,0.44) {$\hat{X}'$};
\node at (7.8,4.1) {$\hat{Z}'$};
\draw (7.3,0.44) -- (7.5,0.44);
\draw (7.3,2.5) -- (7.5,2.5);
\draw (7.3,4.1) -- (7.5,4.1);
\end{tikzpicture}}
\caption{Rate $1/5, ~2$-split, quantum secure, average case, non-malleable code in the modified process.}\label{fig:splitstate33}
\end{figure}

\begin{figure}
\centering
\resizebox{12cm}{6cm}{
\begin{tikzpicture}




\draw (1,4.7) -- (1,6);
\draw (14.6,5.7) rectangle (16,7);
\draw (12.9,3.9) rectangle (14.3,5.4);
\node at (1,4.5) {$M$};
\node at (15.3,6.1) {$P_{R}^{}$};
\node at (13.5,4.6) {$P_{R'}^{}$};
\node at (1,6.5) {$\hat{M}$};
\draw (1.2,6.5) -- (14.6,6.5);
\draw (16,6.5) -- (16.5,6.5);

\draw (1.2,4.5) -- (1.7,4.5);
\draw (1.7,4.5) -- (2.5,4.5);
\draw (2.5,4.5) -- (6,4.5);
\draw (7,4.5) -- (12.9,4.5);
\draw (14.3,4.5) -- (16,4.5);
\node at (16.5,4.7) {$M'$};
\node at (16.5,6.2) {$\hat{M}$};
\draw (4.5,-1) -- (10.5,-1);
\draw (4.5,-1.6) -- (10.5,-1.6);
\draw (15.5,-1.6) -- (12.4,-1.6);

\draw  (15.5,-1.6) -- (15.5,4.1);
\draw [dashed] (15.5,4.1) -- (15.5,4.7);
\draw  (15.5,4.7) -- (15.5,5.7);

\draw [dashed] (1.4,-2.5) -- (1.4,7);
\draw [dashed] (4.88,-2.5) -- (4.88,7);
\draw [dashed] (9,-2.5) -- (9,7);
\draw [dashed] (12.8,-2.5) -- (12.8,7);
\draw [dashed] (16.1,-2.5) -- (16.1,7);

\node at (6.45,1.6) {$\ket{\psi}_{E_1E_2}$};
\node at (1.2,-2.3) {$\theta$};
\node at (4.74,-2.3) {$\theta_1$};
\node at (8.8,-2.3) {$\theta_2$};
\node at (13,-2.3) {$\theta_3$};
\node at (15.9,-2.3) {$\theta_4$};

\node at (3.2,4.7) {$Z$};
\node at (8,4.7) {$Z'$};
\node at (0.8,3) {$Y$};
\draw (4.5,2.8) -- (1,2.8);
\draw (4.5,2.8) -- (4.5,0.5);
\draw [dashed] (4.5,0.5) -- (4.5,-0.5);
\draw (4.5,-0.5) -- (4.5,-1);

\draw (1,0.2) -- (4.5,0.2);
\draw (3.5,0.2) -- (3.5,-1.6);
\draw (3.5,-1.6) -- (4.5,-1.6);
\node at (8,3) {$Y'$};
\draw (4.5,2.8) -- (6,2.8);
\draw (7,2.8) -- (10.5,2.8);

\node at (12.6,2.6) {$R'$};
\draw (12.4,2.4) -- (13.5,2.4);
\draw (13.5,2.4) -- (13.5,3.9);

\node at (0.8,0.4) {$X$};
\node at (8,0.0) {$X'$};
\draw (4.5,0.2) -- (6,0.2);
\draw (7,0.2) -- (10.5,0.2);

\draw (6,2) rectangle (7,5);
\node at (6.5,3.5) {$V$};
\draw (6,0) rectangle (7,1);
\node at (6.5,0.5) {$U$};

\node at (6.5,-0.4) {$\mathcal{A}=(U,V,\psi)$};
\draw (5.2,-0.8) rectangle (7.8,5.3);

\draw (5.5,1.5) ellipse (0.25cm and 1.1cm);
\node at (5.5,2) {$E_2$};
\draw (5.7,2.2) -- (6,2.2);

\node at (7.4,2) {$E'_2$};
\draw (7,2.2) -- (8.5,2.2);
\draw (8.5,2.2) -- (8.5,6.3);
\draw (8.5,6.3) -- (13.8,6.3);
\node at (14.1,6.3) {$E'_2$};

\node at (5.5,1) {$E_1$};
\draw (5.7,0.7) -- (6,0.7);
\node at (7.4,0.9) {$E'_1$};
\draw (7.0,0.7) -- (7.2,0.7);


\draw (10.5,-2) rectangle (12.4,-0.8);
\draw (10.5,-0.4) rectangle (12.4,3.1);
\node at (11.5,1.5) {${2\nmext}$};
\node at (11.5,-1.4) {${2\nmext}$};

\node at (14.3,-1.4) {$R$};

\draw (9.4,-0.6) rectangle (14.5,5.9);
\node at (11,6.1) {$\dec$};
\node at (7.4,2.5) {$\hat{Y}'$};

\node at (7.4,3.5) {$\hat{Z}'$};
\draw (7.2,3.5) -- (7,3.5);
\node at (7.4,0.44) {$\hat{X}'$};

\draw (7.2,0.44) -- (7,0.44);
\draw (7.2,2.5) -- (7,2.5);
\end{tikzpicture}}
\caption{Rate $1/5, ~2$-split, quantum secure, average case, non-malleable code in the modified process.}\label{fig:splitstate48}
\end{figure}

\begin{theorem}\label{thm:m''}
Consider \cref{fig:splitstate22}.  Let $2\nmext$ in \cref{fig:splitstate22} be from \cref{lem:qnmcodesfromnmext} for $k=O(n^{1/4})$ and $\epsilon=2^{-n^{\Omega(1)}}$. Let $P_R$ in \cref{fig:splitstate22} be $P(R,\cdot)$ from  \cref{prop:mac2} such that $P_R(M) = P(R,M)$ for $\eps_P= 2^{-\vert M\vert}$. Note for every $r$, $P^{}(r,\cdot)$ is a bijective function. Let $P_R^{-1}(\cdot)$ be the inverse function of $P_R^{}(\cdot)$. Let $\sigma_{M\hat{M} XY}$ be a state (all registers are classical) such that 
\[\sigma_{M\hat{M} XY} =\sigma_{M\hat{M} } \otimes U_n \otimes U_{\delta n} \quad ; \quad \Pr(M=\hat{M})_\sigma=1 \quad ;\quad \sigma_M=U_{M} .\]
Let $R=2\nmext(X,Y)$. Let $ \vert M\vert=m= ({\frac{1}{2}-\delta})n$.

Let $\enc : \cL( \cH_M) \to \cL(\cH_Z \otimes \cH_Y \otimes \cH_X)$ be the classical encoding map and $\dec  : \cL(\cH_{Z'} \otimes \cH_{Y'}\otimes \cH_{X'}) \to  \cL(\cH_{M'})$ be the classical decoding maps as shown in \cref{fig:splitstate22}. Note that all parts $(X,Y,Z)$ of the ciphertext are classical. Then, $(\enc,\dec)$ as specified above is an $\eps'$-$2$-split non-malleable code for $\sigma_{\hat{M}M}$, where the registers $((Z,Y),X)$ correspond to $2$ parts of the codeword, and $\eps' =  8(2^{-k}+ \eps)+3\eps_P$.

\end{theorem}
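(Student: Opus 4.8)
The plan is to transcribe the proof of \cref{thm:m1} into the two-split setting, with two changes in the spirit of~\cite{BGJR23}: the message-carrying register $Z$ now lives in the \emph{same} split as the extractor source $Y$, and the Clifford $2$-design lemma \cref{lem:equal101} is replaced by its classical counterpart \cref{lem:equal101125''}, built from the pairwise-independent permutation family $P$ of \cref{prop:mac2}. As usual it suffices to produce, for every adversary $\cA=(U,V,\ket\psi_{E_1E_2})$, a probability $p_\cA$ and a state $\gamma^\cA_{M'}$ that depend only on $\cA$ with
\begin{equation*}
  (\sigma_3)_{\hat M M'}\ \approx_{\eps'}\ p_\cA\,\sigma_{\hat M M}\ +\ (1-p_\cA)\bigl(\sigma_{\hat M}\otimes\gamma^\cA_{M'}\bigr),
\end{equation*}
referring to \cref{fig:splitstate22}.

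First I would move to the modified processes of \cref{fig:splitstate33} and \cref{fig:splitstate48}. Because $\sigma_M=U_M$ is uniform and $P_r^{-1}$ is a bijection for every key $r$, the joint law of $(\hat M,M,X,Y,Z)$ with $Z=P_R^{-1}(M)$ and $\hat M=M$ is identical to the one obtained by first sampling $Z\leftarrow U_{|M|}$ independently of $(X,Y)$ and then \emph{defining} $M:=P_R(Z)$, $\hat M:=M$; and since the adversary never acts on $M,\hat M$, one may further defer computing $M,\hat M$, $R=2\nmext(X,Y)$ and $R'=2\nmext(X',Y')$ until after $U,V$ have acted. In the resulting picture $X,Y,Z$ start out independent and uniform, $U$ tampers $(X,E_1)$ and $V$ tampers $(Z,Y,E_2)$; crucially $Z$ is now just part of the side information entering $V$'s split, and since $V$ may be taken safe on the classical register $Z$, a copy of $Z$ can be kept inside $V$'s output $E_2'$. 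This does not change the output state $(\sigma_3)_{\hat M M'}$.

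Next I would apply \cref{lem:qnmcodesfromnmext} to the two sources $X$ (in $U$'s split) and $Y$ (in $V$'s split). Writing $\tilde\tau$ for the state just before $R,R'$ are computed, $\tau^1:=\tilde\tau\mid(X'Y'=XY)$, $\tau^0:=\tilde\tau\mid(X'Y'\neq XY)$ and $p_{\sm}:=\Pr(X'Y'=XY)_{\tilde\tau}$, this yields $\tilde\tau=p_{\sm}\tau^1+(1-p_{\sm})\tau^0$, $\Pr(R=R')_{\tau^1}=1$, and
\begin{equation*}
  p_{\sm}\bigl\|\tau^1_{RE_2'}-U_{|R|}\otimes\tau^1_{E_2'}\bigr\|_1+(1-p_{\sm})\bigl\|\tau^0_{RR'E_2'}-U_{|R|}\otimes\tau^0_{R'E_2'}\bigr\|_1\ \leq\ 4(2^{-k}+\eps),
\end{equation*}
where $E_2'$ is understood to contain $Z',\hat Z'$ together with the retained copy of $Z$. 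On the branch $\tau^0$, once $R$ is replaced by a fresh uniform string independent of $(R',E_2')$, hence of $Z$ and $Z'$, the value $M=P_R(Z)$ is uniform and independent of $M'=P_{R'}(Z')$; as $\hat M=M$, this branch contributes $\sigma_{\hat M}\otimes\gamma^\cA_{M'}$ with $\gamma^\cA_{M'}$ depending only on $\cA$ (the analogue of \eqref{eq:2classical19845'}, using that $P_r$ is a bijection in place of the one-time pad). On the branch $\tau^1$, where $R=R'$ is, up to the above error, uniform and independent of $Z,Z',\hat Z'$, the sub-experiment ``$M=P_R(Z)$; $V$ tampers $Z\mapsto Z'$; $M'=P_R(Z')$'' is exactly the situation of \cref{lem:equal101125''} with the extractor output $R$ as the shared uniform key, so that idealized branch maps to $p\,\sigma_{\hat M M}+(1-p)\bigl(\sigma_{\hat M}\otimes U_{M'}\bigr)$ up to error $\tfrac{2}{2^{|M|}-1}$, with $p$ depending only on $\cA$. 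Combining the two branches and accounting for the errors exactly as in the chain following \eqref{eq:2classical19845r4} in the proof of \cref{thm:main} — two uses of the extractor bound plus the single $2$-design error, and $\tfrac{2}{2^{|M|}-1}\le 3\eps_P$ — gives the claim with $\eps'=8(2^{-k}+\eps)+3\eps_P$.

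The two steps I expect to need care are: (i) the reduction in the second paragraph, which works only because $\sigma_M$ is uniform (this is why the two-split code is only \emph{average-case}) and $P_r^{-1}$ is a bijection; and (ii) folding the message register $Z$ into the side information passed to $V$ when invoking \cref{lem:qnmcodesfromnmext} — this is sound precisely because $Z$ is sampled independently of $(X,Y)$, so it may be absorbed into $E_2$ without weakening the non-malleable-extractor guarantee for the pair $(X,Y)$, even though $V$ can now correlate the tampered $Z'$ with $Y'$ and hence with $R'$. The remainder is a routine transcription of the proof of \cref{thm:m1}.
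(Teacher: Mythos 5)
Your proposal is correct and follows essentially the same route as the paper's own proof: reduce to the modified process (uniformity of $\sigma_M$ plus bijectivity of $P_r$ lets one sample $Z$ uniformly and delay computing $R$ and applying $P_R$ to the retained copy, exactly the passage from \cref{fig:splitstate22} to \cref{fig:splitstate33,fig:splitstate48}), invoke \cref{lem:qnmcodesfromnmext} with $Z$ absorbed into the side information of $V$'s split, handle the unequal branch by the perfect-encryption property of \cref{prop:mac2} and the equal branch by \cref{lem:equal101125''}, and combine with the same error accounting ($8(2^{-k}+\eps)+3\eps_P$). The only cosmetic difference is that you keep the copy of $Z$ inside $E_2'$ via safeness of $V$, whereas the paper retains it as the register $\hat{M}$ outside the adversary; these are equivalent.
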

\begin{proof}

First we note that since $\rho_M=\sigma_M=U_{\vert M\vert}$ and $P_r(\cdot)$ is a bijective function, the states $\rho_1, \rho_2, \rho_3$ in \cref{fig:splitstate33}
and states $\sigma_1, \sigma_2, \sigma_3$ in \cref{fig:splitstate22} are the same. Further, note \cref{fig:splitstate33} and \cref{fig:splitstate48} are equivalent except for the delayed generation of register $R$ and its controlled operation on register $\hat{M}$ in \cref{fig:splitstate48}. Note this can be done as the adversary operations and generation of register $R$ commute. 

Thus, to show that $(\enc,\dec)$ is an $\eps'$-$2$-split non-malleable code for average classical messages, it suffices to show that for every $\mathcal{A}=(U,V,\psi)$ it holds that (in \cref{fig:splitstate48})
\begin{equation}\label{eq:finalgoal''}
    (\theta_4)_{\hat{M}M'} \approx_{\eps'} p_{\mathcal{A}} \sigma_{\hat{M}M}  + (1-p_\mathcal{A}) (\sigma_{\hat{M}} \otimes  \gamma^{\mathcal{A}}_{M'}),
\end{equation}where $(p_{\mathcal{A}}, \gamma^{\mathcal{A}}_{M'})$ depend only on the $2$-split adversary $\cA$. 

Consider the state $\theta_3$ in~\cref{fig:splitstate48}. Using arguments as in the proof of \cref{thm:main} using \cref{lem:qnmcodesfromnmext}, we have the following:
\begin{equation}\label{eq:2classical198''}
     (\theta_3)_{RR'E_2'Z'\hat{M}}= p_{\sm} (\theta_3^1)_{RR'E_2'Z'\hat{M}} + (1-p_{\sm}) (\theta_3^0)_{RR'E_2'Z'\hat{M}}
    \end{equation}
    and
\begin{multline}\label{eq:2classical1''}
        p_{\sm} \Vert (\theta^1_3)_{RE_2'Z'\hat{M}} -  U_{\vert R \vert} \otimes (\theta_3^1)_{E_2'Z'\hat{M}} \Vert_1 + \\ (1-p_{\sm})\Vert (\theta_3^0)_{RR'E_2'Z'\hat{M}} -  U_{\vert R \vert} \otimes (\theta_3^0)_{R'E_2'Z'\hat{M}} ) \Vert_1  \leq 4(2^{-k}+\eps).
    \end{multline}Further, we have
    \begin{equation}\label{Eq:file123''}
         \Pr(R =R')_{\theta_3^1}=1.
    \end{equation}
 Let $\Phi$ be a CPTP map from registers $RR'E'_2Z'\hat{M}$ to $M'\hat{M}$ (i.e. $\Phi$ maps state $\theta_3$ to $\theta_4$) in~\cref{fig:splitstate48}. One can note, 
\begin{equation}\label{eq:2classical19845''}
    \Phi(U_{\vert R \vert} \otimes (\theta^0_3)_{R'Z'\hat{M}}) =  \eta^{\mathcal{A}}_{M'}   \otimes\sigma_{\hat{M}}
    \end{equation}follows from using~\cref{prop:mac2} (noting $P_R$ is a perfect encryption scheme) followed by~\cref{fact:data}. Further state $\eta^{\mathcal{A}}_{M'}$ depends only on adversary $\mathcal{A}$. Let $\tilde{ \theta^1_3}$ be the state such that 
\[ \tilde{ (\theta^1_3)}_{RR'Z'\hat{M}} =\tilde{ (\theta^1_3)}_{RR'} \otimes (\theta^1_3)_{Z'\hat{M}} \quad ; \quad   \Pr(R=R')_{\tilde{( \theta^1_3)}}=1 \quad ; \quad \tilde{ (\theta^1_3)}_{R'}=U_{\vert R \vert}. \]
Note, we have\begin{equation}\label{eq:2classical19845r4''}
   \Phi( \tilde{ (\theta^1_3)}_{ RR'Z'\hat{M}}) \approx_{3\eps_P} p \sigma_{M'\hat{M}} + (1-p) U_{M'} \otimes \sigma_{\hat{M}} 
    \end{equation}follows from using~\cref{lem:equal101125'',fact:data}. Note $\sigma_{M'\hat{M}} \equiv \sigma_{M\hat{M}}$. Consider, 
\begin{align}
  (\theta_4)_{M'\hat{M}} 
   &=\Phi( (\theta_3)_{RR'Z'\hat{M}})\nonumber \\
   &= p_{\sm}  \Phi((\theta^1_3)_{RR'Z'\hat{M}}) \nonumber \\
   & \quad \quad + (1-p_{\sm})  \Phi((\theta^0_3)_{RR'Z'\hat{M}})\label{eq:11fromeq20''} \\
     & \approx_{ 4(2^{-k}+ \eps) +  3\eps_P} p_{\sm}  \left( p \sigma_{M'\hat{M}} + (1-p) (U_{{M'}} \otimes \sigma_{\hat{M}}) \right) \nonumber \\
     & \quad \quad + (1-p_{\sm})  \Phi((\theta_3^0)_{RR'Z'\hat{M}})\label{eq:11fromclaimnotequal''}\\
   &=  p_{\sm}\cdot p\cdot \sigma_{M'\hat{M}}+  p_{\sm}  (1-p) (U_{{M'}} \otimes \sigma_{\hat{M}})  \nonumber\\
     & \quad \quad + (1-p_{\sm})  \Phi((\theta^0_3)_{RR'Z'\hat{M}}) \nonumber  \\
       &\approx_{4(2^{-k}+ \eps) }  p_{\sm}\cdot p\cdot \sigma_{M\hat{M}}+  p_{\sm}  (1-p) (U_{{M'}} \otimes \sigma_{\hat{M}})  \nonumber\\
     & \quad \quad + (1-p_{\sm}) \eta^{\mathcal{A}}_{M'} \otimes \sigma_{\hat{M}}\label{eq:111fromclaimnotequal''} \\
     & = p_{\sm}\cdot p\cdot  \sigma_{M\hat{M}}  \nonumber\\
     & \quad\quad + 
 \left(p_{\sm}  (1-p)  U_{M'}  + (1-p_{\sm})  \eta^{\mathcal{A}}_{M'}  \right)\otimes \sigma_{\hat{M}}  . \nonumber
\end{align}
The equality in \eqref{eq:11fromeq20''} follows from \cref{eq:2classical198''}. The approximation in~\eqref{eq:11fromclaimnotequal''} uses \cref{eq:2classical19845r4''} and~\cref{eq:2classical1''}. The approximation in~\eqref{eq:111fromclaimnotequal''} uses \cref{eq:2classical19845''} and~\cref{eq:2classical1''}. Noting that
\[ p_{\sm} \quad ; \quad p_{} \quad ; \quad    \eta^{\mathcal{A}}_{M'}  \]depends only on the adversary $\mathcal{A} = (U,V, \ket{\psi}_{E_1E_2})$ completes the proof. 
\end{proof}
Using \cref{thm:m''}, we get the following corollary:
\begin{corollary}
    There exists constant rate $2$-split quantum  secure non-malleable code for uniform classical messages with efficient classical encoding and decoding procedures with rate $\frac{1}{5+\delta}$ for arbitrarily tiny constant $\delta>0$. 
\end{corollary}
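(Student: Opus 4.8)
The plan is to read the corollary off from \cref{thm:m''} by fixing a suitable parameter regime; there is no new ingredient, only a length count together with an error and running-time check. First I would fix the target constant $\delta>0$ of the statement and let $\delta_0>0$ be a small constant to be pinned down at the end; $\delta_0$ plays the role of the internal slack parameter appearing in \cref{thm:m''}, in \cref{lem:qnmcodesfromnmext}, and in the $2\nmext$ of \cref{alg:2nmExt}. I instantiate the quantum secure two-source non-malleable extractor of \cref{lem:qnmcodesfromnmext} with sources of lengths $n$ and $\delta_0 n$, output length $\lvert R\rvert=(1/2-\delta_0)n$, deficiency $k=O(n^{1/4})$, and error $\eps=2^{-n^{\Omega(1)}}$. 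I then instantiate the pairwise-independent permutation family $P$ of \cref{prop:mac2} on $m$-bit messages, where $m$ is dictated by the construction via the seed-length constraint $2m=\lvert R\rvert$, hence $m=\lvert M\rvert=(1/4-\delta_0/2)n$ and $\eps_P=2^{-m}=2^{-\Theta(n)}$. With these choices the hypotheses of \cref{thm:m''} are met, and it yields a $2$-split code $(\enc,\dec)$ with parts $\big((Z,Y),X\big)$ that is an $\eps'$-$2$-split non-malleable code for the uniform $m$-bit message.

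Next I would do the rate count. Since $P_R$ is a bijection on $m$-bit strings, $\lvert Z\rvert=\lvert M\rvert=m$, so the codeword length is $\lvert Z\rvert+\lvert Y\rvert+\lvert X\rvert=m+\delta_0 n+n=(5/4+\delta_0/2)n$, while the message length is $m=(1/4-\delta_0/2)n$. Hence the rate equals $(1/4-\delta_0/2)/(5/4+\delta_0/2)$, which tends to $1/5$ as $\delta_0\to 0$; taking $\delta_0$ small enough as a function of $\delta$ makes the rate at least $\tfrac{1}{5+\delta}$. For efficiency, $2\nmext$ (\cref{alg:2nmExt}) and the maps $P_R,P_R^{-1}$ (\cref{prop:mac2}) are all computable in time $\poly(n)$, so both $\enc$ and $\dec$ run in $\poly(n)$. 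For security, \cref{thm:m''} gives $\eps'=8(2^{-k}+\eps)+3\eps_P$, and with $k=O(n^{1/4})$, $\eps=2^{-n^{\Omega(1)}}$, and $\eps_P=2^{-\Theta(n)}$ this is $2^{-n^{\Omega(1)}}$.

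The only slightly delicate point --- the closest thing to an obstacle --- is checking that the several constraints are mutually compatible: \cref{lem:qnmcodesfromnmext} ties the extractor's output length to its source lengths, \cref{prop:mac2} forces the seed of $P_R$ to be exactly $2\lvert M\rvert$, and one still wants to approach rate $1/5$, all while keeping $k=O(n^{1/4})$ and the errors $\eps,\eps_P$ exponentially small. This works out because the single free knob $\delta_0$ can be sent to $0$ while $k$ and the relevant exponents degrade only mildly in $\delta_0$; concretely one verifies an elementary inequality of the form $\delta_0\le c\,\delta$ (for an absolute constant $c$) suffices to reach rate $\tfrac{1}{5+\delta}$. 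Finally, the ``average-case'' qualifier needs no extra argument: \cref{thm:m''} already assumes $\sigma_M=U_{\lvert M\rvert}$, so the conclusion is precisely non-malleability for the uniform message, i.e.\ an average-case quantum secure $2$-split non-malleable code, as claimed.
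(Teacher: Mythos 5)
Your proposal is correct and matches the paper's (implicit) argument: the paper gives no separate proof of this corollary, treating it as an immediate consequence of \cref{thm:m''} with the internal slack constant taken sufficiently small, which is exactly your instantiation together with the rate, error, and efficiency bookkeeping. One remark in your favor: your insistence on $\lvert R\rvert = 2\lvert M\rvert$ (hence $\lvert M\rvert \approx n/4$) is the reading forced by \cref{prop:mac2} and \cref{lem:equal101125''}, and it is precisely what makes the length count $\lvert Z\rvert+\lvert Y\rvert+\lvert X\rvert$ yield rate approaching $1/5$; the literal "$\lvert M\rvert=(\frac{1}{2}-1.1\cdot 0+\ldots)$", i.e.\ $(\frac{1}{2}-\delta)n$, written in the statement of \cref{thm:m''} would conflict with the key length required for $P_R$, so your careful parameter accounting is the coherent one.
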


\section{Quantum and quantum secure non-malleable secret sharing\label{Appendixss}}

Let $n,t,m$ be positive integers and $\eps, \eps', \eps_1, \eps_2>0$. 

Secret sharing~\cite{Sha79,Bla79} is a fundamental cryptographic primitive where a dealer encodes a secret into $n$ shares and distributes them among $n$ parties. 
Each secret sharing scheme has an associated \emph{monotone}\footnote{A set $\Gamma\subseteq 2^{[n]}$ is \emph{monotone} if $A\in\Gamma$ and $A\subseteq B$ implies that $B\in\Gamma$.} set $\Gamma\subseteq 2^{[n]}$, usually called an \emph{access structure}, whereby any set of parties $T\in\Gamma$, called \emph{authorized} sets, can reconstruct the secret from their shares, but any \emph{unauthorized} set of parties $T\not\in\Gamma$ gains essentially no information about the secret.
One of the most natural and well-studied types of access structures are \emph{threshold} access structures, where a set of parties $T$ is authorized if and only if $|T|\geq t$ for some threshold $t$.

The notion of \emph{non-malleable} secret sharing, generalizing non-malleable codes, was introduced by Goyal and Kumar~\cite{GK16} and has received significant interest in the past few years in the classical setting (e.g., see~\cite{GK18,ADNOP19,BS19,FV19,BFOSV20,BFV21,GSZ21,CKOS22}).
Non-malleable secret sharing schemes additionally guarantee that an adversary who is allowed to tamper all the shares (according to some restricted tampering model, such as tampering each share independently) cannot make an authorized set of parties reconstruct a different but related secret.


\cite{BGJR23} extended the notion of non-malleable secret sharing to quantum messages. In particular they constructed the $2$-out-of-$2$ non-malleable secret sharing scheme for quantum  messages. These are referred to as 
 \textbf{quantum non-malleable secret sharing schemes.}

Let $\sigma_M$ be an arbitrary state in a message register $M$ and $\sigma_{M\hat{M}}$ be its canonical purification.
Let encoding scheme be given by an encoding Completely Positive Trace-Preserving (CPTP) map 
$\enc : \cL( \cH_M) \to \cL(\cH_{X}\otimes \cH_Y\otimes\cH_{Z})$ and a decoding CPTP map $\dec  : \cL(\cH_{X}\otimes\cH_{Y} \otimes\cH_{Z}) \to  \cL(\cH_{M})$, where $\cL(\cH)$ is the space of all linear operators in the Hilbert space $\cH$.
The most basic property we require of $(\enc,\dec)$ scheme is correctness (which includes preserving entanglement with external systems), i.e.,
\begin{equation*}
\dec(\enc(\sigma_{M\hat{M}}))=\sigma_{M\hat{M}}.
\end{equation*}
Let registers $X, Y, Z$ correspond to three shares of message register $M$.

\begin{definition}[$3$-out-of-$3$ quantum non-malleable secret sharing scheme\label{def:nmssqcodes}]$(\enc, \dec)$ is an $(\eps_1, \eps_2)$-$3$-out-of-$3$ quantum non-malleable secret sharing scheme, if
\begin{itemize}
    \item \textbf{statistical privacy:} for any  quantum message ${\sigma_M} \in \cD(\cH_M)$ (with canonical purification $\sigma_{M\hat{M}}$), it holds that
    \[   (\enc( \sigma))_{\hat{M}XY}  \approx_{\eps_1} \sigma_{\hat{M}} \otimes  \zeta_{XY}  \quad ; \quad  (\enc( \sigma))_{\hat{M}YZ}  \approx_{\eps_1} \sigma_{\hat{M}} \otimes  \zeta_{YZ} ) \quad ; \]
    \[ (\enc( \sigma))_{\hat{M}XZ}  \approx_{\eps_1} \sigma_{\hat{M}} \otimes  \zeta_{XZ} , 
 \]where $\zeta_{XY}, \zeta_{YZ}, \zeta_{XZ}$ are fixed states independent of $\sigma_{M\hat{M}}$.
\item \textbf{non-malleability:} for every $\mathcal{A}=(U,V, W, \ket{\psi}_{E_1E_2E_3})$ in the split-state model such that
\begin{align*}
& U : \cL(\cH_X) \otimes \cL(\cH_{E_1})\to  \cL(\cH_{X'}) \otimes \cL(\cH_{E_1'})
;\\ & V : \cL(\cH_Y) \otimes \cL(\cH_{E_2})\to  \cL(\cH_{Y'}) \otimes \cL(\cH_{E_2'})
;\\ & W : \cL(\cH_Z) \otimes \cL(\cH_{E_3})\to  \cL(\cH_{Z'}) \otimes \cL(\cH_{E_3'})
\end{align*}
and for every quantum message ${\sigma_M} \in \cD(\cH_M)$ (with canonical purification $\sigma_{M\hat{M}}$), it holds that
$$    \dec( ( U \otimes V \otimes W) (\enc( {\sigma} ) \otimes \ketbra{\psi}) ( U^\dagger \otimes V^\dagger \otimes W^\dagger))_{M\hat{M}}  \approx_{\eps_2}  p_{\mathcal{A}}{\sigma_{M\hat{M}}} + (1-p_{\mathcal{A}}) \gamma^{\mathcal{A}}_M \otimes \sigma_{\hat{M}},$$ where $(p_{\mathcal{A}}, \gamma^{\mathcal{A}}_M)$ depend only on adversary $\mathcal{A}$.

\end{itemize}

\end{definition}

We also study the notion of non-malleable secret sharing for classical messages secure against quantum adversaries in which the encoding and decoding procedures are classical. These are referred to as \textbf{quantum secure non-malleable secret sharing schemes.} The definition follows along the lines of~\cref{def:nmssqcodes} but with the following changes: the encoding scheme, decoding scheme and messages are all classical and in $\sigma$, $\hat{M}$ is a copy of $M$ instead of its canonical purification. 

In this work, we study quantum non-malleable secret sharing schemes in the local tampering model, where the adversary is allowed to tamper independently with each share. Additionally, again as in the coding setting, we allow the non-communicating adversaries tampering each share to have access to arbitrary entangled quantum states. We show that our $3$-split, quantum  non-malleable code gives rise to a constant-rate $3$-out-of-$3$ quantum non-malleable secret sharing scheme.
We also construct the first constant-rate 3-out-of-3 quantum secure non-malleable secret sharing scheme  and also the first constant-rate 2-out-of-2 quantum secure non-malleable secret sharing scheme for uniform classical messages.

\begin{corollary}\label{corr:2nmssq}
Let $(\enc,\dec)$ be an $\eps'$-$3$-split quantum non-malleable code from \cref{thm:main} (see \cref{fig:splitstate2}). Then, $(\enc,\dec)$ is also an $(2^{-n^{\Omega(1)}}, \eps')$-$3$-out-of-$3$ quantum non-malleable secret sharing scheme.
\end{corollary}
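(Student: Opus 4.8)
The plan is to verify the two defining properties of a $3$-out-of-$3$ quantum non-malleable secret sharing scheme, namely statistical privacy and non-malleability, for the encoding scheme from \cref{thm:main}. The non-malleability condition with error $\eps'$ is, up to relabeling of registers, exactly the conclusion of \cref{thm:main} (the $3$-split non-malleability of $(\enc,\dec)$ for $\sigma_{\hat M M}$), so nothing needs to be reproved there. The only substantive work is to establish statistical privacy, i.e.\ that any two of the three shares $X$, $Y$, $Z$ reveal essentially nothing about the message: $(\enc(\sigma))_{\hat M X Y}\approx_{2^{-n^{\Omega(1)}}}\sigma_{\hat M}\otimes\zeta_{XY}$ and similarly for the pairs $YZ$ and $XZ$.

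First I would recall the structure of the encoding in \cref{fig:splitstate2}: $X$ and $Y$ are sampled as independent uniform strings on $n$ and $\delta n$ bits, $R=2\nmext(X,Y)$ is used as the key selecting a random Clifford $C_R\in\cSC(\cH_M)$, and the third share is $Z=C_R(\sigma_M)$, the message twirled by this Clifford. The pair $(X,Y)$ case is immediate: $X$ and $Y$ are generated independently of $\sigma_{M\hat M}$ and of each other, so $(\enc(\sigma))_{\hat M X Y}=\sigma_{\hat M}\otimes U_n\otimes U_{\delta n}$ exactly, giving $\eps_1=0$ for that pair with $\zeta_{XY}=U_n\otimes U_{\delta n}$. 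For the pair $(X,Z)$ (and symmetrically $(Y,Z)$), the key point is that conditioned on $X$, the seed $Y$ is still uniform on $\delta n$ bits, and by the extractor property in \cref{lem:qnmcodesfromnmext} (the first bullet there, $\Vert 2\nmext(X,Y)X - U_r\otimes U_n\Vert_1\le\eps$), the key $R=2\nmext(X,Y)$ is $\eps$-close to uniform and nearly independent of $X$. Then, since $R$ is (close to) uniform over $\cSC(\cH_M)$ and $\cSC(\cH_M)$ is a $1$-design by \cref{fact:notequal}, the twirled share $Z=C_R(\sigma_M)$ satisfies $\frac{1}{|\cSC(\cH_M)|}\sum_{C}C\sigma_M C^\dagger = U_M$, so $Z$ is close to maximally mixed and close to independent of both $X$ and $\hat M$. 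Combining these via \cref{fact:data} and a triangle inequality yields $(\enc(\sigma))_{\hat M X Z}\approx_{\eps}\sigma_{\hat M}\otimes U_n\otimes U_M$, i.e.\ $\zeta_{XZ}=U_n\otimes U_M$; the same argument with the roles of $X$ and $Y$ swapped handles $(\enc(\sigma))_{\hat M Y Z}$.

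The main obstacle, and the step requiring the most care, is making the privacy argument for the pairs involving $Z$ fully rigorous: one must track that $R$ is close to uniform \emph{jointly} with the retained share ($X$ or $Y$), not merely marginally, and that replacing $R$ by a genuinely uniform and independent key only costs $\eps$ in trace distance; then one applies the $1$-design/twirl to that idealized state and transfers back. This is exactly the kind of hybrid argument underlying \cref{lem:qnmcodesfromnmext} and \cref{fact:notequal}, so the technical content is light, but the bookkeeping of which registers are conditioned on and in what order the approximations are applied is where an error could creep in. Once privacy is in hand with $\eps_1=2^{-n^{\Omega(1)}}$ and non-malleability is quoted from \cref{thm:main} with $\eps_2=\eps'$, the corollary follows immediately by \cref{def:nmssqcodes}.
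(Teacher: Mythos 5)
Your proposal is correct and follows essentially the same route as the paper: non-malleability is quoted from \cref{thm:main}, privacy for the pair $(X,Y)$ is exact by independence, and privacy for the pairs involving $Z$ uses the joint closeness $\Vert RX - U_{\vert R\vert}\otimes U_n\Vert_1\le\eps$ (resp.\ for $Y$) from \cref{lem:qnmcodesfromnmext} together with the $1$-design property of $\cSC(\cH_M)$ (\cref{fact:notequal}) and data processing. The hybrid step you flag as delicate is exactly how the paper argues it, so no further work is needed.
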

\begin{proof}

Since $(\enc,\dec)$ is already known to be $\eps'$-non-malleable by \cref{thm:main}, it remains to show that it also satisfies statistical privacy with error $2^{-n^{\Omega(1)}}$ according to \cref{def:nmssqcodes}.

Note from \cref{thm:main}, we have $\eps'= 2^{\vert M \vert} 2(\eps+ 4^{-\vert M\vert})$, where $\eps =2^{-n^{\Omega(1)}}$ is from~\cref{lem:qnmcodesfromnmext}. Note $\enc(.)$ procedure in \cref{fig:splitstate2} first samples an independent $(X,Y)$ and generate register $R= 2\nmext(X,Y)$. It follows from properties of $2\nmext$ (see \cref{lem:qnmcodesfromnmext}) that
\[ RX \approx_\eps U_R \otimes U_X \quad ; \quad RY \approx_\eps U_R \otimes U_Y .\]
Thus for every ${\sigma_{M\hat{M}}}$, using~\cref{fact:notequal}, 
we have
\begin{equation}\label{eqfd}
     (\enc( \sigma))_{\hat{M}ZX}  \approx_{\eps} \sigma_{\hat{M}} \otimes  U_{Z} \otimes U_X \quad ; \quad  (\enc( \sigma))_{\hat{M}ZY}  \approx_{\eps} \sigma_{\hat{M}} \otimes  U_{Z} \otimes U_Y . 
\end{equation}Since $(X,Y)$ are sampled independently, we also have
\begin{equation}\label{eqfdsfvf}
    (\enc( \sigma))_{\hat{M}XY}  =\sigma_{\hat{M}} \otimes  U_{X} \otimes U_Y . 
\end{equation}
Noting \cref{eqfd}, \cref{eqfdsfvf} and   $\eps =2^{-n^{\Omega(1)}}$ completes the proof. 
\end{proof}

\begin{corollary}
Let $(\enc,\dec)$ be an $\eps'$-$3$-split quantum secure non-malleable code from \cref{thm:m1} (see \cref{fig:splitstate4'}).  Then, $(\enc,\dec)$ is also an $(2^{-n^{\Omega(1)}}, \eps')$-$3$-out-of-$3$ quantum secure non-malleable secret sharing scheme.
\end{corollary}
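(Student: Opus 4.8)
The plan is to mirror the proof of \cref{corr:2nmssq}. By \cref{thm:m1} the scheme is already $\eps'$-non-malleable, so it remains to establish statistical privacy with error $2^{-n^{\Omega(1)}}$, i.e. that each of the three pairs of shares drawn from $\{X,Y,Z\}$ is $2^{-n^{\Omega(1)}}$-close to a fixed state tensored with $\sigma_{\hat M}$.

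The pair $\{X,Y\}$ is immediate: the encoder in \cref{fig:splitstate4'} samples $XY = U_n \otimes U_{\delta n}$ independently of $\sigma_{M\hat M}$, so $(\enc(\sigma))_{\hat M X Y} = \sigma_{\hat M}\otimes U_X \otimes U_Y$ exactly, and we may take $\zeta_{XY} = U_X \otimes U_Y$.

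For the pair $\{Z,X\}$ I would use the extractor-type guarantee of \cref{lem:qnmcodesfromnmext}, namely $\Vert 2\nmext(X,Y)X - U_{\vert R\vert}\otimes U_n\Vert_1 \le \eps$. Since $R = 2\nmext(X,Y)$ is a function of $(X,Y)$ and $(M,\hat M)$ is sampled independently of $(X,Y)$, the register $(M,\hat M)$ is jointly independent of $(R,X)$; hence the $\eps$-closeness of $(R,X)$ to $U_R\otimes U_X$ lifts to $(\enc(\sigma))_{RXM\hat M}\approx_\eps U_R\otimes U_X\otimes\sigma_{M\hat M}$. Applying the deterministic (hence CPTP) map $(r,x,m,\hat m)\mapsto\big(f(r,m),x,\hat m\big)$ with $f(r,m) = (r_e\oplus m)\,\vert\vert\,\mac(r_a, r_e\oplus m)$ and invoking \cref{fact:data}, I obtain $(\enc(\sigma))_{ZX\hat M}\approx_\eps\zeta_Z\otimes U_X\otimes\sigma_{\hat M}$, where $\zeta_Z$ is the fixed distribution of $\big(U_{\vert M\vert}\,\vert\vert\,\mac(U_{\vert R_a\vert}, U_{\vert M\vert})\big)$: the one-time-pad step makes $r_e\oplus m$ uniform and independent of $m$, so $Z$ reveals nothing about $\hat M$. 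This gives $\zeta_{ZX} = \zeta_Z\otimes U_X$, and the symmetric argument with $\Vert 2\nmext(X,Y)Y - U_{\vert R\vert}\otimes U_{\delta n}\Vert_1\le\eps$ gives $\zeta_{ZY} = \zeta_Z\otimes U_Y$.

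Combining the three estimates, statistical privacy holds with error $\eps = 2^{-n^{\Omega(1)}}$ coming from \cref{lem:qnmcodesfromnmext}, and together with the $\eps'$-non-malleability of \cref{thm:m1} this shows that $(\enc,\dec)$ is a $(2^{-n^{\Omega(1)}},\eps')$-$3$-out-of-$3$ quantum secure non-malleable secret sharing scheme. I expect the only mildly delicate step to be the $\{Z,X\}$ (equivalently $\{Z,Y\}$) case: one must argue that $(M,\hat M)$ is independent of the pair $(R,X)$ as a whole (not merely of $R$ and of $X$ separately), so that attaching $(M,\hat M)$ does not degrade the extractor bound, after which the conclusion is just data processing plus perfect secrecy of the one-time pad.
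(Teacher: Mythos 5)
Your proposal is correct and follows essentially the same route as the paper, which proves this corollary by repeating the argument of \cref{corr:2nmssq}: privacy of the pair $\{X,Y\}$ from independent sampling, and privacy of the pairs containing $Z$ from the extractor guarantee of \cref{lem:qnmcodesfromnmext} plus data processing, with the perfect secrecy of the one-time pad playing the role that the $1$-design (\cref{fact:notequal}) plays in the quantum-message case. Your attention to the joint independence of $(M,\hat M)$ from $(R,X)$ is exactly the right (and easily verified) point, since $R$ is a function of $(X,Y)$ which is sampled independently of $\sigma_{M\hat M}$.
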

\begin{proof}
The proof follows using similar arguments as in~\cref{corr:2nmssq}. 
\end{proof}

\begin{corollary}
Let $(\enc,\dec)$ be an $\eps'$-$2$-split average case quantum secure non-malleable code 
from \cref{thm:m''} (see \cref{fig:splitstate33}). (i.e, for uniform input classical message). Then, $(\enc,\dec)$ is also an $(2^{-n^{\Omega(1)}}, \eps')$-$2$-out-of-$2$ average case quantum secure non-malleable secret sharing scheme (i.e, for uniform input classical message).
\end{corollary}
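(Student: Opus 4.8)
The plan is to mirror the argument used for \cref{corr:2nmssq}. Since $(\enc,\dec)$ is already $\eps'$-non-malleable by \cref{thm:m''}, it only remains to verify statistical privacy with error $2^{-n^{\Omega(1)}}$, i.e.\ that each of the two shares --- which from \cref{fig:splitstate22} are the pair $S_1=(Z,Y)$ and the singleton $S_2=X$ --- carries essentially no information about $\hat M$. Recall that here the message is uniform, $\sigma_M = U_{\vert M\vert}$, and $\hat M$ is a copy of $M$.

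Privacy of $S_2=X$ is immediate: the encoder of \cref{fig:splitstate22} samples $X=U_n$ independently of $M$ (and hence of $\hat M$), so $(\enc(\sigma))_{\hat M X} = \sigma_{\hat M}\otimes U_X$ exactly; one may take $\zeta_X := U_X$ with error $0$.

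For $S_1=(Z,Y)$, I would proceed in three steps. First, invoke the extractor guarantee of \cref{lem:qnmcodesfromnmext}, $\Vert 2\nmext(X,Y)\,Y - U_{\vert R\vert}\otimes U_{\delta n}\Vert_1 \le \eps$, to conclude that $R = 2\nmext(X,Y)$ is $\eps$-close to uniform and independent of $Y$; since $R$ is a deterministic function of $(X,Y)$ and $M$ is independent of $(X,Y)$, tensoring with $\sigma_M$ gives $(\enc(\sigma))_{RYM}\approx_{\eps} U_{\vert R\vert}\otimes U_{\delta n}\otimes \sigma_M$. Second, apply the CPTP map that computes $Z=P_R^{-1}(M)$ from $(R,M)$ and outputs $(Z,Y,M)$ (discarding $R$); by \cref{fact:data} (data processing), $(\enc(\sigma))_{ZYM}$ is $\eps$-close to the image of $U_{\vert R\vert}\otimes U_{\delta n}\otimes \sigma_M$ under this map, and by the perfect-secrecy and bijectivity properties of the family $P$ from \cref{prop:mac2} (for a uniform key $R$ and uniform $M$, the pair $(P_R^{-1}(M),M)$ is uniform) that image equals $U_Z\otimes U_{\delta n}\otimes U_{\vert M\vert}$. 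Third, since $\hat M$ is a copy of $M$, re-introduce $\hat M$ via the copy CPTP map and use \cref{fact:data} together with tracing out $M$ to obtain $(\enc(\sigma))_{\hat M Z Y}\approx_{\eps}\sigma_{\hat M}\otimes \zeta_{ZY}$ with $\zeta_{ZY}:=U_Z\otimes U_{\delta n}$. As $\eps = 2^{-n^{\Omega(1)}}$ by \cref{lem:qnmcodesfromnmext}, this yields statistical privacy with error $\eps_1 = 2^{-n^{\Omega(1)}}$; combined with the $\eps'$-non-malleability from \cref{thm:m''}, this completes the proof.

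The one mildly subtle point is the second step above: one must track the joint state $(Z,M)$ rather than $Z$ alone --- the copy register $\hat M$ has to be peeled off, the uniformity argument carried out on $(Z,M)$, and $\hat M$ restored afterwards by a data-processing step --- and it is precisely the near-uniformity of $R$ \emph{conditioned on} $Y$ (as opposed to merely marginally) that is needed to decouple the share $(Z,Y)$ from the message. The remaining bookkeeping is routine and follows \cref{corr:2nmssq} essentially verbatim.
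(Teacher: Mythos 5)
Your proposal is correct and matches the paper's intent: the paper proves this corollary by saying it "follows using similar arguments as in \cref{corr:2nmssq}", i.e.\ non-malleability from \cref{thm:m''} plus statistical privacy from the joint near-uniformity of $(R,Y)$ guaranteed by \cref{lem:qnmcodesfromnmext}, with the role of the Clifford $1$-design in \cref{corr:2nmssq} played here by the perfect one-time-pad property of $P_R$ from \cref{prop:mac2}. Your fleshed-out version (exact privacy of the share $X$, and decoupling of the share $(Z,Y)$ via the conditional uniformity of $R$ given $Y$ followed by data processing) is exactly that argument.
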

\begin{proof}
The proof follows using similar arguments as in~\cref{corr:2nmssq}. 
\end{proof}

\bibliography{References}
\bibliographystyle{alpha}

\newpage 
\appendix

\section{A quantum secure $2$-source non-malleable extractor\label{sec:2nm}}

\subsection{Useful lemmas and facts}\label{sec:usefulstuff1}

We state here few useful lemmas and facts for this section.

\begin{fact}[\cite{BJK21}] \label{claim:traingle_rho_rho_prime}Let $\rho_{ZA}, \rho'_{ZA}$ be states such that $\Delta_B(\rho, \rho') \leq \eps'$. If $d(Z\vert A)_{\rho'} \leq \eps$ then $d(Z\vert A)_{\rho} \leq 2\eps' + \eps$.~\footnote{Claim holds even when $\Delta_B()$ is replaced with $\Delta()$.}
\end{fact}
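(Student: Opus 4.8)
The plan is to reduce the statement to the triangle inequality for the Bures metric $\Delta_B$, combined with its behaviour under the replacement $\rho_{ZA}\mapsto U_Z\otimes\rho_A$. First I would recall the definitions: $d(Z|A)_\rho = \Delta_B(\rho_{ZA}, U_Z\otimes\rho_A)$ and similarly $d(Z|A)_{\rho'} = \Delta_B(\rho'_{ZA}, U_Z\otimes\rho'_A)$, where $\Delta_B(\sigma,\tau)=\sqrt{1-\F(\sigma,\tau)}$. The Bures metric is a genuine metric (this is standard; it is essentially the Fubini–Study/purified distance and satisfies the triangle inequality), so I may freely chain distances through intermediate states.

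The key steps, in order, are as follows. First, apply the triangle inequality for $\Delta_B$ with the intermediate point $U_Z\otimes\rho'_A$:
\[
\Delta_B(\rho_{ZA},\, U_Z\otimes\rho_A) \le \Delta_B(\rho_{ZA},\, \rho'_{ZA}) + \Delta_B(\rho'_{ZA},\, U_Z\otimes\rho'_A) + \Delta_B(U_Z\otimes\rho'_A,\, U_Z\otimes\rho_A).
\]
Second, bound each of the three terms. The first term is $\le \eps'$ by hypothesis. The second term is $d(Z|A)_{\rho'}\le\eps$ by hypothesis. For the third term, I would use the data-processing inequality for $\Delta_B$ (\cref{fact:data}) applied to the partial-trace CPTP map $\tr_Z$ — or directly the fact that tensoring with a fixed state $U_Z$ does not increase Bures distance — to get $\Delta_B(U_Z\otimes\rho'_A, U_Z\otimes\rho_A) = \Delta_B(\rho'_A,\rho_A) \le \Delta_B(\rho'_{ZA},\rho_{ZA}) \le \eps'$. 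Summing the three bounds yields $d(Z|A)_\rho \le 2\eps' + \eps$, which is exactly the claim. The footnote variant with $\Delta$ in place of $\Delta_B$ follows verbatim, since $\Delta$ is also a metric and also satisfies data processing (\cref{fact:data}).

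I do not expect any serious obstacle here: the only mild point of care is that $\Delta_B$ genuinely satisfies the triangle inequality (one should not confuse it with the squared fidelity or with $1-\F$, which do not), and that the third term is handled by monotonicity under partial trace rather than by any special structure of $U_Z$. Both of these are standard and the second is already available in the excerpt as \cref{fact:data}. So the proof is essentially a three-line chain of inequalities once the definitions are unpacked.
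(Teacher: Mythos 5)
Your proposal is correct and is essentially the intended argument: the paper imports this fact from~\cite{BJK21} without reproving it, and the proof there is exactly your three-term triangle inequality for $\Delta_B$ through the intermediate state $U_Z\otimes\rho'_A$, with the outer terms controlled by the hypothesis and by monotonicity of $\Delta_B$ under partial trace (and the same chain verbatim for $\Delta$). One small remark: $\Delta_B=\sqrt{1-\F}$ obeys the triangle inequality because it is $\tfrac{1}{\sqrt{2}}$ times the Bures distance $\sqrt{2(1-\F)}$, not because it coincides with the purified distance $\sqrt{1-\F^2}$; this does not affect your argument.
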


\begin{fact}[\cite{BJK21}]\label{fact:prefixminentropyfact}
     Let $\rho_{XE} \in \mathcal{D}(\cH_X \otimes \cH_E)$ be a c-q state such that $\vert X \vert =n$ and $\hmin{X}{E}_\rho \geq n-k.$ Let $X_d = \pre(X,d)$ for some integer $k \leq d \leq n$. Then $\hmin{X_d}{E}_\rho \geq d-k.$
\end{fact}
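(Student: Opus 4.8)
The plan is to unwind the definition of conditional min-entropy and observe that passing from $X$ to its prefix $X_d$ is a partial trace, which is monotone for the L\"owner order. Recall from \cref{def:infoquant} that $\hmin{X}{E}_\rho \geq n-k$ is equivalent to the existence of a state $\sigma_E \in \cD(\cH_E)$ with $\dmax{\rho_{XE}}{\id_X \otimes \sigma_E} \leq k-n$ (the infimum in the definition is attained by compactness of $\cD(\cH_E)$ together with lower semicontinuity of $\dmax{\cdot}{\cdot}$; alternatively one fixes a state nearly attaining it, with an arbitrarily small additive slack that is sent to $0$ at the end). Fixing such a $\sigma_E$, this says
\[
\rho_{XE} \ \leq\ 2^{\,k-n}\, \id_X \otimes \sigma_E .
\]
Since $X_d = \pre(X,d)$ is the first $d$ bits of $X$, we may write $\cH_X = \cH_{X_d}\otimes \cH_{X_{>d}}$, where $X_{>d}$ carries the remaining $n-d$ bits, so that $\id_X = \id_{X_d}\otimes \id_{X_{>d}}$ and $\rho_{X_d E} = \tr_{X_{>d}}\rho_{XE}$.

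First I would apply $\tr_{X_{>d}}$ to both sides of the displayed inequality. Partial trace is a positive (indeed CPTP) map, hence order-preserving, and $\tr_{X_{>d}}\bigl(\id_{X_d}\otimes \id_{X_{>d}}\otimes \sigma_E\bigr) = \dim(\cH_{X_{>d}})\cdot \id_{X_d}\otimes \sigma_E = 2^{\,n-d}\,\id_{X_d}\otimes \sigma_E$. Hence
\[
\rho_{X_d E} \ =\ \tr_{X_{>d}}\rho_{XE} \ \leq\ 2^{\,k-n}\cdot 2^{\,n-d}\, \id_{X_d}\otimes \sigma_E \ =\ 2^{\,k-d}\, \id_{X_d}\otimes \sigma_E ,
\]
which is exactly the statement $\dmax{\rho_{X_d E}}{\id_{X_d}\otimes \sigma_E} \leq k-d$. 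Taking the infimum over states on $\cH_E$ in the definition of conditional min-entropy then yields $\hmin{X_d}{E}_\rho \geq d-k$, as claimed.

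The hard part, such as it is, amounts only to the dimension bookkeeping: the factor $2^{\,n-d}$ picked up by tracing out the $n-d$ discarded suffix bits is precisely what turns the bound $n-k$ into $d-k$, formalising the intuition that forgetting $n-d$ bits can destroy at most $n-d$ bits of (conditional) min-entropy. The hypothesis $k\leq d$ is not used in the argument; it is only there so that the conclusion $\hmin{X_d}{E}_\rho \geq d-k$ is non-vacuous, and the classicality of $X$ plays no role either. One could equally deduce the bound from the data-processing inequality for $\dmax{\cdot}{\cdot}$ (\cref{fact:data}) applied to the channel $\tr_{X_{>d}}$ after rescaling the reference operator, but the direct L\"owner-order computation above is the most transparent route.
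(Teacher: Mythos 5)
Your argument is correct: fixing a (near-)optimal $\sigma_E$, the operator inequality $\rho_{XE}\leq 2^{k-n}\,\id_X\otimes\sigma_E$, positivity of the partial trace over the suffix register, and the factor $\tr_{X_{>d}}\id_{X_{>d}}=2^{n-d}$ give $\rho_{X_dE}\leq 2^{k-d}\,\id_{X_d}\otimes\sigma_E$, hence $\hmin{X_d}{E}_\rho\geq d-k$, and your handling of the infimum (attainment or an $\epsilon$-slack sent to zero) is fine. The paper itself gives no proof, citing \cite{BJK21}; your computation is the standard self-contained argument (equivalently, the chain-rule fact that discarding $n-d$ (qu)bits costs at most $n-d$ bits of conditional min-entropy), and your observations that neither $k\leq d$ nor the classicality of $X$ is needed are accurate.
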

\begin{fact}[\cite{BJK21}]\label{claim:100}
Let $\rho_{XAYB}$ be a pure state. Let $d = |X|$. There exists a pure state $\hat{\rho}_{XAYB}$ such that,
\[ \Delta_B(\hat{\rho}_{XAYB},\rho_{XAYB}) = d(X|YB)_\rho \quad ; \quad  \hmin{Y}{XA}_{\hat{\rho}} = \hmin{Y}{XA}_\rho \quad ; \quad \hat{\rho}_{XYB} = U_{d} \otimes \hat{\rho}_{YB}.\]
\end{fact}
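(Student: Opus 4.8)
\textbf{Plan for the proof of Fact~\ref{claim:100}.}

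The plan is to leverage Uhlmann's theorem together with the ``reverse'' direction of the connection between the Bures distance $d(X|YB)_\rho$ and the existence of a nearby state on which $X$ is uniform and decoupled from $YB$. First I would recall that by definition $d(X|YB)_\rho = \Delta_B(\rho_{XYB}, U_X \otimes \rho_{YB})$. The natural candidate $\sigma_{XYB} := U_X \otimes \rho_{YB}$ is at Bures distance exactly $d(X|YB)_\rho$ from $\rho_{XYB}$, but it is a priori only a mixed state on $XYB$ and says nothing about the $A$ register (the purifying system of the min-entropy source $Y$). So the real work is to lift this to a \emph{pure} state $\hat\rho_{XAYB}$ that (i) is globally close to $\rho_{XAYB}$ in Bures distance, (ii) has the same reduced state on $XYB$ as $\sigma_{XYB}$ so that in particular $\hat\rho_{XYB} = U_d \otimes \hat\rho_{YB}$, and (iii) preserves the conditional min-entropy $\hmin{Y}{XA}$.

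The key steps, in order, are as follows. First, since $\rho_{XAYB}$ is pure, $A$ purifies $\rho_{XYB}$ (together with whatever part of $XB$ is needed — here $A$ is exactly the purifying register for the source, and I would take $\rho_{XAYB}$ pure as given). Take any purification $\ket{\sigma}_{XAYB}$ of $\sigma_{XYB} = U_X \otimes \rho_{YB}$ living on the same Hilbert spaces; such a purification exists because $\dim \cH_A$ is at least the rank of $\sigma_{XYB}$ (it already purifies $\rho_{XYB}$, which has the same rank up to the $U_X$ factor — here one uses that $\mathsf{rank}(U_X \otimes \rho_{YB}) = 2^d \cdot \mathsf{rank}(\rho_{YB})$ while $\mathsf{rank}(\rho_{XYB}) \le 2^d \cdot \mathsf{rank}(\rho_{YB})$, so after possibly enlarging nothing, a purification on $XAYB$ is available; if a dimension mismatch arises one absorbs it into $A$, which is harmless). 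Second, by Uhlmann's theorem there is an isometry (in fact a unitary, since the registers match) $V$ acting on $A$ alone such that $\F(\rho_{XAYB}, (\id \otimes V)\sigma_{XAYB}(\id\otimes V)^\dagger) = \F(\rho_{XYB}, \sigma_{XYB})$; set $\hat\rho_{XAYB} := (\id_{XYB}\otimes V)\ketbra{\sigma}_{XAYB}(\id\otimes V)^\dagger$. Then $\Delta_B(\hat\rho, \rho) = \sqrt{1 - \F(\hat\rho,\rho)} = \sqrt{1-\F(\rho_{XYB},\sigma_{XYB})} = \Delta_B(\rho_{XYB}, U_X\otimes\rho_{YB}) = d(X|YB)_\rho$, giving (i). Third, applying $V$ on $A$ does not change the reduced state on $XYB$, so $\hat\rho_{XYB} = \sigma_{XYB} = U_d \otimes \hat\rho_{YB}$, giving (ii). Fourth, for (iii): $\hmin{Y}{XA}$ is invariant under isometries applied to the conditioning system $XA$ — in particular under $\id_X \otimes V$ on $A$ — by the data-processing/invariance property of conditional min-entropy (\cref{fact:data} applied to $D_{\max}$, which is an equality for isometries); and the marginal $\hat\rho_{XAY}$ before applying $V$ is a marginal of a purification of $U_X\otimes\rho_{YB}$, whose $Y$-marginal equals $\rho_Y$. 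One must check that $\hmin{Y}{XA}_{\hat\rho} = \hmin{Y}{XA}_\rho$; this is the delicate point and I expect it is where the chosen purification matters: one should pick the purification of $\sigma_{XYB}$ so that the $XAY$ part literally agrees with that of $\rho$, i.e. write $\rho_{XAYB}$ in a Schmidt-type decomposition across the cut $XAY \mid B$, replace only the $B$-side (and the $X$-marginal) appropriately — concretely, since $\rho_{XYB}$ and $U_X\otimes\rho_{YB}$ differ only in how $X$ correlates with $YB$, one can arrange the purification so that tracing out $B$ leaves $\rho_{XAY}$ unchanged except that $X$ becomes decoupled and uniform, with $Y$'s conditional min-entropy given $XA$ unaffected. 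Alternatively, and more cleanly, invoke the known structural fact (this is precisely the content of Lemma~3.?? / the ``decoupling'' lemma in \cite{BJK21} from which this Fact is quoted) that the Uhlmann isometry can be taken on $A$ alone, hence $\hmin{Y}{XA}$ is literally preserved.

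The main obstacle will be step four — verifying $\hmin{Y}{XA}_{\hat\rho} = \hmin{Y}{XA}_\rho$ exactly (not just approximately). The resolution is to insist that the purification of $U_X\otimes\rho_{YB}$ be constructed by modifying $\rho_{XAYB}$ only via a unitary on $A$ and a ``relabeling'' of the $X$-register's correlations, both of which are isometries on the conditioning side $XA$ of the conditional min-entropy, so that by \cref{fact:data} (equality case for isometries applied to $D_{\max}$) the quantity $\hmin{Y}{XA}$ is unchanged; since the Fact is cited verbatim from \cite{BJK21}, I would ultimately just cite that reference for the precise construction and note that all three displayed properties follow from Uhlmann's theorem plus invariance of conditional min-entropy under isometries on the conditioning system. $\qed$
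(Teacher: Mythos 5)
The paper itself offers no proof of \cref{claim:100} (it is quoted from~\cite{BJK21}), so your proposal can only be judged on its own merits. Your steps (i)--(iii) are indeed the standard route: set $\sigma_{XYB}=U_X\otimes\rho_{YB}$, purify it, and use \cref{uhlmann} with the Uhlmann isometry acting on the purifying register $A$, so that $\Delta_B(\hat\rho_{XAYB},\rho_{XAYB})=\Delta_B(\rho_{XYB},U_X\otimes\rho_{YB})=d(X|YB)_\rho$ and $\hat\rho_{XYB}=U_d\otimes\rho_{YB}$. (One quibble: the statement fixes the registers, so you cannot really ``absorb a dimension mismatch into $A$''; the fact implicitly assumes $\dim\cH_A$ is large enough to purify $U_X\otimes\rho_{YB}$, as it is in all uses here.)

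The genuine gap is your step four. The invariance of $\hmin{Y}{XA}$ under the Uhlmann isometry on $A$ only relates $\hat\rho$ to the chosen purification of $\sigma_{XYB}$; it says nothing about $\hmin{Y}{XA}_{\rho}$, since $\hat\rho$ is not obtained from $\rho$ by any isometry (they are merely close in fidelity, which cannot give an exact equality). Your proposed repair --- choosing the purification so that $\hat\rho_{XAY}=\rho_{XAY}$ --- is impossible in general: it would force $\hat\rho_{XY}=\rho_{XY}$, contradicting the requirement $\hat\rho_{XY}=U_d\otimes\hat\rho_Y$ unless $X$ was already uniform and decoupled in $\rho$. The missing observation is that $\hat\rho_{YB}=\rho_{YB}$ exactly (tracing $X$ out of $U_X\otimes\rho_{YB}$ returns $\rho_{YB}$), so $\rho_{XAYB}$ and $\hat\rho_{XAYB}$ are two \emph{pure} states with the same marginal on $YB$, i.e.\ two purifications of $\rho_{YB}$ with the same purifying registers $XA$. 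Hence they differ by a (partial) isometry acting on $XA$ alone, and $\hmin{Y}{XA}$ is invariant under isometries applied to the conditioning registers (the equality case of \cref{fact:data} for $\mathrm{D}_{\max}$, applied in both directions), which yields $\hmin{Y}{XA}_{\hat\rho}=\hmin{Y}{XA}_{\rho}$ exactly. Equivalently, one can invoke min/max-entropy duality for pure states, $\hmin{Y}{XA}=-\hmax{Y}{B}$, and note that $\hmax{Y}{B}$ depends only on the $YB$ marginal, which is unchanged. Without one of these arguments the second displayed property is unproven, and deferring it back to~\cite{BJK21} leaves the substantive part of the claim unestablished.
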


\begin{fact}[Alternating extraction~\cite{BJK21}]\label{lem:2}
Let $\theta_{XASB}$ be a pure state with $(XS)$ classical, $\vert X \vert =n, \vert S \vert =d$ and
\[ \hmin{X}{SB}_\theta \geq k \quad ; \quad \Delta_B( \theta_{X A S} , \theta_{X A} \otimes U_d ) \leq \eps^{\prime}. \]
Let $T \defeq \Ext(X,S)$ where $\Ext$ is a $(k,\eps)$-quantum secure strong $(n,d,m)$-extractor. Then, 
\[\Delta_B( \theta_{T B} , U_m \otimes \theta_{B} ) \leq  2\eps' + \sqrt{\eps}.\]
\end{fact}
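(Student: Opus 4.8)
The plan is to reduce \cref{lem:2} to the two facts already available, \cref{claim:100} (which replaces a register by an ``idealized'' uniform copy while controlling the cost in Bures distance and preserving one conditional min-entropy) and \cref{claim:traingle_rho_rho_prime} (a triangle-type inequality that transports a $d(\cdot\mid\cdot)$ bound from one state to a nearby one), glued together by the defining guarantee of a quantum-secure strong extractor. Concretely there are three moves: (i) pass from $\theta$ to a nearby state $\hat\theta$ in which the seed $S$ is genuinely uniform and independent while $\hmin{X}{SB}$ is preserved; (ii) invoke the extractor guarantee inside $\hat\theta$ to get $d(T\mid B)_{\hat\theta}\le\sqrt{\eps}$; (iii) transport this back to $\theta$ at an additive cost $2\eps'$.

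For move (i) I would apply \cref{claim:100} with $S$ playing the role of the register that gets idealized, $X$ playing the role of ``$Y$'' (so that its conditional min-entropy is the quantity that is preserved), $B$ playing the role of ``$A$'', and $A$ playing the role of ``$B$''. Since $\theta_{XASB}$ is pure this is a legitimate instantiation, and the conclusion of \cref{claim:100} produces a pure state $\hat\theta_{XASB}$ with
\[ \Delta_B(\hat\theta_{XASB},\theta_{XASB}) = d(S\mid XA)_\theta \le \eps', \qquad \hmin{X}{SB}_{\hat\theta} = \hmin{X}{SB}_\theta \ge k, \qquad \hat\theta_{SXA} = U_d \otimes \hat\theta_{XA}, \]
where the first inequality is exactly the hypothesis $\Delta_B(\theta_{XAS},\theta_{XA}\otimes U_d)\le\eps'$. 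In particular the source $X$ still satisfies $\hmin{X}{B}_{\hat\theta}\ge\hmin{X}{SB}_{\hat\theta}\ge k$, since conditioning on the extra register $S$ can only decrease the conditional min-entropy.

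For move (ii): inside $\hat\theta$ the seed $S$ is uniform and independent of $(X,A)$, and $X$ has conditional min-entropy at least $k$ against the side information, so the defining property of the $(k,\eps)$-quantum-secure strong $(n,d,m)$-extractor $\Ext$ applies and yields $\big\|\hat\theta_{TSB}-U_m\otimes U_d\otimes\hat\theta_B\big\|_1\le\eps$; tracing out $S$ and converting trace distance to Bures distance (via $\Delta_B\le\sqrt{\Delta}$) gives $d(T\mid B)_{\hat\theta}=\Delta_B(\hat\theta_{TB},U_m\otimes\hat\theta_B)\le\sqrt{\eps}$. For move (iii), data processing (\cref{fact:data}) applied to the channel that computes $T=\Ext(X,S)$ and discards $X,S,A$ gives $\Delta_B(\theta_{TB},\hat\theta_{TB})\le\Delta_B(\theta_{XASB},\hat\theta_{XASB})\le\eps'$; feeding $\rho=\theta_{TB}$, $\rho'=\hat\theta_{TB}$, $Z=T$, $A=B$ into \cref{claim:traingle_rho_rho_prime} then yields $d(T\mid B)_\theta\le 2\eps'+\sqrt{\eps}$, i.e.\ $\Delta_B(\theta_{TB},U_m\otimes\theta_B)\le 2\eps'+\sqrt{\eps}$, which is the claim.

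The delicate point, and the step I expect to need the most care, is justifying in move (ii) that the extractor's independence hypothesis is genuinely met in $\hat\theta$: \cref{claim:100} only makes $S$ independent of $(X,A)$, whereas one wants a fresh uniform seed against the side information $B$, and $B$ may well purify $S$. This is reconciled using purity of $\hat\theta_{XASB}$: since $\hat\theta_{SXA}=U_d\otimes\hat\theta_{XA}$, the register $B$ decomposes up to a local isometry as $B=B_1B_2$ with $B_2$ a copy of $S$ and $B_1$ independent of $S$, so that $\hmin{X}{SB}_{\hat\theta}=\hmin{X}{SB_1}_{\hat\theta}\ge k$ and $S$ is uniform and independent of $(X,B_1)$; the extractor then applies with side information $B_1$, and the copy $B_2$ is reattached afterwards (alternatively one simply cites the seed-leakage-robust form of a quantum-secure strong extractor). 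Everything else is routine: tensoring on unaffected registers, the Fuchs--van de Graaf conversion between trace and Bures distance, and the two invocations of \cref{fact:data}.
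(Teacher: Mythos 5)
The paper does not prove \cref{lem:2}; it is imported verbatim from~\cite{BJK21}, so there is no in-paper argument to compare against. Your derivation is correct and is essentially the standard proof from that source: idealize the seed via \cref{claim:100} (Uhlmann), apply the extractor guarantee in the idealized state, and transport the bound back with \cref{fact:data} and \cref{claim:traingle_rho_rho_prime}, which reproduces the constant $2\eps'+\sqrt{\eps}$ exactly. You also correctly isolate the one genuinely delicate point, namely that in $\hat\theta$ the seed is purified by $B$, and your resolution is the right one; the only imprecision is calling $B_2$ a ``copy'' of $S$ — since $\hat\theta$ is pure, $B_2$ is maximally entangled with $S$ rather than a classical copy, but because $T$ is computed by a safe (seed-diagonal) map and $S$ is traced out at the end, the off-diagonal terms vanish and the reattachment argument via the \emph{strong} extractor property (which is exactly why strongness is needed here) goes through, so this is a cosmetic rather than a substantive gap.
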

\begin{fact}[Min-entropy loss under classical interactive communication~\cite{BJK21}]\label{lem:minentropy}
Let $\rho_{XNM}$ be a pure state where Alice holds registers $(XN)$ and Bob holds register $M$, such that register $X$ is classical and
\[ \hmin{X}{M}_\rho \geq k.\]Let Alice and Bob proceed for $t$-rounds, where in each round Alice generates a classical register $R_i$ and sends it to Bob, followed by Bob generating a classical register $S_i$ and sending it to Alice. Alice applies a (safe on $X$) isometry $V^{i}: \cH_X \otimes \cH_{N_{i-1}} \rightarrow \cH_X \otimes \cH_{N'_{i-1}} \otimes \cH_{R_{i}}$ (in round $i$) to generate~\footnote{The isometries in the communication protocols in later sections act as $V^i: \cH_X \rightarrow \cH_X  \otimes \cH_{R_{i}} \otimes  \cH_{\hat{R}_{i}}$.} $R_{i}$. Let 
 $\theta^i_{XN_iM_i}$ be the state at the end of round-$i$, where Alice holds registers $XN_i$ and Bob holds register $M_i$. Then,
 \[ \hmin{X}{M_t}_{\theta^t} \geq k-\sum_{j=1}^{t} \vert R_j\vert .\]
\end{fact}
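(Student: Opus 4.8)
The plan is to prove this by induction on the round index, carrying along the single quantity $\hmin{X}{M_i}_{\theta^i}$ and showing that one round of interaction costs at most $\vert R_i\vert$ bits of it. The base case is $\hmin{X}{M_0}_\rho = \hmin{X}{M}_\rho \ge k$, and the inductive claim is $\hmin{X}{M_i}_{\theta^i}\ \ge\ \hmin{X}{M_{i-1}}_{\theta^{i-1}} - \vert R_i\vert$; summing over $i=1,\dots,t$ then gives the statement. To run the induction I would split round $i$ into its three moves and bound the effect of each on the conditional min-entropy of $X$ given whatever Bob currently holds.

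First, Alice's isometry $V^i$ acts only on her own registers $XN_{i-1}$ and is safe on $X$ (\cref{def:safe}). Writing $\theta^{i-1}_{XN_{i-1}M_{i-1}} = \sum_x p(x)\ketbra{x}_X\otimes(\theta^{i-1})^x_{N_{i-1}M_{i-1}}$ (possible since $X$ is classical) and applying $V^i=\sum_x\ketbra{x}_X\otimes V^i_x$, tracing out Alice's residual registers $N'_{i-1}R_i$ uses $V^{i\dagger}_xV^i_x=\id$ to leave the marginal on $XM_{i-1}$ unchanged; in particular $X$ stays classical, and $\hmin{X}{M_{i-1}}$ is unaffected by this move. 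Second, $R_i$ is handed to Bob; writing $\rho$ for the state just after $V^i$ (so $\rho_{XM_{i-1}} = \theta^{i-1}_{XM_{i-1}}$), I must lower bound $\hmin{X}{R_iM_{i-1}}_\rho$. Taking $\sigma_{M_{i-1}}$ optimal in $\hmin{X}{M_{i-1}}_\rho$, we have $\rho_{XM_{i-1}}\le 2^{-\hmin{X}{M_{i-1}}}\,\id_X\otimes\sigma_{M_{i-1}}$, and since $R_i$ is classical, $\rho_{XR_iM_{i-1}}=\sum_r\ketbra{r}_{R_i}\otimes\rho^r_{XM_{i-1}}$ with each unnormalized $\rho^r_{XM_{i-1}}\le\rho_{XM_{i-1}}$ (a sum of positive operators dominates each summand); hence
\[
\rho_{XR_iM_{i-1}}\ \le\ 2^{-\hmin{X}{M_{i-1}}}\,\id_{R_i}\otimes\id_X\otimes\sigma_{M_{i-1}}\ =\ 2^{-(\hmin{X}{M_{i-1}}-\vert R_i\vert)}\,\id_X\otimes(U_{R_i}\otimes\sigma_{M_{i-1}}),
\]
which gives exactly $\hmin{X}{R_iM_{i-1}}_\rho\ge\hmin{X}{M_{i-1}}-\vert R_i\vert$. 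Third, Bob applies a local CPTP map to $R_iM_{i-1}$, producing $S_i$ (sent back to Alice, hence traced out on Bob's side) and the retained register $M_i$; so $\theta^i_{XM_i}$ is the image of $\rho_{XR_iM_{i-1}}$ under a CPTP map acting only on Bob's side, and data processing for the conditional min-entropy (a consequence of the $\mathrm{D}_{\max}$ data-processing inequality, \cref{fact:data}) yields $\hmin{X}{M_i}_{\theta^i}\ge\hmin{X}{R_iM_{i-1}}_\rho$. Alice's receipt of $S_i$ touches only her registers and is irrelevant. Chaining the three moves gives the inductive step; this parallels the argument in \cite{BJK21}.

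I expect the only genuinely quantitative point to be the middle move — the per-round min-entropy loss bound — and that is precisely where the hypothesis that $R_i$ is a classical register of $\vert R_i\vert$ bits enters (a quantum $R_i$ would cost $2\vert R_i\vert$, which is why the communication protocols used later keep the $R_i$ classical). The rest is bookkeeping: tracking which register is held by which party at each moment, and checking that $X$ remains classical throughout because every $V^i$ is safe on $X$.
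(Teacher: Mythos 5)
Your proof is correct. The paper does not actually prove this statement --- it is imported verbatim as a fact from~\cite{BJK21} --- and your induction is precisely the standard argument behind it: the first and third moves of your round decomposition are instances of \cref{fact102} (conditional min-entropy of $X$ is unchanged by Alice's safe local isometry and can only increase under a CPTP map on Bob's side), and your middle move is an inline re-derivation of \cref{fact2}, the chain rule stating that conditioning on a classical register costs at most its length (which is indeed where classicality of $R_i$ is essential, a quantum message costing $2\vert R_i\vert$). So the per-round bound could equally well be obtained by citing those two facts directly, but your self-contained operator-inequality derivation is sound.
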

\begin{fact}[\cite{BJK21}]\label{claim:minentropydecrease}
Let $\rho_{ABC} \in \mathcal{D}(\cH_A \otimes \cH_B \otimes \cH_C)$ be a state and $M \in \cL(\cH_C)$ such that $M^\dagger M \leq \id_C$. Let $\hat{\rho}_{ABC}= \frac{M \rho_{ABC} M^\dagger}{\tr{M \rho_{ABC} M^\dagger}}$. Then, 
\[ \hmin{A}{B}_{\hat{\rho}} \geq   \hmin{A}{B}_{\rho} - \log \left(\frac{1}{\tr{M \rho_{ABC} M^\dagger}}\right). \]
\end{fact}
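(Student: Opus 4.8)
The plan is to argue directly from the max-divergence definition of conditional min-entropy, $\hmin{A}{B}_\rho = -\inf_{\sigma_B}\dmax{\rho_{AB}}{\id_A\otimes\sigma_B}$ (see \cref{def:infoquant}). First I would fix $k := \hmin{A}{B}_\rho$ and pick, using compactness of the state space, a normalized state $\sigma_B$ witnessing this value, i.e. $\rho_{AB} \leq 2^{-k}\,\id_A\otimes\sigma_B$. Writing $p := \tr(M\rho_{ABC}M^\dagger)$, the goal reduces to producing a normalized state witnessing a lower bound for $\hat\rho$: it suffices to show $\hat\rho_{AB} \leq \tfrac{2^{-k}}{p}\,\id_A\otimes\sigma_B$, since that inequality says $\dmax{\hat\rho_{AB}}{\id_A\otimes\sigma_B}\leq -k+\log(1/p)$ and hence $\hmin{A}{B}_{\hat\rho}\geq k-\log(1/p)$.

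The one real step is the Löwner-order inequality $\tr_C\!\big(M\rho_{ABC}M^\dagger\big)\leq \rho_{AB}$, exploiting that $M$ acts on $C$ only and that $M^\dagger M\leq \id_C$. To prove it I would set $N := (\id_C - M^\dagger M)^{1/2}$, so $M^\dagger M + N^\dagger N = \id_C$, and use the elementary identity $\tr_C\!\big((\id_{AB}\otimes A)\,X\,(\id_{AB}\otimes A^\dagger)\big) = \tr_C\!\big((\id_{AB}\otimes A^\dagger A)\,X\big)$ for any operator $A$ on $C$ (a one-line check in a product basis, or ``cyclicity'' of the partial trace on the traced-out factor). Applying it to $A=M$ and $A=N$ and adding gives
\[
\tr_C(M\rho_{ABC}M^\dagger) + \tr_C(N\rho_{ABC}N^\dagger) = \tr_C\!\big((\id_{AB}\otimes\id_C)\rho_{ABC}\big) = \rho_{AB}.
\]
Since $N\rho_{ABC}N^\dagger\geq 0$ and the partial trace is positive, the second term is $\geq 0$, so $\tr_C(M\rho_{ABC}M^\dagger)\leq \rho_{AB}$. (Equivalently: complete the CP map $X\mapsto MXM^\dagger$ on $C$ to a genuine channel using the extra Kraus operator $N$, note that tracing out the channel output reproduces $\rho_{AB}$, and use that the partial trace preserves $\leq$.)

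Finally I would combine the pieces: $\hat\rho_{AB} = \tfrac1p\,\tr_C(M\rho_{ABC}M^\dagger) \leq \tfrac1p\,\rho_{AB} \leq \tfrac{2^{-k}}{p}\,\id_A\otimes\sigma_B$, which is exactly the witness needed, yielding $\hmin{A}{B}_{\hat\rho}\geq k-\log(1/p)$. I expect the only point needing care is the middle paragraph — the partial-trace identity and the observation that the operation on register $C$ can only shrink the $AB$-marginal in Löwner order; everything on $A$ and $B$ is untouched by $M$, which is precisely why the witness $\sigma_B$ carries over verbatim and only the renormalization factor $1/p$ is paid. No quantitative estimates or approximations are involved, so there is no genuine ``hard part'' beyond this structural observation.
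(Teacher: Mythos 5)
Your proposal is correct, and it is the standard argument for this fact; the paper itself gives no proof, importing the statement from~\cite{BJK21}. Two small remarks: the key L\"owner-order step you prove, namely $\tr_C\left(M\rho_{ABC}M^\dagger\right)\leq \rho_{AB}$, i.e.\ $\dmax{\hat{\rho}_{AB}}{\rho_{AB}}\leq \log\left(\frac{1}{\tr(M\rho_{ABC}M^\dagger)}\right)$, is exactly \cref{measuredmax} of the paper with registers relabeled ($A\mapsto AB$, $B\mapsto C$), so you could simply invoke that fact and then combine it with the $\dmax{\cdot}{\cdot}$ witness $\sigma_B$ as you do; and even if one does not want to argue attainment of the infimum defining $\hmin{A}{B}_\rho$ by compactness, an $\eps$-approximate witness followed by $\eps\to 0$ gives the same bound, so nothing essential hinges on that point.
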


\begin{fact}\label{fact:Conjugation} Let $M,A \in \mathcal{L}(\cH)$. 
If $A \geq 0$ then $M^{\dagger} A M \geq 0$. 
\end{fact}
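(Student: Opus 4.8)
The plan is to prove this directly from the definition of positive semi-definiteness: an operator $P$ satisfies $P \geq 0$ if and only if $\bra{v} P \ket{v} \geq 0$ for every vector $\ket{v} \in \cH$. So I would fix an arbitrary $\ket{v} \in \cH$ and compute $\bra{v} (M^\dagger A M) \ket{v}$.

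The key step is the observation that $\bra{v} M^\dagger A M \ket{v} = \bra{Mv} A \ket{Mv}$, where $\ket{Mv} \defeq M\ket{v}$; this is just associativity together with the fact that $(M\ket{v})^\dagger = \bra{v} M^\dagger$. Since $A \geq 0$ by hypothesis, applying the defining inequality of $A$ to the particular vector $\ket{Mv}$ gives $\bra{Mv} A \ket{Mv} \geq 0$. As $\ket{v}$ was arbitrary, this shows $\bra{v} (M^\dagger A M) \ket{v} \geq 0$ for all $\ket{v}$, i.e. $M^\dagger A M \geq 0$. (One should also note $M^\dagger A M$ is Hermitian, $(M^\dagger A M)^\dagger = M^\dagger A^\dagger M = M^\dagger A M$, using $A^\dagger = A$, so the scalar $\bra{v}M^\dagger A M\ket{v}$ is real and the inequality is meaningful.)

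An equivalent route, if one prefers an operator-level argument, is to use a square-root / Cholesky-type factorization: since $A \geq 0$ there is an operator $B$ with $A = B^\dagger B$ (e.g. $B = A^{1/2}$). Then $M^\dagger A M = M^\dagger B^\dagger B M = (BM)^\dagger (BM)$, and any operator of the form $C^\dagger C$ is positive semi-definite because $\bra{v} C^\dagger C \ket{v} = \| C \ket{v} \|^2 \geq 0$. I would present the first argument as the main proof since it is the shortest and needs no appeal to the existence of square roots.

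There is essentially no obstacle here: the statement is a one-line consequence of the definition, and the only thing to be slightly careful about is recording that $M^\dagger A M$ is Hermitian so that ``$\geq 0$'' is well posed. This Fact is presumably stated only to be cited later (e.g. to argue that conjugating POVM-type operators preserves positivity), so a short self-contained proof along the above lines suffices.
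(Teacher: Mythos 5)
Your proof is correct: the computation $\bra{v} M^\dagger A M \ket{v} = \bra{Mv} A \ket{Mv} \geq 0$ for arbitrary $\ket{v}$, together with the remark that $M^\dagger A M$ is Hermitian, is exactly the standard argument; the paper itself states this fact without proof as a textbook-level statement, so there is nothing further to compare against.
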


\begin{fact}
\label{measuredmax}
Let $\rho_{AB} \in \mathcal{D}(\cH_A \otimes \cH_B)$ be a state and $M \in \cL(\cH_B)$ such that $M^\dagger M \leq \id_B$. Let $\hat{\rho}_{AB}= \frac{M \rho_{AB} M^\dagger}{\tr{M \rho_{AB} M^\dagger}}$. Then, 
$$\dmax{\hat{\rho}_A}{\rho_A} \leq \log \left(\frac{1}{\tr{M \rho_{AB} M^\dagger}}\right).$$
\end{fact}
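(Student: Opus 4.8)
The plan is to reduce the max-divergence bound to a single operator inequality on $\cH_A$, namely
\[ \tr_B\big((\id_A\otimes M)\,\rho_{AB}\,(\id_A\otimes M^\dagger)\big) \;\le\; \rho_A . \]
Here I use the paper's standing shorthand that $M\in\cL(\cH_B)$ acts as $\id_A\otimes M$. Once this inequality is established, dividing both sides by the nonnegative scalar $p \defeq \tr\big((\id_A\otimes M)\rho_{AB}(\id_A\otimes M^\dagger)\big)$ gives $\hat\rho_A \le p^{-1}\rho_A = 2^{\log(1/p)}\rho_A$, and the claim follows directly from the definition of $\dmax{\cdot}{\cdot}$; in particular this inequality already certifies $\supp(\hat\rho_A)\subseteq\supp(\rho_A)$, so the max-divergence is well defined. (If $p=0$ the statement is vacuous/trivial, so I would assume $p>0$.)

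To prove the operator inequality I would test it against an arbitrary vector $\ket{a}\in\cH_A$. Using the definition of the partial trace together with cyclicity of the full trace,
\[ \bra{a}\,\tr_B\big((\id_A\otimes M)\rho_{AB}(\id_A\otimes M^\dagger)\big)\,\ket{a} \;=\; \tr\big((\ketbra{a}\otimes M^\dagger M)\,\rho_{AB}\big). \]
Since $M^\dagger M\le\id_B$ by hypothesis, we have $\ketbra{a}\otimes(\id_B-M^\dagger M)\ge 0$; as $\rho_{AB}\ge 0$, Fact~\ref{fact:Conjugation} (applied to the square root of the PSD operator above) shows $\tr$ of its product with $\rho_{AB}$ is nonnegative, hence $\tr\big((\ketbra{a}\otimes M^\dagger M)\rho_{AB}\big)\le\tr\big((\ketbra{a}\otimes\id_B)\rho_{AB}\big)=\bra{a}\rho_A\ket{a}$. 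Since both $\tr_B\big((\id_A\otimes M)\rho_{AB}(\id_A\otimes M^\dagger)\big)$ and $\rho_A$ are operators on $\cH_A$ and this holds for every $\ket{a}\in\cH_A$, the operator inequality follows.

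A slightly more structural alternative is to complete $M$ to a channel: put $N\defeq(\id_B-M^\dagger M)^{1/2}$ so that $M^\dagger M+N^\dagger N=\id_B$, and consider the CPTP map $\Lambda(\tau)=M\tau M^\dagger\otimes\ketbra{0}+N\tau N^\dagger\otimes\ketbra{1}$. Applying $\id_A\otimes\Lambda$ to $\rho_{AB}$ and tracing out all registers but $A$ recovers $\rho_A$, while the same partial trace equals $\tr_B\big((\id_A\otimes M)\rho_{AB}(\id_A\otimes M^\dagger)\big)+\tr_B\big((\id_A\otimes N)\rho_{AB}(\id_A\otimes N^\dagger)\big)$; dropping the second, PSD summand gives exactly the desired inequality. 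I will present the first (vector-testing) argument since it is shortest. I do not expect a genuine obstacle here: the only points requiring care are the $\id_A\otimes M$ convention and the observation that testing against vectors in $\cH_A$ alone suffices because both sides are already operators on $\cH_A$ after the partial trace. This fact is essentially the base case underlying Fact~\ref{claim:minentropydecrease}.
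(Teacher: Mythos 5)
Your proof is correct. Note that the paper states Fact~\ref{measuredmax} without any proof (like several neighbouring facts it is treated as standard), so there is no internal argument to compare against; your write-up supplies exactly the missing details. The reduction you use is the natural one: establish the operator inequality $\tr_B\bigl((\id_A\otimes M)\rho_{AB}(\id_A\otimes M^\dagger)\bigr)\le\rho_A$ on $\cH_A$, then divide by $p=\tr\bigl((\id_A\otimes M)\rho_{AB}(\id_A\otimes M^\dagger)\bigr)$ to get $\hat\rho_A\le 2^{\log(1/p)}\rho_A$, which both gives the bound and certifies the support condition $\supp(\hat\rho_A)\subseteq\supp(\rho_A)$ required by the paper's definition of $\dmax{\cdot}{\cdot}$. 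Your vector-testing step is sound: $\bra{a}\tr_B\bigl((\id_A\otimes M)\rho_{AB}(\id_A\otimes M^\dagger)\bigr)\ket{a}=\tr\bigl((\ketbra{a}\otimes M^\dagger M)\rho_{AB}\bigr)\le\tr\bigl((\ketbra{a}\otimes\id_B)\rho_{AB}\bigr)=\bra{a}\rho_A\ket{a}$, where the inequality uses that the trace of a product of two positive semidefinite operators is nonnegative (your route via Fact~\ref{fact:Conjugation} applied to a square root is a valid way to see this), and testing against vectors of $\cH_A$ suffices since both sides are Hermitian operators on $\cH_A$. The channel-completion alternative with $N=(\id_B-M^\dagger M)^{1/2}$ is equally fine, and you correctly flag the only two points of care: the $\id_A\otimes M$ shorthand and the degenerate case $\tr(M\rho_{AB}M^\dagger)=0$, in which $\hat\rho$ is undefined and the claim is vacuous.
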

\begin{fact}
\label{fact:close}
Let $\rho_{}, \sigma \in \mathcal{D}(\cH_A)$ be two states  and $M \in \cL(\cH_A)$ such that  $M^\dagger M \leq \id_A$. Then, 
\[ \vert \tr{M \rho_{} M^\dagger} - \tr{M \sigma_{} M^\dagger} \vert  \leq  \frac{\Vert \rho -\sigma \Vert_1}{2}.\]
\end{fact}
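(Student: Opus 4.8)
The plan is to reduce the statement to the variational characterization of trace distance. First I would use cyclicity of the trace to rewrite $\tr(M \rho M^\dagger) = \tr(M^\dagger M\, \rho)$ and likewise $\tr(M \sigma M^\dagger) = \tr(M^\dagger M\, \sigma)$, and set $N := M^\dagger M$. The hypothesis $M^\dagger M \le \id_A$, together with $M^\dagger M \ge 0$ (it is manifestly positive semi-definite, cf. \cref{fact:Conjugation}), gives $0 \le N \le \id_A$, so the claim becomes
\[ \vert \tr(N(\rho - \sigma)) \vert \le \frac{\Vert \rho - \sigma \Vert_1}{2}. \]

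Next I would invoke the standard Helstrom-type identity (see the textbooks referenced in \cref{sec:prelims}): for the Hermitian traceless operator $\rho - \sigma$ one has $\frac12 \Vert \rho - \sigma \Vert_1 = \max_{0 \le E \le \id_A} \tr\big(E(\rho - \sigma)\big)$. Applying this with $E = N$ yields $\tr(N(\rho - \sigma)) \le \frac12 \Vert \rho - \sigma \Vert_1$; applying it with $E = \id_A - N$ (also a valid POVM element) and using $\tr(\rho - \sigma) = 0$ yields $-\tr(N(\rho - \sigma)) = \tr\big((\id_A - N)(\rho - \sigma)\big) \le \frac12 \Vert \rho - \sigma \Vert_1$. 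Combining the two inequalities gives the absolute-value bound. Alternatively, to avoid quoting the variational formula, I would take the Jordan--Hahn decomposition $\rho - \sigma = P - Q$ with $P, Q \ge 0$ supported on orthogonal subspaces; tracelessness forces $\tr P = \tr Q = \frac12 \Vert \rho - \sigma \Vert_1$, and since $0 \le N \le \id_A$ and $P, Q \ge 0$ we get $0 \le \tr(NP) \le \tr P$ and $0 \le \tr(NQ) \le \tr Q$, hence $-\tr Q \le \tr(N(\rho - \sigma)) = \tr(NP) - \tr(NQ) \le \tr P$, which is again the desired bound.

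There is no genuine obstacle here; the fact is essentially a one-liner. The only point that needs care is to keep $N = M^\dagger M$ (not $M M^\dagger$), so that cyclicity of the trace moves the factors onto $\rho$ and $\sigma$ directly and the operator inequality $M^\dagger M \le \id_A$ is the one being used. This fact is applied in the extractor section (e.g.\ in conjunction with \cref{measuredmax} and \cref{claim:minentropydecrease}) to control how the probability of a measurement outcome, such as the ``equal''/``not-equal'' events, shifts when one replaces a state by a nearby one.
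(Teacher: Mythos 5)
Your proof is correct: the reduction via cyclicity to $N=M^\dagger M$ with $0\le N\le \id_A$, followed by either the variational characterization $\tfrac12\Vert\rho-\sigma\Vert_1=\max_{0\le E\le \id_A}\tr\bigl(E(\rho-\sigma)\bigr)$ (applied to $N$ and $\id_A-N$) or the Jordan--Hahn decomposition, is exactly the standard argument. The paper states this fact without proof, treating it as textbook material, so there is no alternative route to compare against; your write-up would serve as a complete proof of it.
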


\begin{fact}[Gentle Measurement Lemma~\cite{book_Wilde}] \label{fact:gentle_measurement}

Let $\rho \in \mathcal{D}(\cH_A)$ be a state and $M \in \cL(\cH_A)$ such that $M^\dagger M \leq \id_A$ and $\tr(M \rho M^\dagger)
\geq 1-\eps$. Let $\hat{\rho}= \frac{M \rho M^\dagger}{\tr{M \rho M^\dagger}}$. Then, 
$\Delta_B\left(\rho , \hat{\rho}\right) \leq {\sqrt{\eps}}$.

\end{fact}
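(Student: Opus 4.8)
The plan is to reduce the Bures-metric bound to a fidelity lower bound and then invoke Uhlmann's theorem. Recall that $\Delta_B(\rho,\hat\rho)=\sqrt{1-\F(\rho,\hat\rho)}$, so it suffices to prove $\F(\rho,\hat\rho)\ge 1-\eps$ (the case $\eps\ge 1$ being trivial, since $\Delta_B\le 1$ always). Write $p:=\tr(M\rho M^\dagger)=\tr(M^\dagger M\rho)\ge 1-\eps>0$, so that $\hat\rho=\tfrac1p\,M\rho M^\dagger$ is well defined.

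First I would fix any purification $\ket{\psi}_{AR}$ of $\rho_A$ and observe that $(M_A\otimes\id_R)\ket{\psi}_{AR}$ has reduced state $M\rho M^\dagger$ on $A$ and squared norm $\bra{\psi}(M^\dagger M\otimes\id)\ket{\psi}=\tr(M^\dagger M\rho)=p$; hence $\tfrac1{\sqrt p}(M_A\otimes\id_R)\ket{\psi}_{AR}$ is a purification of $\hat\rho_A$. By Uhlmann's theorem $\F(\rho,\hat\rho)$ is the maximal overlap over all purifications of the two states, so using $\ket\psi$ on one side and this particular purification on the other,
\[ \F(\rho,\hat\rho)\;\ge\;\Big|\tfrac1{\sqrt p}\braket{\psi}{(M_A\otimes\id_R)\psi}\Big|\;=\;\tfrac1{\sqrt p}\,\big|\tr(M\rho)\big|. \]

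Next I would use the operator inequality coming from the hypothesis, read (as is standard for a measurement operator) as $0\le M\le\id_A$: since $M(\id_A-M)\ge 0$ we get $M\ge M^\dagger M$, hence $\tr(M\rho)\ge\tr(M^\dagger M\rho)=p\ge 0$. Combining with the previous display, $\F(\rho,\hat\rho)\ge p/\sqrt p=\sqrt p\ge\sqrt{1-\eps}\ge 1-\eps$, and therefore $\Delta_B(\rho,\hat\rho)=\sqrt{1-\F(\rho,\hat\rho)}\le\sqrt\eps$, as claimed.

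The only steps beyond bookkeeping are the identification of $\tfrac1{\sqrt p}(M\otimes\id)\ket\psi$ as a purification of $\hat\rho$ combined with the appeal to Uhlmann's theorem, and the operator inequality $M\ge M^\dagger M$. The latter is exactly where positivity of $M$ matters: for a genuinely non-Hermitian $M$ obeying only $M^\dagger M\le\id_A$ the claim fails (a ladder operator sends a pure state to an orthogonal one while keeping $p=1$), so the hypothesis should be understood as $0\le M\le\id_A$, which is the form in which this gentle-measurement bound is actually invoked. Everything else is routine.
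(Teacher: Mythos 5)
The paper gives no proof of this statement: it is imported verbatim as a Fact with a citation to Wilde's book, so there is nothing internal to compare your argument against. Your derivation is the standard and correct one for the intended setting: purify $\rho$, note that $\tfrac{1}{\sqrt{p}}(M\otimes\id)\ket{\psi}$ with $p=\tr(M\rho M^\dagger)$ purifies $\hat{\rho}$, lower-bound the fidelity via Uhlmann by $|\tr(M\rho)|/\sqrt{p}$, and use $M\geq M^\dagger M$ (valid when $0\leq M\leq \id$) to get $\F(\rho,\hat{\rho})\geq\sqrt{p}\geq 1-\eps$, hence $\Delta_B=\sqrt{1-\F}\leq\sqrt{\eps}$; all steps check out, including the reduction $\sqrt{1-\eps}\geq 1-\eps$ and the trivial case $\eps\geq 1$.

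Your caveat is also accurate and worth stating: as literally written, the hypothesis $M^\dagger M\leq\id_A$ alone does not suffice — taking $M=\ket{1}\bra{0}$ (or any unitary far from $\id$) and $\rho=\ketbra{0}$ gives $\tr(M\rho M^\dagger)=1$ yet $\hat{\rho}\perp\rho$, so the claimed bound fails. The fact is stated this loosely in the source literature, but it is only ever invoked for measurement operators ($0\leq M\leq\id$, e.g.\ projectors), matching Wilde's formulation where the conjugating operator is $\sqrt{\Lambda}$ for a POVM element $\Lambda$; under that reading your proof is complete. If anything, you even prove the slightly stronger bound $\Delta_B\leq\sqrt{1-\sqrt{1-\eps}}$ before relaxing it to $\sqrt{\eps}$.
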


\begin{definition} Let $M=2^m$. The inner-product function, $\IP^{n}_{M}: \mathbb{F}_{M}^{n} \times \mathbb{F}_{M}^n \rightarrow \mathbb{F}_{M}$ is defined as follows:\[\IP^{n}_{M}(x,y)=\sum_{i=1}^{n} x_i y_i,\]
where the operations are over the field $\mathbb{F}_{M}.$
\end{definition}

\begin{fact}[Uhlmann's Theorem~\cite{uhlmann76}]
\label{uhlmann}
Let $\rho_A,\sigma_A\in \mathcal{D}(\cH_A)$. Let $\rho_{AB}\in \mathcal{D}(\cH_{AB})$ be a purification of $\rho_A$ and $\sigma_{AC}\in\mathcal{D}(\cH_{AC})$ be a purification of $\sigma_A$. 
There exists an isometry $V$ (from a subspace of $\cH_C$ to a subspace of $\cH_B$) such that,
\[ \Delta_B\left( \ketbra{\rho}_{AB}, \ketbra{\theta}_{AB}) =  \Delta_B(\rho_A,\sigma_A\right) ,\]
 where $\ket{\theta}_{AB} = (\id_A \otimes V) \ket{\sigma}_{AC}$.
\end{fact}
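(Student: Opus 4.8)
The plan is to recognize this as Uhlmann's theorem and prove it by reducing the existence of $V$ to the variational characterization $\|G\|_1=\max\{|\mathrm{Tr}(GX)| : \|X\|_\infty\le 1\}$ of the trace norm. First I would reformulate the target. Since $\Delta_B(\cdot,\cdot)=\sqrt{1-\F(\cdot,\cdot)}$ by \cref{def:infoquant}, and since two pure (possibly sub-normalized) states satisfy $\F(\ketbra{\psi},\ketbra{\phi})=|\braket{\psi}{\phi}|$ — immediate from $\sqrt{\ketbra{\psi}}=\ketbra{\psi}$ together with $\ketbra{\psi}\,\ketbra{\phi}=\braket{\psi}{\phi}\,\ket{\psi}\bra{\phi}$ having trace norm $|\braket{\psi}{\phi}|$ — the claim is equivalent to producing a partial isometry $V\colon\cH_C\to\cH_B$ with $|\braket{\rho}{\theta}|=\F(\rho_A,\sigma_A)$, where $\ket{\theta}_{AB}=(\id_A\otimes V)\ket{\sigma}_{AC}$.

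Next I would set up the standard vectorization bookkeeping. After fixing orthonormal bases, identify a bipartite pure vector $\ket{\psi}_{AB}=\sum_{ij}M_{ij}\ket{i}_A\ket{j}_B$ with its coefficient matrix $M$; then $\mathrm{Tr}_B\ketbra{\psi}=MM^\dagger$, $\braket{\psi_1}{\psi_2}=\mathrm{Tr}(M_1^\dagger M_2)$, and acting by $V\colon\cH_C\to\cH_B$ on the purifying register replaces the matrix $S$ of $\ket{\sigma}_{AC}$ by $SV^T$. Writing $R$, $S$ for the matrices of $\ket{\rho}_{AB}$, $\ket{\sigma}_{AC}$, the purification conditions $RR^\dagger=\rho_A$, $SS^\dagger=\sigma_A$ give by polar decomposition $R=\sqrt{\rho_A}\,U_1$, $S=\sqrt{\sigma_A}\,U_2$ with $U_1,U_2$ partial isometries satisfying $U_1U_1^\dagger=P_{\supp(\rho_A)}$, $U_2U_2^\dagger=P_{\supp(\sigma_A)}$. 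Hence $\braket{\rho}{\theta}=\mathrm{Tr}(R^\dagger S\,V^T)=\mathrm{Tr}(G\,V^T)$ with $G:=U_1^\dagger\sqrt{\rho_A}\sqrt{\sigma_A}\,U_2$, and since $U_1^\dagger$ acts isometrically on $\supp(\rho_A)\supseteq\mathrm{ran}(\sqrt{\rho_A}\sqrt{\sigma_A}U_2)$ and $U_2^\dagger$ on $\supp(\sigma_A)\supseteq\mathrm{ran}(\sqrt{\sigma_A}\sqrt{\rho_A})$, these partial isometries leave the singular values unchanged, so $\|G\|_1=\|\sqrt{\rho_A}\sqrt{\sigma_A}\|_1=\F(\rho_A,\sigma_A)$.

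Then I would invoke the trace-norm maximization: writing $G=\sum_k s_k\ket{u_k}_B\bra{v_k}_C$, one has $\mathrm{Tr}(GX)=\sum_k s_k\bra{v_k}X\ket{u_k}$, which over contractions $X$ is maximized, to $\sum_k s_k=\|G\|_1$, by any partial isometry with $X\ket{u_k}=\ket{v_k}$ (such an $X$ exists since $\{\ket{u_k}\}$ and $\{\ket{v_k}\}$ are orthonormal). Take such an $X$ and set $V:=X^T$; the transpose of a partial isometry is again a partial isometry, mapping a subspace of $\cH_C$ into a subspace of $\cH_B$ as required, and $\braket{\rho}{\theta}=\mathrm{Tr}(GX)=\|G\|_1=\F(\rho_A,\sigma_A)\ge 0$. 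Combining with the reformulation gives $\F(\ketbra{\rho}_{AB},\ketbra{\theta}_{AB})=|\braket{\rho}{\theta}|=\F(\rho_A,\sigma_A)$, hence $\Delta_B(\ketbra{\rho}_{AB},\ketbra{\theta}_{AB})=\sqrt{1-\F(\rho_A,\sigma_A)}=\Delta_B(\rho_A,\sigma_A)$, which is the claim.

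The only real obstacle is the support-and-transpose bookkeeping: checking carefully that $V=X^T$ is a genuine partial isometry of the stated type, and that the definitions behave well when $\ket{\theta}_{AB}$ is not a unit vector — $\ket{\theta}_{AB}$ is automatically a unit vector (so $\ketbra{\theta}_{AB}$ a genuine state, indeed a purification of $\sigma_A$) exactly when $\mathrm{rank}(\sigma_A)\le\dim\cH_B$, which one can always arrange by first passing to a common sufficiently large purifying space and composing $V$ with the partial isometries relating the purifications; otherwise $\ket{\theta}_{AB}$ may be sub-normalized, but as computed above $\F(\ketbra{\rho}_{AB},\ketbra{\theta}_{AB})=|\braket{\rho}{\theta}|$ in that case too, so the stated $\Delta_B$-identity is unaffected. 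A fully coordinate-free repackaging of the same argument replaces the matrix manipulations by the reference purification $(\sqrt{\rho_A}\otimes\id)\sum_i\ket{i}\ket{i}$ and the identity $\bra{\Omega}(A\otimes B)\ket{\Omega}=\mathrm{Tr}(AB^T)$; I would present whichever version is shorter to write in full detail.
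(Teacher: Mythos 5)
The paper does not prove this statement at all: it is imported verbatim as a fact with a citation to Uhlmann's original paper, so there is no in-paper argument to compare yours against. Your proposal is a correct, self-contained proof, and it is the standard textbook route: identify pure bipartite vectors with their coefficient matrices, use the polar decompositions $R=\sqrt{\rho_A}\,U_1$, $S=\sqrt{\sigma_A}\,U_2$ to reduce the overlap to $\tr(G V^T)$ with $\|G\|_1=\|\sqrt{\rho_A}\sqrt{\sigma_A}\|_1=\F(\rho_A,\sigma_A)$, and then saturate the trace-norm duality $\|G\|_1=\max_{\|X\|_\infty\le 1}|\tr(GX)|$ with a partial isometry aligned with the singular vectors of $G$, taking $V=X^T$. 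The individual steps check out (the invariance of singular values under $U_1^\dagger(\cdot)U_2$ uses exactly the support inclusions you state, and the transpose of a partial isometry is indeed a partial isometry), and your closing remark correctly identifies the only delicate point, namely that $(\id_A\otimes V)\ket{\sigma}_{AC}$ is normalized only once $V$ is extended to act isometrically on the support of the $C$-marginal, which is possible whenever $\dim\cH_B\ge\mathrm{rank}(\sigma_A)$ and costs nothing since the added action only touches singular vectors with zero singular value; this matches the cautious phrasing ``isometry from a subspace of $\cH_C$ to a subspace of $\cH_B$'' in \cref{uhlmann}. No gaps.
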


\begin{fact}[\protect{Stinespring isometry extension~[Theorem $5.1$ in~\cite{WatrousQI}]}]\label{fact:stinespring}
		 Let $\Phi :    \mathcal{L} (\cH_X ) \rightarrow   \mathcal{L}(\cH_Y )$ be a CPTP map. 
		 Then, there exists an isometry $V :  \cH_{X} \rightarrow   \cH_{Y} \otimes \cH_{Z}$ (Stinespring isometry extension of $\Phi$) such that $\Phi(\rho_X)= \tr_{Z}(V \rho_X V^\dagger)$ for every state $\rho_X$.
\end{fact}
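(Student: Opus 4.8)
The plan is to prove this via the Kraus (operator-sum) representation of $\Phi$ together with a ``stacking'' construction of the dilating isometry. First I would recall that, since $\Phi$ is completely positive, its Choi operator
\[ J(\Phi) \defeq (\Phi \otimes \id_{X'})\left(\sum_{i,j} \ket{i}\bra{j}_X \otimes \ket{i}\bra{j}_{X'}\right) \]
(with $X' \equiv X$) is positive semi-definite on $\cH_Y \otimes \cH_{X'}$. Spectrally decomposing $J(\Phi) = \sum_{k=1}^{r} \ket{v_k}\bra{v_k}$ with $\ket{v_k} \in \cH_Y \otimes \cH_{X'}$ and ``un-vectorizing'' each $\ket{v_k}$ into an operator $M_k : \cH_X \to \cH_Y$, one obtains operators $\{M_k\}_{k=1}^{r}$ with $\Phi(\rho) = \sum_{k} M_k \rho M_k^\dagger$ for every $\rho \in \mathcal{L}(\cH_X)$; moreover the trace-preserving property of $\Phi$ forces the completeness relation $\sum_{k} M_k^\dagger M_k = \id_X$ (test both sides against an arbitrary $\rho$ using $\tr(\Phi(\rho)) = \tr(\rho)$).

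Next I would construct the isometry. Let $\cH_Z$ have dimension $r$ with orthonormal basis $\{\ket{k}_Z\}_{k=1}^{r}$, and define $V : \cH_X \to \cH_Y \otimes \cH_Z$ by $V \defeq \sum_{k=1}^{r} M_k \otimes \ket{k}_Z$, that is, $V \ket{\psi}_X = \sum_k (M_k \ket{\psi})_Y \otimes \ket{k}_Z$. Then
\[ V^\dagger V = \sum_{k,\ell} M_\ell^\dagger M_k \, \braket{\ell}{k}_Z = \sum_{k} M_k^\dagger M_k = \id_X, \]
so $V$ is an isometry. Finally, for every state $\rho_X$,
\[ \tr_Z\left(V \rho_X V^\dagger\right) = \tr_Z\left( \sum_{k,\ell} M_k \rho_X M_\ell^\dagger \otimes \ket{k}\bra{\ell}_Z \right) = \sum_{k} M_k \rho_X M_k^\dagger = \Phi(\rho_X), \]
which is exactly the desired identity, with $Z$ playing the role of the dilation register.

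I expect the main obstacle to be the Kraus-representation step itself: establishing the Choi--Jamio{\l}kowski correspondence and, in particular, carefully using \emph{complete} positivity (rather than mere positivity) to conclude $J(\Phi) \ge 0$, then translating the spectral decomposition back into operators and verifying that trace preservation upgrades to the completeness relation $\sum_{k} M_k^\dagger M_k = \id_X$. Once the operator-sum form is available, the remainder --- defining $V$ and checking the two displayed identities --- is routine linear algebra; if one is content to quote the operator-sum representation as a known fact, the proof reduces entirely to the construction in the second paragraph.
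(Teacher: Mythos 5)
Your proposal is correct: the Choi--Kraus decomposition followed by the stacking isometry $V = \sum_k M_k \otimes \ket{k}_Z$ is exactly the standard Stinespring dilation argument, and the paper itself gives no proof but simply cites Theorem 5.1 of Watrous, whose proof proceeds along the same lines. All the steps you flag (positivity of the Choi operator from complete positivity, the completeness relation $\sum_k M_k^\dagger M_k = \id_X$ from trace preservation, and the two displayed identities) check out.
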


\begin{fact}[\cite{CLW14}]
	\label{fact102}  
	 Let $\cE :    \mathcal{L} (\cH_M ) \rightarrow   \mathcal{L}(\cH_{M'} )$ be a CPTP map and let $\sigma_{XM'} =(\id \otimes \cE)(\rho_{XM}) $. Then,  $$ \hmin{X}{M'}_\sigma  \geq \hmin{X}{M}_\rho  .$$
Above is equality if $\cE$ is a map corresponding to an isometry.
\end{fact}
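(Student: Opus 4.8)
The plan is to unfold both sides through the definition of conditional min-entropy via max-divergence and reduce to monotonicity of operator inequalities under completely positive maps. Recall $\hmin{X}{M}_\rho = -\inf_{\theta_M \in \mathcal{D}(\cH_M)} \dmax{\rho_{XM}}{\id_X \otimes \theta_M}$ and $\dmax{\alpha}{\beta} = \min\{\lambda : \alpha \le 2^\lambda\beta\}$. Writing $k := \hmin{X}{M}_\rho$, for every $\epsilon > 0$ there is a state $\theta_M$ with $\rho_{XM} \le 2^{-k+\epsilon}\,\id_X \otimes \theta_M$; I keep the slack $\epsilon$ only to avoid worrying whether the infimum is attained, and let $\epsilon \to 0$ at the end.

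For the inequality $\hmin{X}{M'}_\sigma \ge \hmin{X}{M}_\rho$: since $\cE$ is CPTP, the map $\id_X \otimes \cE$ is completely positive, hence monotone for the L\"owner order, so applying it to $\rho_{XM} \le 2^{-k+\epsilon}\,\id_X \otimes \theta_M$ and using $(\id_X \otimes \cE)(\id_X \otimes \theta_M) = \id_X \otimes \cE(\theta_M)$ gives $\sigma_{XM'} \le 2^{-k+\epsilon}\,\id_X \otimes \cE(\theta_M)$. As $\cE$ is trace preserving, $\cE(\theta_M) \in \mathcal{D}(\cH_{M'})$, hence an admissible competitor in the infimum defining $\hmin{X}{M'}_\sigma$, yielding $\hmin{X}{M'}_\sigma \ge k-\epsilon$; let $\epsilon \to 0$. (Equivalently, this direction follows from the data-processing inequality for $\dmax{}{}$ recorded in \cref{fact:data} applied to the CPTP map $\id_X \otimes \cE$, together with the observation that $\{\cE(\theta_M) : \theta_M \text{ a state}\}$ is contained in the set of states on $M'$.)

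For the equality claim when $\cE(\cdot)=V(\cdot)V^\dagger$ for an isometry $V:\cH_M \to \cH_{M'}$ with $V^\dagger V = \id_M$, I would reverse the argument: with $k' := \hmin{X}{M'}_\sigma$, pick a state $\theta_{M'}$ with $\sigma_{XM'} \le 2^{-k'+\epsilon}\,\id_X\otimes\theta_{M'}$, conjugate both sides by $\id_X\otimes V^\dagger$ (which preserves $\le$ since $W^\dagger(B-A)W \ge 0$ whenever $B-A \ge 0$), and use $V^\dagger V = \id_M$ together with $\sigma_{XM'} = (\id_X\otimes V)\rho_{XM}(\id_X\otimes V^\dagger)$ to obtain $\rho_{XM} \le 2^{-k'+\epsilon}\,\id_X\otimes(V^\dagger\theta_{M'}V)$. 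The operator $\eta_M := V^\dagger\theta_{M'}V$ is positive with $\tr\eta_M = \tr(\theta_{M'}VV^\dagger)\le 1$, so $\eta_M \le \eta_M/\tr(\eta_M) =: \eta_M'$, a genuine state on $M$ (the case $\tr\eta_M = 0$ forces $\rho_{XM}=0$ and is vacuous); hence $\rho_{XM}\le 2^{-k'+\epsilon}\,\id_X\otimes\eta_M'$, so $\hmin{X}{M}_\rho \ge k'-\epsilon$, which combined with the first part and $\epsilon \to 0$ gives equality. I do not expect a real obstacle here; the argument is a short manipulation of operator inequalities, and the only care needed is bookkeeping — normalizing the comparison operator in the isometry direction, and handling attainment/support of the infimum in $\dmax{}{}$ (if every competitor gives $+\infty$ the relevant min-entropy is $-\infty$ and the claim is trivial).
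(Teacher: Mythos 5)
Your proof is correct: the paper states this as a fact imported from~\cite{CLW14} without proof, and your argument — data-processing of $\mathrm{D}_{\max}$ under the positive map $\id_X\otimes\cE$ for the inequality, and conjugation by $\id_X\otimes V^\dagger$ with renormalization of $V^\dagger\theta_{M'}V$ for the isometry case — is the standard derivation. The edge cases you flag (attainment of the infimum, $\tr(V^\dagger\theta_{M'}V)=0$) are handled adequately, so there is nothing to fix.
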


\begin{fact}[Lemma B.3. in~\cite{DPVR09}]
\label{fact2}  
For a c-q state $\rho_{ABC}$ (with $C$ classical),
$$\hmin{A}{BC}_\rho \geq \hmin{A}{B}_\rho - \vert C \vert.$$

\end{fact}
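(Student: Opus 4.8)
The plan is to derive this ``dimension bound'' on conditional min-entropy directly from the operator-inequality definition in \cref{def:infoquant}, using only that a classical register $C$ is supported on a space of dimension $2^{\vert C\vert}$, so conditioning on it can cost at most $\vert C\vert$ bits. First I would set $k \defeq \hmin{A}{B}_\rho$; by the definition of conditional min-entropy there is a state $\sigma_B$ with $\rho_{AB} \leq 2^{-k}\,\id_A \otimes \sigma_B$ in the L\"owner order. The goal is then to exhibit a single admissible state $\sigma_{BC}$ on $BC$ witnessing $\hmin{A}{BC}_\rho \geq k - \vert C\vert$.

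The natural candidate is the product $\sigma_{BC} \defeq \sigma_B \otimes U_C$, where $U_C = 2^{-\vert C\vert}\,\id_C$ is the maximally mixed state on $\cH_C$; this is a legitimate density operator, hence allowed in the infimum defining $\hmin{A}{BC}$. To verify it works I would use that $C$ is classical to decompose $\rho_{ABC} = \sum_c \tilde{\rho}^c_{AB}\otimes\ketbra{c}_C$ with $\tilde{\rho}^c_{AB}\geq 0$ and $\sum_c \tilde{\rho}^c_{AB} = \rho_{AB}$, observe that positivity forces $\tilde{\rho}^c_{AB}\leq\rho_{AB}\leq 2^{-k}\,\id_A\otimes\sigma_B$ for every $c$, and then add these blockwise inequalities (using $\sum_c\ketbra{c} = \id_C$) to obtain
\[ \rho_{ABC} \;\leq\; 2^{-k}\,\id_A\otimes\sigma_B\otimes\id_C \;=\; 2^{-(k-\vert C\vert)}\,\id_A\otimes\sigma_{BC}. \]
By the definition of max-divergence this yields $\dmax{\rho_{ABC}}{\id_A\otimes\sigma_{BC}}\leq \vert C\vert - k$, and hence $\hmin{A}{BC}_\rho \geq k - \vert C\vert = \hmin{A}{B}_\rho - \vert C\vert$.

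There is no genuine obstacle here; the two points that want a moment's care are (i) the step $\tilde{\rho}^c_{AB}\leq\rho_{AB}$, which is precisely where one uses that the conditional blocks of a classical-quantum state are positive semidefinite and sum to the marginal, and (ii) remembering to confirm that $\sigma_B\otimes U_C$ is itself a state, so that it is a valid competitor in the optimization defining $\hmin{A}{BC}$. Everything else is routine bookkeeping with the L\"owner order and the definition of $\dmax$. (Since this is a standard fact --- Lemma B.3 of \cite{DPVR09} --- one could also simply cite it; the above is the short self-contained derivation.)
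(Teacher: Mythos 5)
Your derivation is correct: the paper offers no proof of this fact (it is simply quoted as Lemma B.3 of \cite{DPVR09}), and your argument is the standard one behind that lemma --- decompose the c-q state blockwise using classicality of $C$, bound each block via $\tilde{\rho}^c_{AB}\leq\rho_{AB}\leq 2^{-k}\,\id_A\otimes\sigma_B$, and take $\sigma_B\otimes U_C$ as the witness in the optimization defining $\hmin{A}{BC}_\rho$. The only cosmetic point is the tacit assumption that the infimum defining $\hmin{A}{B}_\rho$ is attained; either note that it is a minimum (compactness of the state space) or run the argument with $\rho_{AB}\leq 2^{-k+\epsilon}\,\id_A\otimes\sigma_B$ and let $\epsilon\to 0$, which changes nothing.
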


\begin{fact}[]
\label{traceavg}
Let $\rho_{XE},\sigma_{XE}$ be two c-q states. Then,
\begin{itemize}
    \item $ \| \rho_{XE}-\sigma_{XE} \|_1 \geq   \E_{x \leftarrow \rho_X } \| \rho^x_{E}-\sigma^x_{E} \|_1. $
     \item $ \Delta_B( \rho_{XE},\sigma_{XE} ) \geq   \E_{x \leftarrow \rho_X } \Delta_B( \rho^x_{E}, \sigma^x_{E} ). $
\end{itemize}
The above inequalities are equalities iff $\rho_X = \sigma_X$.
\end{fact}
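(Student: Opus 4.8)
My plan is to argue purely structurally. First I would record that both $\rho_{XE}$ and $\sigma_{XE}$ are c--q states with classical register $X$ (\cref{def:classicalinpurestate}): writing $p(x)\defeq\Pr(X=x)$ we have $\rho_{XE}=\sum_x p(x)\,\ketbra{x}\otimes\rho^x_E$ and $\sigma_{XE}=\sum_x p(x)\,\ketbra{x}\otimes\sigma^x_E$, and I will use throughout that the two states share the classical marginal, $\rho_X=\sigma_X$, which is exactly what the $\rho_X$-expectation on the right-hand side presupposes (it pairs $\rho^x_E$ with $\sigma^x_E$). The elementary facts I would lean on are that, since the projectors $\{\ketbra{x}\}_x$ are mutually orthogonal, an operator of the form $\sum_x\ketbra{x}\otimes A_x$ satisfies $\bigl\|\sum_x\ketbra{x}\otimes A_x\bigr\|_1=\sum_x\|A_x\|_1$, and that the operator square root and operator products also respect this block decomposition. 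The plan is then to push both the trace norm and the fidelity through the blocks, reducing each side to a $\rho_X$-average of per-$x$ quantities, and to close the Bures inequality with a single convexity step.

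For the trace-norm bound I would write $\rho_{XE}-\sigma_{XE}=\sum_x\ketbra{x}\otimes p(x)(\rho^x_E-\sigma^x_E)$ and apply block-additivity of $\|\cdot\|_1$, obtaining $\|\rho_{XE}-\sigma_{XE}\|_1=\sum_x p(x)\,\|\rho^x_E-\sigma^x_E\|_1=\E_{x\leftarrow\rho_X}\|\rho^x_E-\sigma^x_E\|_1$. This is in particular the claimed $\geq$, and it is in fact an equality; notably, no measurement or variational argument is needed here.

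For the Bures bound I would start from $\F(\alpha,\beta)=\|\sqrt{\alpha}\sqrt{\beta}\|_1$ and $\Delta_B(\alpha,\beta)=\sqrt{1-\F(\alpha,\beta)}$ (\cref{def:infoquant}). Orthogonality of the $\ketbra{x}$ gives $\sqrt{\rho_{XE}}=\sum_x\sqrt{p(x)}\,\ketbra{x}\otimes\sqrt{\rho^x_E}$, and likewise for $\sigma$; multiplying, $\sqrt{\rho_{XE}}\sqrt{\sigma_{XE}}=\sum_x p(x)\,\ketbra{x}\otimes\sqrt{\rho^x_E}\sqrt{\sigma^x_E}$, which is again block-diagonal, so $\F(\rho_{XE},\sigma_{XE})=\sum_x p(x)\,\|\sqrt{\rho^x_E}\sqrt{\sigma^x_E}\|_1=\E_{x\leftarrow\rho_X}\F(\rho^x_E,\sigma^x_E)$. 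Then, using $\sum_x p(x)=1$,
\[
\Delta_B(\rho_{XE},\sigma_{XE})^2=1-\E_{x\leftarrow\rho_X}\F(\rho^x_E,\sigma^x_E)=\E_{x\leftarrow\rho_X}\bigl(1-\F(\rho^x_E,\sigma^x_E)\bigr)=\E_{x\leftarrow\rho_X}\Delta_B(\rho^x_E,\sigma^x_E)^2 .
\]
Finally I would invoke Cauchy--Schwarz (equivalently, nonnegativity of the variance), $\E_{x\leftarrow\rho_X}\Delta_B(\rho^x_E,\sigma^x_E)^2\ge\bigl(\E_{x\leftarrow\rho_X}\Delta_B(\rho^x_E,\sigma^x_E)\bigr)^2$, and take nonnegative square roots to get $\Delta_B(\rho_{XE},\sigma_{XE})\ge\E_{x\leftarrow\rho_X}\Delta_B(\rho^x_E,\sigma^x_E)$.

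There is no genuinely hard step here — the whole argument is block-diagonal bookkeeping, and the only inequality incurred anywhere is the convexity step in the Bures case. Tracking that step gives the equality characterization I would record alongside: the trace-norm relation is an equality as soon as $\rho_X=\sigma_X$ (consistent with the remark in the statement), while the Bures relation becomes an equality precisely when in addition $x\mapsto\Delta_B(\rho^x_E,\sigma^x_E)$ is constant on $\supp(\rho_X)$. The one place to stay careful is that both the identity $\sqrt{\rho_{XE}}=\sum_x\sqrt{p(x)}\ketbra{x}\otimes\sqrt{\rho^x_E}$ and the block-additivity of $\|\cdot\|_1$ genuinely rely on orthogonality of $\{\ketbra{x}\}_x$, and that it is the coincidence of the classical marginals that lets the right-hand sides read as $\rho_X$-expectations; with those in hand everything else is immediate.
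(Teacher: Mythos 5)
Your block-diagonal computations are all correct, and they constitute the standard proof of this fact (the paper gives no proof of \cref{traceavg}, deferring to textbooks, so there is no "paper route" to compare against): orthogonality of the $\ketbra{x}$ gives $\|\rho_{XE}-\sigma_{XE}\|_1=\sum_x p(x)\|\rho^x_E-\sigma^x_E\|_1$ and $\F(\rho_{XE},\sigma_{XE})=\sum_x p(x)\F(\rho^x_E,\sigma^x_E)$ once the classical marginals agree, and the only genuine inequality is the Jensen step $\E[\Delta_B^2]\ge(\E[\Delta_B])^2$. Within the regime you work in, the argument is complete.

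The issue is the mismatch between what you prove and what the fact asserts. Your standing assumption $\rho_X=\sigma_X$ is not "presupposed by the notation": the conditional states $\sigma^x_E$ are perfectly well defined whenever $\sigma_X(x)>0$, and the closing clause "equalities iff $\rho_X=\sigma_X$" only makes sense if unequal marginals are in scope. Moreover the assumption is doing real mathematical work, not bookkeeping — without it the first inequality is actually \emph{false}. Take $X\in\{1,2\}$ with $\rho_X=(0.9,0.1)$, $\sigma_X=(0.1,0.9)$, $\rho^1_E=\ketbra{0}$, $\sigma^1_E=\ketbra{1}$, $\rho^2_E=\ketbra{0}$, $\sigma^2_E=\ketbra{+}$: then $\|\rho_{XE}-\sigma_{XE}\|_1=1+\|0.1\ketbra{0}-0.9\ketbra{+}\|_1\approx 1.91$, while $\E_{x\leftarrow\rho_X}\|\rho^x_E-\sigma^x_E\|_1=0.9\cdot 2+0.1\sqrt{2}\approx 1.94$. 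So you should state explicitly that you are proving the fact only in the equal-marginal form — which is, to be fair, the only form the paper ever invokes (e.g.\ in \cref{thm:2nmext} it is applied to $\Gamma_{ZGG'M}$ versus $U_h\otimes\Gamma_{GG'M}$, which share the marginal on $GG'$) — rather than presenting the restriction as a harmless reading of the notation. Relatedly, your own derivation shows the "iff" clause is too strong for the Bures bullet: with equal marginals $\Delta_B(\rho_{XE},\sigma_{XE})=\bigl(\E_x\Delta_B^2(\rho^x_E,\sigma^x_E)\bigr)^{1/2}$, which strictly exceeds $\E_x\Delta_B(\rho^x_E,\sigma^x_E)$ unless $x\mapsto\Delta_B(\rho^x_E,\sigma^x_E)$ is constant on $\supp(\rho_X)$ (two equiprobable symbols with $\rho^1_E=\sigma^1_E$ and $\rho^2_E\neq\sigma^2_E$ give $c/\sqrt{2}>c/2$). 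You record this as a refinement, but it is in fact a correction to the statement, and it should be flagged as such.
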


\begin{fact}[\cite{FvdG06}]
\label{fidelty_trace}
Let $\rho,\sigma$ be two states. Then,
\[  1-\F(\rho,\sigma) \leq \Delta(\rho , \sigma) \leq \sqrt{ 1-\F^2(\rho,\sigma)} \quad ; \quad \Delta_B^2(\rho,\sigma) \leq \Delta(\rho , \sigma) \leq  \sqrt{2}\Delta_B(\rho,\sigma).  \]

\end{fact}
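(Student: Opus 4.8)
The plan is to reduce the whole statement to the single chain
$1-\F(\rho,\sigma)\le\Delta(\rho,\sigma)\le\sqrt{1-\F^{2}(\rho,\sigma)}$
and then read off the Bures-metric inequalities for free. Indeed, since $\Delta_B(\rho,\sigma)=\sqrt{1-\F(\rho,\sigma)}$ by definition, the left inequality $\Delta_B^{2}(\rho,\sigma)\le\Delta(\rho,\sigma)$ is literally $1-\F\le\Delta$, while the right inequality follows from $\sqrt{1-\F^{2}}=\sqrt{(1-\F)(1+\F)}\le\sqrt{2(1-\F)}=\sqrt{2}\,\Delta_B$. So it suffices to prove the two inequalities relating $\Delta$ and $\F$.

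For the upper bound $\Delta(\rho,\sigma)\le\sqrt{1-\F^{2}(\rho,\sigma)}$, I would invoke Uhlmann's theorem (\cref{uhlmann}): pick purifications $\ket{\psi}_{AB}$ of $\rho_A$ and $\ket{\phi}_{AB}$ of $\sigma_A$ on a common space with $\vert\braket{\psi}{\phi}\vert=\F(\rho,\sigma)$. By monotonicity of trace distance under the partial trace (a CPTP map), $\Delta(\rho_A,\sigma_A)\le\Delta(\ketbra{\psi}_{AB},\ketbra{\phi}_{AB})$, and for two pure states a direct computation in $\mathrm{span}\{\ket{\psi},\ket{\phi}\}$ gives $\Delta(\ketbra{\psi},\ketbra{\phi})=\sqrt{1-\vert\braket{\psi}{\phi}\vert^{2}}=\sqrt{1-\F^{2}(\rho,\sigma)}$.

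For the lower bound $1-\F(\rho,\sigma)\le\Delta(\rho,\sigma)$, the cleanest route is the Fuchs--Caves variational characterization $\F(\rho,\sigma)=\min_{\{E_x\}}\sum_x\sqrt{\tr(E_x\rho)\,\tr(E_x\sigma)}$, the minimum being over POVMs. Fixing a near-optimal POVM and writing $p_x=\tr(E_x\rho)$, $q_x=\tr(E_x\sigma)$, one has $\sqrt{p_x q_x}\ge\min(p_x,q_x)$, hence $\F(\rho,\sigma)=\sum_x\sqrt{p_x q_x}\ge\sum_x\min(p_x,q_x)=1-\tfrac12\sum_x\vert p_x-q_x\vert\ge1-\Delta(\rho,\sigma)$, where the last step uses that the total-variation distance of measurement statistics never exceeds the trace distance. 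An alternative that avoids Fuchs--Caves is the Powers--St\o rmer inequality $\Vert\sqrt\rho-\sqrt\sigma\Vert_2^{2}\le\Vert\rho-\sigma\Vert_1$: combined with $\F(\rho,\sigma)=\Vert\sqrt\rho\sqrt\sigma\Vert_1\ge\tr(\sqrt\rho\sqrt\sigma)=1-\tfrac12\Vert\sqrt\rho-\sqrt\sigma\Vert_2^{2}$ it again yields $\F\ge1-\Delta$.

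The main obstacle is the lower bound: the upper bound is essentially immediate from Uhlmann's theorem together with the elementary pure-state computation, whereas $1-\F\le\Delta$ genuinely requires one substantive inequality (Fuchs--Caves or Powers--St\o rmer), and care is needed to land exactly on $1-\F$ on the left-hand side rather than a weaker bound such as $1-\F^{2}$. Everything else is bookkeeping, and the Bures-metric form of the statement then follows by the algebraic manipulation in the first paragraph.
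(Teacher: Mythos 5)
Your proposal is correct, but note that the paper itself offers no proof to compare against: this statement is imported verbatim as a fact with a citation to Fuchs and van de Graaf, so anything you write is supplementary. What you give is essentially the standard textbook derivation, and it checks out against the paper's conventions. The reduction of the Bures-metric chain to the fidelity chain is legitimate because the paper defines $\Delta_B(\rho,\sigma)=\sqrt{1-\F(\rho,\sigma)}$, so $\Delta_B^2\le\Delta$ is literally $1-\F\le\Delta$, and $\sqrt{1-\F^2}\le\sqrt{2}\,\Delta_B$ follows from $1+\F\le 2$. For the upper bound, the paper's \cref{uhlmann} is phrased in Bures form, but it is equivalent to the fidelity-achieving-purification form you invoke, and combining it with monotonicity of $\Delta$ under partial trace (\cref{fact:data}) plus the two-dimensional pure-state computation $\Delta(\ketbra{\psi},\ketbra{\phi})=\sqrt{1-\vert\braket{\psi}{\phi}\vert^2}$ gives $\Delta\le\sqrt{1-\F^2}$ as claimed. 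For the lower bound, either route you sketch works; in the Fuchs--Caves argument you do not even need a near-optimal POVM, since $\sum_x\sqrt{p_xq_x}\ge\sum_x\min(p_x,q_x)\ge 1-\Delta(\rho,\sigma)$ holds for every POVM and hence for the minimum, and the Powers--St{\o}rmer alternative is likewise sound given $\F(\rho,\sigma)=\|\sqrt{\rho}\sqrt{\sigma}\|_1\ge\tr(\sqrt{\rho}\sqrt{\sigma})$. The only caveat is that both substantive ingredients (Fuchs--Caves or Powers--St{\o}rmer) are external results not stated in the paper, so as written your proof trades one citation for another; that is perfectly acceptable for a fact of this kind.
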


\subsection*{Extractors}
Throughout this section we use extractor to mean seeded extractor unless stated otherwise. 
\begin{definition}[Quantum secure extractor]
\label{qseeded}
	An $(n,d,m)$-extractor $\Ext : \{0,1\}^n \times \{0,1\}^d \to \{0,1\}^m$  is said to be $(k,\eps)$-quantum secure if for every state $\rho_{XES}$, such that $\Hmin(X|E)_\rho \geq k$ and $\rho_{XES} = \rho_{XE} \otimes U_d$, we have 
	$$  \| \rho_{\Ext(X,S)E} - U_m \otimes \rho_{E} \|_1 \leq \eps.$$
	In addition, the extractor is called strong if $$  \| \rho_{\Ext(X,S)SE} - U_m \otimes U_d \otimes \rho_{E} \|_1 \leq \eps .$$
	$S$ is referred to as the {\em seed} for the extractor.
	\end{definition}

\begin{fact}[Corollary 5.2 in~\cite{DPVR09}]
    \label{fact:extractor} There exists an explicit $(m+5\log{(\frac{1}{\eps})},\eps)$-quantum secure strong $(n,d,m)$-extractor $\Ext : \{ 0,1\}^n \times  \{ 0,1\}^d \to  \{ 0,1\}^m$ for parameters  $d = \cO( \log^2(\frac{n}{\eps}) \log m )$.
\end{fact}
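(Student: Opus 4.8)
The plan is to instantiate Trevisan's extractor and to invoke the quantum-security analysis of De, Portmann, Vidick, and Renner~\cite{DPVR09}: their result shows that Trevisan's construction remains secure against quantum side information with essentially the same parameters as in the classical case, and the statement here is precisely Corollary~5.2 of that work, so in the paper it suffices to cite it. For completeness, here is the shape of the argument I would give.

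First, build a one-bit extractor $C : \{0,1\}^n \times \{0,1\}^{\bar d} \to \{0,1\}$ from a binary list-decodable code (for instance Reed--Solomon concatenated with Hadamard), setting $C(x,y)$ to be the $y$-th bit of the codeword of $x$, with block length $2^{\bar d} = \poly(n/\eps)$, i.e.\ $\bar d = \cO(\log(n/\eps))$. The crucial lemma is that $C$ is quantum-proof: whenever $\Hmin(X|E) \ge 1 + 2\log(1/\eps_0)$, the bit $C(X, U_{\bar d})$ is $\eps_0$-close to uniform given $(E, U_{\bar d})$. This is the quantum list-decoding (Goldreich--Levin) statement: a quantum distinguisher for the output bit yields, via the operational meaning of trace distance (\cref{fidelty_trace}) together with a pretty-good-measurement argument, a predictor which, aided by $\cO(\log(1/\eps_0))$ classical advice bits, reconstructs $X$ from $E$, contradicting the min-entropy bound.

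Second, compose $m$ copies of $C$ through a weak design: a family $S_1, \dots, S_m \subseteq [d]$ of $\bar d$-sized sets with $\sum_{j<i} 2^{|S_i \cap S_j|} \le m$ for every $i$, which exists with $d = \cO(\bar d^2 \log m)$. Trevisan's extractor is $\Ext(x,y) = C(x, y_{S_1}) \,\|\, \cdots \,\|\, C(x, y_{S_m})$. The analysis is the reconstruction paradigm lifted to quantum side information: if $\Ext(X,Y)$ were $\eps$-distinguishable from uniform given $E$, a hybrid argument over the $m$ output bits produces an index $i$ and a predictor for $C(X, Y_{S_i})$ from $E$ together with $Y_{[d]\setminus S_i}$ and the remaining output bits; since each remaining bit $C(X, Y_{S_j})$ depends on $y_{S_i}$ only through the $|S_i \cap S_j|$ coordinates $y_{S_i \cap S_j}$, the side information beyond $E$ has length at most $\log\big(\sum_{j \ne i} 2^{|S_i \cap S_j|}\big) + \cO(1) \le \log m + \cO(1)$, small enough to fold into the advice of the one-bit extractor. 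By the converse of Step~1 this can only happen when $\Hmin(X|E) \le m + \cO(\log(m/\eps))$; since in our regime $\eps = 2^{-n^{\Omega(1)}}$ the $\log m$ term is dominated, so $k = m + 5\log(1/\eps)$ suffices to reach a contradiction, giving the extractor property. Strongness is immediate because the whole argument is conditioned on the seed. Collecting parameters, $\bar d = \cO(\log(n/\eps))$, $d = \cO(\bar d^2 \log m) = \cO(\log^2(n/\eps)\log m)$, and entropy loss $k - m = \cO(\log(1/\eps))$, matching the claimed bounds.

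The main obstacle is the quantum part of Step~2: unlike classical side information, $E$ cannot be copied from one hybrid step to the next, so the predictor-compression and XOR-lemma manipulations must be phrased directly in terms of trace distance and the Holevo--Helstrom bound rather than via conditioning on $E$. This is exactly the technical content of~\cite{DPVR09}; I would reproduce their argument with only cosmetic changes to constants.
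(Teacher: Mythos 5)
Your proposal matches the paper exactly: this fact is imported verbatim as Corollary~5.2 of~\cite{DPVR09} with no proof given, and citing that work is all the paper does. Your sketch of the underlying argument (Trevisan's construction from a one-bit quantum-proof extractor composed via a weak design, analyzed through the quantum reconstruction paradigm) correctly reflects how the cited result is actually proved, so there is nothing to add beyond the citation.
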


\begin{definition}[$(k_1,k_2)\mhyphen\qpas$~\cite{BJK21}]\label{qmadvk1k2}
We call a pure state $\sigma_{X\hat{X}NMY\hat{Y}}$, with $(XY)$ classical and $(\hat{X}\hat{Y})$ copy of $(XY)$,  a  $(k_1,k_2)\mhyphen\qpas$ iff 
\[ \hmin{X}{MY\hat{Y}}_\sigma \geq k_1 \quad ; \quad \hmin{Y}{NX\hat{X}}_\sigma \geq k_2.\]
\end{definition}
\begin{definition}[$(k_1,k_2)\mhyphen\nmas$~\cite{BJK21}]\label{def:2source-qnmadversarydef}
     Let $\sigma_{X\hat{X}NMY\hat{Y}}$ be a $(k_1,k_2)\mhyphen\qpas$. Let $U: \cH_X \otimes \cH_N \rightarrow \cH_X \otimes \cH_{X^\prime} \otimes  \cH_{\hat{X}'} \otimes \cH_{N^\prime}$ and $V: \cH_Y \otimes \cH_M \rightarrow \cH_Y \otimes \cH_{Y'} \otimes  \cH_{\hat{Y}'} \otimes \cH_{M'}$ be isometries  such that for $\rho = (U \otimes V)\sigma(U \otimes V)^\dagger,$ we have $(X'Y')$ classical (with copy $\hat{X}'\hat{Y}'$) and, 
       $$\Pr(Y \ne Y^\prime)_\rho =1 \quad or \quad \Pr(X \ne X^\prime)_\rho =1.$$ 
       We call state $\rho$ a $(k_1,k_2)\mhyphen\nmas$.
\end{definition}
\begin{definition}[Quantum secure $2$-source non-malleable extractor~\cite{BJK21}]\label{def:2nme}
		An $(n_1,n_2,m)$-non-malleable extractor $2\nmext : \{0,1\}^{n_1} \times \{0,1\}^{n_2} \to \{0,1\}^m$ is $(k_1,k_2,\eps)$-secure against $\nma$ if for every $(k_1,k_2)\mhyphen\nmas$ $\rho$ (chosen by the adversary $\nma$),
	$$  \Vert \rho_{ 2\nmext(X,Y)2\nmext(X^\prime,Y^\prime) Y  Y^\prime M^\prime} - U_m \otimes \rho_{ 2\nmext(X^\prime ,Y^\prime) Y  Y^\prime M^\prime} \Vert_1 \leq \eps. $$

\end{definition}

\begin{fact}[$\IP$ security against $(k_1,k_2)$-$\qpas$~\cite{ABJO21}] \label{l-qma-needed-fact1} Let $n=\frac{n_1}{m}$ and $k_1+k_2 \geq n_1+m+40+8 \log\left(\frac{1}{\eps}\right)$. Let $\sigma_{X \hat{X} N' Y \hat{Y} M'}$ be a $(k_1,k_2)\mhyphen\qpas$ with $\vert X \vert = \vert Y\vert = n_1$. Let $Z=\IP^n_{2^m}(X,Y)$. Then
\[\Vert \sigma_{ZXN'} - U_{m} \otimes \sigma_{XN'}  \Vert_1 \leq \eps \quad ; \quad \Vert \sigma_{ZYM'} - U_{m} \otimes \sigma_{YM'}  \Vert_1 \leq \eps.\]
\end{fact}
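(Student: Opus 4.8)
\medskip
\noindent\emph{Proof plan.}
This is the statement that the inner product $\IP^n_{2^m}$ is a two-source extractor secure against quantum side information on both sources (a $(k_1,k_2)\mhyphen\qpas$), and the plan is to run the classical Chor--Goldreich Fourier-analytic argument with each classical estimate replaced by its one-shot quantum analogue. It suffices to establish the first inequality $\Vert\sigma_{ZXN'}-U_m\otimes\sigma_{XN'}\Vert_1\le\eps$; the second follows verbatim after exchanging the roles of the two sources (and of the registers $N'$ and $M'$), since the setup and the hypotheses are symmetric in $X$ and $Y$.

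First I would pass from trace distance to a weighted Hilbert--Schmidt (``collision'') estimate, as is standard for extractors against quantum side information. Expanding the output register $Z\in\F_{2^m}$ in the additive characters $\{\chi_a\}_{a\in\F_{2^m}}$ and combining the weighted $\Vert\cdot\Vert_1\to\Vert\cdot\Vert_2$ inequality for classical-quantum states with Parseval gives
\[ \Vert\sigma_{ZXN'}-U_m\otimes\sigma_{XN'}\Vert_1 \ \leq\ \sqrt{2^m}\cdot\max_{a\neq 0}\ \bigl\Vert\,\widehat{\sigma}(a)\,\bigr\Vert , \]
where $\widehat{\sigma}(a)$ is the ``bias operator'' on $\cH_{XN'}$ obtained by averaging $\chi_a\!\bigl(\IP^n_{2^m}(X,Y)\bigr)$ against $\sigma$, and $\Vert\cdot\Vert$ denotes the $\sigma_{XN'}$-weighted $2$-norm. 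So it is enough to bound $\Vert\widehat{\sigma}(a)\Vert$ for each nonzero $a$.

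Next, fix $a\neq 0$. Since $X$ is classical I would ``peel it off'', writing $\widehat{\sigma}(a)$ in block form over the values of $x$ and then applying an operator Cauchy--Schwarz, mirroring the classical identity $|\mathrm{bias}|^2\le\mathrm{CP}(X)\cdot 2^{n_1}\cdot\mathrm{CP}(Y)$ with $n_1=\vert X\vert=nm$. The two quantities playing the role of the collision probabilities are supplied by the two hypotheses: $\hmin{X}{MY\hat{Y}}_\sigma\ge k_1$ controls the $X$-side by $2^{-k_1}$ (via $\sigma_{X\cdots}\preceq 2^{-k_1}\,\id_X\otimes(\cdot)$), and $\hmin{Y}{NX\hat{X}}_\sigma\ge k_2$ controls the $Y$-side by $2^{-k_2}$, after folding the auxiliary registers into the conditioning systems as needed; the factor $2^{n_1}$ comes from orthogonality of the characters of $\F_{2^m}^n$, which (because $a\neq 0$) collapses the sum over $x$ of the phases $\chi_a(\langle x,\,\cdot\,\rangle)$ to a single diagonal contribution. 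This yields $\Vert\widehat{\sigma}(a)\Vert^2\lesssim 2^{\,n_1-k_1-k_2}$, uniformly in $a\neq 0$, up to lower-order slack from the one-shot passages.

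Substituting back gives $\Vert\sigma_{ZXN'}-U_m\otimes\sigma_{XN'}\Vert_1\lesssim\sqrt{2^m}\cdot 2^{(n_1-k_1-k_2)/2}=2^{(m+n_1-k_1-k_2)/2}$, which by the hypothesis $k_1+k_2\ge n_1+m+40+8\log(1/\eps)$ is at most $\eps$ with a wide margin; the additive $40$ and the coefficient $8$ (instead of the $2$ that appears in the clean classical bound) are there precisely to absorb the square-root losses of the two quantum steps, namely the weighted $\Vert\cdot\Vert_1\to\Vert\cdot\Vert_2$ bound and the operator Cauchy--Schwarz. The step I expect to be the main obstacle is that operator Cauchy--Schwarz: classically it is a one-line estimate, but here the side register $M'$ is correlated with $Y$ while we condition on $X$ and $N'$, so the peeling of $X$ and the subsequent splitting must be organized so that each min-entropy hypothesis is applied against exactly the right conditioning system and in the right operator order, and one must verify that the character cancellation genuinely survives the presence of the quantum registers. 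An alternative that sidesteps this bookkeeping is to derive the claim from general ``Markov-model'' two-source extractor machinery (one-sided quantum-proof extraction bootstrapped by alternating extraction), but the direct Fourier argument above is the most transparent and yields exactly the stated parameters.
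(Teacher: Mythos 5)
A point of context first: the paper does not prove this statement at all. It is imported as a Fact from~\cite{ABJO21}, and the remark immediately following it says explicitly that the proof appears in the updated version of that reference. So there is no in-paper argument for your plan to be compared against; what you are proposing is a from-scratch reproof of an external result, and it has to be judged on its own.

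On the merits, there is a genuine gap, and it sits exactly at the step you yourself flag as ``the main obstacle.'' The weighted $\Vert\cdot\Vert_1\to\Vert\cdot\Vert_2$ passage and the character expansion are standard, but the two-sided Cauchy--Schwarz splitting $\Vert\widehat{\sigma}(a)\Vert^2\lesssim 2^{-k_1}\cdot 2^{n_1}\cdot 2^{-k_2}$ is not bookkeeping in this model. A $(k_1,k_2)\mhyphen\qpas$ is a single pure state in which $N'$ and $M'$ may be arbitrarily entangled and $X$ and $Y$ need not be independent; they are only guaranteed high min-entropy conditioned on the \emph{opposite} side. The classical factorization of the bias into an $X$-side and a $Y$-side collision probability uses that each source together with its side information is in product form with the other; with shared entanglement this factorization fails, which is precisely why known direct Fourier-analytic treatments of entangled adversaries (Kasher and Kempe) only go through in the bounded-storage setting and with degraded parameters. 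Concretely, the hypotheses give operator inequalities of the form $\sigma_{XM'Y\hat{Y}}\le 2^{-k_1}\,\id_X\otimes\omega_{M'Y\hat{Y}}$ and $\sigma_{YN'X\hat{X}}\le 2^{-k_2}\,\id_Y\otimes\omega'_{N'X\hat{X}}$, whose conditioning systems overlap with but do not match the registers of your bias operator on $\cH_{XN'}$; after peeling off $X$, the residual operator still carries the $(x,y)$-dependent entangled states on $N'M'$, so neither inequality can be inserted ``in the middle'' while leaving the other side intact. Without a concrete mechanism here, the claimed bound $2^{(m+n_1-k_1-k_2)/2}$ is unsubstantiated (the final arithmetic against $k_1+k_2\ge n_1+m+40+8\log(1/\eps)$, and the symmetry reduction for the second inequality, are fine but rest on this unproven core). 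The actual proof in~\cite{ABJO21} is a substantially longer argument exploiting the purity of the state and the classical copies $\hat{X},\hat{Y}$, not a direct Chor--Goldreich adaptation; and your fallback via Markov-model machinery does not obviously apply either, since a $\qpas$ is not a quantum Markov state.
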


\begin{remark} The proof of~\cref{l-qma-needed-fact1} is provided in the updated version of~\cite{ABJO21}.
\end{remark}

\subsection*{Error correcting codes} 
 
 \begin{definition}\label{def:ecc}
     Let $\Sigma$ be a finite set. A mapping $\ecc: \Sigma^k \to \Sigma^n$ is called an error correcting code with relative distance $\gamma$ if for any $x,y \in \Sigma^k$ such that $x \ne y,$ the Hamming distance between $\ecc(x)$ and $\ecc(y)$ is at least $\gamma n.$ The rate of the code denoted by $\delta$, is defined as $\delta \defeq \frac{k}{n}$. The alphabet size of the code is the number of elements in $\Sigma.$
 \end{definition}
 \begin{fact}[\cite{GS95}]\label{fact:ecc}
      Let $p$ be a prime number and $m$ be an even integer. Set $q=p^m$. For every $\delta \in (0,1)$ and for any large enough integer $n$ there exists an efficiently computable linear error correcting code $\ecc: \F^{\delta n}_q \to \F^{ n}_q $ with rate $\delta$ and relative distance $1-\gamma$ such that $$  \delta + \frac{1}{\sqrt{q}-1}\geq \gamma.$$
 \end{fact}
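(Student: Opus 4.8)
The plan is to instantiate the statement with algebraic--geometry (Goppa) evaluation codes built on top of an explicit optimal tower of function fields, and to check that the resulting rate/relative-distance trade-off is exactly the Tsfasman--Vladut--Zink bound $\delta + (1-\gamma) \ge 1 - \tfrac{1}{\sqrt q - 1}$. The hypothesis that $q = p^m$ with $m$ even is precisely what makes $\sqrt q = p^{m/2}$ a prime power, so the Garcia--Stichtenoth tower over $\F_q$ applies: there is an explicit sequence of function fields $F_1 \subseteq F_2 \subseteq \cdots$ over $\F_q$ with the number of rational places $N(F_\ell) \to \infty$ and genus $g(F_\ell)$ satisfying $N(F_\ell)/g(F_\ell) \to \sqrt q - 1$; in particular, for every slack $\eta > 0$ one can pick a level $\ell$ with $g(F_\ell) \le \bigl(\tfrac{1}{\sqrt q - 1} + \eta\bigr) N(F_\ell)$, and the field, its rational places, and its divisors all admit explicit descriptions.

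First I would fix the target block length $n$, pick a level $\ell$ with $N := N(F_\ell) \ge n$ and $g := g(F_\ell)$ small relative to $N$ as above, select $n$ of the rational places $P_1,\dots,P_n$, and form the divisor $D = P_1 + \cdots + P_n$ (using a standard shortening step, or an intermediate field, to pass from a natural length-$N$ construction down to length exactly $n$ for every sufficiently large $n$; shortening only improves the relative distance, and the genus-to-length ratio is preserved up to the slack $\eta$, which is absorbed into ``$n$ large enough''). Next I would choose a divisor $G$ of degree $a := \delta n + g - 1$ whose support is disjoint from $D$, and take $\ecc$ to be the evaluation code $C_{\mathcal L}(D,G)$. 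By Riemann--Roch, for $n$ large we have $a < \deg D = n$, so $\ell(G-D) = 0$ and $\dim C_{\mathcal L}(D,G) = \ell(G) \ge a - g + 1 = \delta n$; when $a \ge 2g-1$ this dimension is exactly $\delta n$, and when $\delta$ is so small that $\delta n < g$ I would instead pass to a linear subcode of dimension exactly $\delta n$ (which cannot decrease the minimum distance). Either way the rate is $\delta$, the code is $\F_q$-linear by construction, and the minimum distance is at least $\deg D - \deg G = n - a = (1-\delta)n - g + 1$, so the relative distance is $1-\gamma$ with $\gamma = \delta + (g-1)/n \le \delta + \tfrac{1}{\sqrt q - 1} + \eta$, which is the claimed inequality once $\eta$ is chosen below the permitted slack.

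For efficiency I would invoke the known polynomial-time algorithms that, given a level of the Garcia--Stichtenoth tower, compute a basis of the Riemann--Roch space $\mathcal L(G)$ and hence a generator matrix of $C_{\mathcal L}(D,G)$ in time polynomial in $n$; the places $P_i$ and the divisor $G$ are described explicitly at each level, so the whole family is efficiently computable.

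The step I expect to be the main obstacle is not the coding-theoretic bookkeeping (Riemann--Roch and the distance bound $d \ge \deg D - \deg G$ are standard) but rather two more delicate points: (i) the explicit, polynomial-time computation of the Riemann--Roch space / generator matrix on the tower, which is where one must cite the algorithmic refinement of \cite{GS95}-type constructions; and (ii) the precise parameter matching, namely that for \emph{every} sufficiently large $n$ (not merely the sparse sequence $n = N(F_\ell)$) one can find a level and a point set of size exactly $n$ with $g/n$ small enough that $\gamma \le \delta + \tfrac{1}{\sqrt q - 1}$ holds on the nose rather than only up to $o(1)$ --- this is precisely where one leans on the explicit genus estimates for the Garcia--Stichtenoth tower together with the shortening trick to convert a slightly sub-optimal ratio into the exact stated inequality.
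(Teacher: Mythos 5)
The paper states this as an imported fact, citing \cite{GS95} without giving any proof, and your proposal is precisely the standard argument behind that citation: evaluation (Goppa) codes on the Garcia--Stichtenoth tower, with Riemann--Roch giving dimension $\ge \deg G - g + 1$ and distance $\ge n - \deg G$, so that $\gamma \le \delta + (g-1)/n \le \delta + \tfrac{1}{\sqrt q - 1}$ once the tower's genus-to-places ratio kicks in. Your flagged caveats (explicit computability of the Riemann--Roch basis, and hitting every large $n$ rather than only the lengths $N(F_\ell)$) are the genuinely delicate points, but they are resolved exactly as you describe in the literature the paper leans on, so there is no gap relative to what the paper itself establishes.
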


The following parameters hold throughout this section.
 \subsection*{Parameters}\label{sec:parameters_2nm}
Let $q$ be a prime power and $\delta, \delta_1, \delta_2 >0$ be small enough constants.   Let  $n,a,v,s,b,h$ be positive integers and
 $k, \eps', \gamma, \eps > 0$ such that:

  \[  v= \frac{n}{\eps \log q} \quad ; \quad q= \cO\left(\frac{1}{\eps^2}\right) \quad ; \quad \eps= 2^{- \cO(n^{\delta_1})} \quad ;\]
 
\[a=6k+2 \log q = \mathcal{O}(k) \quad ; \quad \gamma = \cO(\eps) \quad ; \quad  2^{\cO(a)}\sqrt{\eps'} = \eps \quad ;       \]

\[s = \cO\left(\log^2\left(\frac{n}{\eps'}\right)\log n \right)  \quad  ; \quad b = \cO\left( \log^2\left(\frac{n}{\eps'}\right) \log n \right) \quad ; \quad h = 10s \quad ;\quad k = \cO (n^{1/4}) \]
\begin{itemize}\label{sec:extparameters_2nm}
    \item $\IP_1$ be $\IP^{3k/\log v}_{v}$,
    \item $\Ext_1$ be $(b+5 \log(\frac{1}{\eps'}), \eps')$-quantum secure $(\delta n,s,b)$-extractor,
    \item $\Ext_2$ be $(s+5 \log(\frac{1}{\eps'}), \eps')$-quantum secure $(h,b,s)$-extractor,
    \item $\Ext_3$ be $(h+5 \log(\frac{1}{\eps'}), \eps')$-quantum secure $(n,b,h)$-extractor,
    \item $\Ext_4$ be $(n^{\delta_2}+5 \log(\frac{1}{\eps^2}), \eps^2)$-quantum secure $(\delta n,h,n^{\delta_2})$-extractor,
    \item $\IP_2$ be $\IP^{3k^3/h}_{2^h}$,
    \item  $\Ext_6$ be $((\frac{1}{2}-\delta) n+5 \log(\frac{1}{\eps
    ^2}), \eps^2)$-quantum secure $(n,n^{\delta_2},(\frac{1}{2}-\delta) n)$-extractor. 
 \end{itemize}
 
\subsection*{Definition of $2$-source non-malleable extractor}
 Let $\ecc : \F^d_q \to \F^v_q$ be an error correcting code with relative distance $1-\gamma$ and rate $\eps$ (which exists from~\cref{fact:ecc} for our choice of parameters). We identify $R$ as an element from $\{1, \ldots, v \}$. By $\ecc(Y)_{R}$, we mean the $R$-th entry of the codeword $\ecc(Y)$, interpreted as a bit string of length $\log q$.
\begin{algorithm}
\caption{: $2\nmre: \lbrace 0,1 \rbrace ^n\times \lbrace 0,1 \rbrace^{\delta n}   \rightarrow \lbrace 0,1 \rbrace^{(\frac{1}{2}-\delta) n}$}\label{alg:2nmExt}
\begin{algorithmic}
\State{}

\noindent \textbf{ Input:}  $X, Y$\\


\begin{enumerate}
    \item Advice generator: \[X_1=\pre(X,3k) \quad  ; \quad Y_1 = \pre(Y,3k) \quad ;
    \quad R= \IP_1(X_1,Y_1) \quad ; \quad  \]
    \[G=X_1 \circ Y_1 \circ \ecc(X)_{R} \circ \ecc(Y||0^{(1-\delta) n})_{R} \]

    \item $X_2=\pre(X,3k^3) \quad ; \quad Y_2=\pre(Y,3k^3)\quad; \quad Z_0=\IP_2(X_2,Y_2)$
    \item Correlation breaker with advice:\quad $S=2\advcb(Y,X,Z_0, G)$
    \item $L=\Ext_6(X,S)$
\end{enumerate}

 \noindent \textbf{ Output:} $L$ 
\end{algorithmic}
\end{algorithm}


\begin{algorithm}
\caption{: $2\advcb: \lbrace 0,1 \rbrace^{\delta n} \times \lbrace 0,1 \rbrace^n \times \lbrace 0,1\rbrace^h \times \lbrace 0,1 \rbrace^a \rightarrow \lbrace 0,1 \rbrace^{n^{\delta_2}}$}\label{alg:2AdvCB}
\begin{algorithmic}
\State{}

\noindent \textbf{ Input: }$Y, X, Z_0, G$

\begin{enumerate}
    \item For $i=1,2,\ldots,a:$ 
    
     \hspace{1cm}Flip flop: \quad $Z_i=2\ff(Y,X,Z_{i-1},G_i)$ 
     \item $S=\Ext_4(Y,Z_a)$

\end{enumerate}

 \noindent \textbf{ Output:} $S$ 
\end{algorithmic}
\end{algorithm} 


\begin{algorithm}
\caption{: $2\ff : \lbrace 0,1 \rbrace^{\delta n} \times \lbrace 0,1 \rbrace^n \times \lbrace 0,1 \rbrace^h \times \lbrace 0,1 \rbrace^a \rightarrow \lbrace 0,1\rbrace^h$}\label{alg:2FF}
\begin{algorithmic}
\State{}

\noindent\textbf{Input:}  $Y, X,  Z,  G$ 
\begin{enumerate}
     
    \item $Z_s=$Prefix$(Z,s)$, $A= \Ext_1 (Y,Z_s),\ C= \Ext_2(Z,A),\ B= \Ext_{1}(Y,C)$
    \item If $G=0$ then $\overline{Z}= \Ext_3(X,A)$ and if $G=1$ then $\overline{Z}= \Ext_3(X,B)$
     \item $\overline{Z}_s=$Prefix$(\overline{Z},s)$, $\overline{A}= \Ext_1 ({Y},\overline{Z}_s),\ \overline{C}= \Ext_2(\overline{Z},\overline{A}),\ \overline{B}= \Ext_{1}({Y},\overline{C})$
    \item If $G=0$, then $O=\Ext_3(X,\overline{B})$ and if $G=1$, then $O=\Ext_3(X,\overline{A})$
\end{enumerate} \noindent \textbf{Output:} $O$
\end{algorithmic}
\end{algorithm}

\subsection*{Result}

In Protocol~\ref{prot:2nmExt_full}, Alice and Bob generate new classical registers using safe isometries on old classical registers. At any stage of Protocol~\ref{prot:2nmExt_full}, we use $N$ to represent all the registers held by Alice other than the specified registers at that point. Similarly $M$ represents all the registers held by Bob other than the specified registers. At any stage of the protocol, we use $\tilde{A}, \tilde{B}$ to represent all the registers held by Alice and Bob respectively. We use the same convention for communication protocols in later sections as well.

The following theorem shows that the function $2\nmre$ as defined in \cref{alg:2nmExt} is $(n-k,\delta n-k,\cO(\eps))$-secure against $\nma$ by noting that  $L= 2\nmre(X,Y)$ and $L'=2\nmre(X^\prime,Y^\prime)$.
\begin{theorem}[Security of $2\nmre$] \label{thm:2nmext}

Let $\rho_{X\hat{X}X^\prime \hat{X}' NYY'\hat{Y}\hat{Y}'M}$ be a $(n-k,\delta n-k)\mhyphen\nmas$ with $\vert X \vert =n, \vert Y \vert = \delta n$. Let Protocol~\ref{prot:2nmExt_full} start with $\rho$. Let $\Lambda$ be the state at the end of the protocol. Then,
\[ \Vert \rho_{ L L^{\prime} Y  Y^\prime M} - U_{(\frac{1}{2}-\delta) n} \otimes \rho_{ L^\prime Y  Y^\prime M} \Vert_1 \leq d(L \vert \tilde{B})_\Lambda \leq \cO( \eps).\]
	
\end{theorem}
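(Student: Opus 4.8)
The plan is to follow the structure of the non-malleable extractor analysis in \cite{BJK21}, adapting it to the asymmetric source-length setting where $|X| = n$ and $|Y| = \delta n$. The overall strategy is to trace through \Cref{prot:2nmExt_full} (the interactive reformulation of \Cref{alg:2nmExt}), maintaining min-entropy invariants on both $X$ and $Y$ against the respective side information, and conclude that the final output $L = \Ext_6(X,S)$ is close to uniform even conditioned on $\tilde B$ (which contains $L', Y, Y', M'$). First I would recall that by \Cref{def:2source-qnmadversarydef}, without loss of generality $\Pr(Y \ne Y')_\rho = 1$ (the symmetric case $\Pr(X\ne X')=1$ is handled analogously, swapping roles with care because the sources have different lengths). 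The first milestone is the advice generator: I would show that the advice string $G = X_1 \circ Y_1 \circ \ecc(X)_R \circ \ecc(Y\|0^{(1-\delta)n})_R$ satisfies $G \ne G'$ with probability $1 - 2^{-\Omega(k)}$ over the state, using the relative-distance $1-\gamma$ property of the error-correcting code (\Cref{fact:ecc}) together with the inner-product extractor $\IP_1$ applied to the prefixes $X_1, Y_1$ — exactly as in \cite{BJK21}, except the $Y$-side codeword is padded to length $n$ so that both codewords live over $\F_q^v$.

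Next I would analyze the correlation breaker with advice $2\advcb$. The key step is to track the flip-flop iterations $Z_i = 2\ff(Y,X,Z_{i-1},G_i)$ for $i = 1,\dots,a$ and argue that after processing the first bit position where $G$ and $G'$ differ, the register $Z_i$ becomes (close to) uniform and independent of the tampered counterpart $Z_i'$ conditioned on the appropriate side information. This uses alternating extraction (\Cref{lem:2}), the min-entropy-loss-under-communication bound (\Cref{lem:minentropy}), and the structure of $2\ff$: the $G=0$ versus $G=1$ branches ensure that on a differing bit the "good" source gets refreshed while the adversary's corresponding register is forced to use stale randomness. Throughout I must carefully verify that the min-entropy budgets suffice: starting from $\hmin{X}{\cdot} \ge n-k$ and $\hmin{Y}{\cdot} \ge \delta n - k$, each round loses at most $\cO(h)$ or $\cO(s)$ bits of entropy, and with $a = \cO(k)$ rounds and $k = \cO(n^{1/4})$ the total loss is $\cO(n^{1/2})$, well within budget for both sources — this is where the asymmetry bites, since the $Y$-source only has $\delta n$ bits, so I would double-check that $\Ext_4: \{0,1\}^{\delta n} \times \{0,1\}^h \to \{0,1\}^{n^{\delta_2}}$ and $\Ext_1: \{0,1\}^{\delta n} \to \cdots$ have their min-entropy thresholds met after the accumulated loss. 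Once $S = \Ext_4(Y,Z_a)$ is established to be close to uniform given $\tilde B$ (which by this point includes $X$, hence certainly $X'$, $Y$, $Y'$, $M'$ and the tampered $S'$), the final step applies $\Ext_6$ with source $X$ and seed $S$: since $X$ still has min-entropy $\ge (1/2-\delta)n + 5\log(1/\eps^2)$ conditioned on everything on Bob's side, and $S$ is a near-uniform independent seed, \Cref{qseeded} (strongness) gives $d(L \mid \tilde B)_\Lambda \le \cO(\eps)$, and \Cref{fidelty_trace} converts the Bures bound into the claimed trace-distance bound $\Vert \rho_{LL'YY'M'} - U \otimes \rho_{L'YY'M'}\Vert_1 \le \cO(\eps)$.

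The main obstacle I anticipate is the bookkeeping of min-entropy across the interactive protocol in the asymmetric regime. In \cite{BJK21} both sources are $n$ bits, so the entropy budgets are symmetric and one can be somewhat cavalier; here $Y$ has only $\delta n$ bits, and every extractor touching $Y$ (the $\Ext_1$ calls inside $2\ff$, $\Ext_4$ at the end of $2\advcb$) must have its seed length and output length chosen so that (i) the required min-entropy threshold is below $\delta n - k - (\text{accumulated loss})$, and (ii) the leftover entropy after each extraction is still enough for the next stage. A second, related subtlety is ensuring the "$X$ is fresh" invariant: because $X$ is used both as the entropy source for $\Ext_3$ inside $2\ff$ and as the final source for $\Ext_6$, I need the total number of bits extracted from / leaked about $X$ before the final step to be $o(n)$ so that $\hmin{X}{\tilde B} \ge (1/2-\delta)n + \cO(\log(1/\eps))$ survives — this forces $n^{\delta_2}$ (the $\Ext_4$ output) and $h = 10s$ (the flip-flop width) to be genuinely sublinear, which is why the parameters are set as $s, h, b = \poly\log(n/\eps')$ and $k = \cO(n^{1/4})$. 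I would lean on \Cref{lem:minentropy}, \Cref{fact2}, and \Cref{claim:minentropydecrease} to make each of these accounting steps rigorous, and on \Cref{claim:100} / \Cref{fact:prefixminentropyfact} to handle the prefix operations $\pre(X,3k)$, $\pre(X,3k^3)$ cleanly.
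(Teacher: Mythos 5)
Your proposal takes essentially the same route as the paper's proof: it treats \cref{alg:2nmExt} as the \cite{BJK21} two-source non-malleable extractor with re-tuned (asymmetric) parameters, establishes the advice generator ($\IP_1$ plus the padded code) so that $G\ne G'$ except with negligible probability, runs the correlation breaker by iterating the flip-flop with alternating extraction (\cref{lem:2}) and min-entropy accounting under communication (\cref{lem:minentropy}), and finishes with the strong extractor $\Ext_6$ and a Bures-to-trace conversion, which is exactly the chain used in the paper. When executing it, two bookkeeping points need correcting relative to your sketch: the seed $S$ must be shown uniform against Alice's side $\tilde{A}$ (the side holding $X$), not $\tilde{B}$, and the dominant leakage about $X$ before the final extraction is the $(\frac{1}{2}-\delta)n$-bit transmission of $L'$, so the total leak is not $o(n)$ (the budget closes only because $n-k$ minus the at most $(\frac{1}{2}-0.99\delta)n$ bits of Alice-to-Bob communication still exceeds the $\Ext_6$ threshold); relatedly, the paper needs the sharper initial bound $d(Z_0\vert\tilde{B})\le 2^{-\cO(a)}\eps$ so that uniformity of the correlation-breaker seed survives the conditioning on the advice pair $(G,G')$, a step your outline leaves implicit.
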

\begin{proof} The first inequality follows from \cref{fact:data}. Most of our arguments here are similar to the case of quantum secure 2-source non-malleable extractors \cite{BJK21} with modified parameters; so we provide the proof considering modified parameters following the proof outline of \cite{BJK21}.

First note that using \cref{lem:minentropy}, throughout Protocol~\ref{prot:2nmExt_full}, $X$ and $Y$ have enough conditional min-entropy left for necessary extractions since the total communication from Alice to Bob is at most (from our choice of parameters) 
$$6k+2 \log q+ 6ah + h+ \left(\frac{1}{2}-\delta\right)n \leq   \left(\frac{1}{2}-0.99\delta\right)n$$
 and the total communication from Bob to Alice is at most $$6k +6k^3 + 2a + 6ab + 2n^{\delta_2} \leq  \cO(n^{\frac{3}{4}}).$$Thus, at any state $\varrho$ in Protocol~\ref{prot:2nmExt_full},
$\hmin{X}{\tilde{B}}_\varrho \geq n-k-\left(\frac{1}{2}-0.99\delta\right)n \geq \left(\frac{1}{2}+0.99\delta\right)n -k\geq \left(\frac{1}{2}+0.9\delta\right)n$. Similarly, $\hmin{Y}{\tilde{A}}_\varrho \geq \delta n -k-\cO(n^\frac{3}{4})\geq 0.9\delta n$.

We start with a state $\rho_{X X' N Y Y' M}$ such that $\hmin{X}{\tilde{B}}_{\rho} \geq n-k$ and $\hmin{Y}{\tilde{A}}_{\rho} \geq \delta n-k$. From \cref{fact:prefixminentropyfact}, we have,
\[\hmin{X_1}{\tilde{B}}_{\rho} \geq 3k- k =2k \quad ; \quad \hmin{Y_1}{\tilde{A}}_{\rho} \geq 3k-k=2k .\]

Now from \cref{l-qma-needed-fact1} with the below assignment of registers (and noting registers $(XX',YY' )$ are included in $(\tilde{A},\tilde{B})$ respectively),
\[(Z,X,Y, \sigma) \leftarrow (R,X_1, Y_1, \rho) \quad ; \quad (k_1,k_2,m,n_1,\eps ) \leftarrow (2k,2k, \log\left(\frac{n}{\eps \log q}\right), 3k, \eps^2) \]we have,
\[ \Delta( \rho_{RYY'}, U_R \otimes \rho_{YY'} ) \leq \cO(\eps^2) \quad ; \quad \Delta( \rho_{RXX'}, U_R \otimes \rho_{XX'} ) \leq  \cO(\eps^2). \]
Using \cref{fidelty_trace}, we get 
\[ \Delta_B( \rho_{RYY'}, U_R \otimes \rho_{YY'} ) \leq  \cO(\eps) \quad ; \quad \Delta_B( \rho_{RXX'}, U_R \otimes \rho_{XX'} ) \leq  \cO(\eps). \] \\

Let $\kappa$ be the state just before Bob sends $Y_2$. Note that till then, communication from Alice to Bob and Bob to Alice is at most $7k$ each. Hence, by \cref{lem:minentropy}, $\hmin{X}{\tilde{A}}_{\kappa} \geq n-8k$ and $\hmin{Y}{\tilde{B}}_\kappa \geq \delta n-8k$; which implies (from \cref{fact:prefixminentropyfact}), $\hmin{X_2}{\tilde{A}}_{\kappa} \geq 3k^3-8k \geq 2k^3$ and $\hmin{Y_2}{\tilde{B}}_\kappa \geq 2k^3$ respectively using \cref{fact:prefixminentropyfact}. Let $\eta$ be the state just after Alice generates $Z_0$.
Using similar argument as before involving \cref{l-qma-needed-fact1}, we can get $d(Z_0 \vert \tilde{B})_{\eta} \leq  2^{-\cO(a)}\eps$.

Let $\Phi$ be the state obtained in Protocol~\ref{prot:2nmExt_full}, just before Protocol~\ref{prot:block2} starts as a subroutine. From \cref{lemma:block12t}, we have, 
\begin{equation}\label{lemma:block1:point3}\Pr(G=G')_\Phi = \cO(\eps) .
\end{equation}
Since, $d(Z_0 \vert \tilde{B})_{\eta} \leq 2^{-\cO(a)}\eps$, using \cref{fact102}, we also have, 
\begin{equation}\label{lemma:block1:point4}
d(Z_0|\tilde{B})_ \Phi \leq 2^{-\cO(a)}\eps.
\end{equation}
 Let $\mathcal{S}_1 \defeq \{(\alpha,\alpha') ~:~ \alpha = \alpha'\}.$ 
 Note $\Pr((\alpha,\alpha') \in\mathcal{S}_1)_{ {\Phi}} \leq \cO(\eps).$ Let 
$${\Phi}^{(\alpha,\alpha')}= {\Phi}| ((G,G')=(\alpha,\alpha')),$$ and $\mathcal{S}_2 \defeq \{(\alpha,\alpha') ~ :~ \Pr((G,G')=(\alpha,\alpha'))_{{\Phi}} \leq \frac{\eps}{2^{2 \vert G \vert}}\}$.
Note $\Pr((\alpha,\alpha') \in\mathcal{S}_2)_{{\Phi}} \leq \eps$. For every $(\alpha,\alpha') \notin \mathcal{S}_2$, we have (using \cref{measuredmax} and noting that, in ${\Phi}$, a copy of $(G,G')$ is part of $\tilde{B}$),
 \begin{equation}\label{eq:phi2}
{\Phi}^{(\alpha,\alpha')}_{Y\tilde{A}} \leq   \frac{{\Phi}_{Y\tilde{A}}  }{\Pr((G,G')=(\alpha,\alpha'))_{{\Phi}}}  \leq  2^{2 \vert G \vert+ \log( \frac{1}{\eps})} \cdot {\Phi}_{Y\tilde{A}}  \enspace.
 \end{equation}
 \cref{eq:phi2} imply that for every $(\alpha,\alpha') \notin \mathcal{S}_2$, we have $$\hmin{Y}{\tilde{A}}_{{\Phi}^{(\alpha,\alpha')}} \geq \delta n - \cO(k) -2a - \log \left( \frac{1}{\eps} \right) \geq 0.99\delta n .$$Similarly, one can also note that for every $(\alpha,\alpha') \notin \mathcal{S}_2$,
  $$\hmin{X}{\tilde{B}}_{{\Phi}^{(\alpha,\alpha')}} \geq   n - \cO(k)  -2a - \log \left( \frac{1}{\eps} \right) \geq (1/2 + 0.99\delta)n . $$
  Further for every $(\alpha,\alpha') \notin \mathcal{S}_2$, using \cref{traceavg1} and  \cref{lemma:block1:point4}, we get 
  \[ d(Z_0|\tilde{B})_{\Phi^{(\alpha, \alpha')}}  \leq 2^{2 \vert G \vert+ \log( \frac{1}{\eps})} d(Z_0|\tilde{B})_{\Phi^{}} \leq \cO(\eps). \]

  Let $\mathcal{S} \defeq  \mathcal{S}_1 \cup \mathcal{S}_2$. From the union bound, $\Pr((\alpha,\alpha') \in\mathcal{S})_{{\Phi}} \leq \cO(\eps).$ 
 
 Let ${\Gamma}^{(\alpha,\alpha')}, {\Gamma}$ be the joint states at the end of the Protocol~\ref{prot:block2} (for $i=a$) when starting with the states  ${\Phi}^{(\alpha,\alpha')} ,{\Phi}$ respectively. From \cref{thm:advcb}, we have for every $(\alpha,\alpha') \notin \mathcal{S}$, 
\begin{equation}\label{mainthm:e3}
d(Z|\tilde{B})_{{\Gamma}^{(\alpha,\alpha')}}  \leq  \cO(\eps).
 \end{equation}

 Consider (register $Z$ is held by Alice and $\tilde{B} = GG'M$ in the state ${\Gamma}$),  
  \begin{align*}
      d(Z|\tilde{B})_{{\Gamma}} = \Delta_B ( {\Gamma}_{ZGG'M} , U_{h} \otimes {\Gamma}_{GG'M}) &= \E_{(\alpha,\alpha') \leftarrow (G,G')} \Delta_B ( {\Gamma}^{(\alpha,\alpha')}_{Z\tilde{B}} , U_{h} \otimes {\Gamma}^{(\alpha,\alpha')}_{\tilde{B}})  \\
      & \leq \Pr((\alpha,\alpha') \notin \mathcal{S})_{{\Gamma}} \cdot  \cO(\eps) + \Pr((\alpha,\alpha') \in \mathcal{S})_{{\Gamma}}  \\
      & \leq \cO(\eps),
  \end{align*}
  where the first equality follows from \cref{traceavg}, the first inequality follows from \cref{mainthm:e3} and the last inequality follows since $\Pr((\alpha,\alpha') \in\mathcal{S})_{{\Gamma}}=\Pr((\alpha,\alpha') \in\mathcal{S})_{{\Phi}} \leq \cO(\eps)$. Using arguments as before (involving \cref{lem:2}), we have 
  $d(L \vert \tilde{B})_\Lambda \leq \cO(\eps)$. 

\end{proof}

We can verify the following claim regarding the state $\Phi$ (the state obtained in Protocol~\ref{prot:2nmExt_full}, just before Protocol~\ref{prot:block2} starts as a subroutine) using arguments from~\cite{BJK21}.
\begin{claim}\label{lemma:block12t}
 $\Pr(G= G^\prime)_{\Phi} = \mathcal{O(\eps)}$. 
\end{claim}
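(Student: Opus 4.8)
The plan is to show that the advice string $G = X_1 \circ Y_1 \circ \ecc(X)_R \circ \ecc(Y\|0^{(1-\delta)n})_R$ is almost surely distinct from its tampered version $G' = X_1' \circ Y_1' \circ \ecc(X')_{R'} \circ \ecc(Y'\|0^{(1-\delta)n})_{R'}$ whenever $(X,Y) \neq (X',Y')$. I would case-split on which part of the input is tampered. Recall that in a $(n-k,\delta n-k)$-$\nmas$ we are guaranteed $\Pr(X \neq X')_\rho = 1$ or $\Pr(Y \neq Y')_\rho = 1$; without loss of generality handle the case $\Pr(X \neq X') = 1$ (the other case is symmetric, using $\ecc(Y\|0^{(1-\delta)n})$ in place of $\ecc(X)$). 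Within this case, further split according to whether $X_1 = X_1'$ or $X_1 \neq X_1'$.

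\textbf{Case $X_1 \neq X_1'$ (or symmetrically $Y_1 \neq Y_1'$).} Then the first block of $G$ already differs from that of $G'$, so $G \neq G'$ with probability $1$; there is nothing to prove.

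\textbf{Case $X_1 = X_1'$ and $Y_1 = Y_1'$ (hence $R = \IP_1(X_1,Y_1) = R' $), but $X \neq X'$.} Here $G = G'$ would force $\ecc(X)_R = \ecc(X')_R$. Since $X \neq X'$ and $\ecc$ has relative distance $1-\gamma$, the codewords $\ecc(X)$ and $\ecc(X')$ agree on at most a $\gamma$-fraction of the $v$ coordinates. If $R$ were uniform and independent of $(X,X')$, the probability that $\ecc(X)_R = \ecc(X')_R$ would be at most $\gamma = \cO(\eps)$. The actual $R$ is $\IP_1(X_1,Y_1)$, which is not exactly uniform, but from the $\IP$-security step established earlier in the proof of \cref{thm:2nmext} — namely $\Delta(\rho_{RXX'}, U_R \otimes \rho_{XX'}) \leq \cO(\eps^2)$ (obtained via \cref{l-qma-needed-fact1} applied to $(R, X_1, Y_1)$) — the joint distribution of $(R, X, X')$ is $\cO(\eps^2)$-close to one in which $R$ is uniform and independent of $(X, X')$. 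Conditioning on $X_1 = X_1'$, $Y_1 = Y_1'$ only rescales by a factor at most $2^{\cO(k)}$ worth of probability mass; but actually I would avoid that rescaling by bounding $\Pr(G=G')_\Phi$ directly: it is at most $\Pr(X_1 \neq X_1' \text{ and } \ecc(X)_R = \ecc(X')_R) + \Pr(X_1 = X_1', \ldots) $, and in each summand the event "$\ecc(X)_R = \ecc(X')_R$" is a function of $(R, X, X')$, so replacing $\rho_{RXX'}$ by its product approximation costs $\cO(\eps^2)$ and leaves a term bounded by $\gamma = \cO(\eps)$ by the distance of the code. Finally I would note that $\Phi$ is obtained from the post-$\IP_1$ state by a sequence of isometries/measurements which do not alter the distribution of $(G, G', R, X, X')$ (they act on the other registers and the communication), so the bound transfers to $\Phi$.

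\textbf{Main obstacle.} The subtle point is the same one flagged in \cite{BJK21}: $R$ is \emph{not} independent of $X$ — it is computed from the prefix $X_1$ of $X$ — so the naive "random coordinate of a codeword" argument does not immediately apply. The fix is exactly the near-independence statement $\Delta(\rho_{RXX'}, U_R \otimes \rho_{XX'}) \leq \cO(\eps^2)$ coming from \cref{l-qma-needed-fact1}, which is the technical heart; once that is in hand the code-distance bound is routine. I expect the bookkeeping to consist of (i) correctly invoking the two-source $\IP$ bound with the register identification $(Z,X,Y,\sigma)\leftarrow(R,X_1,Y_1,\rho)$ and $(k_1,k_2)\leftarrow(2k,2k)$, as already done in \cref{thm:2nmext}, and (ii) making sure the $\ecc(Y\|0^{(1-\delta)n})_R$ block handles the symmetric case $\Pr(Y\neq Y')=1$ — here one uses $\Delta(\rho_{RYY'}, U_R\otimes\rho_{YY'})\leq\cO(\eps^2)$ together with the relative distance of $\ecc$ on inputs of length $\delta n$ padded to $d$. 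Summing the two error contributions gives $\Pr(G=G')_\Phi = \cO(\eps)$, as claimed. The proof is essentially the transcription of the advice-generator soundness argument of \cite{BJK21} with the parameter $\vert Y\vert = \delta n$ in place of $n$, which only affects constants.
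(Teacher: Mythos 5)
Your proposal is correct and is essentially the argument the paper relies on: the paper itself does not spell out a proof of this claim but defers to the advice-generator analysis of \cite{BJK21}, which proceeds exactly as you do — if the prefix blocks differ then $G\neq G'$ trivially, and otherwise $R=R'$ and one combines the relative distance $1-\gamma$ of $\ecc$ (applied to whichever of $X\neq X'$ or $Y\|0^{(1-\delta)n}\neq Y'\|0^{(1-\delta)n}$ is guaranteed by the $\nmas$ definition) with the near-uniformity of $R$ given the relevant side's registers from \cref{l-qma-needed-fact1}, giving $\gamma+\cO(\eps^2)=\cO(\eps)$. Your care in avoiding conditioning (bounding the joint event directly) and in noting that subsequent safe isometries do not change the joint distribution of the classical registers matches the intended argument, so no gap remains.
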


\begin{claim}[Correlation breaker with advice]\label{thm:advcb}
Let Alice and Bob proceed as in Protocol~\ref{prot:block2} with the starting state as ${\Phi}^{(\alpha,\alpha')}$, where $(\alpha, \alpha') \notin \mathcal{S}$. Let ${\Gamma}^{(\alpha,\alpha')}$ be the joint state at the end of the Protocol~\ref{prot:block2} (at $i=a$). Then,
 \[d(Z|\tilde{B})_{{\Gamma}^{(\alpha,\alpha')}} = \cO(\eps). \] 
\end{claim}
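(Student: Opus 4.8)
The plan is to prove \cref{thm:advcb} by induction over the $a$ rounds of the flip-flop loop in Protocol~\ref{prot:block2}, following the correlation-breaker-with-advice strategy of~\cite{BJK21} but now with the tighter parameters of this construction. Since $(\alpha,\alpha')\notin\mathcal{S}\supseteq\mathcal{S}_1$, the two advice strings differ, so let $j\in\{1,\dots,a\}$ be the smallest coordinate with $\alpha_j\ne\alpha_j'$. Throughout the loop I maintain two invariants. The first is a min-entropy invariant: $X$ retains $\Omega(n)$ bits of min-entropy conditioned on Bob's registers and $Y$ retains $\Omega(\delta n)$ bits conditioned on Alice's registers; this was essentially already arranged in the proof of \cref{thm:2nmext} via \cref{lem:minentropy} and the communication count, and the further conditioning on $(G,G')=(\alpha,\alpha')$ costs at most $2\vert G\vert+\log(1/\eps)$ bits of min-entropy because $(\alpha,\alpha')\notin\mathcal{S}_2$ (exactly as in \cref{eq:phi2}, via \cref{measuredmax}). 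The second is a ``goodness'' invariant, to be established for all $i\ge j$: $d(Z_i|\tilde{B}Z_i')_{\Gamma_i}\le\cO(\eps)$, i.e.\ the $i$-th flip-flop output $Z_i$ is close to uniform given Bob's registers together with its tampered counterpart $Z_i'$; note this implies in particular $d(Z_i|\tilde{B})_{\Gamma_i}\le\cO(\eps)$ by \cref{fact:data}.

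The induction rests on two flip-flop lemmas about $2\ff$ (\cref{alg:2FF}). \emph{Base case (advice differs).} At round $j$ one copy runs $2\ff$ with advice bit $\alpha_j$ and the tampered copy with $\alpha_j'\ne\alpha_j$, so the look-ahead alternating extraction (the triple $\Ext_1,\Ext_2,\Ext_3$, in the order dictated by the advice bit) is executed in opposite orders in the two runs. Using only that the incoming seed $Z_{j-1}$ has enough min-entropy (which follows from the min-entropy invariant and \cref{fact:prefixminentropyfact}) together with the min-entropy of $X,Y$, a chain of alternating-extraction steps (\cref{lem:2}, \cref{claim:100}, \cref{claim:minentropydecrease}), with \cref{uhlmann} and \cref{fact:gentle_measurement} handling the quantum bookkeeping, yields $d(Z_j|\tilde{B}Z_j')_{\Gamma_j}\le\cO(\eps)$ \emph{regardless} of whether $Z_{j-1}$ was already good --- this is where the differing advice bits are used. \emph{Inductive step.} For $i>j$, assuming $d(Z_{i-1}|\tilde{B}Z_{i-1}')_{\Gamma_{i-1}}\le\cO(\eps)$, I rerun essentially the same alternating-extraction analysis but now feed the goodness of $Z_{i-1}$ into the first extraction in place of the advice-mismatch argument, and conclude $d(Z_i|\tilde{B}Z_i')_{\Gamma_i}\le\cO(\eps)$ whether or not $\alpha_i=\alpha_i'$. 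Because each round adds only a $\poly(\eps')$-type additive error and, by the parameter choices, $a=\cO(k)=\cO(n^{1/4})$ rounds of this accumulate to at most $\cO(\eps)$ (with $\eps=2^{-\cO(n^{\delta_1})}$), at $i=a$ we get $d(Z|\tilde{B})_{\Gamma^{(\alpha,\alpha')}}\le d(Z|\tilde{B}Z')_{\Gamma^{(\alpha,\alpha')}}\le\cO(\eps)$, which is the claim. Conversions between $\Delta$ and $\Delta_B$ along the way are handled by \cref{fact:data} and \cref{fidelty_trace}.

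The main obstacle is the single-round flip-flop lemma itself: running the quantum alternating extraction --- including the ``look-ahead'' register $\overline{Z}$ produced in step~2 of \cref{alg:2FF} --- while the adversary holds side information correlated across the two copies. Two points require care: (a) conditioning on the intermediate classical registers $A,C,B,\overline{A},\overline{C},\overline{B}$ and their tampered analogues so that the source currently being extracted from still has enough min-entropy against the other party's register, which forces careful min-entropy accounting of the kind in \cref{lem:minentropy} and \cref{claim:minentropydecrease}; and (b) the very tight min-entropy budget on the short source $Y$, which has only $\delta n$ bits --- the total Bob-to-Alice communication must stay $\cO(n^{3/4})$, and every seed length (of $\Ext_1,\Ext_2,\Ext_3,\Ext_4$) must be chosen so that $Y$ still has $\Omega(\delta n)$ bits at every step. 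This budget is precisely where the parameters here depart from~\cite{BJK21} and must be re-verified; once the flip-flop lemma is in hand, the round-by-round induction described above is routine.
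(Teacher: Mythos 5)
Your high-level bookkeeping (the min-entropy budget, the cost of conditioning on $(\alpha,\alpha')\notin\mathcal{S}_2$, the communication counting via \cref{lem:minentropy}) matches the paper, but the core of your induction departs from the paper's proof in a way that creates a genuine gap. You make round $j$ (the first coordinate where $\alpha_j\neq\alpha_j'$) a base case in which goodness of $Z_j$ is established ``regardless of whether $Z_{j-1}$ was already good,'' using only a min-entropy guarantee on $Z_{j-1}$. That step is not supported by the tools available here: the first operation of $2\ff$ (\cref{alg:2FF}) is the seeded extraction $A=\Ext_1(Y,Z_s)$, and the alternating-extraction fact (\cref{lem:2}) requires the seed to be close to uniform and independent of the other side (the hypothesis $\Delta_B(\theta_{XAS},\theta_{XA}\otimes U_d)\le\eps'$), not merely to have min-entropy. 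Since your plan maintains no uniformity invariant for rounds $i<j$ (your ``goodness'' invariant only starts at $i\ge j$), at round $j$ you have nothing to feed into \cref{lem:2}, and the classical flip-flop intuition that a differing advice bit alone regenerates a good seed does not transfer with the lemmas developed in this paper.

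The paper's proof sidesteps this entirely: the hypothesis on $\Phi^{(\alpha,\alpha')}$ already gives $d(Z_0|\tilde{B})\le\cO(\eps)$ --- the near-uniformity of the initial seed comes from the quantum-secure two-source extractor $\IP_2$ in the proof of \cref{thm:2nmext}, not from the advice --- and then the flip-flop claim (\cref{lem:alpha01}) is applied in every one of the $a$ rounds, for every combination of advice bits (all of Protocols~\ref{prot:Var_GEN(0,1)analysis}--\ref{prot:Var_GEN(1,1)DiffBefore} satisfy the same input/output guarantee), with \cref{claim:100} invoked once at the start. The role of $\alpha\neq\alpha'$ is only to select which protocol variant runs, hence which tampered registers end up inside $\tilde{B}$; it is not used to create goodness. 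Separately, your error accounting (``each round adds only a $\poly(\eps')$-type additive error, and $a$ rounds accumulate to $\cO(\eps)$'') misstates the growth: \cref{lem:alpha01} gives $d(O|\tilde{B})\le\cO(\eta+\sqrt{\eps'})$, so the error grows by a constant factor per round and after $a$ rounds is $\cO(\eps)+2^{\cO(a)}\sqrt{\eps'}$; the conclusion holds only because the parameters enforce $2^{\cO(a)}\sqrt{\eps'}=\eps$. Your argument is repaired by dropping the special base case at $j$, maintaining the (growing) bound on $d(Z_i|\tilde{B})$ from $i=0$ onward, and invoking that parameter relation --- which is exactly the paper's argument.
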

\begin{proof} We have 
    \[d(Z_0|\tilde{B})_{{\Phi}^{(\alpha,\alpha')}} \leq \cO(\eps) \quad ; \quad \hmin{Y}{\tilde{A}}_{{\Phi}^{(\alpha,\alpha')}} \geq  \delta n - \cO(k)  \quad ; \quad  \hmin{X}{\tilde{B}}_{{\Phi}^{(\alpha,\alpha')}} \geq   n - \cO(k).\]
    The total communication from Alice to Bob in Protocol~\ref{prot:block2} is at most $6ah \leq \cO(n^{\frac{3}{4}})$. From ~\cref{lem:minentropy}, throughout Protocol~\ref{prot:block2}, we have  $\hmin{X}{\tilde{B}} \geq n-\cO(n^{\frac{3}{4}})$. From repeated applications of \cref{lem:alpha01} after using \cref{claim:100}  (by letting $\rho$ in \cref{claim:100} as $\Phi^{(\alpha, \alpha')}$ here) we have,
 \[d(Z|\tilde{B})_{{\Gamma}^{(\alpha,\alpha')}} \leq \cO(\eps)+2^{\cO(a)}\sqrt{\eps'} = \cO(\eps).\] 
   \end{proof}
	\begin{claim}[Flip flop]\label{lem:alpha01}
Let $\cP$ be any of the Protocols~\ref{prot:Var_GEN(0,1)analysis},~\ref{prot:Var_GEN(1,0)},~\ref{prot:Var_GEN(0,0)NotDiffBefore},~\ref{prot:Var_GEN(0,0)DiffBefore},~\ref{prot:Var_GEN(1,1)NotDiffBefore} or~\ref{prot:Var_GEN(1,1)DiffBefore} and $i \in [a]$. Let $\alpha$ be the initial joint state in~$\cP$ such that 
	$d(Z|\tilde{B})_\alpha \leq \eta.$  Let $\theta$ be the final joint state at the end of~$\cP$. Then, 
 $d(O|\tilde{B})_\theta \leq \cO(\eta + \sqrt{\eps'})$.
	\end{claim}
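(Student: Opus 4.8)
\textbf{Proof plan for \cref{lem:alpha01} (the flip-flop step).}

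The plan is to prove the claim by a case analysis over the six protocols $\cP$, in each case peeling off the four extractor calls of \cref{alg:2FF} one at a time and tracking a ``closeness to uniform'' quantity through the alternating-extraction machinery. The core structural fact I would use is \cref{lem:2} (alternating extraction) together with \cref{fact102} (min-entropy does not decrease under the adversary's local isometries and communication) and \cref{claim:100} (replacing a source by a genuinely uniform one at the cost of $d(\cdot|\cdot)$ in Bures distance). Concretely, I start from $d(Z|\tilde B)_\alpha \le \eta$; applying \cref{claim:100} I may pass to a nearby state $\hat\alpha$ in which the relevant prefix $Z_s$ is \emph{exactly} uniform and independent of $\tilde B$, at Bures-cost $\eta$, while preserving the min-entropy of $Y$ (resp.\ $X$) conditioned on the other player's registers. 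Then the first extractor call $A=\Ext_1(Y,Z_s)$ is an instance of the situation handled by \cref{lem:2}: $Y$ has high min-entropy given Bob's side, and $Z_s$ is (almost) a uniform seed independent of it, so $A$ is $\cO(\eta+\sqrt{\eps'})$-close to uniform given $\tilde B$. Feeding $A$ forward, $C=\Ext_2(Z,A)$ and $B=\Ext_1(Y,C)$ are handled the same way, each time invoking \cref{lem:2} and absorbing an additive $\sqrt{\eps'}$ and a constant factor; the choice ``if $G=0$ use $\overline Z=\Ext_3(X,A)$, else $\Ext_3(X,B)$'' and the mirrored second half ($\overline Z_s,\overline A,\overline C,\overline B$, then $O$) just iterates this a bounded number of times, so the final bound is $d(O|\tilde B)_\theta \le \cO(\eta+\sqrt{\eps'})$.

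The one genuinely delicate point — and the reason there are six separate protocols — is the role of the advice bit $G_i$ (and its tampered copy $G_i'$). Flip-flop is designed so that if $G_i \ne G_i'$ then on at least one of the two halves the ``same'' source is used by one side but the ``tampered'' source by the other, which breaks the correlation; if $G_i = G_i'$ the protocol still must not destroy what was already achieved. The six protocols enumerate $(G_i,G_i')\in\{(0,1),(1,0),(0,0),(1,1)\}$ further split by whether $Z$ and its tampered version $Z'$ have already been made different (``DiffBefore'' vs ``NotDiffBefore''). In each branch I would identify which of Alice's or Bob's source carries the surviving high conditional min-entropy relative to the \emph{union} of the other player's registers and the tampered registers, and check that the specific extractor call applied there meets the hypotheses of \cref{lem:2}. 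The bookkeeping of which registers count as $\tilde A$ versus $\tilde B$ at each step, and verifying that after the first half the relevant min-entropies are still $\ge$ the extractor thresholds (using \cref{lem:minentropy} to bound the communication loss and \cref{fact:prefixminentropyfact} for the prefixes), is the main obstacle; it is exactly the argument of~\cite{BJK21} with our modified parameters, where one must check that the losses $\cO(k)$, $2a$, $6ah$, etc., are all $o(\delta n)$ so the thresholds $h,s,b$ are comfortably met.

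Finally I would assemble the six cases: since in each case the hypothesis is $d(Z|\tilde B)_\alpha \le \eta$ and the conclusion is $d(O|\tilde B)_\theta \le \cO(\eta+\sqrt{\eps'})$ with the same universal constant, the claim as stated holds uniformly over $\cP$ and $i\in[a]$. I expect the write-up to be short modulo citing~\cite{BJK21}: the structure is ``apply \cref{claim:100}, apply \cref{lem:2} four (or eight) times, use \cref{fact102} to carry min-entropy past the adversary, collect errors by the triangle inequality (\cref{fact:data})'', and the only real content is the per-case verification that the correct source is the high-entropy one. I do not anticipate needing any tool not already in \cref{sec:usefulstuff1}.
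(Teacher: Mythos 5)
Your plan is essentially the paper's own proof: the paper proves the claim for one protocol (Protocol~\ref{prot:Var_GEN(0,1)analysis}, $i=1$) and declares the other five analogous, iterating \cref{lem:2}, \cref{claim:100}, \cref{lem:minentropy} and \cref{fact:data} through the successive extractor calls and collecting errors exactly as you describe (the only cosmetic difference being that the paper invokes \cref{claim:100} midway, after $\overline{Z}$ is generated, rather than at the start, since \cref{lem:2} already tolerates an approximately uniform seed). One caution on your first-step description: the sides are swapped — $Y$ must have min-entropy conditioned on Alice's registers $\tilde{A}$ (plus the seed), and the output $A=\Ext_1(Y,Z_s)$ comes out close to uniform given $\tilde{A}$, not $\tilde{B}$; this is precisely the $\tilde{A}$/$\tilde{B}$ bookkeeping you flag as the main per-case work, and the paper's proof alternates $d(\cdot\vert\tilde{A})$ and $d(\cdot\vert\tilde{B})$ bounds step by step in exactly this way.
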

	\begin{proof}
We prove the claim when $\cP$ is Protocol~\ref{prot:Var_GEN(0,1)analysis} and $i=1$ and the proof for other cases follows analogously.  
From \cref{fact:data}, 
$$d(Z_s|\tilde{B})_\alpha \leq d(Z|\tilde{B})_\alpha \leq \eta.$$

Let $\gamma$ be the joint state just after Bob generates register $A$. From \cref{lem:2}, we have 
$$d(A|\tilde{A})_\gamma \leq 2\eta + \sqrt{\eps'}.$$
Let $\zeta$ be the joint state after Alice sends register $Z'$ to Bob and Bob generates registers $(A'C'B')$. From \cref{fact:data}, we have $$ d(A|\tilde{A})_\zeta \leq d(A|\tilde{A})_\gamma\leq 2\eta + \sqrt{\eps'}.$$
Let $\beta$ be the joint state just after Alice generates register $\overline{Z}$. From \cref{lem:2},  
$$d(\overline{Z}|\tilde{B})_\beta \leq 4\eta + 3\sqrt{\eps'}.$$ Let $\hat{\beta}$ be the state obtained from \cref{claim:100} (by letting $\rho$ in \cref{claim:100} as $\beta$ here) such that
\begin{equation}
 \label{eq:beta}   
 \hmin{Y}{\tilde{A}}_{\hat{\beta}}= \hmin{Y}{\tilde{A}}_{\beta} \geq 0.9\delta n \quad ; \quad \hat{\beta}_{\overline{Z}\tilde{B}}= U_h \otimes \hat{\beta}_{\tilde{B}} \quad ; \quad  \Delta_B(\hat{\beta} ,\beta) \leq 4\eta + 3\sqrt{\eps'}. 
 \end{equation}
 Let $\theta', \hat{\theta}'$ be the joint states just after Alice generates register $\overline{C}$, proceeding from the states  $\beta, \hat{\beta}$ respectively. Since communication between Alice and Bob after Alice generates register $\overline{Z}$ and before generating $\overline{C}$ is $2s+2b$, from arguments as before involving \cref{lem:2} and \cref{lem:minentropy},$$d(\overline{C}|\tilde{B})_{\hat{\theta}'} \leq \cO(\eta +\sqrt{\eps'}).$$
 From \cref{eq:beta},  
 $$d(\overline{C}|\tilde{B})_{\theta'} \leq d(\overline{C}|\tilde{B})_{\hat{\theta'}} + \Delta_B(\hat{\beta} ,\beta) = \cO(\eta +\sqrt{\eps'}).$$
 Proceeding till the last round and using similar arguments involving \cref{lem:2}, \cref{lem:minentropy} and \cref{claim:100}, we get the desired.
\end{proof}
\section{Communication Protocols \label{sec:communication_protocols}}
\begin{changemargin}{0cm}{0cm}
\section*{Protocols for $2\nmre$}
\pagenumbering{gobble}
\begin{Protocol}[H] 
	\begin{center}
	\scalebox{0.88}{
		\begin{tabular}{l c c r}
			Alice:  $(X, \hat{X}, X^\prime, \hat{X^\prime}, N)$ &  & ~~~~~~~~~~~~Bob: $(Y, \hat{Y}, Y^\prime, \hat{Y^\prime}, M)$ & $\quad$ Analysis \\
			\hline\\
			$X_1=\pre(X,3k)$&  &  &$\hmin{X_1}{\tilde{B}} \geq 2k$\\ 
			&&$Y_1=\pre(Y,3k)$ & $\hmin{Y_1}{\tilde{A}} \geq 2k$ \\ \\
		&$X_1 \longrightarrow X_1$ &$R=\mathsf{IP}_1(X_1,Y_1)$& $d(R \vert XX') \leq \cO(\eps)$\\
		& &&\\
		$R=\mathsf{IP}_1(X_1,Y_1)$ & $Y_1 \longleftarrow Y_1$& & $d(R \vert YY') \leq \cO(\eps)$ \\
		$V= \ecc(X)_R$ & & $W= \ecc(Y||0^{(1-\delta)n})_R$ & \\
		 & $V \longrightarrow V$ & $G= X_1 \circ Y_1 \circ V \circ W$& \\ \\
		 $X_1^\prime=\pre(X^\prime,3k)$&  \\ 
		 &&$Y_1^\prime=\pre(Y^\prime,3k)$ & \\
		&$X_1^\prime \longrightarrow X_1^\prime$ &$R^\prime=\mathsf{IP}_1(X_1^\prime,Y_1^\prime)$& \\
		& &$W^\prime= \ecc(Y^\prime||0^{(1-\delta)n})_{R^\prime}$ &\\ \\
		$R^\prime=\mathsf{IP}_1(X_1^\prime,Y_1^\prime)$& $Y_1^\prime \longleftarrow Y_1^\prime$ & & \\ \\
	
		 	$V^\prime= \ecc(X^\prime)_{R^\prime}$& $V^\prime \longrightarrow V^\prime$ &$G^\prime= X_1^\prime \circ Y_1^\prime \circ V^\prime \circ W^\prime$ & \\ \\
		  & &  $Y_2=\pre(Y,3k^3)$& $\hmin{Y_2}{\tilde{A}} \geq 2k^3$\\ 
		 $X_2=\pre(X,3k^3)$ & & & $\hmin{X_2}{\tilde{B}} \geq 2k^3$
		 \\
		 $Z_0=\IP_2(X_2, Y_2) $ &$Y_2 \longleftarrow Y_2$ & & $d(Z_0 \vert \tilde{B}) \leq 2^{-\cO(a)}(\eps)$ \\
		 & &  $Y_2^\prime=\pre(Y^\prime,3k^3)$
		& 
		 \\$X_2^\prime=\pre(X^\prime,3k^3)$
		 \\
		 $Z_0^\prime=\IP_2(X^\prime_2, Y^\prime_2) $ &$Y_2^\prime \longleftarrow Y_2^\prime$ & & \\ \\
		 &$\left(G,G^\prime\right) \longleftarrow \left(G,G^\prime\right) $&& $d(Z_0 \vert \tilde{B}) \leq 2^{-\cO(a)}(\eps)$ \\ \\
		
		 	Alice: $(X,X',Z_0,Z_0',G,G',N)$ &  & Bob: $(Y,Y',G,G',M)$ & \\
		  \hline
	 &&&\\
		 	& Protocol~\ref{prot:block2}~$( X,X',Z_0,Z_0',$&\\
        	&$G,G',N,Y,Y',G,G',M)$ & \\

        		&& \\
				Alice: $(X,X',Z,N)$ &  & Bob: $(Y,Y',Z',M)$ & \\ 
			\hline
			&& \\
			$L' =\Ext_6(X',S')$&$S' \longleftarrow S'$ &  $S' =\Ext_4(Y',Z')$ &$d( Z \vert \tilde{B} ) \leq \cO(\eps)$ \\
		    && \\
		    & $ Z \longrightarrow Z$ &  $S =\Ext_4(Y,Z)$ &$d( S \vert \tilde{A} ) \leq \cO(\eps)$ \\
		    && \\
		     & $ L' \longrightarrow L'$ & &$d( S \vert \tilde{A} ) \leq \cO(\eps)$  \\
		    && \\
		    $L =\Ext_6(X,S)$	&$S \longleftarrow S$ &  &$d( L \vert \tilde{B} ) \leq \cO(\eps)$ \\
			\hline \\
		  Alice($L, N$)                & & Bob($L^\prime, Y, Y^{\prime}, M$) &                                   
		\end{tabular}
		{\small {\caption{ \label{prot:2nmExt_full}
					 $( X, \hat{X},X^\prime,\hat{X}', N, Y,\hat{Y}, Y^\prime,\hat{Y}',M )$.}}
		}}
	\end{center}
	
\end{Protocol}
\end{changemargin}

\begin{Protocol}[htb]

\vspace{0.1in}
For $i=1,2,\ldots,a:$

 \begin{itemize}
        \item Protocol~\ref{prot:Var_GEN(0,1)analysis}~$(X,X',Z,Z',G,G',N, Y,Y',G,G',M)$ for $(\alpha_i,\alpha_i')=(0,1)$.
        \item Protocol~\ref{prot:Var_GEN(1,0)}~$(X,X',Z,Z',G,G',N, Y,Y',G,G',M)$ for $(\alpha_i,\alpha_i')=(1,0)$.
        \item Protocol~\ref{prot:Var_GEN(0,0)NotDiffBefore}~$(X,X',Z,Z',G,G',N, Y,Y',G,G',M)$ for $(\alpha_i,\alpha_i')=(0,0)$ and $\alpha_j=\alpha_j'$ for $j<i$.
        \item Protocol~\ref{prot:Var_GEN(0,0)DiffBefore}~$(X,X',Z,Z',G,G',N, Y,Y',G,G',M)$ for $(\alpha_i,\alpha_i')=(0,0)$ and $\alpha_j \ne \alpha_j'$ for some $j<i$.
        \item Protocol~\ref{prot:Var_GEN(1,1)NotDiffBefore}~$(X,X',Z,Z',G,G',N, Y,Y',G,G',M)$ for $(\alpha_i,\alpha_i')=(1,1)$ and $\alpha_j=\alpha_j'$ for $j<i$.
        \item Protocol~\ref{prot:Var_GEN(1,1)DiffBefore}~$(X,X',Z,Z',G,G',N, Y,Y',G,G',M)$ for $(\alpha_i,\alpha_i')=(1,1)$ and $\alpha_j \ne \alpha_j'$ for some $j<i$.
        
    \end{itemize}

	\quad \quad	$(Z,Z') = (O,O')$.

	\vspace{0.25cm}
	
		{\small {\caption{\label{prot:block2}
					 $( X,X',Z,Z',G,G',N, Y,Y',G,G',M)$.}}
		}
\end{Protocol}

\newpage

\begin{Protocol}
	\begin{center}
		\begin{tabular}{l l r r}
			Alice:  $(X,X^\prime,Z, Z^\prime, G,G',N)$ &  & ~~~~~~~~~~~~Bob: $(Y, Y^\prime,G,G',M)$ & $\quad$ Analysis \\
			\hline\\
			$ Z_s =$ Prefix$(Z,s)$  & &  &$d( Z_s \vert \tilde{B} ) \leq {\eta}$\\ \\
			 & $Z_s \longrightarrow Z_s$ & $A= \Ext_1(Y,Z_s)$ &$ d(A \vert \tilde{A} )  \leq { \cO(\eta + \sqrt{\eps'})}$ \\ \\

			& $Z^\prime \longrightarrow Z^\prime$ & 			 $A^\prime = \Ext_1(Y^\prime, Z_s^\prime)$ & \\ & &  $C^\prime= \Ext_2(Z^\prime, A^\prime)$ & \\ & &  $B^\prime= \Ext_1(Y^\prime, C^\prime)$ &$d (A \vert \tilde{A} ) \leq  {\cO(\eta + \sqrt{\eps'})}$\\ \\
			
			$\overline{Z}= \Ext_3(X,A)$ &$A \longleftarrow A$ && \\ 
				$\overline{Z}_s= \pre(\overline{Z},s)$& & &$d (\overline{Z}_s  \vert  \tilde{B}  ) \leq {\cO(\eta + \sqrt{\eps'})}$ \\ \\
	        
	        
			$\overline{Z}^\prime =\Ext_3(X^\prime, B^\prime)$ & $ B^\prime \longleftarrow B^\prime$ & &$d (\overline{Z}_s  \vert  \tilde{B}  ) \leq {\cO(\eta + \sqrt{\eps'})}$ \\ 
				$\overline{Z}^\prime_s= \pre(\overline{Z}',s)$& & & \\ \\
			& $\overline{Z}_s \longrightarrow \overline{Z}_s $ & $\generateAbar $&$d (\overline{A}  \vert  \tilde{A}  ) \leq {\cO(\eta + \sqrt{\eps'})}$ \\ \\
			
			&$\overline{Z}^\prime_s \longrightarrow \overline{Z}^\prime_s $&$\generateAAbar$& $d (\overline{A}  \vert  \tilde{A}  ) \leq {\cO(\eta + \sqrt{\eps'})}$ \\ \\
			
			$\generateTbar$ & $ \overline{A} \longleftarrow \overline{A} $ &  &$d (\overline{C} \vert \tilde{B} ) \leq {\cO(\eta + \sqrt{\eps'})}$ \\ \\
			 
			 	$O^\prime = \Ext_3(X^\prime, \overline{A}^\prime)$&$\overline{A}^\prime \longleftarrow \overline{A}^\prime$&  &$d(\overline{C} \vert  \tilde{B} ) \leq {\cO(\eta + \sqrt{\eps'})}$\\ 
			 &&\\

			 & $\overline{C} \longrightarrow \overline{C}$  & $\overline{B}= \Ext_1(Y,\overline{C})$ & $d (\overline{B} \vert \tilde{A} ) \leq {\cO(\eta + \sqrt{\eps'})}$\\ \\
			
	 & $O^\prime \longrightarrow O^\prime$ & &$d(\overline{B} \vert \tilde{A} ) \leq {\cO(\eta + \sqrt{\eps'})}$\\ \\



			$O= \Ext_3(X,\overline{B})$ & $\overline{B} \longleftarrow \overline{B}$ & &$d(O \vert  \tilde{B} ) \leq {\cO(\eta + \sqrt{\eps'})}$\\

		\end{tabular}
		{\small {\caption{\label{prot:Var_GEN(0,1)analysis}
					 $( X,X',Z,Z',G,G',N, Y,Y',G,G',M)$.}}
		}
	\end{center}
\end{Protocol}

\begin{Protocol}
	\begin{center}
		\begin{tabular}{l l r r}
			Alice:  $(X,X^\prime,Z, Z^\prime,G,G',N)$ &  & ~~~~~~~~~~~~Bob: $(Y, Y^\prime,G,G',M)$ & $\quad$ Analysis \\
			
			\hline\\
			$ Z_s =$ Prefix$(Z,s)$  &  & &$d(Z_s \vert \tilde{B} ) \leq {\eta}$\\ \\
			& $Z_s \longrightarrow Z_s$ & $A= \Ext_1(Y,Z_s)$ &$d( A \vert \tilde{A} ) \leq {\cO(\eta + \sqrt{\eps'})}$\\ \\
			$Z'_s =\pre(Z',s)$ & $Z_s^\prime \longrightarrow Z_s^\prime$ &  $\generateAA$&$d( A \vert \tilde{A} ) \leq {\cO(\eta + \sqrt{\eps'})}$	\\ \\ 	
			
				$\generateT$ & $A \longleftarrow A$ & & $d( C \vert \tilde{B} ) \leq {\cO(\eta + \sqrt{\eps'})}$\\ \\
			 $\overline{Z}^\prime= \Ext_3(X^\prime,A^\prime)$ & \sendAArl & &$d( C \vert \tilde{B} ) \leq {\cO(\eta + \sqrt{\eps'})}$\\ \\ 
			& \sendTlr & $\generateB$ &$d( B \vert \tilde{A} ) \leq {\cO(\eta + \sqrt{\eps'})}$\\ \\
		
		& \sendYYbarlr &$\generateAAbar$  &\\ 
		 & & $\generateTTbar$ &\\
		 & & $\generateBBbar$ &$d( B \vert \tilde{A} ) \leq {\cO(\eta + \sqrt{\eps'})}$\\ \\
			
			 $\overline{Z}= \Ext_3(X,B)$ & \sendBrl &  &$d( \overline{Z}_s \vert \tilde{B} ) \leq {\cO(\eta + \sqrt{\eps'})}$\\ \\

			$O^\prime= \Ext_3(X^\prime, \overline{B}^\prime)$	& \sendBBbarrl &  &$d( \overline{Z}_s \vert \tilde{B} ) \leq {\cO(\eta + \sqrt{\eps'})}$\\ \\
			
			&  \sendYSbarlr& $\generateAbar$ &$d( \overline{A} \vert \tilde{A} ) \leq {\cO(\eta + \sqrt{\eps'})}$\\ \\
			
		& $O^\prime \longrightarrow O^\prime$& &$d( \overline{A} \vert \tilde{A} ) \leq {\cO(\eta + \sqrt{\eps'})}$ \\ \\

		$O= \Ext_3(X,\overline{A})$& \sendAbarrl & &$d( O \vert \tilde{B} ) \leq {\cO(\eta + \sqrt{\eps'})}$\\ \\
		\end{tabular}
		{\small {\caption{\label{prot:Var_GEN(1,0)}
				$( X,X',Z,Z',G,G',N, Y,Y',G,G',M)$.}}
		}
	\end{center}
\end{Protocol}

\begin{Protocol}
	\begin{center}
		\begin{tabular}{l l r r}
			Alice:  $(X,X^\prime,Z, Z^\prime,G,G',N)$ &  & ~~~~~~~~~~~~Bob: $(Y, Y^\prime,G,G',M)$ & $\quad$ Analysis \\
			
			\hline\\
			$ Z_s =$ Prefix$(Z,s)$  & & &$d(Z_s \vert \tilde{B} ) \leq {\eta}$\\ \\
			
				 & $Z_s \longrightarrow Z_s$ &$A= \Ext_1(Y,Z_s)$ &$d(A \vert \tilde{A} ) \leq {\eta}$\\ \\
				 
				 $ Z'_s =$ Prefix$(Z',s)$ 	& $Z_s^\prime \longrightarrow Z_s^\prime$ & $\generateAA$ 	&$d( A \vert \tilde{A} ) \leq {\cO(\eta + \sqrt{\eps'})}$\\ \\ 
				 	
			 $\overline{Z}= \Ext_3(X,A)$ & $A \longleftarrow A$ &  &\\ 
			 
			 	 $ \overline{Z}_s = \pre(\overline{Z},s)$  &  &  &$d( \overline{Z_s} \vert \tilde{B} ) \leq {\cO(\eta + \sqrt{\eps'})}$\\ \\
				 
			$\overline{Z}^\prime= \Ext_3(X^\prime,A^\prime)$&  \sendAArl & &$d( \overline{Z_s} \vert \tilde{B} ) \leq {\cO(\eta + \sqrt{\eps'})}$\\ \\
		 
			 & \sendYSbarlr &  $\generateAbar$ &$d( \overline{A} \vert \tilde{A} ) \leq {\cO(\eta + \sqrt{\eps'})}$\\ \\
			
			$\overline{Z}^\prime_s= \pre(\overline{Z}^\prime,s)$  & $\overline{Z}^\prime_s \longrightarrow \overline{Z}^\prime_s$  & $\generateAAbar$ &$d( \overline{A} \vert \tilde{A} ) \leq {\cO(\eta + \sqrt{\eps'})}$ \\ \\

		$\generateTbar$& \sendAbarrl & 	&$d( \overline{C} \vert \tilde{B} ) \leq {\cO(\eta + \sqrt{\eps'})}$\\ \\
		
		$\generateTTbar$	& \sendAAbarrl&  &$d( \overline{C} \vert \tilde{B} ) \leq {\cO(\eta + \sqrt{\eps'})}$\\ \\
			
			
		 & \sendTbarlr & $\generateBbar$&$d( \overline{B} \vert \tilde{A} ) \leq {\cO(\eta + \sqrt{\eps'})}$\\ \\

		& \sendTTbarlr & $\generateBBbar$&$d( \overline{B} \vert \tilde{A} ) \leq {\cO(\eta + \sqrt{\eps'})}$\\ \\
		
		$O= \Ext_3(X,\overline{B})$ & \sendBbarrl&  &$d( O \vert \tilde{B} ) \leq {\cO(\eta + \sqrt{\eps'})}$\\ \\

	  $O^\prime= \Ext_3(X^\prime, \overline{B}^\prime)$& \sendBBbarrl	& &$d( O \vert \tilde{B} ) \leq {\cO(\eta + \sqrt{\eps'})}$\\

		\end{tabular}
		{\small {\caption{\label{prot:Var_GEN(0,0)NotDiffBefore}
					 $( X,X',Z,Z',G,G',N, Y,Y',G,G',M)$.}}
		}
	\end{center}
\end{Protocol}
\begin{Protocol}
	\begin{center}
		\begin{tabular}{l l r r}
			Alice:  $(X,X^\prime,Z,G,G',N)$ &  & ~~~~~~~~~~~~Bob: $(Y, Y^\prime,Z',G,G',M)$ & $\quad$ Analysis \\
			
			\hline\\
			
			 $ Z_s =$ Prefix$(Z,s)$ 	&   & &$d( Z_s \vert \tilde{B} ) \leq {\eta}$ \\ \\
			 
		     	$\overline{Z}^\prime= \Ext_3(X^\prime, A^\prime)$ & \sendAArl   & $\generateAA$ &$d( Z_s \vert \tilde{B} ) \leq {\eta}$ \\ \\
		     	
			 & $Z_s \longrightarrow Z_s$ &  $A= \Ext_1(Y,Z_s)$&$d( A \vert \tilde{A} ) \leq {\cO(\eta + \sqrt{\eps'})}$\\ \\
			 
			  
			  	  & $ \overline{Z}^\prime \longrightarrow \overline{Z}^\prime $ &$\generateAAbar$ &\\ 
			  	  && $\generateTTbar$& \\ 
			  	  &&$\generateBBbar$&$d( A \vert \tilde{A} ) \leq {\cO(\eta + \sqrt{\eps'})}$ \\ \\
			 
			$\overline{Z}= \Ext_3(X,A)$ & $A \longleftarrow A$ & &\\ 
			 $\overline{Z}_s = \pre(\overline{Z},s)$&&&$d( \overline{Z}_s \vert \tilde{B} ) \leq {\cO(\eta + \sqrt{\eps'})}$ \\ \\
			
			 	  $O^\prime= \Ext_3(X^\prime, \overline{B}^\prime)$	& \sendBBbarrl &  &$d( \overline{Z}_s \vert \tilde{B} ) \leq {\cO(\eta + \sqrt{\eps'})}$ \\ \\
			

			  &  $ \overline{Z}_s \longrightarrow \overline{Z}_s$ & $\generateAbar$ &$d( \overline{A} \vert \tilde{A} ) \leq {\cO(\eta + \sqrt{\eps'})}$\\ \\


		 $\generateTbar$& \sendAbarrl & &$d(\overline{C}  \vert \tilde{B} ) \leq {\cO(\eta + \sqrt{\eps'})}$\\ \\

		 & \sendTbarlr & $\generateBbar$  &$d( \overline{B} \vert \tilde{A} ) \leq {\cO(\eta + \sqrt{\eps'})}$\\ \\

	     & 	$O^\prime \longrightarrow O^\prime$ & &$d( \overline{B} \vert \tilde{A} ) \leq {\cO(\eta + \sqrt{\eps'})}$ \\ \\
	     
		$O= \Ext_3(X,\overline{B})$& \sendBbarrl &&$d( O \vert \tilde{B} ) \leq {\cO(\eta + \sqrt{\eps'})}$ \\ \\

		\end{tabular}
		{\small {\caption{\label{prot:Var_GEN(0,0)DiffBefore}
					 $(X,X',Z,Z',G,G',N, Y,Y',G,G',M)$.}}
		}
	\end{center}
\end{Protocol}

\begin{Protocol}
	\begin{center}
		\begin{tabular}{l l r r}
			Alice:  $(X,X^\prime,Z, Z^\prime,G,G',N)$ &  & ~~~~~~~~~~~~Bob: $(Y, Y^\prime,G,G',M)$ & $\quad$ Analysis \\
			
			\hline\\
			$ Z_s =$ Prefix$(Z,s)$  &  & &$d( Z_s \vert \tilde{B} ) \leq {\eta}$\\ \\
			 & $Z_s \longrightarrow Z_s$ &$A= \Ext_1(Y,Z_s)$  &$d( A \vert \tilde{A} ) \leq {\cO(\eta + \sqrt{\eps'})}$\\ \\
			
			$ Z'_s =$ Prefix$(Z',s)$ 	& $Z^\prime_s \longrightarrow Z^\prime_s$ & $\generateAA$ &$d( A \vert \tilde{A} ) \leq {\cO(\eta + \sqrt{\eps'})}$	\\ \\ 
				
			$\generateT$& $A \longleftarrow A$ &  &$d( C \vert \tilde{B} ) \leq {\cO(\eta + \sqrt{\eps'})}$ \\ \\
			 $\generateTT$& \sendAArl& &$d( C \vert \tilde{B} ) \leq {\cO(\eta + \sqrt{\eps'})}$\\ \\

		    & \sendTlr & $\generateB$ &$d( B \vert \tilde{A} ) \leq {\cO(\eta + \sqrt{\eps'})}$\\ \\ 
		  
		   &\sendTTlr &$\generateBB$ &$d( B \vert \tilde{A} ) \leq {\cO(\eta + \sqrt{\eps'})}$\\ \\
		   
		   $\overline{Z}= \Ext_3(X,B)$&\sendBrl & & \\ 
		   
		   \generateYSbar&&&$d( \overline{Z}_s \vert \tilde{B} ) \leq {\cO(\eta + \sqrt{\eps'})}$ \\ \\
		   
		   $\overline{Z}^\prime= \Ext_3(X^\prime,B^\prime)$ &\sendBBrl & &$d( \overline{Z}_s \vert \tilde{B} ) \leq {\cO(\eta + \sqrt{\eps'})}$\\ \\

		   & \sendYSbarlr &  $\generateAbar$ &$d( \overline{A} \vert \tilde{A} ) \leq {\cO(\eta + \sqrt{\eps'})}$\\ \\
		   
		 	\generateYSSbar&  \sendYSSbarlr& $\generateAAbar$&$d( \overline{A} \vert \tilde{A} ) \leq {\cO(\eta + \sqrt{\eps'})}$\\ \\

	$O=\Ext_3(X,\overline{A}) $& \sendAbarrl& &$d( O \vert \tilde{B} ) \leq {\cO(\eta + \sqrt{\eps'})}$ \\ \\

			$O^\prime= \Ext_3(X^\prime, \overline{A}^\prime)$ & \sendAAbarrl& &$d( O \vert \tilde{B} ) \leq {\cO(\eta + \sqrt{\eps'})}$\\ \\

		\end{tabular}
		{\small {\caption{\label{prot:Var_GEN(1,1)NotDiffBefore}
					 $( X,X',Z,Z',G,G',N, Y,Y',G,G',M)$. }}
		}
	\end{center}
\end{Protocol}
\begin{Protocol}
	\begin{center}
		\begin{tabular}{l r r r}
			Alice:  $(X,X^\prime,Z,G,G',N)$ &  & ~~~~~~~~~~~~Bob: $(Y, Y^\prime,Z',G,G',M)$ & $\quad$ Analysis \\
			
			\hline\\
			$Z_s = \pre(Z,s)$ & &  &$d( Z_s \vert \tilde{B} ) \leq {\eta}$\\ \\ 
			&&	$Z^\prime_s = \pre(Z^\prime,s)$& \\
				&  & 			 $A^\prime = \Ext_1(Y^\prime, Z_s^\prime)$ & \\ & &  $C^\prime= \Ext_2(Z^\prime, A^\prime)$ & \\ & &  $B^\prime= \Ext_1(Y^\prime, C^\prime)$ &\\ \\
				
			
		  $\overline{Z}^\prime= \Ext_3(X^\prime, B^\prime)$& \sendBBrl& &$d( Z_s \vert \tilde{B} ) \leq {\eta}$\\ \\
		 
		  & \sendYSlr& $\generateA$ &$d( A \vert \tilde{A} ) \leq {\cO(\eta + \sqrt{\eps'})}$\\ \\
		  
		  $\overline{Z}_s^\prime=\pre( \overline{Z}^\prime,s) $	& \sendYSSbarlr&$\generateAAbar$ &$d( A \vert \tilde{A} ) \leq {\cO(\eta + \sqrt{\eps'})}$ \\ \\
		  	
		 $\generateT$& \sendArl&  &$d(C \vert \tilde{B} ) \leq {\cO(\eta + \sqrt{\eps'})}$ \\ \\
	
	  $O^\prime= \Ext_3(X^\prime, \overline{A}^\prime)$& \sendAAbarrl& &$d( C \vert \tilde{B} ) \leq {\cO(\eta + \sqrt{\eps'})}$ \\ \\ 
	 
		 & \sendTlr& $\generateB$ &$d( B \vert \tilde{A} ) \leq {\cO(\eta + \sqrt{\eps'})}$ \\ \\
		 
			 	 & $O^\prime \longrightarrow O^\prime$ & &$d( B \vert \tilde{A} ) \leq {\cO(\eta + \sqrt{\eps'})}$\\ \\
		
		 $\overline{Z}= \Ext_3(X,B)$& \sendBrl& & \\ 
		$\overline{Z}_s=$Prefix$(\overline{Z},s)$&&&$d( \overline{Z}_s \vert \tilde{B} ) \leq {\cO(\eta + \sqrt{\eps'})}$ \\ \\
		  &\sendYSbarlr &$\generateAbar$ &$d( \overline{A} \vert \tilde{A} ) \leq {\cO(\eta + \sqrt{\eps'})}$\\ \\
		  
		 $O= \Ext_3(X, \overline{A})$& \sendAbarrl &  &$d(O \vert \tilde{B} ) \leq {\cO(\eta + \sqrt{\eps'})}$\\ \\

		\end{tabular}
		{\small {\caption{\label{prot:Var_GEN(1,1)DiffBefore}
					 $( X,X',Z,Z',G,G',N, Y,Y',G,G',M)$.}}
		}
	\end{center}
\end{Protocol}

\end{document}